\documentclass[10pt, a4paper]{article}

\usepackage[utf8]{inputenc}

\usepackage{amsmath, amsthm}
\usepackage{amsfonts}
\usepackage{amssymb}
\usepackage{amsxtra}
\usepackage{mathrsfs}
\usepackage{mathtools}

\usepackage{bbm}
\usepackage{bm}
\usepackage{color}
\usepackage{float}
\usepackage{graphicx}
\usepackage{caption}
\usepackage{subfigure}
\usepackage{cancel}
\usepackage{comment}
\usepackage{latexsym}
\usepackage{soul}
\usepackage{todonotes}
\usepackage{accents}

\usepackage{authblk}
\usepackage{natbib}

\usepackage{tikz}
\usetikzlibrary{arrows.meta,calc,positioning}

\usepackage{slashed}
\usepackage{leftidx}
\usepackage{tensor}
\usepackage{epstopdf}

\usepackage{enumerate}
\usepackage{enumitem}

\usepackage{hyperref}

\newtheorem{theorem}{Theorem}[section]
\newtheorem{lemma}[theorem]{Lemma}
\newtheorem{corollary}[theorem]{Corollary}
\newtheorem{proposition}[theorem]{Proposition}

\newtheorem{definition}[theorem]{Definition}
\newtheorem{example}[theorem]{Example}

\newtheorem{convention}[theorem]{Convention}

\theoremstyle{remark}
\newtheorem{remark}[theorem]{Remark}

\numberwithin{equation}{section}
\usepackage[left=2cm,right=2cm,top=3cm,bottom=3cm]{geometry}

\newcommand{\mr}{\mathring}

\makeatletter

\newcommand{\Rmnum}[1]{\expandafter\@slowromancap\romannumeral #1@}
\makeatother

\begin{document}

\title{A construction of charged naked singularities for the Einstein--Maxwell-charged scalar field equations in $3+1$ dimensions}

\author{Weihao Zheng\thanks{wz344@math.rutgers.edu}}

\affil{\small Department of Mathematics, Rutgers University, Hill Center, 110 Frelinghuysen Road, Piscataway, NJ, USA}
\date{\today}

\maketitle
\begin{abstract}
We construct a class of global naked singularity solutions for the $(3+1)$-dimensional spherically symmetric Einstein--Maxwell--charged scalar field equations. The solutions arise from $C^{1,\alpha}$ asymptotically flat one-ended characteristic initial data with $0<\alpha\ll1$ and develop a curvature singularity that is not enclosed by an event horizon. The resulting spacetimes are genuinely charged, with a nonzero Maxwell field that satisfies a quantitative positive lower bound up to the singularity and admits an explicit asymptotic profile. {As part of the construction, we develop a general framework for constructing naked singularity exteriors, namely, the causal future region of the past light cone emanating from the singularity. The framework accommodates a broad range of singular characteristic data and, in particular, allows for singular behavior of the Maxwell field near the singularity.}
\end{abstract}
\section{Introduction}
Understanding the global evolution of regular initial data for the Einstein equations and the formation of singularities is one of the fundamental problems in General Relativity. The Weak Cosmic Censorship conjecture (WCC) for the Einstein equations (coupled to reasonable matter models), formulated in~\cite{chris99B,penroseWCC}, asserts that if a singularity exists in the evolution of \textbf{generic} regular initial data, then it must be hidden inside a black hole region. 

The word ``generic'' is included in the formulation of WCC to allow for the possible existence of some exceptional initial data whose maximal developments contain singularities not enclosed by an event horizon, namely, naked singularity spacetimes. {The geometry of these spacetimes allows the causal signals to ``escape'' from the singularity and propagate to null infinity, thus rendering those singularities \textit{naked} to distant observers.} Indeed, over the past several decades, various naked singularity spacetimes have been constructed as solutions to the Einstein equations; see \cite{christodoulou1994examples,choptuik1} for the Einstein--scalar field system, \cite{guo2023naked} for the Einstein--Euler system, and \cite{rodnianski2023naked,yakov22} for the Einstein vacuum equations.

In this paper, we are interested in charged naked singularities for the Einstein--Maxwell-charged scalar field equations: \begin{align}
    &Ric_{\mu\nu}(g)-\frac{1}{2}R(g)g_{\mu\nu} = T_{\mu\nu}^{EM}+T_{\mu\nu}^{SC},\label{eq1}\\&
    T_{\mu\nu}^{EM} = 2\left(g^{\alpha\beta}F_{\alpha\mu}F_{\beta\nu}-\frac{1}{4}F^{\alpha\beta}F_{\alpha\beta}g_{\mu\nu}\right),\quad \nabla^{\mu}F_{\mu\nu} = iq_{0}\left(\frac{\phi\overline{D_{\nu}\phi}-\overline{\phi}D_{\nu}\phi}{2}\right),\ F = dA,\\&
    T_{\mu\nu}^{SC} = 2\left(\Re(D_{\mu}\phi\overline{D_{\nu}\phi})-\frac{1}{2}g^{\alpha\beta}D_{\alpha}\phi\overline{D_{\beta}\phi}g_{\mu\nu}\right), D_{\mu} = \nabla_{\mu}+iq_{0}A_{\mu},\\&g^{\mu\nu}D_{\mu}D_{\nu}\phi = 0,\label{eq2}
\end{align}
where $(\mathcal{M},g)$ is a Lorentzian spacetime, the two-form $F$ is the Maxwell field, the function $\phi$ is the complex-valued scalar field, the one-form $A$ is the electromagnetic potential, and the real number $q_{0}\neq0$ is the scalar field charge. The solution to~\eqref{eq1}--\eqref{eq2} consists of a Lorentzian manifold $(\mathcal{M},g)$ associated with a real-valued two form $F$ and a complex function $\phi:\mathcal{M}\rightarrow\mathbb{C}$.
\begin{theorem}[Rough version of Theorem~\ref{main theorem1}]
\label{thm: rough version for whole spacetime}
Fix a nonzero scalar field charge $q_{0}\neq 0$. There exists a class of spherically symmetric naked singularity solutions to the Einstein--Maxwell-charged scalar field equations, with $C^{1,\alpha}$ initial data imposed on the initial outgoing cone $C_{out}$ emanating from the center $\{r = 0\}$ for $0<\alpha\ll1$. The spacetime has an incomplete null infinity $\mathcal{I}$, with a finite affine length. Moreover, the naked singularity $\mathcal{O}$ is genuinely charged, in the sense that $\lim_{p\rightarrow\mathcal{O}}\vert g^{\alpha\mu}g^{\beta\nu}F_{\alpha\beta}F_{\mu\nu}\vert(p)\approx 1$ for $p$ on the past light cone emanating from $\mathcal{O}$. The spacetime can be extended to the future light cone emanating from $\mathcal{O}$ in $C^{1,\alpha^{\prime}}$ for some $\alpha^{\prime}\in(0,1)$ and the area radius $r$ is strictly increasing along the cone.
\end{theorem}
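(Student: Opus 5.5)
The construction will follow Christodoulou's strategy for the real scalar field, adapted to the charged system. We split the spacetime into three regions: the interior, the exterior, and the region beyond the future light cone.

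\textbf{Setup.} We write the metric in double null form
\[
g=-\Omega^2\,du\,dv+r^2\,d\sigma_{S^2}.
\]
In spherical symmetry the Maxwell field reduces to the charge function $Q(u,v)$, with $F=\frac{Q\Omega^2}{2r^2}\,du\wedge dv$. Then $|F|^2=-2Q^2/r^4$, so the statement that the singularity is genuinely charged means $Q^2/r^4\approx 1$ as one approaches $\mathcal{O}$ along the past light cone. The system becomes Raychaudhuri equations for $r$, wave equations for $r$ and $\log\Omega$, the gauge-covariant wave equation for $\phi$, and transport equations $\partial_u Q,\ \partial_v Q \propto q_0 r^2\,\Im(\phi\,\overline{D\phi})$.

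\textbf{Step 1: interior.} We construct a $k$-self-similar ansatz for the interior region, bounded by the center and the past cone $\{u=0\}$ of $\mathcal{O}$. Since $Q\sim r^2$ is incompatible with exact scale invariance only through the charge coupling, we treat the charged part as a perturbation:
\begin{itemize}
\item Take the Christodoulou scale-invariant solution, with $\phi=e^{-ik\tau}$-type phase oscillation.
\item Solve the charged system as a small perturbation of it near $\mathcal{O}$, using weighted estimates in the self-similar variable $v/|u|$.
\item Prescribe the scalar field phase so that the current $\Im(\phi\,\overline{D\phi})$ has a sign. Integrating the transport equation then yields $Q\approx c\,r^2$ with $c\neq 0$ along $\{u=0\}$, which gives the lower bound $|F|\gtrsim 1$.
\end{itemize}

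\textbf{Step 2: exterior framework.} Using the singular data on $\{u=0\}$ obtained in Step 1, we solve the characteristic problem in the region $u\in[0,u_0]$, $v\in[0,\infty)$. The regime is $\partial_v r>0$, $\partial_u r<0$ near the cone. We run a bootstrap in renormalized variables adapted to the $v^{-1}$-type singular data, such as $\partial_v r - \tfrac12$ and $v\,\partial_v\phi$, with weights in $u/v$. The Maxwell field may blow up like $v^{-\beta}$, but we control it through the small quantity $Q/r^2$. Incompleteness of null infinity then follows from the finite affine length along the past cone, inherited from the self-similar structure.

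\textbf{Step 3: across the future cone.} We extend the solution across $\{v=0\}$ in $C^{1,\alpha'}$ by a similar bootstrap. We show that $\partial_v r>0$ holds along the future cone, so $r$ is increasing there. Finally, we check $C^{1,\alpha}$ regularity of the data on $C_{out}$ by examining how the self-similar profile glues across the center.

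\textbf{Main obstacle.} I expect the hardest step to be the exterior bootstrap. The obstacle is the nonlinear charge coupling $q_0 A\phi$ in $D\phi$: the gauge potential $A$ is only marginally integrable near $\mathcal{O}$. My first attempt would be to choose the gauge $A_u=0$ and estimate $A_v=\int Q\Omega^2/r^2$ directly.
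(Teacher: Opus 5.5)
There is a genuine gap, and it sits in your Step 2, not where you expect. The statement forces $\alpha=k^2/(1-k^2)\ll1$, and the interior perturbation also only closes for $k<k_0(q_0)$. For such data the transversal derivative induced on the past light cone is \emph{large}. In the paper's convention (past cone $\{v=0\}$, $r(u,0)=-u$), integrate the $u$-transport equation for $(-u)\partial_v\phi_i$ using $\lambda(u,0)\approx(-u)^{k^2}$ and $\partial_u\phi_i(u,0)\approx k_i/(-u)$. This gives $\partial_v\phi_i(u,0)\sim\frac{k_i}{k^2}(-u)^{k^2-1}$. So your quantity ``$v\partial_v\phi$'' has size $1/k$ on the initial cone, and no power weight in $v/|u|$ changes that at the cone itself.

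The solution is actually this large only in an exponentially thin layer $0\le z\lesssim\underline{v}e^{-c/k^2}$, where $z=v/(-u)^{1-k^2}$. A pointwise bootstrap cannot see this. For instance, the source $\frac{r}{\lambda}|\partial_v\phi|^2$ in the $\partial_v\mu$ equation is then bounded only by $k^{-2}(-u)^{k^2-1}$. That contributes $O(\underline{v}/k^2)$ instead of the true $O(k^2\underline{v})$, and pushes $\mu$ toward $1$. Neither $Q/r^2$ nor $\partial_v r-\frac12$ is the relevant quantity; note that $\lambda(u,0)\to0$ at $\mathcal{O}$.

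The missing idea is the Rodnianski--Shlapentokh-Rothman auxiliary function. Solve the transport equation for $(-u)\partial_v\phi$ in $u$, with all coefficients frozen at their values on the past cone and with zero data on $\{z=\underline{v}\}$. The resulting $(\partial_v\phi)_S$ has three properties:
\begin{itemize}
\item it equals $\partial_v\phi$ on the past cone;
\item it satisfies $|(\partial_v\phi_i)_S|\lesssim\frac1q(-u)^{q^2-1}\bigl[1-(z/\underline{v})^{q^2/(1-q^2)}\bigr]$;
\item its $L^1_v$ norm is only $O(q\underline{v})$.
\end{itemize}
One then bootstraps $\partial_v\phi-(\partial_v\phi)_S$ and $\Psi-\Psi(u,0)$ with weights $z^{1-\delta}$ and $z^{1-2\delta}$. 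After that one uses exponential weights $e^{-KV/(-U)}$ in region II, weights $(V/(-U))^p$ near the future cone in region III, and small asymptotically flat data in region IV. You could instead perturb around the exact self-similar exterior, but your Step 2 is not set up that way. The gauge potential you flag as the main obstacle is harmless: with $A_v\equiv0$ one gets $|A_u|\lesssim|q_0|q^2(-u)|\log(-u)|\,z$.

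Step 1 also needs repair. Christodoulou's solutions are real, $\phi=\mathring{\phi}(z)-k\log(-u)$, with no $e^{-ik\tau}$ phase oscillation. A $U(1)$-twisted ansatz of that kind is not covered by his phase-plane reduction, which uses the translation symmetry. The paper instead uses $(k_1,k_2)$-self-similarity of the \emph{uncharged} complex field. There $\theta_1/k_1=\theta_2/k_2$ reduces the problem exactly to Christodoulou's $2\times2$ system, and the phase is constant. Hence, under $\mathring{\widetilde{\phi}}_1(0)/k_1=\mathring{\widetilde{\phi}}_2(0)/k_2$, the background is an exact solution of the charged system with $Q\equiv0$.

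The sign-definite current is then created by the data perturbation $\phi_1\mapsto\phi_1+\delta f_1$ on the past cone, with $f_1\to1$. This gives $\Im(\phi\overline{\partial_u\phi})=\delta k_2/u$ and $Q\approx-\frac12 q_0\delta k_2\mathring{r}^2(0)u^2$. Because this perturbation does not decay at $\mathcal{O}$, weighted estimates near $\mathcal{O}$ alone give no existence scheme. The paper cuts the perturbation off with $\chi_\epsilon$, keeps the exact background in $\{u\ge-\epsilon\}$, solves the backward mixed problem toward the center, and passes to $\epsilon\to0$ by a difference bootstrap.

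Step 3 is misdirected. The future cone of $\mathcal{O}$ is the Cauchy horizon; beyond it the solution is not determined by the data, so there is no ``similar bootstrap'' across it. The statement only asks for regularity up to it. That comes from the region III bounds $0<-\partial_U r\lesssim(V/(-U))^p$, $|\partial_U\phi|\lesssim V^{-1}(V/(-U))^{2p}$ with $p<\frac12$, together with $\partial_V r\approx1$.

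Two smaller corrections. The limited $C^{1,\alpha}$ regularity of the data on $C_{out}$ occurs at its intersection with the past cone, not at the center. And labeling the past cone $\{u=0\}$ is inconsistent with your choice $\partial_v r>0>\partial_u r$ with $v\to\infty$ at $\mathcal{I}$, since the past cone of a central point is ingoing.
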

The Penrose diagram of this spacetime is depicted in Figure~\ref{fig: penrose diagram}.
\begin{figure}[htbp]
    \centering
    \begin{tikzpicture}[ scale=0.6, >=Latex, every node/.style={font=\small} ]
\coordinate (O) at (0,0);
\coordinate (A) at (4,-4);
\coordinate (B) at (6,6);
\coordinate (C) at (10,2);
\coordinate (Ce) at (0,-8);
\draw (O)--(Ce) node[midway, left]{$\Gamma: = \{r = 0\}$};

\draw (O)--(A) node[midway, above, sloped]
    {\rotatebox{90}{$\underline{C}$}};
\draw[dashed] (O)--(B);
\draw (Ce)--(C) node[midway, below, sloped]{$C_{out}$};
\draw[dashed] (B) -- (C)
    node[midway, above, sloped]
    {\rotatebox{90}{$\mathcal{I}$}};
\node[left] at (O) { $\text{singularity }\mathcal{O}$};
\coordinate (A1) at (6,-2);
\coordinate (A2) at (8,0);

\coordinate (A3) at (3,0);
\coordinate (A4) at (1.5,1.5);

\coordinate (L1) at (2.5,-1.5);
\coordinate (L2) at (3.5,-0.5);

\coordinate (L3) at (1.5,0.5);

\coordinate (L4) at (5,3);
\coordinate (ext) at (4,0);
\node at (ext) {$\mathcal{R}_{ext}$};
\coordinate (int) at (2,-4);
\node at (int) {$\mathcal{R}_{int}$};

\coordinate (f) at (4,4);

\coordinate (L5) at (4,2);
\coordinate (L6) at (7,3);

\end{tikzpicture}
    \caption{Penrose diagram}
    \label{fig: penrose diagram}
\end{figure}
\paragraph{Previous works on self-gravitating charged scalar field}
The decay properties and singularity structure of solutions to the Einstein--Maxwell--charged scalar field equations have been studied {mathematically in~\cite{kommemi2013global,van2018stability,van2021mass,van2023breakdown,van2025asymptotically,van2025coexistence,an2022trapped,shen2025formation,giorgi2025cauchy,kehle2025gravitational,gajic2019interior,kauffman2023global}}. In particular, in the small-data regime, global existence and decay to Minkowski spacetime were established in~\cite{chae2003global,kauffman2023global}. By contrast, for initial data that are sufficiently large in a suitable sense, the formation of trapped surfaces was proved in~\cite{an2022trapped,shen2025formation,giorgi2025cauchy}. {The construction of charged naked singularity solutions therefore requires balancing two competing effects: the initial data must be sufficiently large to produce a singularity, while remaining sufficiently controlled to avoid the formation of trapped surfaces that would enclose the singularity within a black hole.}

\paragraph{The self-gravitating real scalar field system}
{In the case of an uncharged, real-valued self-gravitating scalar field with vanishing Maxwell field ($F\equiv 0$ in~\eqref{eq1}--\eqref{eq2}), the equations~\eqref{eq1}--\eqref{eq2} reduce to the Einstein--scalar field equations; see~\eqref{eq: intro: esf1}--\eqref{eq: intro: ESF2}.} For such a system, the small-data global existence and the detailed structure of singularities have been completely identified by Christodoulou in his series of seminal works~\cite {chris86,chris91,chris93,christodoulou1994examples,chris99B,chris99}. In particular, he constructed \textit{continuously self-similar} naked singularity solutions for the Einstein-scalar field equations in~\cite{christodoulou1994examples}; see Section~\ref{intro: sec: chris original construction} for a brief review of his construction. Christodoulou's solutions can also be viewed as solutions to the Einstein--Maxwell--charged scalar field equations by embedding the real scalar field in the complex one and setting $F\equiv0$. The solutions constructed in Theorem~\ref{thm: rough version for whole spacetime}, however, are genuinely charged. More precisely, the Maxwell field is everywhere nontrivial and admits a quantitative nonzero lower bound when approaching the singularity $\mathcal{O}$.

\paragraph{The exterior and interior regions}
{A singularity in a Lorentzian spacetime is called the first singularity if its causal past does not contain any singularity. As proved rigorously in~\cite{kommemi2013global}, in a spherically symmetric spacetime, such a singularity $\mathcal{O}$ must appear at the center of the spherical symmetry $\Gamma:=\{r = 0\}$.} The modern framework for constructing naked singularity spacetimes for the Einstein equations has been shaped largely by the pioneering works
\cite{christodoulou1994examples,rodnianski2023naked,yakov22}.
In this framework, the construction is naturally divided into two parts:
\begin{itemize}
    \item The construction of the exterior region $\mathcal{R}_{\mathrm{ext}}$, defined as the causal future of the past light cone emanating from the naked singularity $\mathcal{O}$.

    \item The construction of the interior region $\mathcal{R}_{\mathrm{int}}$, defined as the causal past of the past light cone emanating from the naked singularity $\mathcal{O}$.
\end{itemize}
See Figure~\ref{fig: penrose diagram} for an illustration of these two regions. Christodoulou's original construction of continuously self-similar naked singularities for the Einstein-scalar field equations
\cite{christodoulou1994examples} treated both the interior and exterior regions and relied crucially on exact continuous self-similarity and spherical symmetry, rather than on solving an initial value problem; see Section~\ref{intro: sec: chris original construction}. By contrast, the methodology for constructing the exterior region was substantially generalized by Rodnianski and Shlapentokh-Rothman~\cite{rodnianski2023naked}. {Their approach consists of prescribing suitable characteristic initial data on $
\underline{C}\cup\bigl(C_{\mathrm{out}}\cap\mathcal{R}_{\mathrm{ext}}\bigr)$, where, as depicted in Figure~\ref{fig: penrose diagram}, $\underline{C}$ is the ingoing null cone and $C_{\mathrm{out}}$ is the outgoing null cone emanating from the center. The data on $\underline{C}$ are singular at the endpoint $\mathcal{O}$.} One then establishes global existence for the resulting characteristic initial value problem. {This approach has several advantages:
\begin{itemize}
    \item It does not rely on exact self-similarity and spherical symmetry. Therefore, it is considerably more robust and flexible for extensions to the Einstein equations coupled with other matter models or the Einstein vacuum equations~\cite{rodnianski2023naked}.
    \item The freedom to prescribe suitable data on the initial outgoing cone $C_{\mathrm{out}}\cap\mathcal{R}_{\mathrm{ext}}$ allows one to incorporate sufficiently regular perturbations supported in the exterior region directly into the construction. In this sense, the stability of the resulting spacetimes under such perturbations can be obtained within the same framework; see~\cite{rodnianski2023naked} for the Einstein vacuum equations and~\cite{singh2024construction,an2026naked} for the corresponding exterior stability results for the Einstein-scalar field equations.
    \item The freedom to prescribe singular data on the initial ingoing cone $\underline{C}$ makes it possible to accommodate a broader range of singular behaviors. In particular, the regularity of the initial data on $C_{\mathrm{out}}\cap\mathcal{R}_{\mathrm{ext}}$ is closely related to the singular behavior prescribed on $\underline{C}$. This flexibility allows the approach of~\cite{rodnianski2023naked} to be adapted to the construction of exterior regions of more general naked singularity spacetimes, including examples arising from smooth initial data on the outgoing cone; see~\cite{cicortas2024discretely} for exteriors of naked singularity solutions for the Einstein-scalar field equations with discretely self-similar behavior on $\underline{C}$ and smooth outgoing initial data.
\end{itemize}}
Building on this framework, we give a systematic treatment of the exterior region of naked singularity spacetimes for the Einstein--Maxwell--charged scalar field equations, covering, in particular, a scenario with stronger singular behavior of the Maxwell field.
\begin{theorem}[Rough version of Theorem~\ref{main theorem2}]
\label{intro: rough version of the exterior construction}
Fix a nonzero scalar field charge $q_{0}\neq0$. {Then there exists a broad class of spherically symmetric exterior regions of naked singularity spacetimes solving the Einstein--Maxwell--charged scalar field equations. These exterior regions arise from $C^{1,\alpha}$ initial data on $C_{\mathrm{out}}\cap\mathcal{R}_{\mathrm{ext}}$ with $0<\alpha\ll1$. Moreover, letting $r$ denote the area radius, the Maxwell field satisfies the following singular behavior along the past light cone emanating from the singularity $\mathcal{O}$:
\begin{equation*}
    \left|
    g^{\alpha\mu}g^{\beta\nu}
    F_{\alpha\beta}F_{\mu\nu}
    \right|^{\frac12}(p)
    \approx
    \left|\log r(p)\right|
    \qquad\text{as } p\to\mathcal{O}.
\end{equation*}
Furthermore, the exterior solutions constructed in this theorem are stable under sufficiently regular perturbations supported in $C_{\mathrm{out}}\cap\mathcal{R}_{\mathrm{ext}}$.}
\end{theorem}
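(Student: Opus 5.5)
The plan is to follow the Rodnianski--Shlapentokh-Rothman framework for the characteristic initial value problem in the exterior region $\mathcal{R}_{ext}$, written in spherically symmetric double-null gauge. We use coordinates $(u,v)$ with $\underline{C}=\{v=0\}$ and the singularity at $\mathcal{O}=\{u=0,v=0\}$, and work with the unknowns $r$, $\Omega^2$, $\phi$, $A_u$ (gauge $A_v=0$), and the charge $Q$, where $F=\frac{Q\Omega^2}{2r^2}du\wedge dv$. Since $|F|^2\sim Q^2/r^4$, the target behavior $|F|\approx|\log r|$ means prescribing data with $Q(u,0)\approx r^2|\log r|$ along $\underline{C}$. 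Because $\partial_u Q = -q_0 r^2\Im(\bar\phi D_u\phi)$, this is consistent with a scalar field on $\underline{C}$ whose self-similar-type profile carries a mild logarithmic correction. So the first step is to choose singular ingoing data on $\underline{C}$: take $\phi$ close to a self-similar (Christodoulou-type) profile with $r\partial_v\phi\sim$ bounded, $\partial_u\phi\sim |u|^{-1}$ up to small corrections, and $r(u,0)=-u$ (after normalization). Then add a complex phase/charge perturbation generating $Q\approx r^2|\log r|$. We then check that the constraint equations (Raychaudhuri) along $\underline{C}$ are solvable with $\Omega^2\approx1$. On $C_{out}\cap\mathcal{R}_{ext}$, we prescribe $C^{1,\alpha}$ data compatible with this singular corner, together with an arbitrary regular perturbation supported there.

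Next we pass to self-similar variables $s=-\log(-u)$ and $v/(-u)$, or equivalently work in $x=v/|u|^{1+\epsilon}$-type weighted regions. The exterior is then split into a near region $\{v\lesssim |u|^{1-\delta}\}$, where the solution is a perturbation of the exact self-similar exterior, and a far region extending to $\mathcal{I}$. In the near region we run a bootstrap with norms weighted in $|u|$. The Maxwell field enters the Einstein equations through the terms $Q^2\Omega^2/r^4\sim|\log r|^2$, which beats the scalar-field terms ($\sim r^{-2}$) by a factor $r^2|\log r|^2\to0$. The key point is therefore that the charge is \emph{subcritical in self-similar scaling}: every contribution of $Q$ to the equations for $\partial_u\partial_v r$, $\partial_u\partial_v\log\Omega^2$ and $D_u D_v\phi$ carries a factor $r^2\log^2 r$ or $q_0 A_u r\sim q_0 r|\log r|$ relative to the main terms. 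These are integrable in $s$, so they can be treated as perturbative error terms in the Christodoulou/Rodnianski--Shlapentokh-Rothman energy estimates. Here we also propagate $\partial_v Q = q_0 r^2\Im(\bar\phi D_v\phi)$ to show $Q$ keeps its profile $\approx r^2|\log r|$ on $\underline{C}$ and remains controlled throughout. In the far region, $r$ is bounded below away from $\underline{C}$ in the relevant scale. There we use standard red-shift-free estimates and monotonicity of the Hawking mass to show that no trapped surface forms ($\partial_v r>0$), and that $\mathcal{I}$ is reached with finite affine length, using the smallness of the deviation from the self-similar solution. Stability under regular perturbations on $C_{out}\cap\mathcal{R}_{ext}$ comes for free, since the bootstrap only uses smallness in norms that such perturbations respect.

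The main obstacle is the near region's closing of the bootstrap at the logarithmic level. The gauge potential satisfies $\partial_v A_u = -Q\Omega^2/(2r^2)\sim|\log r|$, so $A_u$ is not bounded in the natural self-similar norms. The covariant derivative $D_u\phi$ then picks up $q_0 A_u\phi$, which grows logarithmically and could destroy the $C^{1,\alpha}$ regularity and the hierarchy of weights. My first attempt would be to use a gauge transformation $\phi\mapsto e^{i q_0\chi}\phi$ adapted to the singular data, to absorb the leading $\log$ part of $A_u$. I would then estimate gauge-invariant quantities $|\phi|$, $|D\phi|$, and $Q$ directly, and lose only an $|u|^{-\epsilon}$ factor, which is absorbed by choosing $\alpha$ small. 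That smallness of $\alpha$ is exactly where the hypothesis $0<\alpha\ll1$ enters. The two-sided bound $|F|\approx|\log r|$ on $\underline{C}$ then follows from the prescribed data and the propagation estimate for $Q$.
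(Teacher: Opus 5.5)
There is a genuine gap. Your plan never confronts the actual difficulty of the near region, and the obstacle it does single out is not one. Along $\underline{C}=\{v=0\}$ the transverse derivative $\partial_v\phi$ is not free. It is forced by $\partial_u\big((-u)\partial_v\phi\big)=-\lambda\,\partial_u\phi+(\text{charge terms})$, where $\lambda(u,0)\approx(-u)^{q^2}$ and $q^2=k^2+\kappa^2$ is the small parameter. Two consequences follow:
\begin{enumerate}
\item The corner value $\partial_v\phi(-e^{-1},0)$ must be tuned to remove the homogeneous $(-u)^{-1}$ mode, which is Christodoulou's blue-shift instability.
\item Even after this tuning, $\partial_v\phi(u,0)$ is of size up to $q^{-1}(-u)^{q^2-1}$; for $\kappa=0$ it is $\approx k^{-1}(-u)^{k^2-1}$. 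So it is pointwise \emph{large}.
\end{enumerate}
If $\partial_v\phi$ kept this size over a self-similar range $v\lesssim(-u)^{1-q^2}$, the flux $\int\frac{r}{\lambda}|\partial_v\phi|^2\,dv$ driving $\mu$ would be $O(q^{-2})$, and no smallness bootstrap could close. Saying that $r\partial_v\phi$ is bounded hides this.

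The paper's key device is an auxiliary function $(\partial_v\phi)_S$. It is obtained by solving the $\underline{C}$ transport equation in $u$ with zero data on $\{v=\underline{v}(-u)^{1-q^2}\}$. It equals $\partial_v\phi$ on $\underline{C}$ and is $O(q\underline{v})$ in $L^1_v$, but it is only $C^{0,q^2/(1-q^2)}$ in $v$. The outgoing data must \emph{coincide} with it for $v\le\underline{v}$. One then bootstraps $\partial_v\phi-(\partial_v\phi)_S$ and $\Psi-\Psi(u,0)$ with weights in the twisted variable $v/(-u)^{1-q^2}$. Note also that $\Omega^2(u,0)\approx(-u)^{q^2}$, not $\approx1$. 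This construction is exactly where the $C^{1,\alpha}$ regularity with $\alpha=q^2/(1-q^2)\ll1$ comes from. It has nothing to do with absorbing an $|u|^{-\epsilon}$ loss from a gauge transformation, and taking $C^{1,\alpha}$ data merely ``compatible with the corner'' is not sufficient.

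Second, the near region is not a perturbation of an exact self-similar exterior. On $\underline{C}$ one has $\partial_uQ=q_0r^2\rho^2\partial_u\theta$ with $\rho\approx k|\log(-u)|$. So within the class $|\partial_u\phi|\approx q/(-u)$, the profile $Q\approx u^2|\log(-u)|$ forces $\rho\,\partial_u\theta\approx\kappa/(-u)$. That means a phase $\theta\sim\frac{\kappa}{k}\log|\log(-u)|$ that rotates without bound. Then $\partial_u\phi$ and the induced $\partial_v\phi(u,0)$ (an oscillatory integral whose size oscillates on the scale $q^{-1}$) differ at leading order from every self-similar profile. This is why the paper calls this exterior genuinely non-perturbative.

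Third, the obstacle you single out does not arise. In the gauge $A_v\equiv0$, $A_u(u,0)=0$, integrating $\partial_vA_u$ over the near region gives $|A_u|\lesssim|q_0|q^2(-u)|\log(-u)|\,v(-u)^{q^2-1}$. This vanishes on $\underline{C}$ and is far below the natural scale $(-u)^{-1}$. Consistent with your own scaling count, all charge terms are perturbative and no gauge transformation is needed.

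Finally, your far region is not one where $r$ is bounded below. Approaching $\mathcal{O}$ along its future light cone, $r\approx v\to0$, and $\partial_ur$, $\partial_u\phi$ can only be controlled with weights that degenerate as $u\to0$. The paper handles this with a separate region using weights $(V/(-U))^{p}$, $0<p<\tfrac12$, where $V=v^{1/(1-q^2)}$ and $U=u$.

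Your treatment of stability under regular perturbations on $C_{out}\cap\mathcal{R}_{ext}$ does match the paper. So does your treatment of the asymptotics $|F|\approx|\log r|$ on $\underline{C}$, which follows directly from the data since $|g^{\alpha\mu}g^{\beta\nu}F_{\alpha\beta}F_{\mu\nu}|=2Q^2/r^4$.
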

{The above theorem concerns only the exterior region; constructing a compatible interior solution remains an open problem, which we leave for future work. A related exterior-only construction was obtained in~\cite{cicortas2024discretely} for the Einstein--scalar field equations, where they constructed discretely self-similar (see Section~\ref{sec: intro: critical collapse}) exterior naked singularity solutions without a corresponding interior fill-in.}

\subsection{Comparison with naked singularities for the Einstein-scalar field equations and the Einstein vacuum equations}
\subsubsection{$k$-self-similar naked singularities constructed in~\cite{christodoulou1994examples}}
\label{intro: sec: chris original construction}
In this section, we briefly review the continuously self-similar naked singularity solutions, also known as the $k$-self-similar naked singularities, constructed by Christodoulou~\cite{christodoulou1994examples} for the Einstein-scalar field equations (i.e., $F=q_{0} = 0$ in \eqref{eq1}--\eqref{eq2}).

We write the spacetime metric $g$ in the so-called double-null gauge $(u,v,\theta,\varphi)$: \begin{equation}
\label{eq: intro double null gauge}
    g = -\frac{1}{2}\Omega^{2}(u,v)du\otimes dv-\frac{1}{2}\Omega^{2}(u,v)dv\otimes du+r^{2}(u,v)d\sigma^{2},\quad d\sigma^{2}: = d\theta^{2}+\sin^{2}\theta d\varphi^{2}.
\end{equation}
Then the Einstein-scalar field equations become \begin{align}
  &r\partial_{u}\partial_{v}r+\partial_{u}r\partial_{v}r = -\frac{1}{4}\Omega^{2},\quad \partial_{u}\left(\frac{\partial_{u}r}{\Omega^{2}}\right) = -\frac{r}{\Omega^{2}}\vert\partial_{u}\phi\vert^{2},\quad \partial_{v}\left(\frac{\partial_{v}r}{\Omega^{2}}\right) = -\frac{r}{\Omega^{2}}\vert\partial_{v}\phi\vert^{2},\label{eq: intro: esf1}\\&
  \partial_{u}\partial_{v}\log\Omega^{2} = \frac{\Omega^{2}}{2r^{2}}+\frac{2\partial_{u}r\partial_{v}r}{r^{2}}-2\partial_{u}\phi\partial_{v}\phi,\quad r\partial_{u}\partial_{v}\phi+\partial_{u}r\partial_{v}\phi+\partial_{v}r\partial_{u}\phi = 0.\label{eq: intro: ESF2}
\end{align}
The above equations admit two symmetries: \begin{itemize}
    \item Scaling symmetry: if $(r,\Omega^{2},\phi)(u,v)$ is a solution, then so is $(\lambda r,\Omega^{2},\phi)\left(\frac{u}{\lambda},\frac{v}{\lambda}\right)$ for any $0\neq \lambda\in\mathbb{R}$;
    \item Translation symmetry: if $(r,\Omega^{2},\phi)(u,v)$ is a solution, then so is $(r,\Omega^{2},\phi+a)(u,v)$ for any $a\in\mathbb{R}$.
\end{itemize}
The $k$-self-similarity considered in~\cite{christodoulou1994examples} is to assume that the solution is invariant under the above two symmetries with $a = -k\log\lambda$: \begin{equation}
    r(u,v) = \lambda r\left(\frac{u}{\lambda},\frac{v}{\lambda}\right),\quad \Omega^{2}(u,v) = \Omega^{2}\left(\frac{u}{\lambda},\frac{v}{\lambda}\right),\quad \phi(u,v) = \phi\left(\frac{u}{\lambda},\frac{v}{\lambda}\right)-k\log\lambda.
\end{equation}
As a consequence of the $k$-self-similarity, the solution $(r,\Omega^{2},\phi)$ takes the following special form \begin{equation}
    r(u,v) = (-u)\mr{r}(z),\quad \Omega^{2}(u,v) = \mr{\Omega}^{2}(z),\quad \phi(u,v) = \mr{\phi}(z)-k\log(-u),\label{eq:intro:k self-similarity}
\end{equation}
where $z: = \frac{v}{(-u)}$ is called the self-similar coordinate. From the form~\eqref{eq:intro:k self-similarity}, one can further fix the gauge choice of the center $\Gamma: = \{r = 0\} = \{\mr{r} = 0\}$ to be $\{z = -1\}$.

The Einstein--scalar field equations~\eqref{eq: intro: esf1}--\eqref{eq: intro: ESF2} then reduce to a system of ODEs for $(\mr{r},\mr{\Omega}^{2},\mr{\phi})$. In his original construction of naked singularities~\cite{christodoulou1994examples}, Christodoulou introduced two further renormalized variables, thereby reducing this system to a two-dimensional autonomous ODE system. A detailed analysis of the resulting phase portrait then yields the naked singularity solutions for $0<k^{2}<1/3$, and the singularity $\mathcal{O}$ corresponds to $(u,v) = (0,0)$. One should note that although we have presented the main ideas of Christodoulou's construction in double-null coordinates, the original argument in~\cite{christodoulou1994examples} was carried out in the so-called Bondi coordinates; see~\cite{singh2024construction} for the corresponding formulation and argument in double-null coordinates.

Notably, similar to the spacetimes in Theorem~\ref{thm: rough version for whole spacetime} and Theorem~\ref{intro: rough version of the exterior construction}, the $k$-self-similar naked singularity spacetime in~\cite{christodoulou1994examples} arises from non-smooth initial data with $C^{1,\frac{k^{2}}{1-k^{2}}}$-Hölder regularity of the scalar field on the initial outgoing cone emanating from the center.

However, the above approach does not apply to the Einstein--Maxwell--charged scalar field equations, since the nonzero scalar-field charge $q_{0}\neq0$ breaks both the translation and scaling symmetries. Nevertheless, to construct the interior solutions of the charged naked singularities in Theorem~\ref{thm: rough version for whole spacetime}, {we first apply Christodoulou's approach to the corresponding Einstein--Maxwell system with zero Maxwell field and an uncharged scalar field.} In this setting, this system reduces to~\eqref{eq: intro: esf1}--\eqref{eq: intro: ESF2}, with the scalar field $\phi$ now being complex-valued. In particular, the scaling and translation symmetries remain available, and hence one can define an analogous notion of $k$-self-similarity; see Definition~\ref{def: k1k2 self-similarity}. Although the naked singularity spacetimes constructed in Theorem~\ref{thm: rough version for whole spacetime} are genuinely charged, the contributions from the Maxwell field $F$ are sub-leading when compared to the asymptotic behavior of $(r,\Omega,\phi)$ near the singularity. This allows us to construct the interior solutions in Theorem~\ref{thm: rough version for whole spacetime} perturbatively around the continuously self-similar solutions of the Einstein--Maxwell system with zero Maxwell field and an uncharged scalar field; see Section~\ref{sec: setup of the perturbations} for an overview of this perturbative argument.
\subsubsection{Naked singularity exteriors for the Einstein vacuum equations constructed in~\cite{rodnianski2023naked}}
{Christodoulou's pioneering construction in~\cite{christodoulou1994examples} is closely tied to the specific structure of the Einstein-scalar field system, exploiting both continuous self-similarity and spherical symmetry to reduce the equations to a $2\times2$ system of ODEs. Consequently, the approach does not directly extend to the Einstein equations coupled to other matter models or to the Einstein vacuum equations.} In the vacuum setting, however, Birkhoff's theorem implies that imposing spherical symmetry eliminates all nontrivial dynamical degrees of freedom. Moreover, the notion of $k$-self-similarity in~\cite{christodoulou1994examples} relies on the translation symmetry of the scalar field and therefore has no direct analogue in vacuum. The seminal work of Rodnianski and Shlapentokh-Rothman~\cite{rodnianski2023naked} introduced a geometric notion of $k$-self-similarity, also referred to as \emph{twisted self-similarity}, and developed a more robust framework for constructing the exterior region of naked singularity spacetimes. Roughly speaking, their approach is to prescribe suitable characteristic initial data on $\underline{C}\cup(C_{out}\cap\mathcal{R}_{ext})$ that obey the bound suggested by the twisted self-similarity and already contain a curvature singularity along the initial ingoing cone $\underline{C}$, and then establish global existence toward the future for the resulting hyperbolic evolution. 

For the exterior constructions in Theorem~\ref{thm: rough version for whole spacetime} and Theorem~\ref{intro: rough version of the exterior construction}, we adapt and generalize the approach of~\cite{rodnianski2023naked}; see Section~\ref{sec: overview of the exterior construction} for details. The exterior spacetime in Theorem~\ref{thm: rough version for whole spacetime} can also be constructed perturbatively, following the approach described in Section~\ref{intro: sec: chris original construction}. We emphasize, however, that the exterior spacetime constructed in Theorem~\ref{intro: rough version of the exterior construction} is genuinely non-perturbative.

\subsection{Naked singularities arising from the critical collapse}
\label{sec: intro: critical collapse}
Another important source of naked singularities in gravitational collapse arises from critical phenomena at the threshold of black-hole formation. For the Einstein--scalar field equations, it is known that sufficiently small initial data lead to dispersion~\cite{chris86}, whereas sufficiently large initial data can lead to black-hole formation~\cite{chris91}. In his seminal numerical work~\cite{choptuik1}, Choptuik investigated the transition between small-data dispersion and large-data black-hole formation for the spherically symmetric Einstein--scalar field equations and discovered a naked singularity spacetime at the threshold, a phenomenon now referred to as critical collapse. In contrast to the spacetimes constructed in~\cite{christodoulou1994examples}, the naked singularity spacetime in~\cite{choptuik1} arises from smooth initial data and exhibits discrete, rather than continuous, self-similarity. {Beyond these differences in regularity and symmetry, the naked singularity spacetimes in~\cite{choptuik1} and~\cite{christodoulou1994examples} are also genuinely distinct from the perspective of stability. With the gauge of the initial geometry fixed, the former is expected to be unstable under rescaling of the scalar-field initial data, whereas the latter have been shown to be stable~\cite{zheng2026nonlinear,singh2026nonlinear}, thereby ruling out a critical-collapse interpretation for the naked singularities constructed in~\cite{christodoulou1994examples}.} The exterior region of a discretely self-similar naked singularity spacetime for the Einstein--scalar field equations was constructed rigorously in~\cite{cicortas2024discretely}.

Critical phenomena have also been studied in the presence of electromagnetic charge. For the Einstein--Maxwell-charged scalar field equations, small-data dispersion~\cite{chae2003global,kauffman2023global} and large-data black hole formation~\cite{an2022trapped,shen2025formation} have also been established mathematically. The work~\cite{gundlach1996charge} studied the transition between dispersion and black-hole formation and showed that the critical behavior remains governed by the uncharged Choptuik solution. In particular, the Maxwell field becomes sub-leading in the critical limit, which is reminiscent of the behavior of the Maxwell field in our Theorem~\ref{thm: rough version for whole spacetime}. This picture was subsequently confirmed numerically in~\cite{hod1997critical}. The rigorous mathematical construction of these charged naked singularity spacetimes from the critical collapse remains an interesting open problem.
\subsection{Stability and instability of naked singularities}
\label{intro: sec: stability}
A fundamental question concerning any construction of naked singularities is whether the resulting spacetime is stable under perturbations of the initial data. This issue is closely tied to weak cosmic censorship, whose genericity formulation allows the existence of exceptional naked singularity solutions while requiring that such behavior be destroyed by sufficiently general perturbations. 

For the Einstein--scalar field equations, Christodoulou proved that the naked singularity spacetimes arising from his construction~\cite{christodoulou1994examples} are unstable under sufficiently rough spherically symmetric exterior perturbations~\cite{chris99}. Subsequent works~\cite{li2025interior,liuli,liuli_outsidesymm,an_highcodim,singh2} further refined the regularity and support conditions on the perturbations and established the following instability results:
\begin{itemize}
\item For generic spherically symmetric exterior perturbations with Hölder regularity below that of the initial data in~\cite{christodoulou1994examples}, the naked singularities are unstable~\cite{singh2,liuli};
\item For generic spherically symmetric perturbations supported in the interior and with Hölder regularity below that of the initial data in~\cite{christodoulou1994examples}, the naked singularities are unstable~\cite{li2025interior};
\item For generic rough exterior perturbations outside spherical symmetry, the naked singularities are unstable~\cite{liuli_outsidesymm,an_highcodim}.
\end{itemize}
Although the perturbations considered in the above instability results are rougher than the background spacetime, they are still regarded as ``regular'' in view of the local well-posedness result for the spherically symmetric Einstein-scalar field equations established in~\cite{chris93}, which applies to scalar field initial data whose derivatives belong to the space of functions of bounded variation.

However, additional subtleties arise when one considers more regular perturbations. It was shown in~\cite{singh2026nonlinear,zheng2026nonlinear} that the naked singularities in~\cite{christodoulou1994examples} are stable for all initial data in a small open neighborhood of the background data, with respect to the topology of a localized Hölder space capturing the sharp regularity of the background initial data. This leaves open the question of the precise functional framework in which WCC should be formulated; see also~\cite{an2026naked} for partial stability results under sufficiently regular exterior perturbations outside of spherical symmetry.

For the naked singularity spacetimes constructed in Theorem~\ref{thm: rough version for whole spacetime} and Theorem~\ref{intro: rough version of the exterior construction}, \textbf{stability under sufficiently regular spherically symmetric exterior perturbations follows directly from our construction.} For the spacetime constructed in Theorem~\ref{thm: rough version for whole spacetime}, we further expect an instability result analogous to~\cite{chris99} and a stability result analogous to~\cite{zheng2026nonlinear}. We leave a systematic study of these stability and instability properties to future work.

\paragraph{Acknowledgments}
The author would like to thank his advisor, Maxime Van de Moortel, for his kind support, patience, and valuable discussions. The author thanks Jaydeep Singh for many enlightening discussions. The author gratefully acknowledges the support
from the NSF Grant DMS-2247376.
\paragraph{Declaration on AI use.}
All the mathematical work in the current paper was completed at a time when AI tools were considerably less capable than they are today. All mathematical research and arguments were carried out independently by the author. AI was used only to improve the exposition of the manuscript.
\section{Preliminaries}
\subsection{Reduction of the Einstein--Maxwell-charged scalar field equations under spherical symmetry}
\label{sec: reduction of EMcsf}
Recall that the Einstein--Maxwell-charged scalar field equations take the form \begin{align}
&Ric_{\mu\nu}(g)-\frac{1}{2}R(g)g_{\mu\nu} = T_{\mu\nu}^{EM}+T_{\mu\nu}^{SC},\label{eq:whole equation1}\\&
T_{\mu\nu}^{EM} = 2\left(g^{\alpha\beta}F_{\alpha\mu}F_{\beta\nu}-\frac{1}{4}F^{\alpha\beta}F_{\alpha\beta}g_{\mu\nu}\right),\quad \nabla^{\mu}F_{\mu\nu} = iq_{0}\left(\frac{\phi\overline{D_{\nu}\phi}-\overline{\phi}D_{\nu}\phi}{2}\right),\ F=dA,\\&
T^{SC}_{\mu\nu} = 2\left(\Re{\left(D_{\mu}\phi\overline{D_{\nu}\phi}\right)}-\frac{1}{2}\left(g^{\alpha\beta}D_{\alpha}\phi\overline{D_{\beta}\phi}\right)g_{\mu\nu}\right),\quad D_{\mu} = \nabla_{\mu}+iq_{0}A_{\mu},\\&
g^{\mu\nu}D_{\mu}D_{\nu}\phi = 0,\label{eq:whole equation5}
\end{align}
where $g$ is the metric of the spacetime $\mathcal{M}$, $\phi$ is the scalar field, $F$ is the electromagnetic field, $A$ is the electromagnetic potential, and $q_{0}$ is the scalar field charge. Sometimes we refer to $q_{0}$ as the coupling constant. In spherical symmetry, we use double-null coordinates $(u,v,\theta,\varphi)$ in which the metric takes the form
\begin{equation*}
g = -\frac{1}{2}\Omega^{2}du\otimes dv-\frac{1}{2}\Omega^{2}dv\otimes du+r^{2}(u,v)d\sigma^{2},
\end{equation*}
where $d\sigma^{2} = d\theta^{2}+\sin^{2}\theta d\varphi^{2}$ is the standard sphere metric and $r = r(u,v)$ is the area radius function. We then consider the quotient manifold $\mathcal{Q} = \mathcal{M}/SO(3)$, equipped with the coordinates $(u,v)$ and metric \begin{equation}
g_{\mathcal{Q}} = -\frac{1}{2}\Omega^{2}du\otimes dv-\frac{1}{2}\Omega^{2}dv\otimes du.\label{eq: pre: global double-null gauge}
\end{equation}
On the quotient manifold $\mathcal{Q}$, we further assume the electromagnetic field $F$ takes the form \begin{equation}
F = \frac{Q(u,v)\Omega^{2}(u,v)}{2r^{2}}du\wedge dv,\label{def: form of Maxwell field}
\end{equation}
where the function $Q(u,v)$ is called the spacetime charge. $F$ and the one-form $A$ are related by \begin{equation}
F_{uv} = \partial_{u}A_{v}-\partial_{v}A_{u},\quad A = A_{u}du+A_{v}dv.\label{eq:form of Maxwell potential}
\end{equation}
Using the spherical symmetry, the form of the electromagnetic field $F$ \eqref{def: form of Maxwell field}, and the form of the electromagnetic potential $A$ \eqref{eq:form of Maxwell potential}, the Einstein--Maxwell-charged scalar field equations \eqref{eq:whole equation1}-\eqref{eq:whole equation5} can be reduced to:\begin{align}
\partial_{u}\left(\frac{\partial_{u}r}{\Omega^{2}}\right) &= -\frac{r}{\Omega^{2}}\left\vert D_{u}\phi\right\vert^{2},\label{eq:spherical symmetric equaion1}\\
\partial_{v}\left(\frac{\partial_{v}r}{\Omega^{2}}\right)& = -\frac{r}{\Omega^{2}}\left\vert D_{v}\phi\right\vert^{2},\label{eq:dv-Ray equation}\\
r\partial_{u}\partial_{v}r +\partial_{u}r\partial_{v}r & = -\frac{\Omega^{2}}{4}\left(1-\frac{Q^{2}}{r^{2}}\right),\label{eq:wave equation for r}\\
\partial_{u}\partial_{v}\log\Omega^{2}& = \frac{2\partial_{u}r\partial_{v}r}{r^{2}}+\frac{\Omega^{2}}{2r^{2}}\left(1-\frac{2Q^{2}}{r^{2}}\right)-2\Re{\left(D_{u}\phi\overline{D_{v}\phi}\right)},\label{eq:wave equation for omega}\\D_{u}D_{v}\phi+\frac{\partial_{u}rD_{v}\phi}{r}+\frac{\partial_{v}rD_{u}\phi}{r}& =\frac{iq_{0}Q\Omega^{2}}{4r^{2}}\phi,\label{eq:wave equation for phi},\\
\partial_{u}Q& = -q_{0}r^{2}\Im{\left(\phi\overline{D_{u}\phi}\right)},\label{eq:u-transport equation for Q}\\\partial_{v}Q &= q_{0}r^{2}\Im{\left(\phi\overline{D_{v}\phi}\right)},\label{eq:v-transport equation for Q}
\\\
\partial_{u}A_{v}-\partial_{v}A_{u}& = \frac{Q\Omega^{2}}{2r^{2}}.\label{eq:spherical symmetric equation last}
\end{align}
The equations \eqref{eq:spherical symmetric equaion1}-\eqref{eq:dv-Ray equation} are called the Raychaudhuri equations, the equations \eqref{eq:wave equation for r}-\eqref{eq:wave equation for phi} are the wave equations for $r$, $\Omega^{2}$, and $\phi$, respectively, the equations \eqref{eq:u-transport equation for Q}-\eqref{eq:v-transport equation for Q} are the transport equations for the spacetime charge $Q$, and the equation \eqref{eq:spherical symmetric equation last} comes from the relation between $F$ and $A$.

Note that the Einstein--Maxwell-charged scalar field equations under spherical symmetry \eqref{eq:spherical symmetric equaion1}-\eqref{eq:spherical symmetric equation last} are invariant under the gauge transformation\begin{equation}
A_{new} = A-d\chi,\quad \phi_{new} = e^{iq_{0}\chi}\phi,\label{eq: gauge invariance of the EMcsf}
\end{equation}
for any real-valued smooth function $\chi$. Hence, we can make the following gauge choice:\begin{equation}
A_{v}\equiv 0.\label{eq: gauge choice of A}
\end{equation}
Under this gauge choice of $A$, we can rewrite the wave equation for $\phi$ \eqref{eq:wave equation for phi} as:\begin{equation}
r\partial_{u}\partial_{v}\phi+\partial_{u}r\partial_{v}\phi+\partial_{v}r\partial_{u}\phi+iq_{0}\left(A_{u}\partial_{v}(r\phi)-\frac{Q\Omega^{2}}{4r}\phi\right) = 0.
\label{eq: pre wave equation for phi}
\end{equation}
We decompose the scalar field $\phi$ into its real and imaginary parts $\phi = \phi_{1}+i\phi_{2}$. Then \begin{align}
r\partial_{u}\partial_{v}\phi_{1}+\partial_{u}r\partial_{v}\phi_{1}+\partial_{v}r\partial_{u}\phi_{1} = &q_{0}\left(A_{u}\partial_{v}(r\phi_{2})-\frac{Q\Omega^{2}}{4r}\phi_{2}\right),\\
r\partial_{u}\partial_{v}\phi_{2}+\partial_{u}r\partial_{v}\phi_{2}+\partial_{v}r\partial_{u}\phi_{2}=&-q_{0}\left(A_{u}\partial_{v}(r\phi_{1})-\frac{Q\Omega^{2}}{4r}\phi_{1}\right).
\end{align}

Instead of working with $\Omega^{2}$, it is often useful to work with the Hawking mass $m$ and the mass ratio $\mu$:\begin{align}
m:&=\frac{r}{2}\left(1-g_{Q}(\nabla r,\nabla r)\right) = \frac{r}{2}\left(1+\frac{4\partial_{u}r\partial_{v}r}{\Omega^{2}}\right),\\\mu:&=\frac{2m}{r} = 1+\frac{4\partial_{u}r\partial_{v}r}{\Omega^{2}}.
\end{align}
For the Einstein--Maxwell-charged scalar field equations, we also define the following renormalized Hawking mass: \begin{equation}
    \varpi: = m+\frac{Q^{2}}{2r}.
\end{equation}

It is straightforward to derive the following transport equations for $m$, $\mu$, and $\varpi$ from \eqref{eq:whole equation1}-\eqref{eq:whole equation5}:\begin{align}
\partial_{u}m+\frac{r}{\partial_{u}r}\left\vert D_{u}\phi\right\vert^{2}m &= \frac{1}{2}\left(\frac{r^{2}}{\partial_{u}r}\left\vert D_{u}\phi\right\vert^{2}+\frac{\partial_{u}r}{r^{2}}Q^{2}\right),\label{eq:u-transport equation for m}\\
\partial_{v}m+\frac{r}{\partial_{v}r}\left\vert D_{v}\phi\right\vert^{2}m&=\frac{1}{2}\left(\frac{r^{2}}{\partial_{v}r}\left\vert D_{v}\phi\right\vert^{2}+\frac{\partial_{v}r}{r^{2}}Q^{2}\right),\label{eq: v-transport equation for m}\\
\partial_{u}\mu+\left(\frac{\partial_{u}r}{r}+\frac{r}{\partial_{u}r}\left\vert D_{u}\phi\right\vert^{2}\right)\mu& = \frac{r}{\partial_{u}r}\left\vert D_{u}\phi\right\vert^{2}+\frac{\partial_{u}r}{r^{3}}Q^{2},\\
\partial_{v}\mu+\left(\frac{\partial_{v}r}{r}+\frac{r}{\partial_{v}r}\left\vert D_{v}\phi\right\vert^{2}\right)\mu& = \frac{r}{\partial_{v}r}\left\vert D_{v}\phi\right\vert^{2}+\frac{\partial_{v}r}{r^{3}}Q^{2}\label{eq: v-transport equation for mu},\\
\partial_{u}\varpi+\frac{r}{\partial_{u}r}\vert D_{u}\phi\vert^{2}\varpi&=-q_{0}Qr\Im\left(\phi\overline{D_{u}\phi}\right)+\frac{\vert D_{u}\phi\vert^{2}Q^{2}}{2\partial_{u}r}+\frac{1}{2}\frac{r^{2}}{\partial_{u}r}\vert D_{u}\phi\vert^{2},\\
\partial_{v}\varpi+\frac{r}{\partial_{v}r}\vert D_{v}\phi\vert^{2}\varpi&=q_{0}Qr\Im\left(\phi\overline{D_{v}\phi}\right)+\frac{\vert D_{v}\phi\vert^{2}Q^{2}}{2\partial_{v}r}+\frac{1}{2}\frac{r^{2}}{\partial_{v}r}\vert D_{v}\phi\vert^{2}.
\end{align}
Throughout the paper, we will use the following convention:
\begin{convention}
For the scalar field $\phi$, we use the notation:\begin{equation*}
\phi_{1}: = \Re{\phi},\quad \phi_{2}: = \Im{\phi},\quad \phi = \rho e^{i\theta}.
\end{equation*}
\end{convention}
\subsection{Einstein--Maxwell-uncharged scalar field equations and $(k_{1},k_{2})$-self-similarity}
\subsubsection{Einstein--Maxwell-uncharged scalar field equations under spherical symmetry}
In the construction of the interior solution of the naked singularity, we first take the coupling constant $q_{0} = 0$. Under the regular center assumption, i.e., $Q\equiv0$ at the axis, we have that $Q\equiv0$ in the whole spacetime when $q_{0} = 0$. Similar to Section~\ref{sec: reduction of EMcsf}, under the spherically symmetric assumption
\begin{equation*}
    g = -\frac{1}{2}\hat{\Omega}^{2}d\hat u\otimes d\hat v-\frac{1}{2}\hat{\Omega}^{2}d\hat v\otimes d\hat u+r^{2}(d\theta^{2}+\sin^{2}\theta d\varphi^{2}),
\end{equation*}
we can reduce the Einstein--Maxwell-uncharged scalar field equations to the following equations:\begin{align}
    \partial_{\hat u}\left(\frac{\partial_{\hat u}r}{\hat\Omega^{2}}\right) =& -\frac{r}{\hat\Omega^{2}}\left\vert\partial_{\hat u}\phi\right\vert^{2},\label{eq: uncharged u Ray equation}\\
     \partial_{\hat v}\left(\frac{\partial_{\hat v}r}{\hat\Omega^{2}}\right)=&-\frac{r}{\hat\Omega^{2}}\left\vert\partial_{\hat v}\phi\right\vert^{2},\label{eq: uncharged v Ray equation}\\ r\partial_{\hat u}\partial_{\hat v}r +\partial_{\hat u}r\partial_{\hat v}r=&-\frac{\hat\Omega^{2}}{4},\label{eq: uncharged r wave equation},\\
     \partial_{\hat{u}}\partial_{\hat{v}}\log\hat{\Omega}^{2}=&\frac{2\partial_{\hat{u}}r\partial_{\hat{v}}r}{r^{2}}+\frac{\hat{\Omega}^{2}}{2r^{2}}-2\Re\left(D_{\hat u}\phi\overline{D_{\hat v}\phi}\right),\\r\partial_{\hat u}\partial_{\hat v}\phi_{1}+\partial_{\hat u}r\partial_{\hat v}\phi_{1}+\partial_{\hat v}r\partial_{\hat u}\phi_{1}=&0,\label{eq: uncharged phi1 wave equation}\\r\partial_{\hat u}\partial_{\hat v}\phi_{2}+\partial_{\hat u}r\partial_{\hat v}\phi_{2}+\partial_{\hat v}r\partial_{\hat u}\phi_{2}=&0,\label{eq: uncharged phi2 wave equation}\\
     \partial_{\hat u}m+\frac{r}{\partial_{\hat u}r}\vert\partial_{\hat{u}}\phi\vert^{2} m=&\frac{1}{2}\frac{r^{2}}{\partial_{\hat u}r}\left\vert\partial_{\hat u}\phi\right\vert^{2},\label{eq: uncharged du m equation}\\\partial_{\hat v}m+\frac{r}{\partial_{\hat v}r}\vert\partial_{\hat{v}}\phi\vert^{2}m=&\frac{1}{2}\frac{r^{2}}{\partial_{\hat v}r}\left\vert\partial_{\hat v}\phi\right\vert^{2}.\label{eq: uncharged dvm equation}
\end{align}
\begin{remark}
    The above equations~\eqref{eq: uncharged u Ray equation}--\eqref{eq: uncharged dvm equation} closely resemble the spherically symmetric Einstein–scalar field equations, except that the scalar field $\phi = \phi_{1}+i\phi_{2}$ is complex-valued rather than necessarily real-valued.

\end{remark}
\subsubsection{Notion of $(k_{1},k_{2})$-self-similarity}
The spherically symmetric Einstein--Maxwell-uncharged scalar field equations \eqref{eq: uncharged u Ray equation}-\eqref{eq: uncharged dvm equation} are invariant under the following scaling and translation:\begin{equation}
\begin{aligned}
    &r(\hat u,\hat v)\rightarrow ar\left(\frac{\hat u}{a},\frac{\hat v}{a}\right),\quad \hat\Omega^{2}(\hat u,\hat v)\rightarrow \hat\Omega^{2}\left(\frac{\hat u}{a},\frac{\hat v}{a}\right),\\& \phi_{1}(\hat u,\hat v)\rightarrow \phi_{1}\left(\frac{\hat u}{a},\frac{\hat v}{a}\right)+b_{1},\quad \phi_{2}(\hat u,\hat v)\rightarrow \phi_{2}\left(\frac{\hat u}{a},\frac{\hat v}{a}\right)+b_{2},
    \end{aligned}
\end{equation}
where $a,b_{1},b_{2}\in\mathbb{R}$. This motivates us to introduce the following notion of $(k_{1},k_{2})$-self-similarity:
\begin{definition}
\label{def: k1k2 self-similarity}
    We say that a solution $(r,\hat\Omega^{2},\phi,m)$ to \eqref{eq: uncharged u Ray equation}-\eqref{eq: uncharged dvm equation} is $(k_{1},k_{2})$-self-similar if \begin{equation}
    \label{eq: k1k2 self simiar}
        \begin{aligned}
            &r(\hat u,\hat v) = ar\left(\frac{\hat u}{a},\frac{\hat v}{a}\right),\quad \hat\Omega^{2}(\hat u,\hat v) = \hat\Omega^{2}\left(\frac{\hat u}{a},\frac{\hat v}{a}\right),\quad m(\hat u,\hat v) = am\left(\frac{\hat u}{a},\frac{\hat v}{a}\right),\\&
            \phi_{1}(\hat u,\hat v) = \phi_{1}\left(\frac{\hat u}{a},\frac{\hat v}{a}\right)-k_{1}\log a,\quad \phi_{2}(\hat u,\hat v) = \phi_{2}\left(\frac{\hat u}{a},\frac{\hat v}{a}\right)-k_{2}\log a,
        \end{aligned}
    \end{equation}
    for $a\in\mathbb{R}_{+}$ and $(k_{1},k_{2})\in\mathbb{R}\times \mathbb{R}$.
\end{definition}
A direct consequence of the $(k_{1},k_{2})$-self-similarity is the following.
\begin{lemma}
If a solution $(r,\hat\Omega^{2},\phi,m)$ to \eqref{eq: uncharged u Ray equation}-\eqref{eq: uncharged dvm equation} is $(k_{1},k_{2})$-self-similar, then there exist functions $\mathring{r}(\hat z)$, $\mathring{\hat\Omega}(\hat z)$, $\mr{m}(\hat z)$, $\mr{\hat\phi}_{1}(\hat z)$, and $\mr{\hat\phi}_{2}(\hat z)$, such that \begin{equation}
\begin{aligned}
    &r(\hat u,\hat v) = (-\hat u)\mr{r}(\hat z),\quad \hat\Omega^{2}(\hat u,\hat v) = \mr{\hat\Omega}^{2}(\hat z),\quad m(\hat u,\hat v) = (-\hat u)\mr{m}(\hat z),\\& \phi_{1}(\hat u,\hat v) = \mr{\hat\phi}_{1}(\hat z)-k_{1}\log(-\hat u),\quad \phi_{2}(\hat u,\hat v) =\mr{\hat\phi}_{2}(\hat z)-k_{2}\log(-\hat u),
    \end{aligned}
\end{equation} 
where $\hat z: = \frac{\hat v}{(-\hat u)}$ is called the self-similar coordinate. Moreover, we have the following identities \begin{align}
    &\partial_{\hat u}r(\hat u,\hat v) = -\mr{r}(\hat z)+\hat z\frac{d\mr{r}}{d\hat z}(\hat z)=:\hat\nu(\hat u,\hat v) = \mr{\hat\nu}(\hat z),\label{eq: expression of dur}\\&
    \partial_{\hat v}r(\hat u,\hat v) = \frac{d\mr{r}}{d\hat z}(\hat z) = :\hat \lambda(\hat u,\hat v) = \mr{\hat\lambda}(\hat z),\label{eq: expression of dvr}\\&
    \mr{r}+\mr{\hat \nu} = \hat z\mr{\hat\lambda}.\label{eq: algebraic relation of derivatives of r}
\end{align}
\end{lemma}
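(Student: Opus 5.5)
The plan is to exploit the freedom in the scaling parameter $a$ and make the choice $a=-\hat u$, which is positive because we work in the region $\{\hat u<0\}$. Substituting $a=-\hat u$ into the identities of Definition~\ref{def: k1k2 self-similarity} evaluates every quantity at the rescaled point $(\hat u/a,\hat v/a)=(-1,\hat z)$ with $\hat z=\hat v/(-\hat u)$. This suggests defining the profiles
\begin{equation*}
\mr{r}(\hat z):=r(-1,\hat z),\quad \mr{\hat\Omega}^{2}(\hat z):=\hat\Omega^{2}(-1,\hat z),\quad \mr{m}(\hat z):=m(-1,\hat z),\quad \mr{\hat\phi}_{j}(\hat z):=\phi_{j}(-1,\hat z),\ j=1,2.
\end{equation*}
With these definitions the self-similarity relations give at once
\begin{equation*}
r(\hat u,\hat v)=(-\hat u)\mr{r}(\hat z),\quad \hat\Omega^{2}(\hat u,\hat v)=\mr{\hat\Omega}^{2}(\hat z),\quad m(\hat u,\hat v)=(-\hat u)\mr{m}(\hat z),\quad \phi_{j}(\hat u,\hat v)=\mr{\hat\phi}_{j}(\hat z)-k_{j}\log(-\hat u).
\end{equation*}
This is the entire representation part of the lemma. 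The only point requiring care is that the line $\{\hat u=-1\}$ lies in the domain of the solution for the relevant range of $\hat z$, which is implicit in the invariance being assumed for all $a>0$.

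For the derivative identities, I would differentiate $r=(-\hat u)\mr{r}(\hat z)$ using $\partial_{\hat u}\hat z=\hat v/\hat u^{2}=\hat z/(-\hat u)$ and $\partial_{\hat v}\hat z=1/(-\hat u)$. In the $\hat v$-derivative the factor $(-\hat u)$ cancels, leaving $\partial_{\hat v}r=\frac{d\mr r}{d\hat z}$, which is \eqref{eq: expression of dvr}. In the $\hat u$-derivative, the product rule gives $\partial_{\hat u}r=-\mr{r}+(-\hat u)\frac{d\mr r}{d\hat z}\cdot\frac{\hat z}{(-\hat u)}=-\mr r+\hat z\frac{d\mr r}{d\hat z}$, which is \eqref{eq: expression of dur}. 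Both right-hand sides depend only on $\hat z$, and this justifies the notation $\mr{\hat\nu}$ and $\mr{\hat\lambda}$.

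Finally, \eqref{eq: algebraic relation of derivatives of r} is obtained by rearranging \eqref{eq: expression of dur} and substituting \eqref{eq: expression of dvr}. There is no real obstacle in this argument. The one step that needs justification is the differentiability of $\mr r$, and this follows from the assumed regularity of the solution restricted to $\{\hat u=-1\}$.
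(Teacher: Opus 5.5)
Your proposal is correct and follows the same approach as the paper: set $a=-\hat u$ in the self-similarity relations of Definition~\ref{def: k1k2 self-similarity} to read off the profiles on $\{\hat u=-1\}$. The chain-rule computation with $\partial_{\hat u}\hat z=\hat z/(-\hat u)$ and $\partial_{\hat v}\hat z=1/(-\hat u)$ then gives the identities for $\mr{\hat\nu}$ and $\mr{\hat\lambda}$ and the algebraic relation between them. Your write-up simply makes explicit the ``straightforward computation'' that the paper leaves to the reader.
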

\begin{proof}
This can be proved by taking $a = (-\hat u)$ in \eqref{eq: k1k2 self simiar}. A straightforward computation gives \eqref{eq: expression of dur}-\eqref{eq: algebraic relation of derivatives of r}.
\end{proof}
Under the $(k_{1},k_{2})$-self-similarity, we can further reduce the equations \eqref{eq: uncharged u Ray equation}-\eqref{eq: uncharged dvm equation} to \begin{align}
    \hat z\mr{r}\frac{d\mr{\hat\lambda}}{d\hat z} =& -\left(\frac{1}{4}\mr{\hat\Omega}^{2}+\mr{\hat \lambda}\mr{\hat \nu}\right),\label{eq: dzlambda}\\\mr{r}\frac{d\mr{\hat \nu}}{d\hat z}=&-\left(\frac{1}{4}\mr{\hat\Omega}^{2}+\mr{\hat\lambda}\mr{\hat\nu}\right),\label{eq:dznu}\\2\mr{\hat\Omega}^{-1}\mr{\hat\nu}\hat z\frac{d\mr{\hat\Omega}}{d\hat z}=&\hat z\frac{d\mr{\hat\nu}}{d\hat z}+\mr{r}\left(\hat z\mr{\hat\phi}_{1}^{\prime}+k_{1}\right)^{2}+\mr{r}\left(\hat z\mr{\hat \phi}_{2}^{\prime}+k_{2}\right)^{2},\label{eq: first equation of dzomega}\\2\mr{\hat\Omega}^{-1}\mr{\hat\lambda}\frac{d\mr{\hat\Omega}}{d\hat z}=&\frac{d\mr{\hat\lambda}}{d\hat z}+\mr{r}(\mr{\hat\phi}_{1}^{\prime})^{2}+\mr{r}(\mr{\hat\phi}_{2}^{\prime})^{2},\label{eq: second equation of dzomega}\\\mr{r}\hat z\frac{d\mr{\hat\phi}_{1}^{\prime}}{d\hat z}=&-2\mr{\hat{\lambda}}\hat z\mr{\hat\phi}_{1}^{\prime}-k_{1}\mr{\hat\lambda},\label{eq: dzphi1 equation}\\\mr{r}\hat z\frac{d\mr{\hat\phi}_{2}^{\prime}}{d\hat z}=&-2\mr{\hat \lambda}\hat z\mr{\hat\phi}_{2}^{\prime}-k_{2}\mr{\hat\lambda}\label{eq: dzphi2 equation},
\end{align}
where $\mr{\hat\phi}_{1}^{\prime}: = \frac{d}{d\hat z}\mr{\hat\phi}_{1}$ and $\mr{\hat{\phi}}_{2}^{\prime}: = \frac{d}{d\hat z}\mr{\hat\phi}_{2}$. The equations \eqref{eq: dzlambda}-\eqref{eq:dznu} are from the wave equation for $r$ \eqref{eq: uncharged r wave equation}; the equation \eqref{eq: first equation of dzomega} is from the Raychaudhuri equation \eqref{eq: uncharged u Ray equation}; the equation \eqref{eq: second equation of dzomega} is from the Raychaudhuri equation \eqref{eq: uncharged v Ray equation}; the equation for $\phi_{1}$ \eqref{eq: dzphi1 equation} is from the wave equation for $\phi_{1}$ \eqref{eq: uncharged phi1 wave equation}; the equation for $\phi_{2}$ \eqref{eq: dzphi2 equation} is from the wave equation for $\phi_{2}$ \eqref{eq: uncharged phi2 wave equation}.

Combining the equations \eqref{eq: dzlambda}-\eqref{eq: second equation of dzomega}, we can derive the following algebraic relation: \begin{equation}
    \frac{1}{4}\mr{\hat\Omega}^{2}+\mr{\hat\lambda}\mr{\hat\nu} = \hat z\mr{r}^{2}(\mr{\hat\phi}_{1}^{\prime})^{2}+\hat z\mr{r}^{2}(\mr{\hat\phi}_{2}^{\prime})^{2}+2k_{1}\hat z\mr{r}\mr{\hat\lambda}\mr{\hat\phi}_{1}^{\prime}+2k_{2}\hat z\mr{r}\mr{\hat\lambda}\mr{\hat\phi}_{2}^{\prime}+(k_{1}^{2}+k_{2}^{2})\mr{r}\mr{\hat\lambda}.
\end{equation}
We can further fix the gauge condition that the center $\Gamma$ corresponds to the set $\{\hat z = -1\}$. Then the equations \eqref{eq: dzlambda}-\eqref{eq: dzphi2 equation} imply the following boundary conditions at the center.
\begin{proposition}
    Under the gauge choice that $\mr{r}(-1) = 0$ and $\mr{\hat\Omega}(-1) = 1$, if the center is regular, then we have the following condition on $\mr{\hat\lambda}(-1)$, $\mr{\hat\nu}(-1)$, $\mr{\hat\phi}_{1}^{\prime}(-1)$, and $\mr{\hat\phi}_{2}^{\prime}(-1)$:\begin{equation}
        \mr{r}(-1) = 0,\ \mr{\hat\Omega}(-1) = 1,\ \mr{\hat\lambda}(-1) = \frac{1}{2},\ \mr{\hat\nu}(-1) = -\frac{1}{2},\ \mr{\hat\phi}_{1}^{\prime}(-1) = \frac{1}{2}k_{1},\ \mr{\hat\phi}_{2}^{\prime}(-1) = \frac{1}{2}k_{2}.
    \end{equation}
\end{proposition}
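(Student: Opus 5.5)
The plan is to evaluate the reduced ODE system \eqref{eq: dzlambda}--\eqref{eq: dzphi2 equation}, together with the algebraic identity \eqref{eq: algebraic relation of derivatives of r}, at $\hat z=-1$. I will use $\mr{r}(-1)=0$ to kill every term carrying a factor of $\mr{r}$. I interpret ``regular center'' to mean three things: the self-similar profiles $\mr{r},\mr{\hat\Omega},\mr{\hat\lambda},\mr{\hat\nu},\mr{\hat\phi}_1',\mr{\hat\phi}_2'$ extend as $C^1$ functions up to $\hat z=-1$; the Hawking mass vanishes there, i.e.\ $\mu\to 0$ (equivalently $m=o(r)$); and $\partial_{\hat v}r>0$ near $\Gamma$, so the outgoing direction increases the area radius.

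First, by \eqref{eq: algebraic relation of derivatives of r} at $\hat z=-1$ we get $0+\mr{\hat\nu}(-1)=-\mr{\hat\lambda}(-1)$, so $\mr{\hat\nu}(-1)=-\mr{\hat\lambda}(-1)$. Second, \eqref{eq: dzlambda} at $\hat z=-1$ has left-hand side $\hat z\,\mr{r}\,\mr{\hat\lambda}'=0$, since $\mr{\hat\lambda}'$ is bounded and $\mr{r}(-1)=0$. Hence
\begin{equation*}
\tfrac14\mr{\hat\Omega}^2(-1)+\mr{\hat\lambda}(-1)\mr{\hat\nu}(-1)=0.
\end{equation*}
This is the same as the condition $\mu=1+4\partial_{\hat u}r\,\partial_{\hat v}r/\hat\Omega^2=0$ at the center, so the vanishing mass gives it as well. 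With $\mr{\hat\Omega}(-1)=1$ and $\mr{\hat\nu}(-1)=-\mr{\hat\lambda}(-1)$, this yields $\mr{\hat\lambda}(-1)^2=\frac14$. Positivity of $\partial_{\hat v}r$ then selects $\mr{\hat\lambda}(-1)=\frac12$, and hence $\mr{\hat\nu}(-1)=-\frac12$.

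Third, I evaluate \eqref{eq: dzphi1 equation} at $\hat z=-1$. The left-hand side $\mr{r}\hat z\,\frac{d}{d\hat z}\mr{\hat\phi}_1'$ vanishes because $\mr{\hat\phi}_1$ is $C^2$ up to the center. This leaves
\begin{equation*}
0=2\mr{\hat\lambda}(-1)\mr{\hat\phi}_1'(-1)-k_1\mr{\hat\lambda}(-1).
\end{equation*}
Dividing by $\mr{\hat\lambda}(-1)=\frac12\neq0$ gives $\mr{\hat\phi}_1'(-1)=\frac12k_1$. The identical argument applied to \eqref{eq: dzphi2 equation} gives $\mr{\hat\phi}_2'(-1)=\frac12k_2$. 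As a consistency check, this is exactly the statement $\partial_{\hat u}\phi_j+\partial_{\hat v}\phi_j=0$ on $\Gamma$ that regularity of $\phi$ as a spherically symmetric function would demand. Indeed, $\partial_{\hat u}\phi_j=(k_j+\hat z\mr{\hat\phi}_j')/(-\hat u)$ and $\partial_{\hat v}\phi_j=\mr{\hat\phi}_j'/(-\hat u)$, so their sum at $\hat z=-1$ is $(k_j-2\mr{\hat\phi}_j')/(-\hat u)$.

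The main obstacle is not computational. It is making precise which regularity assumption justifies dropping the $\mr{r}\times(\text{derivative})$ terms, and fixing the sign of $\mr{\hat\lambda}(-1)$. I would state regular center as $C^1$ (respectively $C^2$ for $\mr{\hat\phi}_j$) regularity of the profiles in $\hat z$ up to $-1$, together with $\mu|_\Gamma=0$ and $\partial_{\hat v}r>0$. If only the weaker condition $m=o(r)$ is available, I would use the $\mu=0$ route in the second step, which needs no derivative bounds. The scalar field step would then rely on boundedness of $\mr{r}\,\mr{\hat\phi}_j''$ near $-1$.
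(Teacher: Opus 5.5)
Your argument is correct and is essentially the paper's proof. You use \eqref{eq: algebraic relation of derivatives of r} at $\hat z=-1$ to get $\mr{\hat\nu}(-1)=-\mr{\hat\lambda}(-1)$, the vanishing left-hand side of \eqref{eq: dzlambda} to get $\mr{\hat\lambda}(-1)\mr{\hat\nu}(-1)=-\frac14$, and then \eqref{eq: dzphi1 equation}--\eqref{eq: dzphi2 equation} at the center; your explicit choice of the positive root via $\partial_{\hat v}r>0$ makes precise what the paper leaves implicit.

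One slip is in your side consistency check. At $\hat z=-1$ the sum $\partial_{\hat u}\phi_j+\partial_{\hat v}\phi_j$ equals $k_j/(-\hat u)$, not $(k_j-2\mr{\hat\phi}_j')/(-\hat u)$. The correct regularity condition is that the radial derivative vanishes, i.e.\ $\partial_{\hat u}\phi_j=\partial_{\hat v}\phi_j$ on $\Gamma$. It is the difference $\partial_{\hat u}\phi_j-\partial_{\hat v}\phi_j$ that equals $(k_j-2\mr{\hat\phi}_j')/(-\hat u)$, and its vanishing gives $\mr{\hat\phi}_j'(-1)=\frac12 k_j$.
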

\begin{proof}
    By the relation \eqref{eq: algebraic relation of derivatives of r}, at the regular center, we have that \begin{equation*}
        \mr{\hat\nu}(-1)+\mr{\hat\lambda}(-1) = 0.
    \end{equation*}
    Since the left-hand side of \eqref{eq: dzlambda} vanishes at the regular center, we have that \begin{equation*}
        \mr{\hat\nu}(-1)\mr{\hat\lambda}(-1) = -\frac{1}{4}\mr{\hat\Omega}^{2} = -\frac{1}{4}.
    \end{equation*}
    Hence, we have that \begin{equation*}
        \mr{\hat\lambda}(-1) = -\mr{\hat\nu}(-1) = \frac{1}{2}.
    \end{equation*}
    Substituting the regularity condition and the boundary conditions for $\mr{\hat\nu}$ and $\mr{\hat\lambda}$ into the equations \eqref{eq: dzphi1 equation}-\eqref{eq: dzphi2 equation}, we can derive that \begin{equation*}
        \mr{\hat\phi}_{1}^{\prime} = \frac{1}{2}k_{1},\quad \mr{\hat\phi}_{2}^{\prime} = \frac{1}{2}k_{2}.
     \end{equation*}
\end{proof}
Observing that after dividing the equations \eqref{eq: dzphi1 equation} and \eqref{eq: dzphi2 equation} by $k_{1}$ and $k_{2}$ respectively, we have that $\frac{\mr{\hat\phi}_{1}^{\prime}}{k_{1}}$ and $\frac{\mr{\hat\phi}_{2}^{\prime}}{k_{2}}$ satisfy the same equation with the same boundary condition at the center. Hence, we can prove the following proposition:
\begin{proposition}
\label{prop: relation between phi1 and phi2}
    If the solution $(r,\hat\Omega,\phi,m)$ satisfies the $(k_{1},k_{2})$-self-similarity, then we have that \begin{equation}
       \frac{\mr{\hat\phi}_{1}^{\prime}}{k_{1}} = \frac{\mr{\hat\phi}_{2}^{\prime}}{k_{2}}
    \end{equation}
    on the domain of existence.
\end{proposition}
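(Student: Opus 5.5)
The plan is a uniqueness argument for a linear first-order ODE with a regular-singular point at the center. Assume first that $k_{1}k_{2}\neq0$, so that the quotients in the statement are defined. Set $\psi_{j}:=\mr{\hat\phi}_{j}^{\prime}/k_{j}$ for $j=1,2$. Dividing \eqref{eq: dzphi1 equation} by $k_{1}$ and \eqref{eq: dzphi2 equation} by $k_{2}$ shows that both functions solve the same equation:
\begin{equation*}
\mr{r}\hat z\,\psi^{\prime}=-2\mr{\hat\lambda}\hat z\,\psi-\mr{\hat\lambda}.
\end{equation*}
Its coefficients $\mr{r},\mr{\hat\lambda}$ are common to both, since they come from the same background solution. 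By the preceding proposition, both $\psi_{1}$ and $\psi_{2}$ equal $\tfrac12$ at $\hat z=-1$.

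The difference $w:=\psi_{1}-\psi_{2}$ then satisfies the homogeneous linear equation
\begin{equation*}
\mr{r}\hat z\,w^{\prime}=-2\mr{\hat\lambda}\hat z\,w,\qquad w(-1)=0.
\end{equation*}
Away from the center, on any subinterval of the existence domain where $\mr{r}\hat z\neq0$, this is a regular linear ODE, $w^{\prime}=-2(\mr{\hat\lambda}/\mr{r})w$. Hence if $w$ vanishes at one point of such an interval it vanishes on the whole interval, by Grönwall or by the explicit integrating factor $\exp\bigl(2\int\mr{\hat\lambda}/\mr{r}\bigr)$.

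The main obstacle is the singular point $\hat z=-1$, where $\mr{r}=0$, so we cannot simply invoke Picard uniqueness from the initial value. Near the center we have $\mr{r}(\hat z)\approx\tfrac12(\hat z+1)$ and $\mr{\hat\lambda}\approx\tfrac12$, so $2\mr{\hat\lambda}/\mr{r}\approx2/(\hat z+1)$. The general solution therefore behaves like $w=C\exp\bigl(-\int2\mr{\hat\lambda}/\mr{r}\bigr)\approx C(\hat z+1)^{-2}$. Since $w$ is continuous with $w(-1)=0$, which holds by the regular-center assumption, we must have $C=0$. To make this rigorous I will use the identity
\begin{equation*}
\frac{d}{d\hat z}\Bigl(w\,e^{\int_{\hat z_{0}}^{\hat z}2\mr{\hat\lambda}/\mr{r}}\Bigr)=0,
\end{equation*}
so $w\,e^{\int 2\mr{\hat\lambda}/\mr{r}}$ is constant on $(-1,\hat z_{0}]$. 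The exponential factor tends to $0$ as $\hat z\downarrow-1$, because $\int 2\mr{\hat\lambda}/\mr{r}\to-\infty$ there; this uses only $\mr{\hat\lambda}(-1)=\tfrac12>0$ and $\mr{r}(\hat z)=O(\hat z+1)$. That alone does not force the constant to vanish. The vanishing comes instead from boundedness: $w$ stays bounded, in fact tends to $0$, while $e^{-\int 2\mr{\hat\lambda}/\mr{r}}$ blows up like $(\hat z+1)^{-2}$. So $w=C\,e^{-\int_{\hat z_{0}}^{\hat z}2\mr{\hat\lambda}/\mr{r}}$ with $C\neq0$ is unbounded, forcing $C=0$.

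Having $w\equiv0$ on a right neighborhood of $-1$, the regular-ODE argument propagates $w\equiv0$ across the rest of the domain. The only remaining issue is the point $\hat z=0$, where the factor $\hat z$ degenerates. There I argue by continuity: $w$ vanishes on both sides of $0$ (or on the side that is reached), and the identity extends to $\hat z=0$. If $k_{j}=0$ for some $j$, the statement is to be read in the cross-multiplied form $k_{2}\mr{\hat\phi}_{1}^{\prime}=k_{1}\mr{\hat\phi}_{2}^{\prime}$. The same argument applies to $k_{2}\mr{\hat\phi}_{1}^{\prime}-k_{1}\mr{\hat\phi}_{2}^{\prime}$, which solves the same homogeneous equation and vanishes at the center.
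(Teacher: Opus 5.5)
Your argument is correct and follows the paper's proof: you divide \eqref{eq: dzphi1 equation}--\eqref{eq: dzphi2 equation} by $k_{1},k_{2}$, subtract to get $\mr{r}\hat z\,w'=-2\mr{\hat\lambda}\hat z\,w$, and use $w=0$ at the center. Your treatment of the singular point $\hat z=-1$ fills in a step the paper's one-line conclusion leaves implicit, and it can be shortened: since $\mr{\hat\lambda}=d\mr{r}/d\hat z$, your integrating factor is $\mr{r}^{2}$ up to a constant, so $\mr{r}^{2}w$ is constant, and it vanishes because $\mr{r}(-1)=0$ while $w$ is bounded.
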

\begin{proof}
    Dividing the equations \eqref{eq: dzphi1 equation} and \eqref{eq: dzphi2 equation} by $k_{1}$ and $k_{2}$ and then subtracting the two resulting equations, we have \begin{equation*}
        \mr{r}\hat z\frac{d}{d\hat z}\left(\frac{\mr{\hat\phi}_{1}^{\prime}}{k_{1}}-\frac{\mr{\hat\phi}_{2}^{\prime}}{k_{2}}\right) = -2\mr{\hat\lambda}\hat z\left(\frac{\mr{\hat\phi}_{1}^{\prime}}{k_{1}}-\frac{\mr{\hat\phi}_{2}^{\prime}}{k_{2}}\right).
    \end{equation*}
    Since at the center \begin{equation*}
       \left( \frac{\mr{\hat\phi}_{1}^{\prime}}{k_{1}}-\frac{\mr{\hat\phi}_{2}^{\prime}}{k_{2}}\right)\Bigg|_{\Gamma} = 0,
    \end{equation*}
    we can conclude that \begin{equation*}
        \frac{\mr{\hat\phi}_{1}^{\prime}}{k_{1}} = \frac{\mr{\hat\phi}_{2}^{\prime}}{k_{2}}
    \end{equation*}
    on the domain of existence.
\end{proof}


\subsection{Geometric conventions}
In double-null coordinates, for convenience in the discussion below, we use the following geometric convention:
\begin{align*}
&\Sigma_{u_{0}} = \{u = u_{0},v\geq 0\},\\&
\underline{\Sigma}_{v_{0}} = \{-1\leq u<0,v = v_{0}\},\\&
\mathcal{Q}_{u_{0},v_{0}} = \{-1\leq u\leq u_{0},0\leq v\leq v_{0}\}.
\end{align*}
We will impose our characteristic initial data on the bifurcating hypersurfaces $\Sigma_{-1}\cup \underline{\Sigma}_{0}$, intersecting at the sphere $S_{-1,0}$. $\Sigma_{u_{0}}$ is called the outgoing null cone and $\underline{\Sigma}_{v_{0}}$ is called the ingoing null cone.
\subsection{Characteristic local well-posedness for the Einstein--Maxwell-charged scalar field system}
In this section, we discuss the free initial data and the local well-posedness result for the spherically symmetric Einstein--Maxwell-charged scalar field equations.

The first local well-posedness result concerns two characteristic hypersurfaces \begin{equation*}
    \underline{\Sigma}_{v_{0}}^{u_{0},u_{1}}: = \{\ u_{0}\leq u\leq u_{1},\ v = v_{0}\},\quad \Sigma_{u_{0}}^{v_{0},v_{1}}:= \{u = u_{0},\ v_{0}\leq v\leq v_{1}\}
\end{equation*}
intersecting at $(u_{0},v_{0})$ with $r(u_{0},v_{0})>0$. We fix the gauge choice of the electromagnetic field $A_{v}\equiv0$. The free data for this problem consist of the gauge choice of $\partial_{u}r$ and $A_{u}$ on the ingoing cone $\underline{\Sigma}_{v_{0}}^{u_{0},u_{1}}$, the gauge choice of $\partial_{v}r$ on the outgoing cone $\Sigma_{u_{0}}^{v_{0},v_{1}}$, the initial data for $\partial_{u}\phi$ on $\underline{\Sigma}_{v_{0}}^{u_{0},u_{1}}$, the initial data for $\partial_{v}\phi$ on $\Sigma_{u_{0}}^{v_{0},v_{1}}$, and the data for $\mu,\phi,r,Q$ on the intersecting sphere $(u_{0},v_{0})$. Without loss of generality, we further fix the gauge choice \begin{equation}
    \partial_{u}r(u,v_{0}) = -1,\quad A_{u}(u,v_{0}) = 0,\quad \partial_{v}r(u_{0},v)
 = 1.\label{eq: gauge choice in LWP}
 \end{equation}
A solution to the Einstein--Maxwell--charged scalar field equations is called a $BV$ solution if it arises from $BV$ characteristic initial data satisfying

$$
\partial_u\phi(u,v_0)\in BV_u,
\qquad
\partial_v\phi(u_0,v)\in BV_v,
$$
and if the quantities $(\partial r,\partial\phi)$ have bounded total variation in terms of the initial data throughout the domain of existence. In view of the transport equations~\eqref{eq:u-transport equation for Q}--\eqref{eq:v-transport equation for Q}, the boundedness and regularity of $Q$ are then determined.

We have the following local well-posedness result.
\begin{proposition}
\label{prop: BVlwp}
    Fixing the gauge choice $A_{v}\equiv 0$ and \eqref{eq: gauge choice in LWP}, for initial data $\partial_{u}\phi(u,v_{0})\in BV_{u},\ \partial_{v}\phi(u_{0},v)\in BV_{v}$, there exists a unique BV solution to the Einstein--Maxwell-charged scalar field equations in the domain \begin{equation*}
        [u_{0},u_{1}]\times[v_{0},v_{0}+\delta)
    \end{equation*}
    for some $\delta$ sufficiently small. Moreover, if the initial data are $C^{1}$, then the solution is also $C^{1}$.
\end{proposition}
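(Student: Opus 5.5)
The plan follows Christodoulou's BV local existence argument for the spherically symmetric Einstein--scalar field system, adapted to the additional unknowns $Q$ and $A_{u}$. In the gauge $A_{v}\equiv0$ I take as unknowns $(r,\partial_{u}r,\partial_{v}r,\phi,\partial_{u}\phi,\partial_{v}\phi,Q,A_{u},\varpi)$. The first step is to rewrite the system as transport/integral equations along the two null directions:
\begin{itemize}
\item $\partial_{v}\partial_{u}r$ is given by \eqref{eq:wave equation for r}, with $\Omega^{2}$ expressed through $\mu$ as $\Omega^{2}=-4\partial_{u}r\partial_{v}r/(1-\mu)$;
\item $\partial_{v}(\partial_{u}(r\phi))$ and $\partial_{u}(\partial_{v}(r\phi))$ come from \eqref{eq: pre wave equation for phi};
\item $Q$ is obtained from \eqref{eq:v-transport equation for Q}, and $A_{u}$ from $\partial_{v}A_{u}=-Q\Omega^{2}/(2r^{2})$, which is \eqref{eq:spherical symmetric equation last} with $A_{v}=0$;
\item $\varpi$ (equivalently $\mu$) is obtained by integrating its $v$-transport equation.
\end{itemize}
The initial values come from \eqref{eq: gauge choice in LWP} and the sphere data. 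Since $r(u_{0},v_{0})>0$ and $\partial_{u}r=-1$ on the ingoing cone, $r\ge c>0$ on $[u_{0},u_{1}]\times[v_{0},v_{0}+\delta)$ for small $\delta$ (after possibly shrinking $u_{1}$ so that $r$ stays positive on the initial cone). All coefficients $1/r$ and $1/r^{2}$ are therefore smooth and bounded, and the setting is away from the center.

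The second step is a Picard iteration in the space
\begin{equation*}
X=\{(\partial r,\partial\phi)\ :\ \sup_{v}\mathrm{TV}_{u}[\partial_{u}\cdot](\cdot,v)+\sup_{u}\mathrm{TV}_{v}[\partial_{v}\cdot](u,\cdot)<\infty\},
\end{equation*}
with the zeroth-order quantities $(r,\phi,Q,A_{u},\varpi)$ in $L^{\infty}$. The two directions play different roles:
\begin{itemize}
\item \emph{$v$-direction (short).} For $\partial_{v}\phi$, I integrate the wave equation in $u$ from the data on $\Sigma_{u_{0}}$. The right-hand side is a product of bounded quantities, so $\mathrm{TV}_{v}$ of the iterate is at most the data's total variation plus a term of size $C\,(u_{1}-u_{0})$.
\item \emph{$u$-direction (long, not small).} For $\partial_{u}\phi$, I integrate in $v$ over the short interval of length $\delta$. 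The interval $[u_{0},u_{1}]$ is not small, so the estimate is closed using smallness only in $v$: the gain comes from $\delta$.
\end{itemize}
Total variations of products are controlled by $\mathrm{TV}(fg)\le \|f\|_{\infty}\mathrm{TV}(g)+\|g\|_{\infty}\mathrm{TV}(f)$. Contraction then holds in a weaker norm (e.g. $L^{\infty}$ of the differences, or $L^{1}$ of the derivative differences) with constant $C\delta$, while the iterates stay bounded in $X$. Lower semicontinuity of total variation passes the bounds to the limit, giving existence in $X$; the same difference estimate gives uniqueness.

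The third step concerns the constraints. The iteration uses only one Raychaudhuri equation implicitly, so I must check that \eqref{eq:spherical symmetric equaion1} and \eqref{eq:u-transport equation for Q} propagate. These hold on the initial ingoing cone by construction: $Q$ along $\underline{\Sigma}_{v_{0}}$ is defined by integrating the $u$-equation, and $\varpi$ likewise. I then differentiate the constraint defects in $v$ and find that they satisfy a homogeneous linear transport system. The structure that makes this work is charge conservation, $\nabla^{\mu}\nabla^{\nu}F_{\mu\nu}=0$, which follows from the real part of $\bar\phi$ times the scalar wave equation. The defects therefore vanish identically. Finally, $C^{1}$ persistence is obtained by repeating the iteration in $C^{0}$ for $\partial\phi,\partial r$: continuity of the integrands is preserved by the integral operators, and the same contraction applies.

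I expect the main obstacle to be the estimate in the long $u$-direction. The coupling term $iq_{0}A_{u}\partial_{v}(r\phi)$ and the $\mu$-dependence of $\Omega^{2}$ require a bound on $A_{u}$ and $\mu$ uniformly in $u$ without $u$-smallness. My approach is to observe that each of these is obtained by integrating in $v$ over length $\delta$, so the nonlinear part of its deviation from the data is $O(\delta)$ uniformly in $u$. I would propagate the bound $1-\mu\ge c>0$ (inherited from the data at $v_{0}$) by the same $O(\delta)$ continuity argument, which is needed to keep $\Omega^{2}$ bounded.
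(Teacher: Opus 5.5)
Your argument is sound in outline but follows a different route from the paper. The paper never iterates at BV regularity. It takes the classical $C^{1}$ local theory on $[u_{0},u_{1}]\times[v_{0},v_{0}+\delta)$ as given and sets $\theta=r\partial_{v}\phi$, $\zeta=r\partial_{u}\phi$. It then bounds $\phi,Q,A_{u},m,\mu,\lambda,\nu$ and their total variations by data plus polynomials in $\sup_{u}TV_{v}[\theta]$ and $\sup_{v}TV_{u}[\zeta]$, with a factor $\delta$ in the $TV_{v}$ bounds. Finally it closes a bootstrap on these two total variations by integrating the $\zeta$-equation in $v$ and the $\theta$-equation in $u$; BV data are reached by a ``standard limiting argument'' that is not spelled out. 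Your direct Picard scheme (weak-norm contraction plus lower semicontinuity of $TV$) is self-contained. It gives uniqueness in the BV class from the same difference estimate, and it explicitly propagates the $u$-constraints (the $u$-Raychaudhuri equation and the $u$-transport of $Q$); the paper addresses none of these. The paper's route is shorter, because existence is outsourced to the $C^{1}$ theory and one only has to show that $\delta$ depends on the data through total variations alone. The cost is that compactness of the approximate solutions, identification of the limit, and BV uniqueness are left implicit.

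Several steps need tightening.
\begin{enumerate}
\item In your `$v$-direction' bullet, an increment of size $C(u_{1}-u_{0})$ does not keep a ball of $X$ invariant, since $C$ grows with the radius and $u_{1}-u_{0}$ is not small. What closes this estimate, and what the paper's appendix is organized around, is that $TV_{v}$ of the right-hand side of the $\partial_{u}\theta$ equation is $O(\delta)$, up to a harmless $O(\delta)\,TV_{v}[\theta]$ coming from $A_{u}\theta$. This holds because $A_{u}=O(\delta)$ pointwise, and $\zeta,\nu,r,\mu,Q$ (hence $\log\lambda$) have bounded $v$-derivatives and therefore $v$-variation $O(\delta)$.
\item The contraction is not $C\delta$ in one step. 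The difference $\theta^{(n+1)}-\theta^{(n)}$ is driven by $\zeta^{(n)}-\zeta^{(n-1)}$ through $-\frac{\lambda}{r}\zeta$ with an $O(1)$ coefficient over the long $u$-interval. Also, the $\partial_{u}\lambda$ equation is linear in $\lambda$ with a coefficient that need not be small. Solve the latter with its integrating factor, and use a weighted norm such as $\|\theta^{(n+1)}-\theta^{(n)}\|_{\infty}+\delta^{-1/2}\|\zeta^{(n+1)}-\zeta^{(n)}\|_{\infty}$, or a two-step estimate.
\item Charge conservation comes from the \emph{imaginary} part of $\bar\phi\,g^{\mu\nu}D_{\mu}D_{\nu}\phi$; the real part only gives an identity for $|\phi|^{2}$.
\item If $C^{1}$ solution is read as $(\partial r,\partial\phi)\in C^{1}$, as in the paper's convention for Proposition~\ref{prop: C1lwp}, persistence must be run one derivative higher, for $\partial_{u}\zeta$ and $\partial_{v}\theta$.
\item Rather than shrinking $u_{1}$, which changes the statement, assume as the paper does that $r(u_{0},v_{0})>u_{1}-u_{0}$.
\end{enumerate}
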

\begin{proof}
    See Appendix~\ref{app: BVLWP}.
\end{proof}
\begin{figure}[htbp]
    \centering
    \begin{tikzpicture}[ scale=0.8, >=Latex, every node/.style={font=\small} ]
\coordinate (O) at (0,0);
\coordinate (A) at (-2,2);
\coordinate (B) at (2,2);
\draw (O)--(A) node[midway, below, sloped] {$v=v_{0}$};
\draw (O)--(B) node[midway, below,sloped]{$u = u_{0}$};
\coordinate (C) at (0.5,0.5);
\coordinate (D) at (-1.5,2.5);
\draw (C)--(D) node[midway, above, sloped] {$v=v_{0}+\delta$};
\draw (A)--(D) node[midway, above, sloped] {$u = u_{1}$};
\fill[gray!20] (O)--(A)--(D)--(C)--cycle;
\end{tikzpicture}
    \caption{BV local well-posedness in Proposition~\ref{prop: BVlwp}}
    \label{fig:BV lwp}
\end{figure}

The second local well-posedness result we shall use concerns a mixed timelike--characteristic initial value problem, in which the spherical center $\Gamma$ is fixed by a gauge choice and initial data are prescribed on an outgoing null cone emanating from the center. Fix the gauge choice of the center to be \begin{equation*}
    \Gamma = \{r = 0\} = \{u = v\}.\label{eq: gauge choice of the center in LWP}
\end{equation*}
We assume the initial outgoing cone is \begin{equation}
    \Sigma_{0}^{v} = \{u = 0,\ 0\leq v\leq v_{0}\}.
\end{equation}
We further fix the gauge choice of the electromagnetic field $A_{v}\equiv 0$. Then the regular-center condition requires that on $\Gamma$ we should have \begin{equation*}
     Q|_{\Gamma} = m|_{\Gamma} = \mu|_{\Gamma} = 0.
\end{equation*}
The free data in this setting consist of the gauge choice of $\partial_{v}r(0,v)$ and the initial data for $\partial_{v}(r\phi)(0,v)$. A solution to the Einstein--Maxwell-charged scalar field equations is called $C^{1}$ if the solution is compatible at the center and $(\partial r,\partial\phi)\in C^{1}$ in the domain of existence. We have the following proposition.
\begin{proposition}
\label{prop: C1lwp}
    Fixing the gauge choice of the electromagnetic field $A_{v}\equiv 0$ and $A_{u}|_{\Gamma}$, and the gauge choice of the center~\eqref{eq: gauge choice of the center in LWP}, for initial data $\partial_{v}(r\phi)(0,v)\in C^{1}_{v}$, there exists a unique $C^{1}$ solution to the Einstein--Maxwell-charged scalar field equations in the domain \begin{equation*}
        \{0\leq u<\delta,\ u\leq v\leq v_{0}\}
    \end{equation*}
    for some $\delta$ sufficiently small.
\end{proposition}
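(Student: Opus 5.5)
We will prove Proposition~\ref{prop: C1lwp} by a contraction-mapping argument in renormalized first-order variables adapted to the axis, as in Christodoulou's treatment of the real scalar field~\cite{chris93}. The charge enters only through lower-order terms with explicit powers of $r$, which we treat perturbatively.

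\textbf{Variables and integral equations.} Set $\lambda=\partial_v r$, $\nu=\partial_u r$, $\psi=r\phi$, and $\Phi=\partial_v\psi$. We integrate along characteristics starting from the axis $\{u=v\}$, where $r=0$, $Q=0$, $m=0$ and $\lambda=-\nu$, and from the initial cone $\Sigma_0^{v}$, where $\lambda$ and $\Phi$ are given. For a point $(u,v)$ we write
\begin{equation*}
r(u,v)=\int_u^v\lambda(u,v')\,dv',\qquad \phi(u,v)=\frac{1}{r(u,v)}\int_u^v\Phi(u,v')\,dv'.
\end{equation*}
The second formula makes $\phi$ an average of $\Phi/\lambda$ along the outgoing segment. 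Equation~\eqref{eq: pre wave equation for phi} becomes
\begin{equation*}
\partial_u\Phi=\frac{\partial_v\lambda}{\lambda}\,\ldots\quad\text{(schematically)}\quad \partial_u\partial_v\psi=\frac{\nu\lambda\,\mu}{r(1-\mu)}\phi-iq_0\Big(A_u\Phi-\frac{Q\Omega^2}{4r}\phi\Big).
\end{equation*}
We integrate this in $u$ from the initial cone.

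The remaining quantities are determined as follows.
\begin{itemize}
\item $m$ comes from integrating \eqref{eq: v-transport equation for m} from the axis.
\item $Q$ comes from integrating \eqref{eq:v-transport equation for Q} from the axis, so $|Q|\lesssim r^3\sup|\phi|\,|\partial_v\phi|$.
\item $A_u$ comes from integrating \eqref{eq:spherical symmetric equation last}, i.e. $\partial_vA_u=-Q\Omega^2/(2r^2)$, from the prescribed value $A_u|_\Gamma$.
\item $\lambda$ and $\nu$ come from the Raychaudhuri-type equations written in terms of $\mu$; the equation for $\nu$ is integrated from the axis using $\nu|_\Gamma=-\lambda|_\Gamma$.
\item $\Omega^2=-4\lambda\nu/(1-\mu)$.
\end{itemize}

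\textbf{Contraction.} We work in a space of continuous $(\lambda,\nu,\Phi)$ on $\{0\le u<\delta,\ u\le v\le v_0\}$ satisfying $\lambda\approx\lambda(0,\cdot)$, $\nu\approx-\lambda$, and $\Phi$ close to its data in $C^0$. The key bounds are:
\begin{itemize}
\item $|\phi|\lesssim\sup|\Phi/\lambda|$;
\item $|m|\lesssim r^3$, so that $\mu\lesssim r^2$ and $1-\mu$ stays bounded below for small $\delta$;
\item $|Q|\lesssim r^3$, so that $Q\Omega^2/r\lesssim r^2$ and $Q^2/r^2\lesssim r^4$, which keeps the terms $Q^2/r^2$ in \eqref{eq:wave equation for r} harmless;
\item $|A_u-A_u|_\Gamma|\lesssim r^2$.
\end{itemize}
Every right-hand side is therefore bounded, and the integration domain has $u$-length at most $\delta$. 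This gives a self-map and a contraction for small $\delta$, hence a unique continuous solution. For the Lipschitz estimates of differences, the averaging operator $f\mapsto r^{-1}\int f$ is bounded, which handles the $1/r$ singularities.

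\textbf{Regularity and main obstacle.} The main obstacle is upgrading the solution to $C^1$, i.e. showing that $\partial\phi$ and $\partial r$ are $C^1$ up to the axis. We differentiate the integral equations in $v$ and show that $\partial_v\Phi$ and $\partial_v\lambda$ satisfy linear equations of the same structure, again with coefficients involving averages. The delicate terms are $\mu/r$ and $Q/r^2$: their derivatives have $1/r$ factors whose boundedness requires the expansions $\phi=\Phi/\lambda+O(r)$ and $\partial_v\phi\sim\tfrac12\partial_v(\Phi/\lambda)$ near $\Gamma$. These follow from Taylor expansion of the averaging integral, provided $\Phi\in C^1$.

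Once the $v$-derivatives are controlled, the $u$-derivatives follow from the equations together with the symmetry relation on the axis, $(\partial_u+\partial_v)\psi=0$ at $\Gamma$. This also yields the compatibility conditions at the center.

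Uniqueness in the $C^1$ class follows from the contraction, since any $C^1$ solution lies in the ball for small $\delta$.
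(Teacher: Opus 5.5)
Your overall route (integral equations along characteristics from $\Gamma$ and from $\Sigma_0^{v}$, with the charge terms treated as lower order) is what the paper has in mind when it defers to Christodoulou's $C^1$ argument in \cite{chris86}. However, the fixed point as you set it up does not close.

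You take $\Phi$ only $C^0$-close to its data and postpone all derivative control. Yet the bounds you then use ($|m|\lesssim r^3$, hence $\mu\lesssim r^2$, and likewise $|Q|\lesssim r^3$, $|A_u(u,v)-A_u(u,u)|\lesssim r^2$) all rest on $|\partial_v\phi|\lesssim 1$, i.e.\ on $r\partial_v\phi=\Phi-\lambda\phi=O(r)$. Since $\phi$ is a $\lambda$-weighted average of $\Phi/\lambda$ over $[u,v]$, $\Phi-\lambda\phi$ is controlled only by the oscillation of $\Phi/\lambda$ on that segment. That oscillation is $O(r)$ only if $\partial_v(\Phi/\lambda)$ is bounded.

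Membership in your ball gives only $|\Phi-\lambda\phi|\lesssim r+\epsilon$. From $\partial_v m=\frac{1-\mu}{2\lambda}|\Phi-\lambda\phi|^2+\frac{\lambda Q^2}{2r^2}$ you then get only $m\lesssim r^3+\epsilon^2 r$ and $\mu/r\lesssim r+\epsilon^2/r$. Consequently the right-hand side of $\partial_u\log\lambda=\frac{\nu(\mu-Q^2/r^2)}{r(1-\mu)}$ (and of the $\partial_u\Phi$ equation) is not bounded. It is not even integrable in $u$ along $\{v=\mathrm{const}\}$ up to $\Gamma$: the $\epsilon^2/r$ part gives $\epsilon^2\log\frac{v}{v-u}$.

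This is not merely a lossy estimate. Continuous, $C^0$-small perturbations of the data can make $|\Phi-\lambda\phi|$ of size $\epsilon(\log\frac1r)^{-1/2}$ near $\Gamma$. Then $\mu/r$ is of size $\epsilon^2(r\log\frac1r)^{-1}$, whose $u$-integral diverges like $\log\log\frac1r$, so the new $\lambda$ is driven to $0$ at the axis. Hence the map is not a self-map of your ball.

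The difference estimates fail for the same reason: bounding $(\mu_1-\mu_2)/r$ needs $m_1-m_2=O(r^2)$ times the $C^0$ distance, which uses $\Phi_i-\lambda_i\phi_i=O(r)$. The charge is not the culprit. Since $\Im(\phi\bar\phi)=0$, one has $\partial_vQ=q_0r\,\Im(\phi\bar\Phi)$, so $|Q|\lesssim r^2$ and all $Q$-terms are bounded already in $C^0$. It is the $\mu/r$ term that needs derivatives.

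The fix is to build the derivative control into the iteration, which is how \cite{chris86} is organized. Iterate on the set where $\lambda,\Phi,\partial_v\lambda,\partial_v\Phi$ are bounded (say by twice their data norms) and $\lambda,-\nu$ are bounded below. Then show that for small $\delta$ the map preserves this set.

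This is where your ``main obstacle'' really sits. The bound $|\Phi-\lambda\phi|\lesssim r\sup|\partial_v(\Phi/\lambda)|$ gives $|\partial_v\phi|\lesssim1$. Hence $m=O(r^3)$, $Q=O(r^3)$, and the $v$-derivatives of $\mu/r$ and $Q/r^2$ are bounded. So the $v$-differentiated equations have bounded right-hand sides, integrated over a $u$-interval of length at most $\delta$.

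Contraction is then proved only in the $C^0$ norm, using the uniform $C^1$ bounds of both iterates to handle the $1/r$ terms in the differences. Continuity of the derivatives of the limit (including at $\Gamma$) follows from the linear equations they satisfy, as in your last paragraph. Your uniqueness argument then goes through, because any $C^1$ solution lies in this set for small $\delta$.

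Minor point: in your equation for $\partial_u\partial_v\psi$ the coefficient of $\phi$ should be $\frac{\lambda\nu(\mu-Q^2/r^2)}{r(1-\mu)}$; the dropped term is harmless.
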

\begin{proof}
This follows from an argument analogous to that used to establish the corresponding $C^1$ local well-posedness result for the Einstein--scalar field equations in~\cite{chris86}.
\end{proof}
\begin{figure}[htbp]
    \centering
    \begin{tikzpicture}[ scale=0.8, >=Latex, every node/.style={font=\small} ]
\coordinate (O) at (0,0);
\coordinate (A) at (0,4);
\coordinate (B) at (2,2);
\draw (O)--(A) node[midway,left] {$\Gamma = \{r = 0\}$};
\draw (O)--(B) node[midway, below, sloped] {$u = u_{0}$};
\draw (A)--(B) node[midway, above, sloped]{$v = v_{0}$};
\coordinate (C) at (0,0.5);
\coordinate (D) at (1.75,2.25);
\draw (C)--(D) node[midway, above, sloped] {$u = u_{0}+\delta$};
\fill[gray!20] (O)--(B)--(D)--(C)--cycle;
\end{tikzpicture}
    \caption{$C^{1}$ local well-posedness in Proposition~\ref{prop: C1lwp}}
    \label{fig:C1 lwp}
\end{figure}

Note that, in the setting of this mixed timelike--characteristic initial value problem for the Einstein--scalar field equations, Christodoulou~\cite{chris93} proved local existence and uniqueness in the $BV$ class for initial data satisfying $\partial_v(r\phi)\in BV$. We expect an analogous low-regularity local well-posedness result to hold for the Einstein--Maxwell--charged scalar field equations. For the purposes of the present paper, however, the $C^1$ local well-posedness result in Proposition~\ref{prop: C1lwp} is sufficient, and we do not pursue the sharp low-regularity theory here.

The third local well-posedness result concerns a backward mixed timelike-characteristic initial value problem. \begin{proposition}
\label{prop: mixed lwp}
    Fix the gauge choice of the center to be as in~\eqref{eq: gauge choice of the center in LWP} and the gauge choice of the electromagnetic potential $A_{v}\equiv0$. Given compatible $C^{1}$ initial data for $(\partial_{v}r,\partial_{v}\phi)(u_{0},v)$ on the outgoing cone \begin{equation*}
        \{u = u_{0},\ u_{0}\leq v\leq v_{0}\},
    \end{equation*}
    and $(\partial_{u}r,\partial_{u}\phi)$ on the ingoing cone \begin{equation*}
        \{u_{0}\leq u\leq u_{1},\ v=v_{0}\},
    \end{equation*}
    there exists a unique $C^{1}$ solution to the Einstein--Maxwell-charged scalar field equations arising from these initial data in the region \begin{equation*}
        \{u_{0}-\delta\leq u\leq u_{0},\ u\leq v\leq v_{0}\}
    \end{equation*}
    for some $\delta>0$ sufficiently small.
\end{proposition}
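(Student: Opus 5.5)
The plan is to treat Proposition~\ref{prop: mixed lwp} as a characteristic problem posed backward in $u$. First I would reduce it to a first-order system in the same variables used for Proposition~\ref{prop: C1lwp}, then solve it by a contraction mapping in a thin strip. Since the data are $C^{1}$ and compatible, I will work with the unknowns
\[
\nu=\partial_u r,\quad \lambda=\partial_v r,\quad \Psi=\partial_u(r\phi),\quad \Phi=\partial_v(r\phi),\quad Q,\quad \varpi,\quad A_u,
\]
recovering $r$ and $\phi$ by integration. The boundary values are as follows: on $\Gamma=\{u=v\}$ we impose $r=0$, $Q=0$, $m=0$, together with the gauge value of $A_u|_\Gamma$. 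The values on $\{u=u_0\}$ and $\{v=v_0\}$ are given by the data.

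The key observation is that reversing time, $\tilde u=-u$, $\tilde v=-v$, keeps the system \eqref{eq:spherical symmetric equaion1}--\eqref{eq:spherical symmetric equation last} in the same form, because every equation is invariant under $(u,v)\mapsto(-u,-v)$ up to signs absorbed by $\nu,\lambda$. The region $\{u_0-\delta\le u\le u_0,\ u\le v\le v_0\}$ then becomes a region bounded by the center, a null segment carrying data, and another null segment. The two data cones meet at $(u_0,v_0)$, and the outgoing cone $\{u=u_0\}$ reaches $\Gamma$ at $(u_0,u_0)$.

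I would split the domain into two pieces. The first is a characteristic rectangle away from the center, $\{u_0-\delta\le u\le u_0,\ v_1\le v\le v_0\}$ with $v_1>u_0$ fixed and $r\ge c>0$. Here Proposition~\ref{prop: BVlwp}, applied with the roles of past and future swapped, gives a $C^1$ solution for small $\delta$, since the data are $C^1$ on both cones. The second is the triangle-like region near $\Gamma$, $\{u_0-\delta\le u\le u_0,\ u\le v\le v_1\}$. Its data come from the outgoing cone $\{u=u_0\}$ and the solution just constructed on $\{v=v_1\}$.

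In the second region I would set up the integral equations. I integrate $\partial_u\Phi$ and $\partial_v\Psi$ (the wave equation \eqref{eq: pre wave equation for phi} rewritten for $r\phi$) along the characteristics. I recover $\lambda$ and $\nu$ from the Raychaudhuri equations \eqref{eq:spherical symmetric equaion1}--\eqref{eq:dv-Ray equation} and from \eqref{eq:wave equation for r} integrated from $\Gamma$. $Q$ and $\varpi$ come from their transport equations integrated from the center, where they vanish, and $A_u$ comes from \eqref{eq:spherical symmetric equation last} integrated in $v$ from $\Gamma$. The center condition $r=0$ determines $\Psi=-\Phi$ on $\Gamma$.

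The main obstacle is the singular coefficients $1/r$ and $Q\Omega^2/r^2$ near $\Gamma$. Following \cite{chris86}, I would handle them using averaged quantities, $\phi=\frac1r\int r\phi$-type representations. Within a ball of $C^1$ functions, these give bounds
\[
|Q|\lesssim r^3,\qquad |m|\lesssim r^3,\qquad |\partial\phi|\lesssim 1,
\]
so that $Q/r^2$, $m/r^2$ and $\phi/r$-type terms stay bounded. The Lipschitz estimates for the difference of two iterates then carry a factor $\delta$ (the $u$-width of the strip), which yields a contraction in $C^1$. The gauge $A_v\equiv0$ is preserved automatically. Uniqueness follows from the same difference estimate, and the solutions on the two pieces glue along $\{v=v_1\}$ by uniqueness in Proposition~\ref{prop: BVlwp}.
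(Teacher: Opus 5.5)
\textbf{The gap is in your second region.} Because you cut at a fixed $v_1>u_0$, the near-center piece $\{u_0-\delta\le u\le u_0,\ u\le v\le v_1\}$ still carries data on two null cones, $\{u=u_0\}$ and $\{v=v_1\}$, together with the center. It is the original statement with $v_0$ replaced by $v_1$, so the splitting has reduced nothing. Everything rests on your fixed-point argument there, and that argument does not work as set up.

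The integral equations you list are those of Christodoulou's forward problem: everything is integrated along $\{u=\mathrm{const}\}$ from $\Gamma$, and $r=0$ on $\Gamma$ is used to determine $\Psi$ from $\Phi$. The data on $\{v=v_1\}$ never enter. In this backward problem the roles are reversed:
\begin{itemize}
\item Along each slice $\{u=\mathrm{const}\}$, the transversal quantities $\nu$ and $\Psi=\partial_u(r\phi)$ must be integrated from $\{v=v_1\}$ down to $\Gamma$.
\item The reflection conditions $\lambda=-\nu$, $\Phi=-\Psi$ on $\Gamma$ then supply the starting values for the backward $u$-evolution of $\lambda,\Phi$ along the rays $\{v=\mathrm{const}<u_0\}$. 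Followed backward in $u$, these rays start on $\Gamma$ and never meet $\{u=u_0\}$.
\end{itemize}
If $\nu$ is instead integrated from $\Gamma$, then $\nu|_\Gamma=-\lambda|_\Gamma$ is undetermined for $u<u_0$, and the result has no reason to match region~1 on $\{v=v_1\}$.

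The claimed contraction factor $\delta$ is also not automatic. The slices have fixed length $\approx v_1-u_0$, and the value of $\Phi$ at a center point depends, with an $O(1)$ Lipschitz constant, on an integral over a whole slice. To gain $\delta$ you would need two facts: iterates differ only by $O(\delta)$ on the part of a slice reached from $\{u=u_0\}$, and the part reached through $\Gamma$ has length at most $\delta$. Alternatively, you would have to take $v_1-u_0$ small first. The proposal addresses neither.

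\textbf{How the paper avoids this.} The paper cuts along $\{v=u_0\}$, the ingoing cone through the point $(u_0,u_0)$ where the data cone meets $\Gamma$.
\begin{enumerate}
\item It first solves on the rectangle $[u_0-\delta_1,u_0]\times[u_0,v_0]$ and reads off $(\partial_u r,\partial_u\phi)$ on $\{v=u_0\}$.
\item It then solves on the triangle $\{u_0-\delta_2\le u\le v\le u_0\}$, whose only data lie on $\{v=u_0\}$. After time reversal that cone is an outgoing cone from the center, so Proposition~\ref{prop: C1lwp} applies directly and no new contraction argument is needed.
\end{enumerate}
Your choice $v_1>u_0$ does have one advantage: your rectangle stays strictly away from $\Gamma$. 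The paper's rectangle touches $\Gamma$ at its corner, and the paper covers that piece by appealing to Proposition~\ref{prop: C1lwp} as well. The cost is that your near-center piece becomes a genuine two-cone mixed problem, which you would have to solve from scratch while respecting the causal structure described above.
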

\begin{proof}
    By the local well-posedness result in Proposition~\ref{prop: C1lwp}, there exists a positive constant $\delta_{1}$ sufficiently small, for which a $C^{1}$ solution in the region \begin{equation*}
        (u_{0}-\delta_{1},u_{0}]\times[u_{0},v_{0}]
    \end{equation*}
    exists. In particular, the solution in this region gives the value of $(\partial_{u}r,\partial_{u}\phi)$ on the ingoing cone \begin{equation*}
        \{u_{0}-\delta<u\leq u_{0},\ v = u_{0}\}.
    \end{equation*}
    Then using the local well-posedness result in Proposition~\ref{prop: C1lwp}, there exists a unique $C^{1}$ solution to the Einstein--Maxwell-charged scalar field equations in the region\begin{equation*}
        \{u_{0}-\delta_{1}<u\leq v,\ u_{0}-\delta_{2}\leq v\leq u_{0}\}
    \end{equation*}
    for some $\delta_{2}>0$ sufficiently small. Taking $\delta = \min\{\delta_{1},\delta_{2}\}$ concludes the proof.
\end{proof}
\begin{figure}[htbp]
    \centering
    \begin{tikzpicture}[ scale=0.8, >=Latex, every node/.style={font=\small} ]
\coordinate (O) at (0,0);
\coordinate (A) at (2,2);
\coordinate (B) at (0,-4);
\coordinate (C) at (4,0);
\draw (O)--(A) node[midway, above,sloped] {$u = u_{0}$};
\draw (O)--(B) node[midway, left] {$\Gamma = \{r = 0\}$};
\draw (A)--(C) node[midway, above, sloped] {$v = v_{0}$};
\coordinate (D) at (1.8,-1.8);
\coordinate (E) at (3.8,0.2);
\draw (D)--(E) node[midway,below, sloped]{$u = u_{0}-\delta_{1}$};
\fill[gray!30] (O)--(D)--(E)--(A)--cycle;
\draw (O)--(D) node[midway,above, sloped]{$v = u_{0}$};
\coordinate (F) at (1.2,-1.2);
\coordinate (G) at (0,-2.4);
\draw (F)--(G) node[midway,below,sloped]{\footnotesize{$u = u_{0}-\delta_{2}$}};
\fill[gray!5] (O)--(F)--(G)--cycle;
\end{tikzpicture}
    \caption{Local well-posedness in Proposition~\ref{prop: mixed lwp}}
    \label{fig:mixed lwp}
\end{figure}

\section{Main theorem}
In this section, we state our main theorems. First, the following theorem constructs the interior and exterior of naked singularity spacetimes for the Einstein--Maxwell-charged scalar field equations:
\begin{theorem}
    \label{main theorem1}
    Fix a nonzero scalar field charge $q_{0}\neq0$ and parameters $(k_{1},k_{2})$ with $0<k_{1}^{2}+k_{2}^{2}\ll1$, and let $k: = (k_{1}^{2}+k_{2}^{2})^{\frac{1}{2}}$. There exists a class of asymptotically flat and spherically symmetric naked singularity spacetimes solving the Einstein--Maxwell-charged scalar field equations~\eqref{eq:whole equation1}--\eqref{eq:whole equation5}, covered by a global double-null gauge~\eqref{eq: pre: global double-null gauge}. Let $z: = \frac{v}{(-u)^{q_{k}}}$ where $q_{k}: = 1-k^{2}$. Then the gauge choice of the global double-null coordinates can be renormalized so that the center of the spherical symmetry is given by $\{z= - 1\}$ and the singularity $\mathcal{O}$ is located at $(u,v) = (0,0)$.
    
Let $\mathcal{Q}$ be the projected manifold under the $SO(3)$ action and $\mathcal{Q}^{(in)}$ and $\mathcal{Q}^{(ex)}$ be the interior and exterior regions, respectively. Further let $C_{out} := \{u = -e^{-1}\}$ be an outgoing cone emanating from the center. Then this class of naked singularity spacetimes can be viewed as solutions to~\eqref{eq:whole equation1}--\eqref{eq:whole equation5} with $C^{1,\frac{k^{2}}{1-k^{2}}}$ initial data imposed on $C_{out}$. More precisely, we have \begin{align*}
        &r(-e^{-1},v),\ \Omega^{2}(-e^{-1},v),\ \phi(-e^{-1},v)\in C_{v}^{\infty}\left([-e^{-q_{k}},0)\cup(0,\infty)\right),\\&
        \sup_{v}\vert v\vert^{j-q_{k}}\vert\partial_{v}^{j+1}r\vert(-e^{-1},v)+\sup_{v}\vert v\vert^{j-q_{k}}\vert\partial_{v}^{j}\Omega^{2}\vert(-e^{-1},v)+\sup_{v}\vert v\vert^{j-q_{k}}\vert\partial_{v}^{j}\phi\vert(-e^{-1},v)<\infty,\quad \forall j\geq 2.
    \end{align*}
    For each spacetime in this class of naked singularity solutions to~\eqref{eq:whole equation1}--\eqref{eq:whole equation5}, we have the following quantitative bounds:\begin{itemize}
        \item \textup{(Quantitative bounds toward the singularity in the interior region)} For the quantities $(r,\mu,\phi,Q,F)$ in the interior region $\mathcal{Q}^{(in)}$, we have \begin{align*}
            &\vert r\vert\lesssim (-u)(1+z),\quad -\partial_{u}r\approx 1,\quad \partial_{v}r\approx (-u)^{k^{2}},\quad \mu\lesssim k^{2},\\&
            \vert\partial_{u}\phi_{1}\vert\lesssim \frac{k}{(-u)},\quad \vert\partial_{u}\phi_{2}\vert\lesssim\frac{k}{(-u)},\quad \left\vert\partial_{v}\phi_{1}\right\vert\lesssim\frac{1}{(-u)^{q_{k}}},\quad \left\vert\partial_{v}\phi_{2}\right\vert\lesssim\frac{1}{(-u)^{q_{k}}},\\&
            \vert Q\vert\lesssim u^{2},\quad \vert Q\vert(u,0)\approx u^{2},\quad \vert g^{\alpha\beta}g^{\mu\nu}F_{\alpha\mu}F_{\beta\nu}\vert\lesssim 1,\quad  \vert g^{\alpha\beta}g^{\mu\nu}F_{\alpha\mu}F_{\beta\nu}\vert(u,0)\approx 1.
        \end{align*}
        \item \textup{(Asymptotic flatness)} The initial data are asymptotically flat, in the sense that \begin{align*}
            \vert\partial_{v}\phi\vert(-e^{-1},v)\lesssim v^{-2},\quad \lim_{v\rightarrow\infty} m(-e^{-1},v)<\infty,\quad \lim_{v\rightarrow\infty}Q(-e^{-1},v)<\infty,\quad \lim_{v\rightarrow\infty}r(-e^{-1},v) = \infty
.        \end{align*}
\item \textup{(Free of trapped surfaces)} The spacetime is free of trapped surfaces \begin{equation*}
    \partial_{v}r>0,\quad \partial_{u}r<0.
\end{equation*}
\item \textup{(Incomplete future null infinity)} The spacetime has finite affine length of the future null infinity\begin{equation*}
    \lim_{v\rightarrow\infty}\int_{-e^{-1}}^{0}\frac{\Omega^{2}(u,v)}{\Omega^{2}(-e^{-1},0)}du<\infty.
\end{equation*}
    \end{itemize}

\end{theorem}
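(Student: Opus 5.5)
The construction splits into the interior region $\mathcal{Q}^{(in)}=\{-1\le z\le 0\}$ and the exterior region $\mathcal{Q}^{(ex)}=\{v\ge 0\}$. The interior is obtained perturbatively from a $(k_1,k_2)$-self-similar background; the exterior is obtained from a characteristic initial value problem with data on $\underline{C}=\{v=0\}$ and the outgoing cone.

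\textbf{Background.} Set $q_0=0$. By Proposition~\ref{prop: relation between phi1 and phi2}, the ODE system \eqref{eq: dzlambda}--\eqref{eq: dzphi2 equation} collapses to the real Christodoulou system with $k^2=k_1^2+k_2^2$ and $\mr{\hat\phi}=(k_1,k_2)\,\psi/k$. Christodoulou's phase-plane analysis (for $0<k^2<1/3$) then gives a self-similar interior $-1\le\hat z\le0$ with a regular center and $\mu\lesssim k^2$.

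Next, pass to the renormalized coordinate $v=\hat v^{1/q_k}$-type, i.e. $z=v/(-u)^{q_k}$. In this gauge $\partial_v r\approx(-u)^{k^2}$, $\partial_u\phi\sim k/(-u)$ and $\partial_v\phi\sim(-u)^{-q_k}$, which are exactly the bounds stated in the theorem.

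\textbf{Interior perturbation with $q_0\neq0$.} Solve the charged system backward from the cone $\underline{C}$ as a mixed problem, using Proposition~\ref{prop: mixed lwp} and a continuity argument in $u\in[-e^{-1},0)$. Write each quantity as background plus perturbation, and prescribe data on $\{v=0\}$ equal to the background data there.

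The key observation is that the charge is sub-leading. From \eqref{eq:u-transport equation for Q} we have
\[
|\partial_u Q|\lesssim |q_0|\, r^2|\phi||\partial_u\phi|\lesssim (-u)\log(-u)^{-1},
\]
which after integration gives $|Q|\lesssim u^2$ up to logarithms. The logarithms have to be removed: the gauge $A_u$ and the phase $e^{iq_0\chi}$ should be chosen so that $|\phi|$ stays bounded; alternatively one integrates $\Im(\phi\overline{D\phi})$, which only sees angular variation of $\phi=\rho e^{i\theta}$. Then $Q^2/r^2\lesssim u^2$ and $|F|^2\sim Q^2/r^4\sim 1$ along $v=0$ where $r\approx(-u)$.

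The error terms $q_0A_u\partial_v(r\phi)$ and $Q\Omega^2\phi/r$ in \eqref{eq: pre wave equation for phi} are then $O((-u)^{1-})$ relative to the leading terms. A bootstrap with weights $(-u)^{\epsilon}$ closes by Grönwall, since $\int du/(-u)^{1-\epsilon}<\infty$.

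To get the lower bound $|Q|(u,0)\approx u^2$, choose the data so that $\Im(\phi\,\overline{\partial_u\phi})$ has a definite sign on $\underline C$. This is where $k_1,k_2$ must both be nonzero, or a rotating phase must be added. Regularity at the center comes from Proposition~\ref{prop: C1lwp}, and the $C^{1,k^2/(1-k^2)}$ regularity on $C_{out}$ is inherited from the background expansion in $z$.

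\textbf{Exterior.} Use data on $\underline{C}$ from the interior together with smooth, asymptotically flat data on $C_{out}\cap\{v\ge0\}$, with $|\partial_v\phi|\lesssim v^{-2}$ and finite $m$ and $Q$. Prove global existence in the Rodnianski--Shlapentokh-Rothman style, splitting into two regions.

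In the near region $v\le (-u)^{q_k}$, run the same perturbative bootstrap in $z\in[0,1]$. In the far region, integrate the transport equations along $u$ using $\partial_v r>0$, and control $\mu$ by the $\varpi$ equations to keep $\mu<1$ uniformly. This gives $\partial_u r<0$ and $\partial_v r>0$, so there are no trapped surfaces. Finite affine length of $\mathcal I$ follows from $\Omega^2\approx -4\partial_ur\,\partial_vr/(1-\mu)$ being integrable in $u$.

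\textbf{Main obstacle.} The hardest step is closing the near-singularity bootstrap with charge. The potential $A_u$ obeys $\partial_v A_u=-Q\Omega^2/(2r^2)$, which is singular like $(-u)^{2}/r^2$ near the center. The coupling terms must be shown to be integrable in $u$ without a logarithmic loss that would destroy the self-similar profile. I would handle this by using weighted norms in $(-u)^{-\epsilon}$ and working with $r\phi$ and $D\phi$ rather than $\phi$.
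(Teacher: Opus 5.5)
\textbf{Verdict: there is a genuine gap.} Your proposal never produces a nonzero charge with the rate the theorem asserts, and it lacks a mechanism to start the backward interior evolution near $\mathcal{O}$.

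\textbf{The charge mechanism.} You prescribe the background data on $\{v=0\}$. For the background to solve the charged system it must be normalized by $\mathring{\tilde\phi}_1(0)/k_1=\mathring{\tilde\phi}_2(0)/k_2$. Then $\phi(u,0)=-(k_1+ik_2)\log(-u)$ has constant phase, so $\phi_1\partial_u\phi_2-\phi_2\partial_u\phi_1\equiv0$ and $Q(u,0)\equiv0$. You simply recover the uncharged self-similar solution, and taking both $k_i\neq0$ changes nothing.

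Asking only for a definite sign of $\Im(\phi\overline{\partial_u\phi})$ does not fix the rate either. The theorem needs $|\rho^2\partial_u\theta|\approx(-u)^{-1}$ with the self-similar geometry undisturbed. A genuinely rotating phase such as $\theta=\frac{\kappa}{k}\log(-\log(-u))$ instead gives $Q\sim u^2|\log(-u)|$ and $|F|\sim|\log(-u)|$, which is the regime of Theorem~\ref{main theorem2}. That contradicts $|Q|(u,0)\approx u^2$ and $|F|^2\approx1$. Nor can a gauge choice keep $|\phi|$ bounded, since $|\phi|$ is gauge invariant and grows like $k|\log(-u)|$.

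The missing idea is that $q_0\neq0$ breaks the translation symmetry $\phi\mapsto\phi+b$ of the uncharged system. One therefore shifts $\phi$ on $\underline{C}$ by a small constant transverse to $(k_1,k_2)$. The paper takes $\phi_i\mapsto\phi_i+\delta f_i$ with $f_1\to1$, $f_2\to0$ at rate $(-u)^{\gamma}$. Then $\partial_u\phi$, and hence $\Omega^2,\lambda,\mu$ on $\underline{C}$, is unchanged at leading order. Meanwhile $\phi_1\partial_u\phi_2-\phi_2\partial_u\phi_1=\delta k_2/(-u)+\ldots$, which integrates to
$Q(u,0)=-\frac12 q_0\delta k_2\mathring{\tilde r}^2(0)u^2+O\bigl(|q_0|\delta k(-u)^{2+\gamma}|\log(-u)|\bigr)$.

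\textbf{Existence of the interior.} Your backward continuity argument in $u$ has no starting point. The interior region $\{-(-u)^{q_k}\le v\le0\}$ shrinks to $\mathcal{O}$ as $u\to0$, where no local theory is available. Proposition~\ref{prop: mixed lwp} also requires data on an outgoing cone, which you never specify.

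The paper resolves this in three steps. It keeps the exact admissible background in $\mathcal{Q}^{(in)}_\epsilon=\mathcal{Q}^{(in)}\cap\{u\ge-\epsilon\}$ and cuts the shift off with $\chi_\epsilon$. It then solves backward in $\mathcal{Q}^{(in)}\setminus\mathcal{Q}^{(in)}_\epsilon$ with bootstrap constants uniform in $\epsilon$, closed by taking $k$ small and then $\delta\ll k$. Finally it proves the family is Cauchy as $\epsilon\to0$. The delicate point is that $\partial_u(f_1\chi_\epsilon)$ concentrates to a Dirac mass. This is why $Q_\epsilon(u,0)$ is only $O(u^2|\log(-u)|)$ on $[-3\epsilon,-2\epsilon]$ and why the bootstrap norm carries the weight $u^2\log(-u)$.

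The obstacle you single out is not the real one: $Q$ vanishes to third order at a regular center, so $Q/r^2$ is harmless there.

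\textbf{Exterior data.} The outgoing data cannot be smooth at $v=0$. Matching with the interior forces $\partial_v\phi(u,0)\sim\frac1k(-u)^{k^2-1}$, which is pointwise large. Its value at $(-e^{-1},0)$ is pinned (to $-eI_i$) so that the blue-shifted $(-u)^{-1}$ mode is absent. The paper handles this with the auxiliary functions $(\partial_v\phi_i)_S$, which fix the H\"older profile of the data near $v=0$ and make it $L^1$-small. This profile is the source of the $C^{1,k^2/(1-k^2)}$ regularity. The region~I bootstrap is then run on $\partial_v\phi-(\partial_v\phi)_S$.
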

Adapting the approach in~\cite{rodnianski2023naked}, we can further construct general exterior regions of charged naked singularity spacetimes. To construct these exterior solutions, we impose suitable initial data on the characteristic hypersurface \begin{equation*}
    \{u = -e^{-1},\ v\geq 0\}\cup\{v = 0,\ -e^{-1}\leq u<0\}.
\end{equation*}
Fixing the gauge for the electromagnetic potential $A$ as in~\eqref{eq: gauge choice of A}, the free initial data in our setting consist of the gauge choices for $r$ and the initial data for $\phi$ on both the ingoing and outgoing cones, together with the value of $\Omega^{2}$ at their intersection $(u,v)=(-e^{-1},0)$. Then using the equations~\eqref{eq:spherical symmetric equaion1}--\eqref{eq:spherical symmetric equation last}, one can determine the values of \begin{align*}
    &\partial_{v}r(u,0),\ \partial_{v}\phi(u,0),\ Q(u,0),\ \Omega^{2}(u,0),\ \mu(u,0),\\&
    \partial_{u}r(-e^{-1},v),\ \partial_{u}\phi(-e^{-1},v),\ Q(-e^{-1},v),\ A_{u}(-e^{-1},v),\ \mu(-e^{-1},v).
\end{align*}
Let $\underline{v}$ be a small parameter. We define the following two auxiliary functions $(\partial_{v}\phi_{1})_{S}$ and $(\partial_{v}\phi_{2})_{S}$: \begin{align}
    (\partial_{v}\phi_{1})_{S}(u,v) = &\frac{1}{(-u)}\int_{u}^{-\left(\frac{v}{\underline{v}}\right)^{\frac{1}{1-q^{2}}}}(-\partial_{v}r)(s,0)\partial_{u}\phi_{1}(s,0)-q_{0}\frac{Q(s,0)\Omega^{2}(s,0)}{4(-s)}\phi_{2}(s,0)ds,\\
    (\partial_{v}\phi_{2})_{S}(u,v)=&\frac{1}{(-u)}\int_{u}^{-\left(\frac{v}{\underline{v}}\right)^{\frac{1}{1-q^{2}}}}(-\partial_{v}r)(s,0)\partial_{u}\phi_{2}(s,0)+q_{0}\frac{Q(s,0)\Omega^{2}(s,0)}{4(-s)}\phi_{1}(s,0)ds,
\end{align}
where $0<q\ll1$ is a small parameter and $(u,v)\in \left\{0\leq\frac{v}{(-u)^{1-q^{2}}}\leq\underline{v}\right\}$. Note that $$(\partial_{v}\phi_{1})_{S}(u,(-u)^{1-q^{2}}\underline{v}) = (\partial_{v}\phi_{2})_{S}(u,(-u)^{1-q^{2}}\underline{v}) = 0.$$

We have the following definition of the admissible initial data on the initial characteristic hypersurface.
\begin{definition}
\label{def: admissible initial data}
    A gauge choice of $r$ together with initial data on the characteristic hypersurface\begin{equation*}
    \{u = -e^{-1},\ v\geq 0\}\cup\{v = 0,\ -e^{-1}\leq u<0\}.
\end{equation*}
is called admissible if \begin{itemize}
    \item \textup{(Gauge choice of $r$)} The gauge choice of $r$ is taken to be \begin{equation*}
        r(u,0) = (-u),\quad \partial_{v}r(-e^{-1},v) = const.
    \end{equation*}
    \item \textup{(Initial data for $\partial_{u}\phi$ on the ingoing cone)} The initial data for $\phi = \rho e^{i\theta}$ on the ingoing cone are taken to be \begin{equation*}
        \partial_{u}\rho(u,0) = \frac{k}{(-u)}-\rho_{1}(u),\quad \rho\partial_{u}\theta(u,0) = \frac{\kappa}{(-u)}-\theta_{1}(u),
    \end{equation*}
    where $k$ and $\kappa$ are small parameters, and $\rho_{1}$ and $\theta_{1}$ are $C^{1}$ functions on $[-e^{-1},0)$ with \begin{equation*}
        \vert\rho_{1}(u)\vert\lesssim\frac{q}{\vert u\vert\vert\log u\vert^{2}},\quad \vert \theta_{1}(u)\vert\lesssim\frac{q^{2}}{\vert u\vert\vert\log(-u)\vert},\quad q: =\sqrt{k^{2}+\kappa^{2}}.
    \end{equation*}
    \item \textup{(Initial data for $\partial_{v}\phi$ on the outgoing cone)} The initial data for $\phi = \phi_{1}+i\phi_{2}$ on the outgoing cone are taken to be 
\begin{align*}
\partial_{v}\phi_{i}(-e^{-1},v)
=\epsilon f(v)+
\begin{cases}
(\partial_{v}\phi_{i})_{S}(-e^{-1},e^{-(1-q^{2})}\underline{v}), & 0\leq v\leq\underline{v},\\
\epsilon \chi(v), & \underline{v} \leq v < \Lambda,\\
\epsilon v^{-2}, & v\geq \Lambda,
\end{cases}
\quad i = 1,2,
\end{align*}
where the function $\chi$ is a function smoothly connecting the first piece and the last piece of the initial data, the function $f$ is a function with sufficiently high regularity and the bound $\vert f\vert\lesssim v^{-2}$, the constant $\epsilon$ is a small real number, and $\Lambda$ is a large real number. Moreover, we assume $\lim_{v\rightarrow\infty}\phi(-e^{-1},v) = 0$.
\item \textup{(Initial data for $\phi$ and $\Omega^{2}$ on the intersecting sphere $(-e^{-1},0)$)} Since the Einstein--Maxwell-charged scalar field equations are not translation invariant, one has to fix the value of $\phi$ on the intersecting sphere. The choice of $\phi(-e^{-1},0)$ and $\Omega^{2}(-e^{-1},0)$ will be \begin{equation*}
    \phi(-e^{-1},0) = 0,\quad \Omega^{2}(-e^{-1},0) = \text{any constant}.
\end{equation*}
\end{itemize}

\end{definition}

Now we are ready to state our theorem on the construction of general exterior solutions.
\begin{theorem}
    \label{main theorem2}
    Fix a nonzero scalar field charge $q_{0}\neq0$ and parameters $(k,\kappa)$. Let $q = \sqrt{k^{2}+\kappa^{2}}$. Imposing the admissible initial data as in Definition~\ref{def: admissible initial data}, there exists a global solution to the Einstein--Maxwell-charged scalar field equations without a trapped surface.
\end{theorem}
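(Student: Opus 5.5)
We follow the framework of Rodnianski--Shlapentokh-Rothman: set up a bootstrap in the exterior region $\mathcal{Q}^{(ex)}=\{-e^{-1}\le u<0,\ v\ge 0\}$ and use the self-similar coordinate $z=v/(-u)^{1-q^{2}}$. We split $\mathcal{Q}^{(ex)}$ into three regions.
\begin{itemize}
\item The \emph{near-cone region} $\{0\le z\le \underline{v}\}$, adjacent to the singular ingoing cone $\underline{C}=\{v=0\}$.
\item The \emph{intermediate region} $\{\underline{v}\le z\le \Lambda'\}$.
\item The \emph{far region} $\{z\ge\Lambda'\}$, which reaches null infinity.
\end{itemize}
On the data cone $\{v=0\}$ we first integrate the constraint equations, namely the $u$-Raychaudhuri equation~\eqref{eq:spherical symmetric equaion1}, the transport equation~\eqref{eq:u-transport equation for Q}, and~\eqref{eq:wave equation for r} viewed as an ODE for $\partial_v r$. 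With $r=-u$, $\partial_u\rho\approx k/(-u)$ and $\rho\partial_u\theta\approx \kappa/(-u)$, this yields the following profiles:
\begin{itemize}
\item $\Omega^{2}(u,0)\approx(-u)^{-q^{2}}$ up to $\log$-corrections from $\rho_1,\theta_1$;
\item $\partial_v r(u,0)\approx(-u)^{q^{2}}$ and $\mu(u,0)\lesssim q^{2}$;
\item $\rho(u,0)\approx k|\log(-u)|$;
\item $Q(u,0)\sim u^{2}|\log(-u)|$, so $Q/r^{2}\sim|\log r|$.
\end{itemize}
The last item is the claimed $|\log r|$ singular behavior of the Maxwell field. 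The decay hypotheses on $\rho_1$ and $\theta_1$, namely $q/(|u|\log^2)$ and $q^2/(|u|\log)$, are chosen precisely so that these integrals converge, or diverge only logarithmically.

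In the near-cone region we make the twisted self-similar ansatz. We compare $(\partial_v\phi_1,\partial_v\phi_2)$ with the auxiliary profiles $(\partial_v\phi_i)_S$, which are exactly the solution of the $\partial_u(r\partial_v\phi)$ transport equation (the charged wave equation~\eqref{eq: pre wave equation for phi}) frozen at its $v=0$ coefficients. The bootstrap assumptions are:
\begin{itemize}
\item $-\partial_u r\approx1$ and $\partial_v r\approx(-u)^{q^{2}}$;
\item $\mu\lesssim q^{2}$;
\item $|\partial_u\phi|\lesssim q/(-u)$ and $|\partial_v\phi|\lesssim(-u)^{-(1-q^{2})}$;
\item $|Q|\lesssim u^{2}|\log(-u)|$ and $|A_u|\lesssim|\log(-u)|$.
\end{itemize}
We improve these by integrating the transport equations in $v$ over intervals of length $\lesssim\underline{v}(-u)^{1-q^{2}}$, which gains a factor $\underline{v}$. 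The charge terms $q_0QA_u$ and $q_0Q\Omega^2/r$ carry only extra $\log$ losses against the factor $r^{2}$, so they are perturbative. The quantity $r\partial_v\phi-(-u)(\partial_v\phi)_S$ is estimated by integrating~\eqref{eq: pre wave equation for phi} in $u$ from the cone.

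In the far region, asymptotic flatness ($\partial_v\phi\lesssim\epsilon v^{-2}$) together with smallness of $\epsilon$, $q$ and $\mu$ allows a standard small-data-type argument in the $(r,\varpi,Q)$ variables with weights in $r$. The bootstrap assumptions there are $\partial_v r>0$, $\partial_u r<0$, $\mu<1/2$, bounded $r\partial_v\phi$ and bounded $\varpi,Q$. The estimates are closed by integrating the $\varpi$ and $Q$ transport equations, with the Maxwell terms controlled because $Q^{2}/r^{2}\ll1$ there.

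The intermediate region is compact in $z$ and smooth away from $v=0$. There we propagate the bounds via Proposition~\ref{prop: BVlwp} together with Grönwall in $\log(-u)$. Global existence toward $u\to0$ then follows by a continuity argument with the $C^1$ local theory, using that no trapped surface forms because $\mu<1$ and $\partial_v r>0$ are preserved. Stability under small $\epsilon f$ comes for free since all estimates are open.

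The main obstacle is closing the near-cone estimates for $\phi$ and $Q$ simultaneously. The Maxwell field is logarithmically singular, and the phase $\theta$ interacts with $A_u\sim\log$. The worrying product is $q_0QA_u\partial_v(r\phi)$ in~\eqref{eq: pre wave equation for phi}. I would first try a gauge-renormalized unknown $e^{iq_0\int A_u}\phi$, or equivalently work with $D_u\phi$ directly, so that only $Q\Omega^2/r$, of size $(-u)|\log|$, enters. This term is integrable in $u$ against $\partial_v\phi\sim(-u)^{-(1-q^2)}$ when weighted by $(-u)^{q^2}$, and the powers of $\log$ are absorbed by the factor $\underline{v}$ and the smallness of $q$.
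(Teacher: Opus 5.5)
Your overall architecture matches the paper's: a near-cone region built on the frozen-coefficient profiles $(\partial_v\phi_i)_S$, an intermediate region compact in $z$, and an outer region. But your mechanism for closing the near-cone region misses its central difficulty.

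On $\underline{C}$ the data make $\partial_v\phi$ pointwise \emph{large}. The compatibility choice $\partial_v\phi_i(-e^{-1},0)=-eI_i$ is needed to kill the blue-shift mode $c/(-u)$. With it, $(-u)\partial_v\phi_1(u,0)=\int_u^0(\lambda\,\partial_u\phi_1+\dots)\,ds$, which is of size $q^{-1}(-u)^{q^2}$. For the interior data it is $\approx \frac{k_1}{k^2}(-u)^{k^2}$ up to a constant; compare the examples computing $I_1\approx k_1/k^2$ and the oscillating $I(q)$.

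Consequently your bootstrap $|\partial_v\phi|\lesssim(-u)^{-(1-q^2)}$ already fails on $\underline{C}$ if the constant is independent of $q$. With the true constant $q^{-1}$, the ``gain of a factor $\underline{v}$'' yields no smallness:
\begin{itemize}
\item $\frac{r}{\lambda}|\partial_v\phi|^2$ in the $\mu$-equation contributes $\underline{v}/q^2$, against the required $q^2$;
\item $\nu\partial_v\phi$ in the $\partial_u\phi$-equation contributes $\underline{v}/q$, against the required $q$.
\end{itemize}
Shrinking $\underline{v}$ below $q^4$ is not available in the paper's scheme, since the intermediate region needs $q$ small relative to $\underline{v}$ (there $K\underline{v}^{1-q^2}>2$ and $e^{K\Lambda}q^{\delta}\ll 1$).

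The missing ingredients are two.
\begin{enumerate}
\item Bootstrap only renormalized quantities: $\widetilde{\partial_v\phi}=\partial_v\phi-(\partial_v\phi)_S$, and $\widetilde\Psi=\Psi(u,v)-\Psi(u,0)$ for $\Psi\in\{\nu,\lambda,\mu,\partial_u\phi,Q\}$. Use weights $z^{1-\delta}$, $z^{1-2\delta}$ that vanish on $\underline{C}$.
\item Treat $(\partial_v\phi)_S$ as an explicit source. Its profile $1-(z/\underline{v})^{q^2/(1-q^2)}$ makes it small in $z$-weighted $L^1_v$ and $L^2_v$, e.g.\ $\int_0^{\underline{v}(-u)^{1-q^2}}|(\partial_v\phi)_S|\,dv\lesssim q\underline{v}$, despite its $q^{-1}$ sup norm.
\end{enumerate}
The vanishing weights are also what make the $u$-integrations for $\widetilde\lambda$ and $\widetilde{\partial_v\phi}$ uniform as $u\to0$. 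Knowing only $\mu\lesssim q^2$ lets $\lambda/\lambda(u,0)$ drift by a factor $(-u)^{\pm Cq^2}$, so $\partial_v r\approx(-u)^{q^2}$ cannot be closed.

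By contrast, the Maxwell terms you single out are harmless. The wave equation contains $iq_0A_u\partial_v(r\phi)$, with no factor $Q$. Since $A_u(u,0)=0$, one has $|A_u|\lesssim|q_0|q^2(-u)|\log(-u)|\,z$ near $\underline{C}$, so these terms are perturbative and no gauge renormalization is needed.

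Second, your outer region $\{z\ge\Lambda'\}$ is not just a neighbourhood of null infinity. It also contains a neighbourhood of $\mathcal{O}$ and of the future light cone $\{u=0,\ 0<v\ll1\}$, where $r\approx V\to0$. There asymptotic flatness and $\epsilon$ play no role. The data come from $\{z=\Lambda\}$ and are small only in scale-invariant form, $V|\partial_V\phi|\lesssim\Lambda q^{1-3\delta}$. At fixed $V$, the $U$-derivatives are not uniformly controlled as $U\to0$: the paper only obtains $-\partial_Ur\lesssim(V/(-U))^{p}$ and $V|\partial_U\phi|\lesssim q^{1-4\delta}(V/(-U))^{2p}$. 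Weights in $r$ cannot detect this, because $r\approx V$ stays fixed as $U\to0$.

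This is the paper's region III, closed with weights in $V/(-U)$ and $0<p<\tfrac12$:
\begin{itemize}
\item $2p<1$ makes $\partial_U\phi$ integrable up to $U=0$;
\item this keeps $V\partial_V\phi$ small (via $\Lambda^{2p-1}\ll1$), and hence $\mu$ small;
\item the weight $(V/(-U))^{-4p}$ produces the damping term $\tfrac{2p}{V}$ that closes the $\partial_U\phi$ estimate.
\end{itemize}
Your small-data argument applies only for $V\ge\tfrac12$.

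A minor slip: on $\underline{C}$ one has $\partial_u\log\Omega^2=u|\partial_u\phi|^2\approx q^2/u$, so $\Omega^2(u,0)\approx(-u)^{q^2}$, not $(-u)^{-q^2}$. The latter would contradict your own $\partial_v r(u,0)\approx(-u)^{q^2}$.
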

A corollary of the above theorem is the existence of an exterior solution describing a charged naked singularity spacetime with a stronger Maxwell field than those constructed in Theorem~\ref{main theorem1}.
\begin{corollary}
    If $k\kappa\neq0$ in Theorem~\ref{main theorem2}, then on the past light cone emanating from the singularity $\mathcal{O}$, we have \begin{equation*}
       \vert Q(u,0)\vert\approx u^{2}\vert\log(-u)\vert,\quad \vert g^{\alpha\beta}g^{\mu\nu}F_{\alpha\mu}F_{\beta\nu}\vert^{\frac{1}{2}}\approx \vert \log(-u)\vert.
    \end{equation*}
\end{corollary}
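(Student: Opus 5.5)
The plan is to compute $Q$ directly on the ingoing cone $\underline{\Sigma}_{0}=\{v=0\}$ from the admissible data of Definition~\ref{def: admissible initial data}, using only the transport equation~\eqref{eq:u-transport equation for Q}. The global solution of Theorem~\ref{main theorem2} is needed only to make sense of the statement, not for any estimate.

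First I would fix the normalization on $\{v=0\}$.
\begin{itemize}
\item I take $A_{u}(u,0)=0$, the residual gauge freedom used in~\eqref{eq: gauge choice in LWP}. Then $D_{u}\phi=\partial_{u}\phi$ there.
\item Writing $\phi=\rho e^{i\theta}$ gives $\phi\overline{\partial_{u}\phi}=\rho\partial_{u}\rho-i\rho^{2}\partial_{u}\theta$, so $\Im(\phi\overline{\partial_{u}\phi})=-\rho\,(\rho\partial_{u}\theta)$.
\item With $r(u,0)=-u$, equation~\eqref{eq:u-transport equation for Q} becomes
\begin{equation*}
\partial_{u}Q(u,0)=q_{0}\,u^{2}\,\rho(u,0)\Bigl(\frac{\kappa}{(-u)}-\theta_{1}(u)\Bigr).
\end{equation*}
\end{itemize}

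Next I would obtain the asymptotics of $\rho$. Since $\phi(-e^{-1},0)=0$, integrating $\partial_{u}\rho=\frac{k}{(-u)}-\rho_{1}$ from $-e^{-1}$ gives
\begin{equation*}
\rho(u,0)=-k\bigl(\log(-u)+1\bigr)-\int_{-e^{-1}}^{u}\rho_{1}(s)\,ds .
\end{equation*}
The bound $|\rho_{1}|\lesssim q|s|^{-1}|\log(-s)|^{-2}$ is integrable up to $s=0$, so the last integral is $O(q)$. Hence $\rho(u,0)=k|\log(-u)|+O(q)$.

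Similarly, $|\theta_{1}|\lesssim q^{2}|u|^{-1}|\log(-u)|^{-1}$ is $o(\kappa/|u|)$ as $u\to0$ for fixed $\kappa\neq0$. Therefore
\begin{equation*}
\partial_{u}Q(u,0)=q_{0}k\kappa\,(-u)|\log(-u)|\bigl(1+o(1)\bigr)\quad\text{as } u\to0^{-},
\end{equation*}
with $k\kappa\neq0$ fixing a definite sign. Two points need care here. One is that the sign of $\rho$ is consistent with the polar representation; this is immaterial, since only $\rho\cdot\rho\partial_{u}\theta$ enters. The other is the intermediate region where $|\log(-u)|$ is not large. Since the claim is asymptotic as $p\to\mathcal{O}$, I would restrict to $-u$ small.

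Then I would integrate toward the singularity. The right-hand side is integrable at $u=0$, so $Q(u,0)$ has a limit as $u\to0$. The expected normalization, which the construction of the data via the equations should enforce, is $Q(0,0)=0$, i.e.\ no point charge at $\mathcal{O}$. With it,
\begin{equation*}
Q(u,0)=-\int_{u}^{0}\partial_{u}Q\,ds\approx\tfrac12 q_{0}k\kappa\,u^{2}|\log(-u)|,
\end{equation*}
using $\int_{u}^{0}(-s)|\log(-s)|\,ds=\tfrac12u^{2}|\log(-u)|(1+o(1))$.

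I expect this step to be the main obstacle. One must check, from how $Q(u,0)$ is determined from the data in Theorem~\ref{main theorem2}, that no constant of integration survives. A nonzero limit $Q(0,0)$ would dominate $u^{2}|\log(-u)|$ and change the rate.

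Finally, with $g_{uv}=-\frac12\Omega^{2}$ we have $g^{uv}=-2\Omega^{-2}$, and $F_{uv}=\frac{Q\Omega^{2}}{2r^{2}}$ from~\eqref{def: form of Maxwell field}. This gives
\begin{equation*}
g^{\alpha\beta}g^{\mu\nu}F_{\alpha\mu}F_{\beta\nu}=2(g^{uv})^{2}F_{uv}F_{vu}=-\frac{2Q^{2}}{r^{4}},
\end{equation*}
independently of $\Omega$. Since $r=-u$, the invariant satisfies $|g^{\alpha\beta}g^{\mu\nu}F_{\alpha\mu}F_{\beta\nu}|^{1/2}=\sqrt2|Q|/u^{2}\approx|\log(-u)|$, which concludes the corollary.
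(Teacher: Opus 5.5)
Your argument is correct, and its mechanism is the paper's: integrate $\partial_uQ=q_0u^2\rho^2\partial_u\theta$ along $\{v=0\}$ from $u=0$, then use $|g^{\alpha\beta}g^{\mu\nu}F_{\alpha\mu}F_{\beta\nu}|=2Q^2/r^4$.

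The difference is scope. The paper's proof only points to Example~\ref{example: exact loglog behavior}. There, for the specific data $\rho=-k\log(-u)$, $\theta=\frac{\kappa}{k}\log(-\log(-u))$, the integral is explicit: $Q(u,0)=\frac12 q_0k\kappa u^2\bigl(\log(-u)-\frac12\bigr)$. This gives comparability with uniform constants on the whole cone, but only for that one data set. The general bound \eqref{eq: upper bound for Q in the exterior construction} is stated directly, without tracking $\rho_1,\theta_1$.

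You treat every admissible data set with $k\kappa\neq0$, controlling $\rho_1,\theta_1$ and computing the invariant explicitly. The price is that you only obtain the statement as $u\to0^-$, and that restriction is genuinely needed. Indeed, $|\theta_1|\lesssim q^2/(|u||\log(-u)|)$ may exceed $\kappa/|u|$ wherever $|\log(-u)|\lesssim q^2/\kappa$, which is a nonempty range when $\kappa\ll k^2$. So $\partial_uQ$ need not keep a fixed sign away from $\mathcal{O}$. (Minor: the correct sign is $Q(u,0)\approx-\frac12 q_0k\kappa u^2|\log(-u)|$, which is irrelevant for $|Q|$.)

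The step you flag as the main obstacle is not one. The paper determines $Q(u,0)$ by integrating \eqref{eq:u-transport equation for Q} from $u=0$ with $Q(0,0)=0$, exactly as in \eqref{eq: upper bound for Q in the exterior construction}. No other constant is compatible with the setup. On $\{v=0\}$ one has $\partial_u(r\lambda)=-\frac{\Omega^2}{4}\bigl(1-\frac{Q^2}{r^2}\bigr)$ with $\Omega^2\approx(-u)^{q^2}$. If $Q(0,0)\neq0$, the term $\Omega^2Q^2/r^2\sim(-u)^{q^2-2}$ is not integrable at $u=0$. Then the formula $\lambda(u,0)=\frac{1}{(-u)}\int_u^0\frac14\Omega^2\bigl(1-\frac{Q^2}{s^2}\bigr)ds$ breaks down, and so does the bound $\lambda(u,0)\approx(-u)^{q^2}$ on which the region~I construction rests.
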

\begin{proof}
    See Example~\ref{example: exact loglog behavior}.
\end{proof}

\section{Interior construction}
\subsection{Construction of the interior solution to the Einstein--Maxwell-uncharged scalar field equations}
\label{sec: construction of uncharged solution}
In this section, we first construct $(k_{1},k_{2})$-self-similar solutions to the Einstein--Maxwell-uncharged scalar field equations, i.e., the equations \eqref{eq: uncharged u Ray equation}-\eqref{eq: uncharged dvm equation}. Let \begin{equation}
\alpha = \frac{\mr{r}}{\hat z\mr{\hat \lambda}},\quad \theta_{1} = \hat z\alpha\mr{\hat \phi}_{1}^{\prime},\quad \theta_{2} = \hat z\alpha\mr{\hat \phi}_{2}^{\prime}.
\end{equation}
Then using the equations \eqref{eq: dzlambda}-\eqref{eq: dzphi2 equation}, we can derive the following equations for $(\alpha,\theta_{1},\theta_{2})$:\begin{align}
    &\frac{d\alpha}{d\hat z} = \frac{1}{\hat z}\left[(\theta_{1}+k_{1})^{2}+\left(\theta_{2}+k_{2}\right)^{2}+(1-k_{1}^{2}-k_{2}^{2})(1-\alpha)\right],\label{eq: alpha equation}\\&
    \frac{d\theta_{1}}{d\hat z} = \frac{1}{\hat z\alpha}\left[\alpha\left((k_{1}^{2}+k_{2}^{2})\theta_{1}-k_{1}\right)+\theta_{1}\left((\theta_{1}+k_{1})^{2}+(\theta_{2}+k_{2})^{2}-(1+k_{1}^{2}+k_{2}^{2})\right)\right],\label{eq: theta 1 equation}\\&
    \frac{d\theta_{2}}{d\hat z} = \frac{1}{\hat z\alpha}\left[\alpha\left((k_{1}^{2}+k_{2}^{2})\theta_{2}-k_{2}\right)+\theta_{2}\left((\theta_{1}+k_{1})^{2}+(\theta_{2}+k_{2})^{2}-(1+k_{1}^{2}+k_{2}^{2})\right)\right].\label{eq: theta 2 equation}
\end{align}
By Proposition \ref{prop: relation between phi1 and phi2}, we have \begin{equation}
    \frac{\theta_{1}}{k_{1}} = \frac{\theta_{2}}{k_{2}}.\label{eq: relation between theta1 and theta2}
\end{equation}
Let $s = \log \mr{r}(\hat z)$. Then we have \begin{equation}
\label{eq: relation between ds and dz}
    \frac{d}{d\hat z} = \frac{ds}{d\hat z}\frac{d}{ds} = \frac{1}{\mr{r}}\frac{d\mr{r}}{d\hat z}\frac{d}{ds} = \frac{\mr{\hat \lambda}}{\mr{r}}\frac{d}{ds} = \frac{1}{\hat z\alpha}\frac{d}{ds}.
\end{equation}
Hence, under the relations \eqref{eq: relation between theta1 and theta2} and \eqref{eq: relation between ds and dz}, we can reduce the equations \eqref{eq: alpha equation}-\eqref{eq: theta 2 equation} to a $2\times 2$ autonomous system of ODEs \begin{align}
    \frac{d\alpha}{ds}& = \alpha\left[\left(1+\left(\frac{k_{2}}{k_{1}}\right)^{2}\right)(\theta_{1}+k_{1})^{2}+(1-k_{1}^{2}-k_{2}^{2})(1-\alpha)\right],\label{eq: alpha equation for the autonomous system}\\
    \frac{d\theta_{1}}{ds}& =\alpha\left((k_{1}^{2}+k_{2}^{2})\theta_{1}-k_{1}\right)+\theta_{1}\left(\left(1+\left(\frac{k_{2}}{k_{1}}\right)^{2}\right)(\theta_{1}+k_{1})^{2}-(1+k_{1}^{2}+k_{2}^{2})\right),\label{eq: theta equation for the autonomous system}
\end{align}
with the boundary conditions: \begin{equation}
    \lim_{s\rightarrow-\infty}e^{-s}\alpha = -2,\quad \lim_{s\rightarrow -\infty}e^{-s}\theta_{1} = k_{1}.\label{eq: initial data}
\end{equation}
Without loss of generality, we further assume that $k_{1}>0$ and $k_{2}>0$. The following proposition is a direct consequence of this assumption.

\begin{proposition}
\label{prop: positivity of theta}
   Using the $\hat z$-variable, in the region $[-1,\hat{z})$ where $\mr{\hat{\lambda}}>0$, the quantities $\mr{\hat\phi}_{1}^{\prime}$ and $\mr{\hat\phi}_{2}^{\prime}$ are always positive. Consequently, the quantities $\theta_{1}$ and $\theta_{2}$ are positive in this region.
\end{proposition}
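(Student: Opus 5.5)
The plan is to work directly with the first-order linear ODE \eqref{eq: dzphi1 equation} for $\mr{\hat\phi}_{1}^{\prime}$, namely
\begin{equation*}
\mr{r}\hat z\frac{d\mr{\hat\phi}_{1}^{\prime}}{d\hat z}=-2\mr{\hat\lambda}\hat z\mr{\hat\phi}_{1}^{\prime}-k_{1}\mr{\hat\lambda},
\end{equation*}
and to use an integrating factor. Multiplying by $\mr{r}$ and using $\mr{\hat\lambda}=d\mr{r}/d\hat z$, the equation becomes, wherever $\hat z\neq0$,
\begin{equation*}
\frac{d}{d\hat z}\left(\mr{r}^{2}\mr{\hat\phi}_{1}^{\prime}\right)=-\frac{k_{1}}{\hat z}\,\mr{r}\mr{\hat\lambda}.
\end{equation*}
Integrating from the center $\hat z=-1$, where $\mr{r}(-1)=0$ and $\mr{\hat\phi}_{1}^{\prime}(-1)=\tfrac12 k_{1}$ is finite, we obtain
\begin{equation*}
\mr{r}^{2}(\hat z)\mr{\hat\phi}_{1}^{\prime}(\hat z)=k_{1}\int_{-1}^{\hat z}\frac{\mr{r}\mr{\hat\lambda}}{(-\hat z')}\,d\hat z'.
\end{equation*}
For $\hat z'\in(-1,0)$ we have $\mr{r}>0$, which follows because $\mr{r}(-1)=0$ and $\mr{\hat\lambda}>0$ on the region. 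We also have $-\hat z'>0$. The integrand is therefore positive, so $\mr{\hat\phi}_{1}^{\prime}>0$ whenever $k_{1}>0$. The same argument, or Proposition~\ref{prop: relation between phi1 and phi2}, gives $\mr{\hat\phi}_{2}^{\prime}=\frac{k_{2}}{k_{1}}\mr{\hat\phi}_{1}^{\prime}>0$.

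The main obstacle is the point $\hat z=0$, where the factor $\hat z$ in the equation degenerates; this is the past light cone of the singularity. If the region $[-1,\hat z)$ extends past $0$, I would argue separately. Near $\hat z=-1$ the regular-center expansion gives $\mr{\hat\phi}_{1}^{\prime}\to \tfrac12k_{1}$, and the integral representation above shows positivity on all of $(-1,0)$. At $\hat z=0$ the equation forces $-2\mr{\hat\lambda}\cdot 0-k_1\mr{\hat\lambda}=0\cdot(\ldots)$, so for a solution that is regular across $\hat z=0$ the limit must be consistent, and continuity gives $\mr{\hat\phi}_{1}^{\prime}(0)\geq0$. For $\hat z>0$ I would rewrite the ODE as
\begin{equation*}
\frac{d\mr{\hat\phi}_{1}^{\prime}}{d\hat z}=-\frac{2\mr{\hat\lambda}}{\mr{r}}\mr{\hat\phi}_{1}^{\prime}-\frac{k_{1}\mr{\hat\lambda}}{\mr{r}\hat z}.
\end{equation*}
The forcing term is negative there, so I cannot use a simple barrier argument. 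Instead I would use the integrated form across $0$ with the sign of $1/\hat z$ flipped. This is delicate, and I expect the intended region to be $[-1,0)$, or to require an additional input from the phase-plane analysis.

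Finally, positivity of $\theta_{1}=\hat z\alpha\mr{\hat\phi}_{1}^{\prime}=\frac{\mr{r}}{\mr{\hat\lambda}}\mr{\hat\phi}_{1}^{\prime}$ follows immediately from $\mr{r}>0$, $\mr{\hat\lambda}>0$ and $\mr{\hat\phi}_{1}^{\prime}>0$. The same holds for $\theta_{2}$, consistent with the boundary condition $e^{-s}\theta_{1}\to k_{1}>0$.
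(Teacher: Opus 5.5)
Your argument is correct and is essentially the paper's proof. Both rewrite \eqref{eq: dzphi1 equation} as $\frac{d}{d\hat z}\bigl(\mr{r}^{2}\mr{\hat\phi}_{1}^{\prime}\bigr)=\frac{k_{1}\mr{r}\mr{\hat\lambda}}{(-\hat z)}>0$, integrate from the regular center, and read off the sign of $\theta_{1}=\mr{r}\mr{\hat\phi}_{1}^{\prime}/\mr{\hat\lambda}$. Your worry about $\hat z\geq 0$ is unnecessary: Proposition~\ref{prop: monotonicity of z} gives $\hat z<0$ on the whole domain of existence, so the region lies in $[-1,0)$. You should also drop the continuity remark at $\hat z=0$, since $\mr{\hat\phi}_{1}^{\prime}$ blows up like $(-\hat z)^{-k_{1}^{2}-k_{2}^{2}}$ there.
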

\begin{proof}
    We can rewrite the equation \eqref{eq: dzphi1 equation} as \begin{equation*}
        \frac{d}{d\hat z}\left(\mr{r}^{2}\mr{\hat \phi}_{1}^{\prime}\right) = \frac{k_{1}\mr{r}\mr{\hat\lambda}}{(-\hat z)}>0.
    \end{equation*}
    Therefore, by the boundary condition \eqref{eq: initial data}, we have $\mr{r}^{2}\mr{\hat\phi}_{1}^{\prime}>0$. Hence, we can conclude that $\mr{\hat\phi}_{1}^{\prime}$ and $\theta_{1}$ are positive. Similarly, we can show that $\mr{\hat\phi}_{2}^{\prime}$ and $\theta_{2}$ are positive. This concludes the proof.
\end{proof}

Note that the above change of variables is only valid for values of $s$ where $\mr{\hat\lambda} = \frac{d\mr{r}}{d\hat z}>0$ in the region $(-\infty,s)$. However, the following proposition shows that for $k_{1}^{2}+k_{2}^{2}<1$, the change of variable is always valid in the domain of existence of \eqref{eq: alpha equation for the autonomous system}-\eqref{eq: theta equation for the autonomous system}.
\begin{proposition}
\label{prop: monotonicity of z}
    Let $k_{1}^{2}+k_{2}^{2}<1$. Then for any $s$ in the domain of existence of \eqref{eq: alpha equation for the autonomous system}-\eqref{eq: theta equation for the autonomous system}, we have that $\alpha<0$ and $\alpha$ is decreasing. Moreover, the corresponding function $\hat z = \hat z(s)$ is an increasing non-positive function of $s$.
\end{proposition}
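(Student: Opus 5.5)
The plan is to work directly with the autonomous system \eqref{eq: alpha equation for the autonomous system}--\eqref{eq: theta equation for the autonomous system} and read off the sign structure of the $\alpha$-equation. Set $k^{2} := k_{1}^{2}+k_{2}^{2}<1$. Since $(1+(k_{2}/k_{1})^{2})(\theta_{1}+k_{1})^{2}\geq 0$, we have
\begin{equation*}
\frac{d\alpha}{ds} = \alpha\, G(s),\qquad G := \left(1+\left(\tfrac{k_{2}}{k_{1}}\right)^{2}\right)(\theta_{1}+k_{1})^{2}+(1-k^{2})(1-\alpha).
\end{equation*}
This is linear in $\alpha$ with a continuous coefficient $G$ along the solution. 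Hence $\alpha(s) = \alpha(s_{0})\exp\int_{s_{0}}^{s}G$, so $\alpha$ cannot change sign on the (connected) domain of existence. The boundary condition \eqref{eq: initial data} gives $e^{-s}\alpha\to -2$ as $s\to-\infty$, so $\alpha<0$ for $s$ sufficiently negative, and therefore $\alpha<0$ on the whole domain of existence.

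Monotonicity then follows. On the domain of existence, $\alpha<0$ gives $1-\alpha>1$. Combined with $1-k^{2}>0$, this yields $G\geq (1-k^{2})(1-\alpha)>1-k^{2}>0$. Consequently $\frac{d\alpha}{ds}=\alpha G<0$, so $\alpha$ is strictly decreasing. This is the only place where the hypothesis $k^{2}<1$ is used, and it is essential.

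For $\hat z(s)$, I would use \eqref{eq: relation between ds and dz}, which gives $\frac{d\hat z}{ds} = \hat z\,\alpha$. Near $s=-\infty$ we are near the center, where $\hat z\to -1<0$ by the gauge choice. Since $\hat z$ solves a linear ODE, $\hat z(s) = \hat z(s_{0})\exp\int_{s_{0}}^{s}\alpha$ never vanishes, so $\hat z<0$ throughout and is in particular non-positive. Then $\frac{d\hat z}{ds} = \hat z\alpha>0$, because both factors are negative, so $\hat z$ is increasing in $s$. Moreover, $\mr{\hat\lambda} = \mr{r}/(\hat z\alpha)>0$ wherever $\mr r>0$. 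This confirms that the change of variables $s=\log\mr r$ is legitimate on the whole domain, consistent with Proposition~\ref{prop: positivity of theta}.

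The main subtlety is at $s=-\infty$, i.e.\ at the center, where one must justify the starting signs and the matching of the $s$-solution with the $\hat z$-solution. I would handle this with the expansion $\alpha\sim -2e^{s}$ and the regular-center values $\mr{\hat\lambda}(-1)=\tfrac12$, $\mr r(-1)=0$. Together these give $\alpha = \mr r/(\hat z\mr{\hat\lambda}) \approx \mr r/(-\tfrac12) = -2\mr r$, consistent with $\hat z\to-1$. With this, $\hat z$ is recovered as $\hat z(s) = -\exp\int_{-\infty}^{s}\alpha\,ds'$. The integral converges because $\alpha=O(e^{s})$, and this formula manifestly gives a negative, increasing function. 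Using this formula directly avoids any separate continuity argument.
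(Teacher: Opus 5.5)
Your proof is correct and follows the same structure as the paper's: the sign of $\alpha$ comes from the boundary condition at $s=-\infty$, monotonicity is read off from the $\alpha$-equation, and $\hat z(s)=-\exp\int_{-\infty}^{s}\alpha$ is obtained by integrating $d\log(-\hat z)=\alpha\,ds$ from the center. The only difference is how you get $\alpha<0$: you use the multiplicative form $\alpha(s)=\alpha(s_0)\exp\int_{s_0}^{s}G$, while the paper argues by contradiction with the mean value theorem, and your route has the minor advantage that this step does not need $k_1^2+k_2^2<1$.
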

\begin{proof}
    Assuming that $\alpha$ is not always negative on the domain of existence, there exists $s_{1}$ such that $\alpha<0$ on $(-\infty,s_{1})$ and $\alpha(s_{1}) = 0$. Then by the boundary condition of $\alpha$ and the mean value theorem, there exists $s_{2}\in(-\infty,s_{1})$ such that $\frac{d\alpha}{ds}(s_{2})>0$. By the definition of $s_{2}$, we have that \begin{equation*}
        \alpha\left[\left(1+\left(\frac{k_{2}}{k_{1}}\right)^{2}\right)(\theta_{1}+k_{1})^{2}+(1-k_{1}^{2}-k_{2}^{2})(1-\alpha)\right](s_{2})<0,
    \end{equation*}
    which contradicts $\frac{d\alpha}{ds}(s_{2})>0$. The monotonicity of $\alpha$ follows from the equation \eqref{eq: alpha equation for the autonomous system}. 
    
    In view of the relation \eqref{eq: relation between ds and dz}, we have \begin{equation*}
    d\log(-\hat z) = \alpha(s)ds.
\end{equation*}
Solving the above equation from the center, we have \begin{equation}
    \hat z(s) = -e^{\int_{-\infty}^{s}\alpha(\widetilde{s})d\widetilde{s}}.\label{eq: z in terms of s}
\end{equation}
Hence, in the domain of existence, the change of variable is always valid and $\hat z(s)\leq 0$.
\end{proof}
Next, we analyze the maximal domain of existence of $(\alpha,\theta_{1})$ with the initial data \eqref{eq: initial data}. Let $(-\infty,s_{*})$ be the maximal interval of existence for equations \eqref{eq: alpha equation for the autonomous system}-\eqref{eq: theta equation for the autonomous system}. By the standard theory of ODEs, either $s_{*} = \infty$ or $s_{*}<\infty$ and at least one of the quantities $\alpha,\theta_{1}$ blows up when $s\rightarrow s_{*}$. The next few propositions aim at analyzing the detailed blowup behaviors of $\alpha$ and $\theta_{1}$ when $s_{*}<\infty$.

\begin{proposition}
\label{prop: z = 0 for alpha blowup}
    For $k_{1}^{2}+k_{2}^{2}<1$, if $s_{*}<\infty$ and $\alpha\rightarrow-\infty$ when $s\rightarrow s_{*}$, then we have $\hat z_{*}:= \lim_{s\rightarrow s_{*}}\hat z(s) = 0$.
\end{proposition}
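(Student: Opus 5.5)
The plan is to use the representation \eqref{eq: z in terms of s}, $\hat z(s) = -\exp\bigl(\int_{-\infty}^{s}\alpha(\widetilde s)\,d\widetilde s\bigr)$, from Proposition~\ref{prop: monotonicity of z}. Showing $\hat z_*=0$ is then equivalent to showing $\int_{-\infty}^{s_*}\alpha\,ds=-\infty$. Since $\alpha<0$ and $\alpha$ is decreasing, this integral is monotone in $s$, so the limit $\hat z_*$ exists. The lower tail near $-\infty$ is harmless because $\alpha\sim -2e^{s}$ there. Everything therefore reduces to proving that $\alpha$ is not integrable near $s_*$, i.e.\ that the blowup of $\alpha$ is at least as strong as $(s_*-s)^{-1}$.

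To get this, I would bound the growth of $\alpha$ from above using \eqref{eq: alpha equation for the autonomous system}. Write $c:=1+(k_2/k_1)^2$ and $\beta:=1-k_1^2-k_2^2\in(0,1]$. Since $\alpha<0$,
\begin{equation*}
\frac{d\alpha}{ds}=\alpha\bigl[c(\theta_1+k_1)^2+\beta(1-\alpha)\bigr]\geq \alpha\bigl[c(\theta_1+k_1)^2+\beta\bigr]-\beta\alpha^2 .
\end{equation*}
If $\theta_1$ stays bounded on $(-\infty,s_*)$, then for $s$ near $s_*$ (where $|\alpha|\ge1$) we get $\frac{d\alpha}{ds}\geq -C\alpha^2$. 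Integrating $\frac{d}{ds}(1/\alpha)\le C$ from $s$ to $s_*$, and using $1/\alpha\to0$, gives $-1/\alpha(s)\le C(s_*-s)$. Hence $|\alpha(s)|\geq \frac{1}{C(s_*-s)}$, the integral of $\alpha$ diverges logarithmically, and $\hat z_*=0$.

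The main obstacle is the case where $\theta_1$ also blows up, so I would first control $\theta_1$ in terms of $\alpha$. Proposition~\ref{prop: positivity of theta} gives $\theta_1>0$. From \eqref{eq: theta equation for the autonomous system}, when $\theta_1$ is large the cubic term $c\,\theta_1(\theta_1+k_1)^2$ dominates. The forcing $\alpha(k^2\theta_1-k_1)$ contributes $k^2\alpha\theta_1<0$ and $-k_1\alpha>0$. I would try to show the bound $\theta_1\lesssim 1+\sqrt{|\alpha|}$ by a barrier argument: on a hypothetical first time where $\theta_1^2 = M(1+|\alpha|)$, compare $\frac{d}{ds}\theta_1^2$ with $M\frac{d|\alpha|}{ds}\approx M|\alpha|\bigl(c\theta_1^2+\beta|\alpha|\bigr)$. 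This comparison might fail, since the cubic self-interaction of $\theta_1$ is positive.

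As an alternative that avoids the comparison, I would go back to $\hat z$ directly. One can either work in the quantity $\mathring r^2\mathring{\hat\phi}_1'$, whose derivative in $\hat z$ is positive and controlled by \eqref{eq: dzphi1 equation}, or use \eqref{eq: alpha equation} in the form $\frac{d\alpha}{d\hat z}=\frac{1}{\hat z}[\cdots]$. If $\hat z_*<0$, then $\hat z$ stays in a compact subset of $(-\infty,0)$. Since $d\hat z = \hat z\alpha\,ds$, we have $\int \alpha\,ds>-\infty$, and so
\begin{equation*}
|\alpha(s)|\geq \beta\!\int^{s}\!|\alpha|(|\alpha|+1)\,d\widetilde s\,\cdot(\cdots).
\end{equation*}
Combined with the $\hat z$-form of the $\alpha$ equation, with the bracket $\ge\beta(1-\alpha)$, this gives $\frac{d\alpha}{d\hat z}\le \frac{\beta(1-\alpha)}{\hat z}$. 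Note $\hat z<0$, and $d\hat z>0$ because $\hat z$ is increasing. Integrating in $\hat z$ over a compact interval away from $0$ bounds $\log(1-\alpha)$ by $\beta\log(\hat z_0/\hat z_*)<\infty$. That contradicts $\alpha\to-\infty$. This second route does not need any control of $\theta_1$, so I would pursue it first, and it should close the argument.
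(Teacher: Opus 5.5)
Your first route is the paper's route. The paper sets $\beta=-1/\alpha$ (not your $\beta$), bounds $d\beta/ds$ from below, and integrates from $s_*$ to get $|\alpha|\gtrsim(s_*-s)^{-1}$. Your worry that this needs control of $\theta_1$ is justified, because \eqref{eq: alpha equation for the autonomous system} gives
\[
\frac{d\beta}{ds}=-(1-k_1^2-k_2^2)(1+\beta)-c\,\beta(\theta_1+k_1)^2,\qquad c:=1+(k_2/k_1)^2 .
\]
So in fact $\frac{d\beta}{ds}<-(1-k_1^2-k_2^2)(1+\beta)$, and the lower bound written in the paper's proof needs an extra input.

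The genuine gap is that your second route, which you say closes the argument, runs the wrong way. The bracket in \eqref{eq: alpha equation} is bounded \emph{below} by $(1-k_1^2-k_2^2)(1-\alpha)$. Your inequality is therefore equivalent to $\frac{d}{d\hat z}\log(1-\alpha)\ge\frac{1-k_1^2-k_2^2}{-\hat z}$. Integrating gives $\log(1-\alpha(\hat z))\ge\log(1-\alpha(\hat z_0))+(1-k_1^2-k_2^2)\log(\hat z_0/\hat z)$. That is a lower bound, fully compatible with $\alpha\to-\infty$ at some $\hat z_*<0$. Bounding $|\alpha|$ from above on $[\hat z_0,\hat z_*)$ needs an upper bound on the bracket, i.e. on $\theta_1^2/(1-\alpha)$, which is the very control of $\theta_1$ you wanted to avoid. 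The quantity $\mathring r^2\mathring{\hat\phi}_1'$ does not supply it either. If $\hat z_*<0$ it only gives $\theta_1=\hat z\alpha\mathring{\hat\phi}_1'\lesssim|\alpha|$, hence $|\alpha|\gtrsim(s_*-s)^{-1/2}$, which is still integrable.

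The missing ingredient is exactly your barrier bound $\theta_1^2\lesssim1+|\alpha|$. It follows from positivity of the Hawking mass, with no comparison argument.
\begin{itemize}
\item On $(-\infty,s_*)$ you have $\mathring{\hat\lambda}>0$, and by \eqref{eq: algebraic relation of derivatives of r} $\mathring{\hat\nu}=\hat z\mathring{\hat\lambda}-\mathring r<0$, so $1-\mathring\mu>0$.
\item Then $\frac{d\mathring m}{d\hat z}=\frac12\mathring{\hat\lambda}(\theta_1^2+\theta_2^2)(1-\mathring\mu)>0$ together with $\mathring m=0$ at the center gives $\mathring\mu>0$. Equivalently, $W:=\frac14\mathring{\hat\Omega}^2+\mathring{\hat\lambda}\mathring{\hat\nu}>0$.
\item Divide the algebraic identity for $W$ stated after \eqref{eq: dzphi2 equation} by $\mathring r\mathring{\hat\lambda}$. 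Use $\mathring r\mathring{\hat\phi}_i'=\mathring{\hat\lambda}\theta_i$, $\hat z\mathring{\hat\lambda}/\mathring r=1/\alpha$ and \eqref{eq: relation between theta1 and theta2}. This gives
\[
0<\frac{W}{\mathring r\mathring{\hat\lambda}}=k_1^2+k_2^2+\frac{c(\theta_1+k_1)^2-(k_1^2+k_2^2)}{\alpha},\qquad\text{hence}\qquad c(\theta_1+k_1)^2<(k_1^2+k_2^2)(1-\alpha).
\]
\item Inserted into \eqref{eq: alpha equation for the autonomous system}, this yields $\frac{d\alpha}{ds}>\alpha(1-\alpha)$. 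For the paper's $\beta$ that reads $\frac{d\beta}{ds}>-(1+\beta)$, which is the paper's inequality with the constant $1$ in place of $1-k_1^2-k_2^2$.
\end{itemize}
From there your first route finishes the proof: $|\alpha(s)|\ge\frac{1}{2(s_*-s)}$ near $s_*$, so $\int^{s_*}\alpha\,ds=-\infty$ and $\hat z_*=0$.

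Alternatively, the mass argument inside the proof of Proposition~\ref{prop: boundedness of theta} gives $\theta_1<k_1/\mathring\mu$ wherever $d\theta_1/ds\ge0$. That bounds $\theta_1$ outright, using only $s_*<\infty$ and not $\hat z_*=0$.
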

\begin{proof}
    Let $\beta = \frac{1}{(-\alpha)}$. By Proposition \ref{prop: monotonicity of z} and the equation~\eqref{eq: alpha equation for the autonomous system}, we have that $\beta>0$ decreases to $0$ when $s\rightarrow s_{*}$. Then by the equation \eqref{eq: alpha equation for the autonomous system}, we have \begin{equation*}
        \frac{d\beta}{ds}>-(1-k_{1}^{2}-k_{2}^{2})(1+\beta).
    \end{equation*}
By the monotonicity of $\beta$, there exists $s_{0}$ such that on $(s_{0},s_{*})$, we have $0<\beta<1$. Hence, for any $s\in(s_{0},s_{*})$, we have \begin{equation*}
    \frac{d\beta}{ds}>-2(1-k_{1}^{2}-k_{2}^{2}).
\end{equation*}
Integrating from $s = s_{*}$, we have \begin{equation*}
    0<\beta(s)<2(1-k_{1}^{2}-k_{2}^{2})(s_{*}-s),\quad s_{0}<s<s_{*}.
\end{equation*}
Hence, on $(s_{0},s_{*})$, we have \begin{equation*}
    \alpha<-\frac{1}{2(1-k_{1}^{2}-k_{2}^{2})}\frac{1}{s_{*}-s}.
\end{equation*}
Using \eqref{eq: z in terms of s}, we have that \begin{equation*}
    \hat z(s_{0})e^{\int_{s_{0}}^{s}\alpha(\bar{s})d\bar{s}}\leq\frac{\hat z(s_{0})}{(s_{*}-s_{0})^{\frac{1}{2(1-k_{1}^{2}-k_{2}^{2})}}}(s_{*}-s)^{\frac{1}{2(1-k_{1}^{2}-k_{2}^{2})}}\leq \hat z(s)\leq 0.
\end{equation*}
Therefore, we can conclude that $\lim_{s\rightarrow s_{*}}\hat z(s) = 0$.
\end{proof}

The next proposition shows that when $s_{*}<\infty$, only the quantity $\alpha$ blows up, whereas the quantity $\theta_{1}$ stays bounded.\begin{proposition}
\label{prop: boundedness of theta}
    For $k_{1}^{2}+k_{2}^{2}<1$, either $s_{*} = \infty$ or $s_{*}$ is finite and $\alpha\rightarrow -\infty$ while $\theta_{1}$ remains uniformly bounded as $s\rightarrow s_{*}$. In particular, this means $\hat z_{*} =  0$.
\end{proposition}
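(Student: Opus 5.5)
We show that if $s_*<\infty$, then $\theta_1$ stays bounded on $(-\infty,s_*)$. Since $\alpha<0$ is decreasing by Proposition~\ref{prop: monotonicity of z}, the standard blow-up alternative then forces $\alpha\to-\infty$. Proposition~\ref{prop: z = 0 for alpha blowup} then gives $\hat z_*=0$.

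Write $K:=1+(k_2/k_1)^2$ and $k^2:=k_1^2+k_2^2<1$. By Proposition~\ref{prop: positivity of theta}, $\theta_1>0$ throughout the domain of existence; here the change of variables is valid by Proposition~\ref{prop: monotonicity of z}. So only an upper bound on $\theta_1$ is needed. In \eqref{eq: theta equation for the autonomous system} the term $\alpha(k^2\theta_1-k_1)$ is harmless where $\theta_1$ is large: there $k^2\theta_1-k_1>0$, and since $\alpha<0$ this term is negative. The other term, $\theta_1\bigl(K(\theta_1+k_1)^2-(1+k^2)\bigr)$, is positive and cubic for large $\theta_1$, so a naive comparison does not rule out blow-up. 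We therefore have to use the coupling with $\alpha$.

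\textbf{Barrier argument.} Fix a threshold $M$, large compared with $k_1/k^2$ and with $\sqrt{(1+k^2)/K}$. Suppose $\theta_1$ first crosses $M$ at some $s_1$. From then on, compare $\theta_1$ with $|\alpha|$. The cleanest route is to consider the quotient $w:=\theta_1/\alpha$, or equivalently $\theta_1\beta$ with $\beta=1/(-\alpha)$ as in the proof of Proposition~\ref{prop: z = 0 for alpha blowup}. The cubic terms then cancel at leading order:
\[
\frac{d}{ds}\log\theta_1=\frac{\alpha(k^2\theta_1-k_1)}{\theta_1}+K(\theta_1+k_1)^2-(1+k^2),
\]
\[
\frac{d}{ds}\log(-\alpha)=K(\theta_1+k_1)^2+(1-k^2)(1-\alpha).
\]
Subtracting,
\[
\frac{d}{ds}\log\frac{\theta_1}{-\alpha}=\alpha\Bigl(k^2-\frac{k_1}{\theta_1}\Bigr)-(1+k^2)-(1-k^2)(1-\alpha)
=\alpha\Bigl(2k^2-1-\frac{k_1}{\theta_1}\Bigr)-2.
\]
This uses $(1-k^2)(1-\alpha)=(1-k^2)-(1-k^2)\alpha$, so the $\alpha$-coefficient is $k^2-k_1/\theta_1+(1-k^2)$ and the constant is $-(1+k^2)-(1-k^2)=-2$. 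The expression is affine in $\alpha$ with bounded coefficient. Moreover, $\int^{s_*}|\alpha|\,ds=\log(-\hat z)$ integrated up to the endpoint, by \eqref{eq: z in terms of s}.

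\textbf{Main obstacle.} If $\hat z_*<0$, this integral is finite, so $\theta_1/(-\alpha)$ stays bounded. Combined with Proposition~\ref{prop: monotonicity of z}, which bounds $\alpha$ from below, one needs either $\alpha$ to stay bounded or the case $\alpha\to-\infty$. If $\alpha$ stays bounded, then $\theta_1$ is bounded, which contradicts finite-$s_*$ blow-up. The hard case is $\hat z_*=0$ with $\alpha\to-\infty$, where the integral of $|\alpha|$ diverges. I would handle it using the sign: for $\theta_1>M>k_1/(2k^2-1)$… Since $k^2<1/2$ for small $k$, the factor $2k^2-1-k_1/\theta_1<0$. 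Hence $\alpha(\cdots)>0$, and this alone does not close the estimate. Instead I would go back to the first form: for $\theta_1\ge M$ with $M$ large, $\frac{d}{ds}\log\theta_1\le \alpha(k^2-k_1/M)+K(\theta_1+k_1)^2$. This does not obviously close either, so the fallback is a direct trapping region in the $(\alpha,\theta_1)$ phase plane. I would seek a curve $\theta_1=c\sqrt{-\alpha}$, or $\theta_1=$ constant, along which the vector field points inward, using $\alpha\ll-1$ to make the term $\alpha(k^2\theta_1-k_1)\approx k^2\alpha\theta_1$ dominate the cubic term. Verifying this inward-pointing condition is the step I expect to be delicate. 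Once $\theta_1$ is bounded, the blow-up alternative and Proposition~\ref{prop: z = 0 for alpha blowup} finish the proof.
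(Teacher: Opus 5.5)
Your reduction is fine: once $\theta_1$ is bounded on $(-\infty,s_*)$, monotonicity of $\alpha$ and the blow-up alternative give $\alpha\to-\infty$. The bounded-$\alpha$ case is also excluded as you say.

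\textbf{The gap.} The content of the proposition is exactly the bound on $\theta_1$ when $\alpha\to-\infty$. There you stop at ``the step I expect to be delicate'', and none of the tools you list closes it.

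First, your quotient identity contains an algebra slip: $k^2-k_1/\theta_1+(1-k^2)=1-k_1/\theta_1$, so in fact
\[
\frac{d}{ds}\log\frac{\theta_1}{-\alpha}=\alpha\Bigl(1-\frac{k_1}{\theta_1}\Bigr)-2<-2\qquad\text{wherever }\theta_1>k_1 .
\]
Even corrected, this only gives $\theta_1\lesssim|\alpha|$.

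Second, consider your barriers. Set $A=-\alpha$ and $P=K\theta_1(\theta_1+2k_1)$. Using $Kk_1^2=k^2$, equation \eqref{eq: theta equation for the autonomous system} becomes
\[
\frac{d\theta_1}{ds}=Ak_1-\theta_1\bigl(Ak^2-P+1\bigr).
\]
From the algebraic relation for $\tfrac14\mr{\hat\Omega}^2+\mr{\hat\lambda}\mr{\hat\nu}$, together with $1-\mr{\mu}=-4\mr{\hat\lambda}\mr{\hat\nu}/\mr{\hat\Omega}^2$ and $-\mr{\hat\nu}=\hat z\mr{\hat\lambda}(\alpha-1)$, one gets $Ak^2-P=(1+A)\,\mr{\mu}/(1-\mr{\mu})$. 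Consequently:
\begin{itemize}
\item Your curve $\theta_1=c\sqrt{-\alpha}$ essentially encodes $\mr{\mu}>0$, and yields only $\theta_1\lesssim\sqrt{|\alpha|}$, not a bound.
\item The constant barrier $\theta_1=M$ points inward exactly when $M\bigl(Ak^2-P(M)+1\bigr)>Ak_1$. For large $M$ this needs a quantitative lower bound on $\mr{\mu}$, which you never establish.
\end{itemize}

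\textbf{The missing input is the Hawking mass.} Since $\mr{\hat\nu}=\hat z\mr{\hat\lambda}-\mr{r}<0$, there are no trapped surfaces, so $1-\mr{\mu}>0$. Then
\[
\frac{d\mr{m}}{d\hat z}=\tfrac12\mr{\hat\lambda}(\theta_1^2+\theta_2^2)(1-\mr{\mu})\ge0,
\]
so $\mr{m}=\tfrac12 e^{s}\mr{\mu}$ is nondecreasing in $s$. Because $e^s\le e^{s_*}$, this gives $\mr{\mu}\ge c>0$ on $[s_0,s_*)$.

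At any such $s$ with $\frac{d\theta_1}{ds}\ge0$, the displayed identity yields
\[
\theta_1\le \frac{Ak_1}{(1+A)\mr{\mu}/(1-\mr{\mu})+1}<\frac{k_1}{\mr{\mu}}\le \frac{k_1}{c}.
\]
A first-crossing argument then bounds $\theta_1$ by $\max\{\theta_1(s_0),k_1/c\}$.

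This is precisely the paper's argument. It derives $\theta_1<k_1/\mr{\mu}$ at points where $\theta_1$ increases from the $\hat z$-form of the $\theta_1$ equation, and then uses monotonicity of $\mr{m}$ and boundedness of $\mr{r}$. The latter is in fact immediate from $\mr{r}=e^{s}\le e^{s_*}$.
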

\begin{proof}
By  Proposition \ref{prop: z = 0 for alpha blowup}, if one can show that $\alpha$ goes to negative infinity as $s\rightarrow s_{*}$, then $\hat{z}_{*} = 0$.

Assume that $\hat z_{*}<0$ and $\alpha$ is bounded. Using $\alpha = \frac{\mr{r}}{\hat z\mr{\hat\lambda}}$, we have \begin{equation*}
    \frac{d}{d\hat z}\log(\mr{r}) = \frac{1}{\hat z\alpha}(\hat z),
\end{equation*}
from which we can conclude that $\mr{\hat\lambda}>0$ has an upper bound and a lower bound away from $0$ on $[-1,z_{*})$. We can rewrite the equation \eqref{eq: dzphi1 equation} as \begin{equation*}
    \frac{d}{d\hat z}\left(\mr{r}^{2}\mr{\hat \phi}_{1}^{\prime}\right) = \frac{k_{1}\mr{r}\mr{\hat \lambda}}{(-\hat{z})}.
\end{equation*}
Hence, we can conclude that $\mr{\hat\phi}_{1}^{\prime}$ is also uniformly bounded. Therefore, $\theta_{1}$ is bounded on $(-\infty,s_{*})$, which contradicts the fact that $s_{*}$ is the maximal domain of existence. 

Next, we assume that $\hat z_{*} = 0$ and $\alpha$ is bounded: \begin{equation*}
    -A\leq \alpha<0, \quad \text{for }s<s_{*}.
\end{equation*}
Then by the boundary condition \eqref{eq: initial data}, there exists $s_{1}$ such that for any $s\in(-\infty,s_{1})$, we have \begin{equation*}
    \alpha\geq -e^{s}.
\end{equation*}
Therefore, we have the following estimate on $z$: \begin{equation*}
    \lim_{s\rightarrow s_{*}}\hat z(s) = -e^{\int_{-\infty}^{s_{1}}\alpha(\bar{s})d\bar{s}}e^{\int_{s_{1}}^{s_{*}}\alpha(\bar{s})d\bar{s}}\leq -e^{\int_{-\infty}^{s_{1}}\alpha(\bar{s})d\bar{s}}e^{-A(s_{*}-s_{1})}<0,
\end{equation*}
which contradicts the assumption $\hat z_{*} = 0$. 

Now, it remains to show that $\theta_{1}$ is uniformly bounded. Using the equations \eqref{eq: dzlambda} and \eqref{eq: dzphi1 equation}, we have that \begin{equation}
    \hat z\frac{d}{d\hat z}\left(\frac{\mr{r}}{\mr{\hat\lambda}}\mr{\hat\phi}_{1}^{\prime}\right) = -k_{1}+\frac{\mr{\hat\phi}_{1}^{\prime}}{\mr{\hat\lambda}}\left(\frac{\frac{1}{4}\mr{\hat\Omega}^{2}+\mr{\hat\lambda}\mr{\hat\nu}}{\mr{\hat\lambda}}+(-\hat z)\mr{\hat\lambda}\right).\label{eq: equation for theta1 in proving the boundedness}
\end{equation}
Assuming that $\theta_{1} = \frac{\mr{r}}{\mr{\hat\lambda}}\mr{\hat\phi}_{1}^{\prime}$ is unbounded, there exists a sequence $\bar{s}_{n}\rightarrow s_{*}$ such that $\theta_{1}(\bar{s}_{n})\rightarrow +\infty$. In particular, we can choose $\bar{s}_{n}$ such that $\frac{d\theta_{1}}{ds}(\bar{s}_{n})>0$. Then by \eqref{eq: equation for theta1 in proving the boundedness}, we have \begin{equation*}
    0<\theta_{1}<\frac{k_{1}\mr{r}}{\frac{\frac{1}{4}\mr{\hat\Omega}^{2}+\mr{\hat\lambda}\mr{\hat\nu}}{\mr{\hat\lambda}}+(-\hat z)\mr{\hat\lambda}}.
\end{equation*}
Recall that the Hawking mass is defined to be \begin{equation*}
    1-\mr{\mu} = -\frac{4\mr{\hat\lambda}\mr{\hat\nu}}{\mr{\hat\Omega}^{2}}.
\end{equation*}
Hence, we have \begin{equation*}
   0<\theta_{1}(\bar{s}_{n})<\frac{k_{1}\mr{r}}{\left(1+\frac{\mr{\mu}}{1-\mr{\mu}}\right)(-\hat z)\mr{\hat\lambda}+\frac{\mr{\mu}}{1-\mr{\mu}}\mr{r}}<\frac{k_{1}}{\mr{\mu}}(\bar{s}_{n}) = \frac{k_{1}\mr{r}}{2\mr{m}}(\bar{s}_{n}).
\end{equation*}
It remains to establish a lower bound for $\mr{\mu}$. Using the equation for $\mr{m}$, we have \begin{equation*}
    \frac{d\mr{m}}{d\hat z}=\frac{1}{2}\mr{\hat\lambda}(\theta_{1}^{2}+\theta_{2}^{2})\left(1-\mr{\mu}\right)>0,
\end{equation*}
where the last inequality is due to the fact that in the $s$-region $(-\infty,s_{*})$, the corresponding spacetime is free of trapped surfaces. Therefore, we have that $\mr{m}>0$ is bounded away from $0$. Recall that $\beta = \frac{1}{(-\alpha)}$. Using \eqref{eq: alpha equation for the autonomous system}, we have \begin{equation*}
    \hat z\frac{d\beta}{d\hat z}>(1-k_{1}^{2}-k_{2}^{2})\left(\beta^{2}+\beta\right)>(1-k_{1}^{2}-k_{2}^{2})\beta.
\end{equation*}
Then we have $\alpha(z)\lesssim-(-\hat z)^{-(1-k_{1}^{2}-k_{2}^{2})}$ for $\hat{z}$ sufficiently close to $0$. Using \begin{equation*}
    \frac{d\log\mr{r}}{d\hat z} = \frac{1}{\hat z\alpha}
\end{equation*}
again, we can conclude that $\mr{r}$ is uniformly bounded on $(-\infty,s_{*})$. Hence, $\theta_{1}$ is uniformly bounded. This concludes the proof.
\end{proof}
Now, we analyze for which values of $k_{1}$ and $k_{2}$ the corresponding maximal interval of existence $s_{*}$ is finite. We have the following proposition.
\begin{proposition}
    The maximal domain of existence $s_{*}$ for the equations \eqref{eq: alpha equation for the autonomous system}-\eqref{eq: theta equation for the autonomous system} is finite $s_{*}<\infty$ if $k_{1}^{2}+k_{2}^{2}<1$.
\end{proposition}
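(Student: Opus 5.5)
The plan is to argue by contradiction through a Riccati-type comparison for $\alpha$, using only the sign information already established in Proposition~\ref{prop: monotonicity of z}. Suppose that $s_{*}=\infty$, so that $(\alpha,\theta_{1})$ exists on all of $(-\infty,\infty)$. By Proposition~\ref{prop: monotonicity of z}, $\alpha<0$ on the whole domain of existence. Hence we may fix any $s_{0}\in\mathbb{R}$, and then $\alpha(s_{0})<0$.

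The key step is to read off a differential inequality from \eqref{eq: alpha equation for the autonomous system}. Since $\alpha<0$ and $\bigl(1+(k_{2}/k_{1})^{2}\bigr)(\theta_{1}+k_{1})^{2}\geq 0$, the $\theta_{1}$-contribution $\alpha\bigl(1+(k_{2}/k_{1})^{2}\bigr)(\theta_{1}+k_{1})^{2}$ is nonpositive and can be dropped. Using $1-k_{1}^{2}-k_{2}^{2}>0$, this gives
\begin{equation*}
\frac{d\alpha}{ds}\leq (1-k_{1}^{2}-k_{2}^{2})\,\alpha(1-\alpha)=(1-k_{1}^{2}-k_{2}^{2})(\alpha-\alpha^{2})\leq-(1-k_{1}^{2}-k_{2}^{2})\,\alpha^{2}.
\end{equation*}
Importantly, no control on $\theta_{1}$ is needed here, because its contribution has a favorable sign.

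Next we pass, as in Proposition~\ref{prop: z = 0 for alpha blowup}, to $\beta=\frac{1}{(-\alpha)}>0$. The inequality above becomes
\begin{equation*}
\frac{d\beta}{ds}=\frac{1}{\alpha^{2}}\frac{d\alpha}{ds}\leq-(1-k_{1}^{2}-k_{2}^{2}).
\end{equation*}
Integrating from $s_{0}$ yields
\begin{equation*}
0<\beta(s)\leq\beta(s_{0})-(1-k_{1}^{2}-k_{2}^{2})(s-s_{0})
\end{equation*}
for all $s\geq s_{0}$. The right-hand side becomes negative once $s>s_{0}+\beta(s_{0})/(1-k_{1}^{2}-k_{2}^{2})$, which contradicts $\beta>0$. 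Therefore $s_{*}\leq s_{0}+\frac{1}{(-\alpha(s_{0}))(1-k_{1}^{2}-k_{2}^{2})}<\infty$. Consistently with Proposition~\ref{prop: boundedness of theta}, the blow-up occurs through $\alpha\to-\infty$ with $\hat z_{*}=0$.

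I do not expect a genuine obstacle. The only points requiring care are the strict negativity $\alpha(s_{0})<0$ at a finite $s_{0}$, which follows from Proposition~\ref{prop: monotonicity of z} together with the boundary condition $e^{-s}\alpha\to-2$, and the use of $k_{1}^{2}+k_{2}^{2}<1$ to ensure the Riccati coefficient is positive. The condition $k_{1}>0$ assumed earlier is needed only so that \eqref{eq: alpha equation for the autonomous system} is well defined as written.
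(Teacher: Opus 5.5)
Your proof is correct and is essentially the paper's argument: both pass to $\beta=1/(-\alpha)$, derive $\frac{d\beta}{ds}\le-(1-k_1^2-k_2^2)$ from \eqref{eq: alpha equation for the autonomous system}, and integrate to contradict $\beta>0$. The only difference is cosmetic. The paper writes the $\beta$-equation out with the bracket $(1+(k_2/k_1)^2)(\theta_1^2+2k_1\theta_1)+1$, whereas you discard the nonpositive $\theta_1$-contribution directly in the $\alpha$-equation, which avoids any appeal to the sign of $\theta_1$.
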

\begin{proof}
Recall that $\beta = \frac{1}{(-\alpha)}$. For $k_{1}^{2}+k_{2}^{2}<1$, by Proposition \ref{prop: monotonicity of z}, we have that $\beta>0$ and is decreasing on $(-\infty,s_{*})$. We can rewrite the equation \eqref{eq: alpha equation for the autonomous system} in terms of $\beta$ as \begin{equation}
    \frac{d\beta}{ds} = -\beta\left[\left(1+\left(\frac{k_{2}}{k_{1}}\right)^{2}\right)(\theta_{1}^{2}+2k_{1}\theta_{1})+1\right]-(1-k_{1}^{2}-k_{2}^{2})<-(1-k_{1}^{2}-k_{2}^{2}).\label{eq: beta equation in determining the maximal domain of existence}
\end{equation}
Assuming that $s_{*} = \infty$ and integrating \eqref{eq: beta equation in determining the maximal domain of existence}, we have \begin{equation*}
    \beta(s)<\beta(-100)-(1-k_{1}^{2}-k_{2}^{2})(s+100),
\end{equation*} 
which contradicts the fact that $\beta$ is positive on $(-\infty,s_{*})$ by Proposition \ref{prop: monotonicity of z} when $s$ is sufficiently large. This concludes the proof.
\end{proof}
The next proposition studies the limit of $\theta_{1}$ when $s\rightarrow s_{*}$ for $0<k_{1}^{2}+k_{2}^{2}<1$.
\begin{proposition}
    For $0<k_{1}^{2}+k_{2}^{2}<1$, we have that \begin{equation}
        \theta_{1}\rightarrow \frac{k_{1}}{k_{1}^{2}+k_{2}^{2}},\quad \theta_{2}\rightarrow \frac{k_{2}}{k_{1}^{2}+k_{2}^{2}},\quad s\rightarrow s_{*}.\label{eq: the limit of theta}
    \end{equation}
\end{proposition}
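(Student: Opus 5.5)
We will work in the $s$-variable near $s_{*}$, where by Proposition~\ref{prop: boundedness of theta} we know $s_{*}<\infty$, $\alpha\to-\infty$, $\hat z\to0$, and $\theta_{1}$ stays bounded (and positive by Proposition~\ref{prop: positivity of theta}). Write $k^{2}:=k_{1}^{2}+k_{2}^{2}$ and $c:=1+(k_{2}/k_{1})^{2}=k^{2}/k_{1}^{2}$. The key observation is that in \eqref{eq: theta equation for the autonomous system} the term proportional to $\alpha$, namely $\alpha(k^{2}\theta_{1}-k_{1})$, dominates the bounded remainder $\theta_{1}\bigl(c(\theta_{1}+k_{1})^{2}-(1+k^{2})\bigr)$ as $\alpha\to-\infty$. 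This suggests changing the independent variable to one in which $\alpha$ is absorbed. Concretely, set $\tau:=\int^{s}(-\alpha)\,d\bar s$. This equals $-\log(-\hat z)+\mathrm{const}$ by \eqref{eq: z in terms of s}, so $\tau\to+\infty$ as $s\to s_{*}$ precisely because $\hat z\to0$.

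In the variable $\tau$ we will have
\begin{equation*}
\frac{d\theta_{1}}{d\tau} = -(k^{2}\theta_{1}-k_{1}) + \beta\,\theta_{1}\bigl(c(\theta_{1}+k_{1})^{2}-(1+k^{2})\bigr),\qquad \beta=\frac{1}{(-\alpha)}\to0 .
\end{equation*}
Setting $w:=\theta_{1}-k_{1}/k^{2}$ gives $\frac{dw}{d\tau}=-k^{2}w+\beta E(\tau)$, with $|E|\le C$ by the boundedness of $\theta_{1}$. Variation of constants then yields
\begin{equation*}
w(\tau)=e^{-k^{2}(\tau-\tau_{0})}w(\tau_{0})+\int_{\tau_{0}}^{\tau}e^{-k^{2}(\tau-\sigma)}\beta(\sigma)E(\sigma)\,d\sigma.
\end{equation*}
Since $k^{2}>0$ and $\beta(\sigma)\to0$ as $\sigma\to\infty$, a standard splitting of the integral shows $w(\tau)\to0$. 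For a given $\varepsilon$, choose $\tau_{0}$ large so that $\beta<\varepsilon$ beyond $\tau_{0}$; the tail contribution is then $\le C\varepsilon/k^{2}$, and the first term decays. Hence $\theta_{1}\to k_{1}/k^{2}$. The limit $\theta_{2}\to k_{2}/k^{2}$ then follows immediately from \eqref{eq: relation between theta1 and theta2}, since $\theta_{2}=(k_{2}/k_{1})\theta_{1}$.

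The only delicate point is justifying that $\tau\to\infty$, i.e.\ that $\alpha$ is not integrable up to $s_{*}$, and that $\beta\to0$. Both are already supplied: $\beta\to0$ is $\alpha\to-\infty$, and non-integrability is equivalent to $\hat z_{*}=0$, shown in Proposition~\ref{prop: boundedness of theta}. Alternatively, Proposition~\ref{prop: z = 0 for alpha blowup} gives $\alpha\lesssim-(s_{*}-s)^{-1}$ directly. The main obstacle I anticipate is therefore simply ensuring that the uniform bound on $\theta_{1}$ from Proposition~\ref{prop: boundedness of theta} is available, so that $E$ is bounded. That bound is what prevents the cubic term from competing with the linear damping, and the rest is the elementary Gronwall-type estimate above.
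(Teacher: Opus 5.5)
Your proof is correct and is essentially the paper's argument. Both subtract the fixed point $k_{1}/k^{2}$, use that the linear coefficient $k^{2}\alpha=-k^{2}/\beta$ provides damping while the remainder is bounded by Proposition~\ref{prop: boundedness of theta}, conclude with an integrating-factor (Duhamel) estimate, and then obtain $\theta_{2}$ from $\theta_{2}=(k_{2}/k_{1})\theta_{1}$. The only difference is the time parametrization: you pass to $\tau=-\log(-\hat z)+\mathrm{const}$, so the damping has constant rate $k^{2}$ and the smallness comes from $\beta\to0$ in the forcing, whereas the paper stays in $s$, kills the homogeneous term via the non-integrability of $1/\beta\sim\bigl((1-k^{2})(s_{*}-s)\bigr)^{-1}$, and bounds the Duhamel term by $\sup|\eta|\cdot\epsilon$ using the short interval $[s_{*}-\epsilon,s]$.
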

\begin{proof}
    The second limit in \eqref{eq: the limit of theta} follows from the first limit and the relation $\frac{\theta_{1}}{k_{1}} = \frac{\theta_{2}}{k_{2}}$. Therefore, it suffices to prove the first limit in \eqref{eq: the limit of theta}. Let $\widetilde{\theta}_{1} = \theta_{1}-\frac{k_{1}}{k_{1}^{2}+k_{2}^{2}}$. Then by \eqref{eq: theta equation for the autonomous system}, we can write down the equation for $\widetilde{\theta}_{1}$: \begin{equation*}
        \frac{d\widetilde{\theta}_{1}}{ds} = (k_{1}^{2}+k_{2}^{2})\widetilde{\theta}_{1}\alpha+\underbrace{\left(\widetilde{\theta}_{1}+\frac{k_{1}}{k_{1}^{2}+k_{2}^{2}}\right)\left[\left(1+\left(\frac{k_{2}}{k_{1}}\right)^{2}\right)\left(\widetilde{\theta}_{1}+\frac{k_{1}}{k_{1}^{2}+k_{2}^{2}}+k_{1}\right)\right]}_{\eta},
    \end{equation*}
    where $\eta$ is uniformly bounded by Proposition \ref{prop: boundedness of theta}. Using the method of integrating factors, we have \begin{equation*}
        \widetilde{\theta}_{1}(s) = e^{-\int_{s_{*}-\epsilon}^{s}(k_{1}^{2}+k_{2}^{2})\frac{1}{\beta}d\bar{s}}\widetilde{\theta}_{1}(s_{*}-\epsilon)+\int_{s_{*}-\epsilon}^{s}e^{\int_{s}^{\bar{s}}(k_{1}^{2}+k_{2}^{2})\frac{1}{\beta}d\widetilde{s}}\eta d\bar{s},\quad \epsilon>0.
    \end{equation*}
    By \eqref{eq: beta equation in determining the maximal domain of existence}, for $\epsilon$ sufficiently small, we have that $\beta\rightarrow (1-k_{1}^{2}-k_{2}^{2})(s_{*}-s)$. Hence, we have $\widetilde{\theta}_{1}\rightarrow 0$. This concludes the proof of this proposition.
\end{proof}
Finally, we conclude the limiting behavior of $\mr{r}$, $\mr{\hat\lambda}$, $\mr{\hat\phi}_{1}^{\prime}$, and $\mr{\hat\phi}_{2}^{\prime}$ when $s\rightarrow s_{*}$ for $0<k_{1}^{2}+k_{2}^{2}<1$.
\begin{proposition}
\label{prop: blowup rate of geometric quantities}
    For $0<k_{1}^{2}+k_{2}^{2}<1$, when $s\rightarrow s_{*}$, the following bounds hold \begin{equation}
        \mr{r}\approx 1,\quad \mr{\hat\lambda}\approx (-\hat z)^{-k_{1}^{2}-k_{2}^{2}},\quad \mr{\hat\phi}_{1}^{\prime}\approx\mr{\hat\phi}_{2}^{\prime}\approx(-\hat z)^{-k_{1}^{2}-k_{2}^{2}}.
    \end{equation}
\end{proposition}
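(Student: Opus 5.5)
We would work in the $s$-variable and obtain precise asymptotics for $\alpha$ as $s\to s_*$, then translate them back to $\hat z$ using \eqref{eq: z in terms of s} and \eqref{eq: relation between ds and dz}. Set $k^2:=k_1^2+k_2^2$ and $\beta=1/(-\alpha)$.

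\textbf{Asymptotics of $\beta$.} By the previous proposition, $\theta_1\to k_1/k^2$. Substituting this into \eqref{eq: beta equation in determining the maximal domain of existence} gives
\begin{equation*}
\frac{d\beta}{ds} = -(1-k^2)+O(\beta)
\end{equation*}
as $s\to s_*$. Here we use only that $\theta_1$ is bounded (Proposition~\ref{prop: boundedness of theta}) and that $\beta\to0$. Integrating from $s_*$, where $\beta(s_*)=0$, yields $\beta=(1-k^2)(s_*-s)\bigl(1+O(s_*-s)\bigr)$. Consequently
\begin{equation*}
\alpha = -\frac{1}{(1-k^2)(s_*-s)}+O(1).
\end{equation*}
Inserting this into \eqref{eq: z in terms of s} gives $-\hat z(s)\approx (s_*-s)^{1/(1-k^2)}$, since the $O(1)$ correction contributes a bounded factor. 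Equivalently, $s_*-s\approx(-\hat z)^{1-k^2}$.

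\textbf{Bound on $\mathring r$ and $\hat\lambda$.} Since $s=\log\mathring r$ and $s_*<\infty$, we get $\mathring r\to e^{s_*}$. Because $\mathring r$ is increasing (we have $\hat\lambda>0$) and positive, $\mathring r\approx1$ near $s_*$. Next, from $\alpha=\mathring r/(\hat z\mathring{\hat\lambda})$ we obtain
\begin{equation*}
\mathring{\hat\lambda}=\frac{\mathring r}{(-\hat z)(-\alpha)}\approx \frac{1}{(-\hat z)(s_*-s)^{-1}}\approx\frac{(-\hat z)^{1-k^2}}{-\hat z}=(-\hat z)^{-k^2}.
\end{equation*}

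\textbf{Bound on $\mathring{\hat\phi}_i'$.} By definition $\theta_i=\hat z\alpha\mathring{\hat\phi}_i'$, so $\mathring{\hat\phi}_i'=\theta_i/(\hat z\alpha)$. Since $\theta_i\to k_i/k^2>0$ (using $k_1,k_2>0$), this gives $\mathring{\hat\phi}_i'\approx \mathring{\hat\lambda}/\mathring r\approx(-\hat z)^{-k^2}$. This matches the claim, with constants depending on $k_i$.

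\textbf{Main obstacle.} The delicate step is upgrading $d\beta/ds<-(1-k^2)$ to a two-sided asymptotic with an error that is \emph{integrable} against $\alpha$. Only then does $\int\alpha\,ds$ produce $-\hat z\approx(s_*-s)^{1/(1-k^2)}$ with bounded multiplicative constants rather than a logarithmic loss. The plan handles this as follows:
\begin{itemize}
\item The coefficient of $\beta$ in \eqref{eq: beta equation in determining the maximal domain of existence} is bounded, by the boundedness of $\theta_1$. This gives $\beta'=-(1-k^2)+O(\beta)$.
\item Hence $\beta=(1-k^2)(s_*-s)+O((s_*-s)^2)$.
\item Therefore $1/\beta=\frac{1}{(1-k^2)(s_*-s)}+O(1)$, and the $O(1)$ remainder integrates to a bounded quantity over $(s_0,s_*)$.
\end{itemize}

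If the $O(\beta)$ control turned out to be unavailable, the fallback is to use only the convergence $\theta_1\to k_1/k^2$. That still gives $\beta\sim(1-k^2)(s_*-s)$, and hence $\approx$ bounds up to $(s_*-s)^{\pm\epsilon}$ factors. This would not quite suffice, so the bounded-$\theta_1$ argument is what we rely on.
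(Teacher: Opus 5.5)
Your proposal is correct and follows essentially the same route as the paper. You get asymptotics of $\beta=1/(-\alpha)$ from its ODE, then use $d\log(-\hat z)/ds=\alpha$ to compare $-\hat z$ with $s_*-s$, and finally read off $\mathring{\hat\lambda}=\mathring r/(\hat z\alpha)$ and $\mathring{\hat\phi}_i'=\theta_i/(\hat z\alpha)$. Your version is more careful than the written proof in three places:
\begin{itemize}
\item The $O(1)$ control on $\alpha+\frac{1}{(1-k^2)(s_*-s)}$ is exactly what gives the two-sided bound $-\hat z\approx(s_*-s)^{1/(1-k^2)}$ without logarithmic loss.
\item Your exponents are the right ones. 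The paper's displayed intermediate steps, $\alpha\to-(1-k^2)(s_*-s)^{-1}$ and $(-\hat z)\approx(s_*-s)^{1-k^2}$, have the factor $1-k^2$ inverted and would not produce $(-\hat z)^{-k^2}$.
\item You use $\theta_i\to k_i/k^2>0$, not just boundedness of $\theta_i$, which is what the lower bound on $\mathring{\hat\phi}_i'$ requires.
\end{itemize}
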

\begin{proof}
    Since $s_{*}<\infty$, by the relation $s = \log\mr{r}$, we have $0<\lim_{s\rightarrow s_{*}}\mr{r}<\infty$. By \eqref{eq: beta equation in determining the maximal domain of existence}, we have that \begin{equation*}
        \alpha\rightarrow -(1-k_{1}^{2}-k_{2}^{2})(s_{*}-s)^{-1},\quad s\rightarrow s_{*}.
    \end{equation*}
    Then, by the equation $\frac{d\hat z}{ds} = \hat z\alpha$, we have \begin{equation*}
        (-\hat z)\approx (s_{*}-s)^{1-k_{1}^{2}-k_{2}^{2}}, \quad s\rightarrow s_{*}.
    \end{equation*}
    Using $\mr{\lambda} = \frac{\mr{r}}{\hat z\alpha}$, we have \begin{equation*}
        \mr{\hat\lambda}\approx (-\hat z)^{-k_{1}^{2}-k_{2}^{2}},\quad s\rightarrow s_{*}.
    \end{equation*}
    Using the boundedness of $\theta_{1}$ and $\theta_{2}$, we have \begin{equation*}
        \mr{\hat\phi}_{1}^{\prime}\approx \mr{\hat\phi}_{2}^{\prime}\approx (-\hat z)^{-k_{1}^{2}-k_{2}^{2}},\quad s\rightarrow s_{*}.
    \end{equation*}
    This concludes the proof of this proposition.
\end{proof}
\subsection{A perturbative construction of the interior solution for the Einstein--Maxwell-charged scalar field equations}
\label{sec: interior construction}
In this section, we construct the corresponding interior regions of naked singularity spacetimes for the Einstein--Maxwell-charged scalar field equations by perturbing the interior solution we have constructed in Section~\ref{sec: construction of uncharged solution}.
\subsubsection{Regular coordinates}
In view of Proposition \ref{prop: blowup rate of geometric quantities}, for $0<k_{1}^{2}+k_{2}^{2}<1$, the geometric quantity $\mr{\hat\lambda} = \partial_{\hat v}r$ and the derivatives of the scalar field  $\mr{\hat\phi}_{1}^{\prime}$ and $\mr{\hat\phi}_{2}^{\prime}$ blow up when approaching the light cone $\{\hat{z} = 0\}$. However, this blowup turns out to be a coordinate singularity. Let $u = \hat{u}$ and $(-v) = (-\hat{v})^{1-k_{1}^{2}-k_{2}^{2}}$. Then the double-null coordinates become \begin{equation*}
    g = -\frac{1}{2}\hat{\Omega}^{2}d\hat u\otimes d\hat v-\frac{1}{2}\hat\Omega^{2}d\hat v\otimes d\hat u = -\frac{1}{2}\Omega^{2}du\otimes dv-\frac{1}{2}\Omega^{2}dv\otimes du,
\end{equation*}
where $\Omega^{2}: = \frac{1}{1-k^{2}}(-\hat{v})^{k^{2}}\hat\Omega^{2}$ and $k^{2}: = k_{1}^{2}+k_{2}^{2}$. Under the change of variables $(\hat u,\hat v)\rightarrow (u,v)$, we have that \begin{align*}
   & r(u,v) = (-u)\mr{r}(z),\quad \lambda: = \partial_{v}r = \frac{(-\hat{v})^{k^{2}}}{1-k^{2}}\partial_{\hat{v}}r =: (-u)^{k^{2}}\mr{\lambda}(z),\quad \nu: =\partial_{u}r = \partial_{\hat{u}}r = \mr{\nu}(z),\\&\Omega^{2}(u,v)=:(-u)^{k^{2}}\mr{\Omega}^{2}(z),\quad \phi_{1}(u,v) = \mr{\phi}_{1}(z)-k_{1}\log(-u),\quad \phi_{2}(u,v) = \mr{\phi}_{2}(z)-k_{2}\log(-u),\\&
  \mr{\phi}_{1}^{\prime}: = \frac{d}{dz}\mr{\phi}_{1},\quad \mr{\phi}_{2}^{\prime}: = \frac{d}{dz}\mr{\phi}_{2},
\end{align*}
where $z: = \frac{v}{(-u)^{1-k^{2}}}$ is the renormalized self-similar coordinate. The following proposition is an immediate consequence of this change of variables.
\begin{proposition}
    If $0<k_{1}^{2}+k_{2}^{2}<1$, then the solution to the Einstein--Maxwell-uncharged scalar field equations can be regularly extended to the region $-1\leq z\leq0$, where $\{z = -1\}$ corresponds to the center of the spacetime, and $\{z = 0\}$ corresponds to the ingoing light cone emanating from the singularity. In particular, $(r,\lambda,\nu,\mr{\phi}_{1}^{\prime},\mr{\phi}_{2}^{\prime})$ is uniformly bounded.
\end{proposition}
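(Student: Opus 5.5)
The proof is a change of variables: I will express the regular-coordinate quantities in terms of the $\hat z$-quantities of Section~\ref{sec: construction of uncharged solution} and read off their behavior as $\hat z\to0^{-}$ from Proposition~\ref{prop: blowup rate of geometric quantities}. Write $q_k:=1-k^2$.

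\emph{Relation between the self-similar coordinates.} Since $u=\hat u$ and $-v=(-\hat v)^{q_k}$,
\begin{equation*}
z=\frac{v}{(-u)^{q_k}}=-\Big(\frac{-\hat v}{-\hat u}\Big)^{q_k}=-(-\hat z)^{q_k}.
\end{equation*}
So $\hat z\mapsto z$ is an increasing homeomorphism of $[-1,0]$. It fixes the center $\{\hat z=-1\}=\{z=-1\}$, sends $\{\hat z=0\}$ to $\{z=0\}$, and is smooth on $[-1,0)$. It also gives
\begin{equation*}
\frac{dz}{d\hat z}=q_k(-\hat z)^{-k^2}.
\end{equation*}

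\emph{Profile formulas.} The scaling $r(\hat u,\hat v)=(-\hat u)\mr r(\hat z)$ together with $u=\hat u$ shows that $\mr r$, viewed as a function of $z$, is unchanged. Differentiating gives
\begin{equation*}
\mr\lambda(z)=\frac{(-\hat z)^{k^2}}{q_k}\,\mr{\hat\lambda}(\hat z),\qquad
\mr\nu(z)=\mr{\hat\nu}(\hat z),\qquad
\mr\phi_i'(z)=\frac{(-\hat z)^{k^2}}{q_k}\,\mr{\hat\phi}_i'(\hat z),\qquad
\mr\Omega^2(z)=\frac{(-\hat z)^{k^2}}{q_k}\,\mr{\hat\Omega}^2(\hat z).
\end{equation*}

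\emph{Boundedness.} I will combine these formulas with the following inputs.
\begin{itemize}
\item Proposition~\ref{prop: blowup rate of geometric quantities} gives $\mr r\approx1$, $\mr{\hat\lambda}\approx(-\hat z)^{-k^2}$ and $\mr{\hat\phi}_i'\approx(-\hat z)^{-k^2}$ near $\hat z=0$. Hence $\mr\lambda\approx1$ and $\mr\phi_i'\approx1$ there.
\item For $\mr\nu$, the relation \eqref{eq: algebraic relation of derivatives of r} gives $\mr{\hat\nu}=\hat z\mr{\hat\lambda}-\mr r$. Since $\hat z\mr{\hat\lambda}\approx-(-\hat z)^{q_k}\to0$, $\mr\nu$ stays bounded.
\item For $\mr\Omega^2$, the Hawking mass gives $\mr{\hat\Omega}^2=-4\mr{\hat\lambda}\mr{\hat\nu}/(1-\mr\mu)$. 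The solution is free of trapped surfaces, and the proof of Proposition~\ref{prop: boundedness of theta} shows $\mr m>0$ increasing with $\mr r$ bounded. From $\mr\mu=2\mr m/\mr r$ I will extract $0<\mr\mu\le1-c$, giving $\mr\Omega^2\approx1$.
\item On any compact subinterval $[-1,-\delta]$, all quantities are already smooth by the construction.
\end{itemize}
This yields the uniform bounds on $(r,\lambda,\nu,\mr\phi_1',\mr\phi_2')$ on $[-1,0)$.

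\emph{Regular extension to $z=0$.} This is the step I expect to be the main obstacle: boundedness alone is not enough, and I must show that the rescaled quantities have limits as $z\to0$, and ideally $C^1$ limits. I will work in the $s$-variable with $w=s_*-s$.
\begin{itemize}
\item From \eqref{eq: beta equation in determining the maximal domain of existence} and the convergence $\theta_1\to k_1/k^2$, the coefficient $\frac{d\beta}{ds}$ tends to $-q_k$. So $\beta=q_kw+o(w)$, and in fact $\beta=q_kw+O(w^2)$, because the correction is $\beta\cdot O(1)$.
\item Integrating $d\log(-\hat z)=\alpha\,ds$ then gives $(-\hat z)=c\,w^{q_k}(1+O(w))$ with an explicit constant $c>0$.
\item Consequently $(-\hat z)^{k^2}\mr{\hat\lambda}=\mr r(-\hat z)^{k^2}/(\hat z\alpha)$ converges, and likewise $(-\hat z)^{k^2}\mr{\hat\phi}_i'=\theta_i(-\hat z)^{k^2}\mr{\hat\lambda}/\mr r$ converges. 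This gives $\mr\lambda(0)$ and $\mr\phi_i'(0)$.
\item For $C^1$ regularity, I will rewrite \eqref{eq: dzlambda} and \eqref{eq: dzphi1 equation} in the $z$-variable. The factor $(-\hat z)^{-k^2}$ is then absorbed by $dz/d\hat z$, so the right-hand sides become continuous functions of $(\mr r,\mr\lambda,\mr\nu,\mr\Omega^2,\mr\phi_i')$ up to $z=0$. The $z$-derivatives are therefore bounded and have limits, which gives the regular extension.
\end{itemize}
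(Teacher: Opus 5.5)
Your change of variables $z=-(-\hat z)^{q_k}$, the profile relations (e.g.\ $\mr\lambda=q_k^{-1}(-\hat z)^{k^2}\mr{\hat\lambda}$), and the boundedness you read off from Proposition~\ref{prop: blowup rate of geometric quantities} are exactly the paper's argument; the paper calls the statement an immediate consequence of the change of variables. The step that fails is your $C^1$ claim. Rewriting the equations in $z$ does not remove the degeneracy at $z=0$: a factor $z$ survives in front of every $\frac{d}{dz}$ coming from $\hat z\frac{d}{d\hat z}$, so you must divide by $z$.

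For $\mr\lambda$ you get $q_kz\mr r\,\frac{d\mr\lambda}{dz}=k^2\mr r\mr\lambda-\frac14\mr\Omega^2-\mr\lambda\mr\nu$. Here you are rescued only by the algebraic constraint, which turns the right-hand side into $-q_kz\mr\lambda^2(\theta_1^2+\theta_2^2+2k_1\theta_1+2k_2\theta_2)$; hence $d\mr\lambda/dz$ is bounded and $g:=\mr r\mr\lambda$ is Lipschitz.

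For $\mr\phi_1'$ nothing rescues you. The equation reads $z\mr r\,\frac{d\mr\phi_1'}{dz}=-2z\mr\lambda\mr\phi_1'+q_k^{-1}\bigl(k^2\mr r\mr\phi_1'-k_1\mr\lambda\bigr)$ (cf.\ the proof of Proposition~\ref{prop: estimate on the background spacetime}). Its numerator $q_k^{-1}\mr\lambda(k^2\theta_1-k_1)+O(z)$ vanishes only as fast as $\theta_1-k_1/k^2$. Concretely, set $a=k^2/q_k$. Then $\mr\phi_1'=\frac{k_1}{q_k}\frac{(-z)^a}{\mr r^2}\int_{-1}^z(-s)^{-1-a}g(s)\,ds$ gives $\mr\phi_1'=\frac{k_1}{q_k\mr r^2}\bigl[\frac{g(0)}{a}+\bigl(D-\frac{g(0)}{a}\bigr)(-z)^a+O(z)\bigr]$, where $D=\int_{-1}^0(-s)^{-1-a}\bigl(g(s)-g(0)\bigr)ds$ stays bounded as $k\to0$. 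For small $k$ the coefficient of $(-z)^a$ is therefore $\approx-q_kg(0)/k^2\neq0$, and $d\mr\phi_1'/dz\sim(-z)^{a-1}$ is unbounded. Put differently, the endpoint is a node of the autonomous system with rates $q_k$ and $k^2$, and the trajectory from the center does not come in along the strong direction. So $\mr\phi_i'$ extends only as a $C^{0,k^2/q_k}$ function. This is precisely the $C^{1,\frac{k^2}{1-k^2}}$ regularity of Theorem~\ref{main theorem1}, and your assertion that all $z$-derivatives are bounded with limits is false.

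There are two smaller slips.

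First, $\beta=q_kw+O(w^2)$ gives $\alpha=-(q_kw)^{-1}+O(1)$, and integrating $d\log(-\hat z)=\alpha\,ds$ yields $(-\hat z)=c\,w^{1/q_k}(1+O(w))$, not $c\,w^{q_k}$. The proof of Proposition~\ref{prop: blowup rate of geometric quantities} contains the same inversion, although its conclusion is right. With your exponent, $(-\hat z)^{k^2}\mr{\hat\lambda}=\mr r\beta(-\hat z)^{-q_k}\to0$, contradicting $\mr\lambda\approx1$. With the correct exponent the limit is $e^{s_*}q_kc^{-q_k}>0$, and $-z=c^{q_k}w(1+O(w))$.

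Second, positivity and monotonicity of $\mr m$ together with boundedness of $\mr r$ give only $\mr\mu<1$ pointwise, not $\mr\mu\le1-c$. Use \eqref{eq: second equation of dzomega} instead: $\frac{d}{d\hat z}\log\bigl(\mr{\hat\Omega}^2/\mr{\hat\lambda}\bigr)=(\theta_1^2+\theta_2^2)\mr{\hat\lambda}/\mr r\lesssim(-\hat z)^{-k^2}$ is integrable, so $\mr\Omega^2/\mr\lambda=\mr{\hat\Omega}^2/\mr{\hat\lambda}$ has a positive finite limit. Equivalently, the constraint gives $\mr\mu\to k^2/(1+k^2)$.

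With these corrections your argument proves what the proposition actually asserts. The quantities $(\mr r,\mr\lambda,\mr\nu,\mr\Omega^2,\mr\phi_i')$ extend continuously to $z=0$, with $\mr\lambda,\mr\Omega^2>0$ there. Moreover $\mr r,\mr\lambda,\mr\nu,\mr\Omega^2$ are even $C^1$ in $z$ up to $z=0$, while $\mr\phi_i'$ is only H\"older continuous.
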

We denote the region $\{-1\leq u<0,\ -(-u)^{q_{k}}\leq v\leq0\}$ as $\mathcal{Q}^{(in)}$. To distinguish the exact $(k_{1},k_{2})$-self-similar spacetime we constructed in Section~\ref{sec: construction of uncharged solution} from the spacetime constructed in this section, we denote by $(\widetilde{g}^{k_{1},k_{2}},\widetilde{\phi}^{k_{1},k_{2}})$ the $(k_{1},k_{2})$-self-similar solution to the Einstein--Maxwell-uncharged scalar field equations, where \begin{align*}
    &\widetilde{g}^{k_{1},k_{2}} = -\frac{1}{2}\widetilde{\Omega}_{k_{1},k_{2}}^{2}du\otimes dv-\frac{1}{2}\widetilde{\Omega}_{k_{1},k_{2}}^{2}dv\otimes du+\widetilde{r}_{k_{1},k_{2}}^{2}d\sigma^{2},\\&
    \widetilde{\phi}^{k_{1},k_{2}} =\widetilde{\phi}_{1}^{k_{1},k_{2}}+i\widetilde\phi_{2}^{k_{1},k_{2}},\\&
    \widetilde{\phi}_{1}^{k_{1},k_{2}} = \mr{\widetilde{\phi}}_{1}^{k_{1},k_{2}}(z)-k_{1}\log(-u),\quad \widetilde{\phi}_{2}^{k_{1},k_{2}} = \mr{\widetilde{\phi}}_{2}^{k_{1},k_{2}}(z)-k_{2}\log(-u)
\end{align*}

We further fix the gauge freedom of $(\widetilde{g}^{k_{1},k_{2}},\widetilde{\phi}^{k_{1},k_{2}})$ by letting \begin{equation*}
    \widetilde{\phi}^{k_{1},k_{2}}(u,0) = \widetilde{\phi}^{k_{1},k_{2}}_{1}(u,0)+i\widetilde{\phi}^{k_{1},k_{2}}_{2}(u,0) = -k_{1}\log(-u)-ik_{2}\log(-u).
\end{equation*}
For simplicity of notation, we often suppress the index $(k_{1},k_{2})$ and denote the $(k_{1},k_{2})$-self-similar solution by $(\widetilde{g},\widetilde{\phi},\widetilde{Q}\equiv 0)$.

We collect the bounds for the $(k_{1},k_{2})$-self-similar spacetimes $(\widetilde{g}^{k_{1},k_{2}},\widetilde{\phi}^{k_{1},k_{2}})$ in the following proposition for $0<k_{1}^{2}+k_{2}^{2}<1$.
\begin{proposition}
\label{prop: estimate on the background spacetime}
    For $0<k_{1}^{2}+k_{2}^{2}<1$, the $(k_{1},k_{2})$-self-similar solution $(\widetilde g,\widetilde{\phi})$ to the Einstein-Maxwell-uncharged scalar field equations constructed in Section \ref{sec: construction of uncharged solution} satisfies the bounds \begin{align}
        &\widetilde r\approx (-u)(z+1),\quad (-\widetilde\nu)\approx 1,\quad \widetilde\lambda\approx (-u)^{k^{2}},\quad \widetilde{\Omega}^{2}\approx(-u)^{k^{2}},\quad \frac{\widetilde{\mu}}{\mr{\widetilde r}}\approx k^{2},\\&
        \left\vert \mr{\widetilde\phi}_{1}\right\vert\lesssim k_{1},\quad \left\vert\mr{\widetilde{\phi}}_{2}\right\vert\lesssim k_{2},\quad \left\vert\partial_{u}\mr{\widetilde{\phi}}_{1}\right\vert\lesssim \frac{k_{1}}{(-u)},\quad \left\vert\partial_{u}\mr{\widetilde{\phi}}_{2}\right\vert\lesssim \frac{k_{2}}{(-u)},\\& \left\vert\partial_{v}\mr{\widetilde{\phi}}_{1}\right\vert\lesssim \frac{k_{1}}{\eta}\frac{1}{(-u)^{q_{k}}}\left\vert z\right\vert^{-\eta},\quad \left\vert\partial_{v}\mr{\widetilde{\phi}}_{2}\right\vert\lesssim\frac{k_{2}}{\eta}\frac{1}{(-u)^{q_{k}}}\left\vert z\right\vert^{-\eta}.\label{eq: estimate on dvphitilde}
    \end{align}
    Moreover, on the ingoing cone $\{v = 0\}$, we have \begin{align}
        &\widetilde{\mu}(u,0) = \frac{k^{2}}{1+k^{2}},\quad \widetilde{r}(u,0)= \mr{\widetilde{r}}(0)(-u),\quad \widetilde{\Omega}^{2}(u,0) = \mr{\widetilde{\Omega}}^{2}(0)(-u)^{k^{2}},\\&\widetilde{\lambda}(u,0) = \frac{1}{4(1+k^{2})}\mr{\widetilde{\Omega}}^{2}(0)(-u)^{k^{2}}.
    \end{align}
\end{proposition}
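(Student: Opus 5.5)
The bounds follow from the analysis of the autonomous system \eqref{eq: alpha equation for the autonomous system}--\eqref{eq: theta equation for the autonomous system} in Section~\ref{sec: construction of uncharged solution}, transferred through the regularizing change of variables $(\hat u,\hat v)\to(u,v)$. The plan splits $-1\le z\le 0$ into a neighbourhood of the center $\{z=-1\}$ ($s\to-\infty$), a compact middle range, and a neighbourhood of $\{z=0\}$ ($s\to s_*$). On each piece I would express $(\mr r,\mr\lambda,\mr\nu,\mr\Omega^2,\mr\phi_i')$ through $(\alpha,\theta_1,\theta_2)$ and undo the self-similar scalings, e.g.\ $r=(-u)\mr r(z)$ and $\lambda=(-u)^{k^2}\mr\lambda(z)$.

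\textbf{Near the center.} The boundary condition \eqref{eq: initial data} and the regular-center values $\mr{\hat\lambda}(-1)=-\mr{\hat\nu}(-1)=\tfrac12$, $\mr{\hat\Omega}(-1)=1$ give $\mr r\approx 1+\hat z$, $\mr{\hat\lambda}\approx 1$, $-\mr{\hat\nu}\approx1$ and $\mr{\hat\phi}_i'\approx k_i/2$.

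\textbf{Near $z=0$.} Proposition~\ref{prop: blowup rate of geometric quantities} gives $\mr r\approx1$, $\mr{\hat\lambda}\approx(-\hat z)^{-k^2}$ and $\mr{\hat\phi}'_i\approx(-\hat z)^{-k^2}$. After the change $(-v)=(-\hat v)^{1-k^2}$, the factor $(-\hat v)^{k^2}/(1-k^2)$ cancels the blow-up, so $\mr\lambda\approx1$ and $\widetilde\lambda\approx(-u)^{k^2}$.

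\textbf{Middle range.} Here everything is continuous and nonvanishing: $\alpha<0$ is finite, $\mr{\hat\lambda}>0$, and $\mr{\hat\nu}<0$ because $\mr r+\mr{\hat\nu}=\hat z\mr{\hat\lambda}<0$ by \eqref{eq: algebraic relation of derivatives of r}. Compactness therefore gives two-sided bounds.

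\textbf{The $\nu$ and $\Omega^2$ bounds.} To get $-\widetilde\nu\approx1$ uniformly near $z=0$, I would use $\mr\nu=\hat z\mr{\hat\lambda}-\mr r$. Since $\hat z\mr{\hat\lambda}=\mr r/\alpha$ and $\alpha\to-\infty$, this gives $\mr\nu\to-\mr r(0)$. For $\widetilde\Omega^2$, I would use the constraint $\tfrac14\mr{\hat\Omega}^2+\mr{\hat\lambda}\mr{\hat\nu}=\dots$. Equivalently, $\mr{\hat\Omega}^2=-4\mr{\hat\lambda}\mr{\hat\nu}/(1-\mr\mu)$, and $\mu$ stays small for $k$ small, so $\widetilde\Omega^2\approx(-\hat v)^{k^2}\mr{\hat\lambda}\approx(-u)^{k^2}$. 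For $\mu/\mr r\approx k^2$, I would integrate $d\mr m/d\hat z=\tfrac12\mr{\hat\lambda}(\theta_1^2+\theta_2^2)(1-\mr\mu)$ from the center. Since $\theta_i=O(k_i)$ on most of the range, this should give $\mr m\approx k^2\mr r^3$ near the center, and the stated ratio should follow after normalization.

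\textbf{Scalar field bounds.} For $\partial_u\widetilde\phi_i$, I would write $\partial_u\phi_i=k_i/(-u)+(1-k^2)z\,\mr\phi_i'(z)/(-u)$ and show $|z\mr\phi_i'|\lesssim k_i$. The quantity $z\mr\phi_i'$ is comparable to $\theta_i\mr\lambda/\mr r$. It is small near the center, and near $z=0$ the factor $z$ kills the mild growth of $\mr\phi_i'$. For the weighted bound \eqref{eq: estimate on dvphitilde}, I would use $\partial_v\phi_i=(-u)^{-q_k}\mr\phi_i'(z)$ and integrate the transport identity $\tfrac{d}{d\hat z}(\mr r^2\mr{\hat\phi}_i')=k_i\mr r\mr{\hat\lambda}/(-\hat z)$. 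The logarithmic-type singularity at $\hat z\to0$ is absorbed by $|z|^{-\eta}/\eta$ for any $\eta>0$, and converting $\hat z$-derivatives to $z$-derivatives uses the same cancelling factor as above.

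\textbf{Identities on $\{v=0\}$.} At $z=0$ one has $\mr{\hat\lambda}\mr{\hat\nu}$ finite in regular coordinates, and $\mu(u,0)$ is constant by self-similarity. I would evaluate the constraint at $z=0$ using $\theta_i\to k_i/k^2$ (from the limit proposition), $z\mr\phi'_i\to0$, and $\alpha\to-\infty$. This should give $\tfrac14\mr\Omega^2(0)+\mr\lambda(0)\mr\nu(0)=k^2\mr r(0)\mr\lambda(0)$. Combined with $\mr\nu(0)=-\mr r(0)$, it yields $\mr\lambda(0)=\mr\Omega^2(0)/(4(1+k^2)\mr r(0))$. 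The gauge normalization $\widetilde r(u,0)=\mr r(0)(-u)$ then produces the stated formula. The value $\mu=1+4\lambda\nu/\Omega^2=1-1/(1+k^2)=k^2/(1+k^2)$ follows once $\mr r(0)$ is normalized, which I would arrange by a scaling of $u$ (fixing $\mr r(0)=1$).

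\textbf{Main obstacle.} I expect the hardest step to be passing to the limit in the constraint at $z=0$. It requires precise asymptotics, not just $\approx$: in particular, that $\hat z\mr{\hat\lambda}\,\mr{\hat\phi}'_i$-type cross terms converge after the regularizing rescaling. I would handle this with the integrating-factor representation of $\widetilde\theta_1$ from the limit proposition to get a rate for $\theta_i\to k_i/k^2$.
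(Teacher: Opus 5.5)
Your treatment of $\partial_v\widetilde\phi_i$ follows the paper: it integrates the same transport identity from the center, written directly in $z$ with integrating factor $(-z)^{-k^2/q_k}\mr{\widetilde r}^{2}$. The paper simply asserts the geometric bounds from Section~\ref{sec: construction of uncharged solution}. Two parts of your plan need attention.

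\textbf{The bound on $\widetilde\mu/\mr{\widetilde r}$.} This step does not go through.

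Your own estimate $\mr m\approx k^2\mr r^3$ near the center gives $\widetilde\mu=2\mr m/\mr r\approx k^2\mr r^2$. Hence $\widetilde\mu/\mr{\widetilde r}\approx k^2\mr r\to 0$ as $z\to-1$, and no normalization changes this. A regular center forces $\mu=O(\mr r^2)$, so the two-sided bound $\widetilde\mu/\mr{\widetilde r}\approx k^2$ cannot hold near $\Gamma$; the statement appears to contain a misprint. What is true, and what is used later (e.g.\ $|\mu_\epsilon|\lesssim k^2$), is $\widetilde\mu\approx k^2\mr r^2$, in particular $0\le\widetilde\mu\lesssim k^2$.

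Even for this, ``$\theta_i=O(k_i)$ on most of the range'' is not enough, since $\theta_i\to k_i/k^2$ at $z=0$. You need the quantitative bound $0<\theta_i\lesssim k_i\min\{1+|\log(-\hat z)|,\,k^{-2}\}$. It follows by integrating $\frac{d}{d\hat z}(\mr r^2\mr{\hat\phi}_i')=k_i\mr r\mr{\hat\lambda}/(-\hat z)$ with $\mr{\hat\lambda}\approx(-\hat z)^{-k^2}$, and it gives $\int_{-1}^{0}\mr{\hat\lambda}(\theta_1^2+\theta_2^2)\,d\hat z=O(k^2)$. Alternatively, combine the monotonicity of $\mr m$ with the exact value $\widetilde\mu(u,0)=k^2/(1+k^2)$.

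The same bound makes $|\partial_u\mr{\widetilde\phi}_i|\lesssim k_i/(-u)$ uniform in $k$. Note that $z\mr\phi_i'=\hat z\mr{\hat\phi}_i'/q_k=\theta_i/(q_k\alpha)$ exactly (not $\theta_i\mr\lambda/\mr r$), so that estimate is just $\theta_i\lesssim k_i|\alpha|$.

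\textbf{The identities on $\{v=0\}$.} Here your route differs from the paper's, and the obstacle you anticipate is not really there.

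The paper never passes to the limit $z\to0$. On $\{v=0\}$ the gauge gives $\partial_u\widetilde\phi_i(u,0)=k_i/(-u)$ and $\widetilde\nu/\widetilde r=1/u$. So the $u$-transport equation for $\mu$ becomes the explicit linear ODE $\partial_u\widetilde\mu+\frac{1+k^2}{u}\widetilde\mu=\frac{k^2}{u}$. Self-similarity makes $\widetilde\mu(u,0)$ constant, which rules out the homogeneous solution $(-u)^{-(1+k^2)}$ and gives $\widetilde\mu(u,0)=k^2/(1+k^2)$. Then $\widetilde\lambda(u,0)$ follows from $1-\mu=-4\lambda\nu/\Omega^2$.

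Your constraint route is also valid and needs no rate for $\theta_i\to k_i/k^2$. Multiply the algebraic constraint by $\frac{1}{q_k}(-\hat z)^{k^2}$ and use $\mr r\mr{\hat\phi}_i'=\theta_i\mr{\hat\lambda}$ and $\hat z\mr{\hat\lambda}=\mr r/\alpha$. This gives the exact identity $\frac14\mr\Omega^2+\mr\lambda\mr\nu=\mr r\mr\lambda\big(k^2+\alpha^{-1}\sum_i(\theta_i^2+2k_i\theta_i)\big)$. Boundedness of $\theta_i,\mr r,\mr\lambda$ together with $\alpha\to-\infty$ then yields your relation at $z=0$. The paper's argument is shorter and uses only data on $\{v=0\}$. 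Yours also needs the regular extension of $\mr\Omega^2,\mr\lambda$ up to $z=0$.

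Finally, $\widetilde\mu(u,0)=k^2/(1+k^2)$ does not require normalizing $\mr r(0)$, since it cancels. The stated formula for $\widetilde\lambda(u,0)$, by contrast, holds only if $\mr r(0)=1$ and otherwise carries a factor $1/\mr r(0)$, as you noticed.
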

\begin{proof}
The bounds for $\widetilde{r}$, $\widetilde{\nu}$, $\widetilde{\lambda}$, and $\widetilde{\mu}$ follow directly from our construction in Section \ref{sec: construction of uncharged solution} and the coordinate transformation. Recall that for the $(k_{1},k_{2})$-self-similar solution, we have the following equation for $\mr{\widetilde{\phi}}_{1}$ \begin{equation*}
    z\mr{\widetilde{r}}\frac{d\mr{\widetilde{\phi}}_{1}^{\prime}}{dz}+2z\frac{d\mr{\widetilde{r}}}{dz}\mr{\widetilde{\phi}}_{1}^{\prime}-\frac{k_{1}^{2}}{q_{k}}\mr{\widetilde{r}}\mr{\widetilde{\phi}}_{1}^{\prime}+\frac{k_{1}}{q_{k}}\frac{d\mr{\widetilde{r}}}{dz} = 0,
\end{equation*}
which can be rewritten as \begin{equation*}
   \frac{d}{dz}\left((-z)^{-\frac{k^{2}}{q_{k}}}\mr{\widetilde{r}}^{2}\mr{\widetilde{\phi}}_{1}^{\prime}\right) = \frac{k_{1}}{q_{k}}(-z)^{-1-\frac{k^{2}}{q_{k}}}\mr{\widetilde{r}}\frac{d\mr{\widetilde{r}}}{dz}.
\end{equation*}
Hence, we have \begin{equation*}
    \mr{\widetilde{\phi}}_{1}^{\prime} = \frac{k_{1}}{q_{k}}(-z)^{\frac{k^{2}}{q_{k}}}\frac{1}{\mr{\widetilde{r}}^{2}(z)}\int_{-1}^{z}(-s)^{-1-\frac{k^{2}}{q_{k}}}\mr{\widetilde{r}}\frac{d\mr{\widetilde{r}}}{ds}ds.
\end{equation*}
Then for $z\in[-1,-\frac{1}{2}]$, we have \begin{equation*}
    \left\vert \mr{\widetilde{\phi}}_{1}^{\prime}\right\vert\lesssim \frac{1}{\mr{\widetilde r}^{2}(z)}\int_{-1}^{-\frac{1}{2}}\frac{k_{1}}{(-s)}\mr{\widetilde{r}}ds\lesssim k_{1}.
\end{equation*}
For $z\in[-\frac{1}{2},0]$, for any $\eta\gg k$ we have \begin{equation*}
    \left\vert z \right\vert^{\eta}\left\vert\mr{\widetilde{\phi}}_{1}^{\prime}\right\vert(z)\lesssim k_{1}(-z)^{\frac{k^{2}}{q_{k}}}\int_{-1}^{z}(-s)^{-1-\frac{k^{2}}{q_{k}}+\eta} ds\lesssim \frac{k_{1}}{\eta}.
\end{equation*}
Since $\theta_{1}\rightarrow \frac{k_{1}}{k^{2}}$ when $z\rightarrow 0$, we have $\mr{\widetilde \phi}_{1}^{\prime}\approx \frac{1}{k_{1}}$ for $z$ sufficiently close to $0$.

By the explicit form of $\widetilde{\Omega}$, $\widetilde{r}$, $\widetilde{\lambda}$, and $\widetilde{\nu}$, the stated bounds follow from the $(k_{1},k_{2})$-self-similarity. Recall on the ingoing cone $\{v = 0\}$, the transport equation for $\widetilde \mu$ is reduced to \begin{equation*}
    \partial_{u}\widetilde{\mu}+\frac{k^{2}+1}{u}\widetilde{\mu} = \frac{k^{2}}{u}.
\end{equation*}
Hence, we have \begin{equation*}
    \widetilde\mu(u,0) = \frac{k^{2}}{1+k^{2}}.
\end{equation*}
\end{proof}

\subsubsection{Setup of the perturbations}
\label{sec: setup of the perturbations}
To construct the interior solutions to the original Einstein--Maxwell-charged scalar field equations with a nonzero coupling constant $q_{0}$, we adopt the perturbative approach. Let $\mathcal{Q}_{\epsilon}^{(in)}: = \mathcal{Q}^{(in)}\cap \{u\geq -\epsilon\}$. The approach here is twofold. First, note that under some admissible conditions (see Proposition \ref{prop: admissible condition for uncharge to charge}), the $(k_{1},k_{2})$-self-similar spacetime $(\widetilde{g},\widetilde\phi)$ is also a solution to the Einstein--Maxwell-charged scalar field equations with zero spacetime charge $Q$. Taking $(\widetilde g,\widetilde{\phi}, \widetilde Q\equiv 0)$ as a solution to the Einstein--Maxwell-charged scalar field equations in the region $\mathcal{Q}_{\epsilon}^{(in)}$ and imposing well-designed initial data on the characteristic initial hypersurface $\mathcal{H}_{\epsilon}: = \{u = -\epsilon,\ -\epsilon^{q_{k}}\leq v\leq 0\}\cup \{v = 0,-1\leq u\leq -\epsilon\}$, we will establish the existence and quantitative estimates for a solution $(g,\phi, Q)$ to the Einstein--Maxwell-charged scalar field equations in the region $\{-1\leq u\leq -\epsilon,\ -(-u)^{q_{k}}\leq v\leq 0\}$. The solutions in these two regions $\mathcal{Q}_{\epsilon}^{(in)}$ and $\mathcal{Q}^{(in)}\backslash\mathcal{Q}^{(in)}_{\epsilon}$ can be glued together to form a solution $(g^{\epsilon},\phi^{\epsilon}, Q^{\epsilon})$ to the Einstein--Maxwell-charged scalar field equations in the whole interior region $\mathcal{Q}^{(in)}$. Second, letting $\epsilon$ go to zero, a limiting argument will yield a naked singularity interior for the Einstein--Maxwell-charged scalar field equations. The following picture (Figure~\ref{fig:perturbative construction}) depicts the construction. 
\begin{figure}[htbp]
    \centering
    \begin{tikzpicture}[ scale=1.2, >=Latex, every node/.style={font=\small} ]
\coordinate (O) at (0,0);
\coordinate (A) at (3,-3);
\coordinate (B) at (0,-6);
\draw (A)--(B) node[midway, below, sloped] {$u = -1$};
\draw (O)--(B);
 \filldraw[
        draw=black,
        fill=white
    ] (O) circle (1pt);
\coordinate (C) at (1.5,-1.5);
\coordinate (D) at (0,-3);
\draw (O)--(C) node [midway, above, sloped]{uncharged data};
\draw (C)--(A) node [midway, above, sloped] {perturbed data};
\draw (C)--(D) node[midway, above,sloped] {$u = -\epsilon\rightarrow 0$}
node [midway, below, sloped] {uncharged data};
\node at (0.4,-1) {$\mathcal{Q}_{\epsilon}^{(in)}$};
\node[above] at (O) {singularity $\mathcal{O}$};

\end{tikzpicture}
    \caption{Perturbative construction for the interior region}
    \label{fig:perturbative construction}
\end{figure}

One distinction between solutions to the Einstein–Maxwell system with an uncharged scalar field and those with a charged scalar field is that the former enjoy a translation symmetry: shifting the scalar field by a constant still produces a solution to the same equations. Therefore, in the construction of the $(k_{1},k_{2})$-self-similar solutions to the Einstein--Maxwell-uncharged scalar field equations in Section \ref{sec: construction of uncharged solution}, only the derivatives of the scalar field enter into the construction, and we did not fix the gauge freedom associated with this translation invariance. However, to make the $(k_{1},k_{2})$-self-similar solution $(\widetilde{g},\widetilde{\phi},\widetilde{Q}\equiv 0)$ a solution to the Einstein--Maxwell-charged scalar field equations with zero spacetime charge $Q$, we have to fix this gauge freedom in the following proposition.
\begin{proposition}
\label{prop: admissible condition for uncharge to charge}
    The $(k_{1},k_{2})$-self-similar solution with $0<k_{1}^{2}+k_{2}^{2}<1$ constructed in Section \ref{sec: construction of uncharged solution} is also a solution to the Einstein--Maxwell-charged scalar field equations if \begin{equation}
        \frac{\mr{\widetilde{\phi}}_{1}(0)}{k_{1}} = \frac{\mr{\widetilde{\phi}_{2}}(0)}{k_{2}}.\label{eq: admissible condition}
    \end{equation}
\end{proposition}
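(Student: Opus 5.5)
The plan is to check the charged system \eqref{eq:spherical symmetric equaion1}--\eqref{eq:spherical symmetric equation last} directly on the triple $(\widetilde g,\widetilde\phi,\widetilde Q\equiv0)$, with the gauge choice $A\equiv0$ (so $A_v\equiv0$ as in \eqref{eq: gauge choice of A}, and $A_u\equiv0$). With $Q\equiv0$ and $A\equiv0$ we have $D_\mu=\nabla_\mu$. The equations \eqref{eq:spherical symmetric equaion1}--\eqref{eq:wave equation for phi} and \eqref{eq:spherical symmetric equation last} then reduce exactly to the uncharged system \eqref{eq: uncharged u Ray equation}--\eqref{eq: uncharged dvm equation}. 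These are satisfied by construction in Section~\ref{sec: construction of uncharged solution}, and \eqref{eq:spherical symmetric equation last} holds trivially as $0=0$.

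The only equations not automatic are the Maxwell transport equations \eqref{eq:u-transport equation for Q}--\eqref{eq:v-transport equation for Q}. With $Q\equiv0$ they demand
\begin{equation*}
\Im(\widetilde\phi\,\overline{\partial_u\widetilde\phi})=\Im(\widetilde\phi\,\overline{\partial_v\widetilde\phi})=0,
\end{equation*}
that is, $\widetilde\phi_2\partial\widetilde\phi_1-\widetilde\phi_1\partial\widetilde\phi_2=0$ in both null directions. I will show that \eqref{eq: admissible condition} forces $\widetilde\phi_1/k_1=\widetilde\phi_2/k_2$ identically on $\mathcal{Q}^{(in)}$. Given this, $\widetilde\phi=(k_1+ik_2)\psi$ for a real function $\psi$, so $\widetilde\phi\,\overline{\partial\widetilde\phi}=(k_1^2+k_2^2)\psi\partial\psi$ is real and both currents vanish.

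To prove the proportionality, I will use Proposition~\ref{prop: relation between phi1 and phi2}, transported to the regular coordinates $z$. The coordinate change $(\hat u,\hat v)\to(u,v)$ is independent of the field components, so the ODEs for $\mr{\widetilde\phi}_1'$ and $\mr{\widetilde\phi}_2'$ remain linear with sources $k_1$ and $k_2$ respectively, as in the formula derived in the proof of Proposition~\ref{prop: estimate on the background spacetime}. Hence $\mr{\widetilde\phi}_1'/k_1=\mr{\widetilde\phi}_2'/k_2$ on $[-1,0]$. Integrating from $z=0$, where \eqref{eq: admissible condition} fixes the constants, gives $\mr{\widetilde\phi}_1(z)/k_1=\mr{\widetilde\phi}_2(z)/k_2$ for all $z$. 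Since $\widetilde\phi_i=\mr{\widetilde\phi}_i(z)-k_i\log(-u)$, it follows that
\begin{equation*}
\frac{\widetilde\phi_1}{k_1}=\frac{\widetilde\phi_2}{k_2}=\frac{\mr{\widetilde\phi}_1(z)}{k_1}-\log(-u).
\end{equation*}

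The main point needing care is that the integration across $z\in[-1,0]$ is legitimate up to $z=0$. This uses the regular extension to $\{z=0\}$ with $\mr{\widetilde\phi}_i'$ integrable, which holds by the bound \eqref{eq: estimate on dvphitilde}. The argument also needs the regular-center compatibility $Q|_\Gamma=0$, which is trivially satisfied.
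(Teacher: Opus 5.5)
Your proof is correct, but it takes a more direct route than the paper.

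\textbf{The paper's route.} The paper appeals to local well-posedness to reduce the claim to a single check: the constraint \eqref{eq:v-transport equation for Q} along the outgoing cone $\{u=-1\}$. There $\log(-u)=0$, so $\widetilde\phi_i(-1,v)=\mr{\widetilde\phi}_i(v)$. Proportionality then follows by integrating $\mr{\widetilde\phi}_1'/k_1=\mr{\widetilde\phi}_2'/k_2$ from $z=0$, using the stronger normalization $\mr{\widetilde\phi}_1(0)=\mr{\widetilde\phi}_2(0)=0$.

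\textbf{Your route.} You verify the full charged system pointwise on all of $\mathcal{Q}^{(in)}$. In the gauge $A\equiv0$ (which coincides with the paper's gauge $A_v\equiv0$, $A_u(u,0)=0$ once $Q\equiv0$), everything reduces to the uncharged system plus the vanishing of both null components of the current. You then observe that the $-k_i\log(-u)$ terms carry exactly the coefficients $k_i$. This upgrades the proportionality from one cone to the whole region, so $\widetilde\phi=(k_1+ik_2)\psi$ with $\psi$ real.

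\textbf{Comparison.} The paper's version is shorter, but it leaves implicit why checking a constraint on one cone suffices. Uniqueness for the charged system does not by itself identify the uncharged solution with the charged development; one still needs the current to vanish throughout the interior, or a constraint-propagation (Gronwall-type) argument for the deviation. It also does not address the $u$-current. Your version is self-contained, needs no well-posedness input, and uses the admissibility condition exactly as stated rather than a particular gauge normalization.
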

\begin{proof}
    By the local well-posedness of the Einstein--Maxwell-charged scalar field equations, we only need to prove that on the outgoing cone $\{u = -1,\ -1\leq v\leq 0\}$, the solution $$(\widetilde{g},\widetilde{\phi}, \widetilde Q\equiv 0)|_{u = -1}$$ satisfies the constraint equations \eqref{eq:dv-Ray equation} and \eqref{eq:v-transport equation for Q}. Since $\mr{\widetilde{\phi}}_{1}(0) = \mr{\widetilde{\phi}}_{2}(0) = 0$ and $$\frac{1}{k_{1}}\frac{d}{dz}\mr{\widetilde{\phi}}_{1} = \frac{1}{k_{2}}\frac{d}{dz}\mr{\widetilde{\phi}}_{2},$$we have that \begin{equation*}
        \frac{\widetilde{\phi}_{1}}{k_{1}}(-1,v) = \frac{\widetilde{\phi}_{2}}{k_{2}}(-1,v).
    \end{equation*}
    Then we have \begin{equation*}
        \partial_{v}Q(-1,v) = q_{0}^{2}r^{2}\Im\left(\widetilde{\phi}\overline{\partial_{v}\widetilde{\phi}}\right) = q_{0}^{2}r^{2}\left(\widetilde{\phi}_{2}\partial_{v}\widetilde{\phi}_{1}-\widetilde{\phi}_{1}\partial_{v}\widetilde{\phi}_{2}\right) = 0.
    \end{equation*}
    This concludes the proof.
\end{proof}
Therefore, we take the $(k_{1},k_{2})$-self-similar solution $(\widetilde{g},\widetilde{\phi},\widetilde Q\equiv 0)$ satisfying the admissibility condition \eqref{eq: admissible condition} as our solution to the Einstein--Maxwell-charged scalar field equations in the region $\mathcal{Q}_{\epsilon}^{(in)}$. We then construct a solution to the Einstein--Maxwell-charged scalar field equations in the complementary region $\mathcal{Q}^{(in)}\backslash \mathcal{Q}_{\epsilon}^{(in)}$, which can be glued to the above solution to yield a solution in the entire interior region $\mathcal{Q}^{(in)}$. 

Now, we impose the initial data on the initial characteristic hypersurface $\mathcal{H}_{\epsilon}$ and consider the propagation of this given data in the region $\mathcal{Q}^{(in)}\backslash \mathcal{Q}_{\epsilon}^{(in)}$. Let $f_{1}$ and $f_{2}$ be functions on $[-1,0]$ with the following estimates
\begin{equation}
    \left\vert\partial_{u}^{i}(f_{1}-1)\right\vert\leq (-u)^{\gamma-i},\quad \left\vert\partial_{u}^{i}f_{2}\right\vert\leq (-u)^{\gamma-i},\qquad i = 0,1,\label{eq: condition on f}
\end{equation}
for some constant $\gamma>0$.

The perturbations of $\widetilde\phi_{1}$ and $\widetilde\phi_{2}$ on the initial ingoing cone $\{v = 0,\ -1\leq u\leq-\epsilon\}$ should be thought of as $f_{1}\chi_{\epsilon}$ and $f_{2}\chi_{\epsilon}$ respectively, where $\chi_{\epsilon}$ is a suitable smooth cut-off function with support on $[-1,-\epsilon]$. A natural choice of the cut-off function $\chi_{\epsilon}$ can be $\chi_{\epsilon}\equiv 0$ on $[-2\epsilon,0]$, $\chi_{\epsilon}\equiv 1$ on $[-1,-3\epsilon]$, and $\chi_{\epsilon}$ is decreasing on $[-3\epsilon,-2\epsilon]$.  Let $\chi(u)$ be a fixed smooth cut-off function with \begin{equation*}
\chi(u)=
\begin{cases}
0, & u\geq 0,\\[4pt]
\chi^{\prime}(u)\leq 0, & u\in[-1,0],\\[6pt]
1, & u\leq -1.
\end{cases}
\end{equation*}
Let \begin{equation}
    \chi_{\epsilon} =\chi\left(\frac{u+2\epsilon}{2\epsilon}\right).
\end{equation}
Then we have the following lemma, whose proof is a straightforward computation.
\begin{lemma}
    The cut-off function $\chi_{\epsilon}$ is a decreasing function on $[-1,0]$ with \begin{equation*}
\chi_{\epsilon}(u)=
\begin{cases}
0, & u\in[-2\epsilon,0],\\[4pt]
\chi_{\epsilon}^{\prime}(u)\leq 0, & u\in[-3\epsilon,-2\epsilon],\\[6pt]
1, & u\in[-1,-3\epsilon].
\end{cases}
\end{equation*}
We have the following estimate for $\partial_{u}\chi_{\epsilon}$ \begin{align}
    &\partial_{u}\chi_{\epsilon}\approx \frac{1}{u},\quad u\in[-3\epsilon,-2\epsilon].\label{eq: estimate on the cutoff}
\end{align}
\end{lemma}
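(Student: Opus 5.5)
The plan is a direct change of variables $w:=\frac{u+2\epsilon}{2\epsilon}$, so that $\chi_{\epsilon}(u)=\chi(w)$. This map is affine and increasing in $u$, with $\frac{dw}{du}=\frac{1}{2\epsilon}>0$. By the chain rule,
\begin{equation*}
\partial_{u}\chi_{\epsilon}(u)=\frac{1}{2\epsilon}\chi^{\prime}(w),
\end{equation*}
so $\partial_{u}\chi_{\epsilon}$ has the sign of $\chi^{\prime}$. I will establish the three regimes first, then monotonicity, then the size of the derivative.

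\emph{Step 1 (the regimes).} For $u\in[-2\epsilon,0]$ we have $w\in[0,1]\subset[0,\infty)$, so $\chi_{\epsilon}=0$. For $u$ below the transition window, $w$ lies where $\chi\equiv1$, so $\chi_{\epsilon}=1$.

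Here I have to be careful with the endpoints. As literally defined, $\chi$ moves from $1$ to $0$ on $[-1,0]$, which corresponds to $u\in[-4\epsilon,-2\epsilon]$, not $[-3\epsilon,-2\epsilon]$. To obtain the statement exactly as written, I will use the freedom in the fixed cut-off and take $\chi\equiv1$ on $(-\infty,-\tfrac12]$, with $\chi^{\prime}\leq0$ on $[-\tfrac12,0]$. This is compatible with the stated properties of $\chi$. Then $u\le-3\epsilon$ corresponds to $w\le-\tfrac12$, which gives $\chi_{\epsilon}=1$ on $[-1,-3\epsilon]$ (we assume $\epsilon<1/3$). Alternatively, one replaces $3\epsilon$ by $4\epsilon$ throughout, which changes nothing downstream. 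Reconciling these constants is the only real subtlety in the lemma.

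\emph{Step 2 (monotonicity).} Since $\chi^{\prime}\le0$ everywhere (it is zero outside the transition window), the chain-rule formula gives $\partial_{u}\chi_{\epsilon}\le0$ on $[-1,0]$. Hence $\chi_{\epsilon}$ is decreasing there.

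\emph{Step 3 (the derivative bound \eqref{eq: estimate on the cutoff}).} On the transition window $u\in[-3\epsilon,-2\epsilon]$ we have $2\epsilon\le|u|\le3\epsilon$, so $\frac{1}{2\epsilon}\approx\frac{1}{|u|}$ with universal constants. Since $\chi$ is a fixed smooth function, $|\chi^{\prime}|\le C$, and therefore
\begin{equation*}
|\partial_{u}\chi_{\epsilon}|\le\frac{C}{2\epsilon}\lesssim\frac{1}{|u|}.
\end{equation*}
This upper bound, together with the sign $\partial_{u}\chi_{\epsilon}\le0$, is the content of \eqref{eq: estimate on the cutoff}, read as $|\partial_{u}\chi_{\epsilon}|\lesssim|u|^{-1}$ with $\partial_{u}\chi_{\epsilon}\le0$. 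This is the form used later when $\partial_{u}(f_{i}\chi_{\epsilon})$ is estimated.

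A genuine two-sided bound $|\partial_{u}\chi_{\epsilon}|\approx|u|^{-1}$ cannot hold at the endpoints of the window, where $\chi^{\prime}$ vanishes. So the two-sided reading should be understood only as the upper bound above.
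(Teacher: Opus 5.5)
Your proof is correct and is the ``straightforward computation'' (chain rule for the rescaled cut-off) the paper has in mind. Both of your caveats are also accurate: with $\chi_\epsilon=\chi\bigl(\frac{u+2\epsilon}{2\epsilon}\bigr)$ the transition window is literally $[-4\epsilon,-2\epsilon]$ (the intended rescaling is presumably $\chi\bigl(\frac{u+2\epsilon}{\epsilon}\bigr)$, or one takes $\chi\equiv1$ on $[-1,-\frac12]$ as you do), and since a smooth $\chi$ has $\chi'=0$ at the ends of its transition interval, the two-sided reading of $\partial_u\chi_\epsilon\approx\frac1u$ fails there; only the one-sided bound $-\frac{C}{|u|}\le\partial_u\chi_\epsilon\le0$ is used later (for $Q_\epsilon(u,0)$, $\mu_\epsilon(u,0)$ and $L_1$), so your version of the lemma is exactly what the paper needs.
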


{By the local well-posedness result in Proposition~\ref{prop: mixed lwp}}, the free data consist of a gauge choice of the center $\Gamma$, the data for $(r,\phi)$ on the outgoing cone $\{u = -\epsilon\}$, and the data for $(r,\phi)$ on the ingoing cone $\{v = 0\}$. We shall always fix our gauge of the center $\Gamma$ to be $\Gamma = \{z = -1\}$. The initial data on the characteristic initial hypersurface $\mathcal{H}_{\epsilon}$ are given as follows: \begin{itemize}
    \item On the ingoing cone $\{v = 0,\ -1\leq u\leq -\epsilon\}$, we have \begin{align*}
        &r_{\epsilon}(u,0) = \widetilde{r}(u,0),\\&\phi_{\epsilon}(u,0) = \phi_{1,\epsilon}(u,0)+i\phi_{2,\epsilon}(u,0) = \widetilde{\phi}_{1}(u,0)+\delta f_{1,\epsilon}(u)+i\left(\widetilde{\phi}_{2}(u,0)+\delta f_{2,\epsilon}(u)\right);
    \end{align*}
    \item On the outgoing cone $\{u = -\epsilon\}$, we have \begin{align*}
        &\partial_{v}r_{\epsilon}(-\epsilon,v) = \partial_{v}\widetilde{r}(-\epsilon,v),\\& \phi_{\epsilon}(-\epsilon,v) = \widetilde{\phi}_{1}(-\epsilon,v)+i\widetilde{\phi}_{2}(-\epsilon,v).
    \end{align*}
\end{itemize}

The solution to the Einstein--Maxwell-charged scalar field equations in the region $\mathcal{Q}^{(in)}\backslash\mathcal{Q}_{\epsilon}^{(in)}$ arising from the above initial data is denoted by $(r_{\epsilon},\Omega_{\epsilon},\phi_{\epsilon},Q_{\epsilon},A_{\epsilon})$, where $\phi_{\epsilon} = \phi_{1,\epsilon}+i\phi_{2,\epsilon}$ and $A_{\epsilon} = A_{\epsilon,u}du+A_{\epsilon,v}dv$. As a reminder, in view of the gauge invariance \eqref{eq: gauge invariance of the EMcsf} of the equations, we always fix the gauge freedom of the electromagnetic potential $A = A_{u}du+A_{v}dv$ by 
\begin{equation*}
    A_{v}\equiv 0,\quad A_{u}(u,0) = 0.
\end{equation*}

\subsubsection{Consequences of the initial perturbation on the initial hypersurface}
In this section, we study the consequences of the initial perturbations defined in Section \ref{sec: setup of the perturbations}. In particular, we study the behavior of $\partial_{u}r_{\epsilon}, Q_{\epsilon}, \partial_{u}\phi_{\epsilon}, A_{\epsilon,u}$ on the initial outgoing cone $\{u = -\epsilon\}$ and $\partial_{v}r_{\epsilon}, Q_{\epsilon},\partial_{v}\phi_{\epsilon}$ on the initial ingoing cone $\{v = 0\}$. Let $\Psi_{p}$ be the difference between $\Psi_{\epsilon}$ and $\widetilde{\Psi}$ for $\Psi\in\{r,\nu,\lambda,\mu,\phi\}$.

In view of the $u$-transport equation \eqref{eq:u-transport equation for Q} for the spacetime charge $Q$, we can prove the following proposition.
\begin{proposition}
\label{prop: initial estimate on Q}
    On the ingoing cone $\{v = 0,\ -1\leq u<0\}$, the spacetime charge $Q_{\epsilon}$ satisfies the following estimates 
    \begin{align}
    Q_{\epsilon}(u,0)&\equiv 0,\quad -2\epsilon\leq u<0,\label{estimate on Qepsilon for u close to singularity}\\
    \left\vert Q_{\epsilon}\right\vert(u,0)&\lesssim\vert q_{0}\vert\delta k u^{2}\left\vert\log(-u)\right\vert,\quad -3\epsilon\leq u\leq -2\epsilon,\label{estimate on Qepsilon for u in the intermedia region}\\
    \left\vert Q_{\epsilon}(u,0)-Q_{0}(u)\right\vert&\lesssim \vert q_{0}\vert\delta k\epsilon^{2}\left\vert\log(-\epsilon)\right\vert,\quad -1\leq u\leq-3\epsilon,\label{estimate for Qepsilon for u in the faraway region}\\
\left\vert Q_{0}(u)+\frac{1}{2}q_{0}\delta k_{2}\mr{\widetilde{r}}^{2}(0)u^{2}\right\vert&\lesssim \vert q_{0}\vert\delta k(-u)^{\gamma+2}\left\vert\log(-u)\right\vert, \quad -1\leq u<0,
    \end{align}
    where $Q_{0}$ is defined in \eqref{eq: definition of Q0}.
\end{proposition}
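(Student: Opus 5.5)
We integrate the $u$-transport equation \eqref{eq:u-transport equation for Q} along $\{v=0\}$. The gauge is $A_u(u,0)=0$, so $D_u\phi=\partial_u\phi$ there and
\begin{equation*}
\partial_u Q_\epsilon(u,0) = -q_0 r^2\left(\phi_{2,\epsilon}\partial_u\phi_{1,\epsilon}-\phi_{1,\epsilon}\partial_u\phi_{2,\epsilon}\right)(u,0).
\end{equation*}
The data are explicit: $r_\epsilon(u,0)=\mr{\widetilde r}(0)(-u)$ and $\phi_{j,\epsilon}(u,0)=-k_j\log(-u)+\delta f_j\chi_\epsilon$. Let $g_j:=f_j\chi_\epsilon$. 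Since the background satisfies $k_2\widetilde\phi_1=k_1\widetilde\phi_2$ on $v=0$, the zeroth-order term cancels. Expanding the bracket gives
\begin{equation*}
\delta\Big[-k_2\log(-u)\,g_1'-g_2\tfrac{k_1}{u}\cdot(-1)\ldots\Big],
\end{equation*}
that is, $\delta\big(\widetilde\phi_2 g_1'-\widetilde\phi_1 g_2'+g_2\partial_u\widetilde\phi_1-g_1\partial_u\widetilde\phi_2\big)+\delta^2(g_2g_1'-g_1g_2')$. The $\delta^2$ term must either be absorbed into the $O(\delta k)$ bounds or included in $Q_0$; I expect $Q_0$ to be defined in \eqref{eq: definition of Q0} as the integral of the uncut ($\chi_\epsilon\equiv1$) integrand from $0$.

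\textbf{Step 1 (near the singularity).} For $-2\epsilon\le u<0$ we have $\chi_\epsilon=0$, so the data coincide with the uncharged background, the integrand vanishes, and $Q$ vanishes at the limit point. Integrating from the $u=-\epsilon$ side, where $Q=0$ because that region is glued to $\widetilde Q\equiv0$, gives \eqref{estimate on Qepsilon for u close to singularity}.

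\textbf{Step 2 (intermediate region $[-3\epsilon,-2\epsilon]$).} Use $|\widetilde\phi_j|\lesssim k|\log(-u)|$, $|\partial_u\widetilde\phi_j|\lesssim k/|u|$, $|f_j|\lesssim1$, $|f_j'|\lesssim|u|^{\gamma-1}$, and \eqref{eq: estimate on the cutoff}. Together these give $|\partial_u Q_\epsilon|\lesssim |q_0|\delta k u^2\cdot|\log(-u)|/|u|$. The $\delta^2$ term, $g_2g_1'-g_1g_2'=\chi_\epsilon^2(f_2f_1'-f_1f_2')$, has no cutoff derivative and is bounded by $|u|^{\gamma-1}$. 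This is the delicate point: it must be bounded by $k|\log|$, which needs $\delta\lesssim k$ or a suitably chosen $f$. Integrating over an interval of length $\epsilon\approx|u|$ gives $u^2|\log(-u)|$.

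\textbf{Step 3 (far region).} For $u\le-3\epsilon$ the integrand equals the uncut one. So $Q_\epsilon(u,0)-Q_0(u)$ equals the difference of the integrals over $[-3\epsilon,0)$, which by Step 2's bounds is $\lesssim|q_0|\delta k\epsilon^2|\log\epsilon|$.

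\textbf{Step 4 (asymptotics of $Q_0$).} Write $f_1=1+(f_1-1)$. The leading contribution of $g_1\equiv1$ is $-g_1\partial_u\widetilde\phi_2=k_2/u$. Multiplying by $-q_0\mr{\widetilde r}(0)^2u^2\delta$ gives $\partial_uQ_0\approx -q_0\delta k_2\mr{\widetilde r}^2(0)u$, which integrates to $-\tfrac12 q_0\delta k_2\mr{\widetilde r}^2(0)u^2$. All remaining terms involve $f_1-1$ or $f_2$ and their derivatives, which are $O(|u|^{\gamma})$ or $O(|u|^{\gamma-1})$, each multiplied by $k|\log(-u)|$ or $k/|u|$ and by $r^2$. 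Integrating them gives the error $|q_0|\delta k(-u)^{\gamma+2}|\log(-u)|$.

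The main obstacle is the $\delta^2$ cross term, which need not be small relative to $k$. I would handle it by including it in $Q_0$, where its contribution is $O(\delta^2(-u)^{\gamma+2})$, and by assuming $\delta\lesssim k$.
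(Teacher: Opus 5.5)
Your proposal is correct and follows essentially the same route as the paper. Both integrate the $u$-transport equation for $Q$ along $\{v=0\}$ starting from the end near the singularity, where $Q$ vanishes, and split into the three ranges $\chi_\epsilon\equiv 0$, $\partial_u\chi_\epsilon\approx 1/u$, and $\chi_\epsilon\equiv 1$. Both also define $Q_0$ exactly as you guessed (the integral from $u=0$ of the uncut integrand, $\delta^2$ terms included), with its leading part $q_0\delta k_2\mr{\widetilde r}^2(0)(-u)$ coming from $f_1\partial_u\widetilde\phi_2$.

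The $\delta^2$ cross term you flag is absorbed without comment in the paper into its $O(\delta k)$ errors. That absorption is justified by the assumption $\delta\ll k$ that the paper imposes later in the interior bootstrap, so your treatment is, if anything, more careful than the paper's.
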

\begin{proof}

By the choice of initial data, we can rewrite \eqref{eq:u-transport equation for Q} as 
\begin{equation}
\label{eq: u transport equation in estimating Q}
    \begin{aligned}
        \partial_{u}Q_{\epsilon}(u,0) =& q_{0}r_{\epsilon}^{2}\left(\phi_{1,\epsilon}\partial_{u}\phi_{2,\epsilon}-\phi_{2,\epsilon}\partial_{u}\phi_{1,\epsilon}\right)\\ =&q_{0}r_{\epsilon}^{2}\left((\widetilde{\phi}_{1}+\delta f_{1}\chi_{\epsilon})(\partial_{u}\widetilde{\phi}_{2}+\delta\partial_{u}(f_{2}\chi_{\epsilon}))-(\widetilde{\phi}_{2}+\delta f_{2}\chi_{\epsilon})(\partial_{u}\widetilde{\phi}_{1}+\delta\partial_{u}(f_{1}\chi_{\epsilon}))\right)\\=&\underbrace{
        -q_{0}\delta r_{\epsilon}^{2}\widetilde{\phi}_{2}\partial_{u}\chi_{\epsilon}}_{\text{leading order term}}+q_{0}\delta r_{\epsilon}^{2}\left(\chi_{\epsilon}\partial_{u}\widetilde{\phi}_{2}\right)+q_{0}\delta r_{\epsilon}^{2}\left((f_{1}-1)\chi_{\epsilon}\partial_{u}\widetilde{\phi}_{2}-\widetilde{\phi}_{2}\partial_{u}((f_{1}-1)\chi_{\epsilon})\right)\\&+q_{0}\delta r_{\epsilon}^{2}\left(\widetilde{\phi}_{1}\partial_{u}(f_{2}\chi_{\epsilon})-(f_{2}\chi_{\epsilon})\partial_{u}\widetilde{\phi}_{1}\right)+q_{0}\delta^{2}r_{\epsilon}^{2}\left(f_{1}\chi_{\epsilon}\partial_{u}(f_{2}\chi_{\epsilon})-(f_{2}\chi_{\epsilon})\partial_{u}(f_{1}\chi_{\epsilon})\right).
    \end{aligned}
\end{equation}
Since on $[-2\epsilon,0)$ the cut-off function $\chi_{\epsilon}\equiv 0$, we have $Q_{\epsilon}\equiv0$ on $[-2\epsilon,0)$. On $[-3\epsilon,-2\epsilon]$, we have $\partial_{u}\chi_{\epsilon}\approx \frac{1}{u}$. Then by~\eqref{eq: condition on f}, we have \begin{equation*}
    \vert Q_{\epsilon}\vert\lesssim \vert q_{0}\vert\delta k u^{2}\vert\log(-u)\vert.
\end{equation*}
On the segment $[-1,-3\epsilon]$, we have $\chi_{\epsilon}\equiv 1$. Therefore, we have 
\begin{equation}
\label{eq: definition of Q0}
\begin{aligned}
    \partial_{u}Q_{\epsilon} =&\underbrace{q_{0}r_{\epsilon}^{2}\left(\delta f_{1}\partial_{u}\widetilde{\phi}_{2}+\delta(\partial_{u}f_{2})\widetilde{\phi}_{1}+\delta^{2}f_{1}\partial_{u}f_{2}-\delta f_{2}\partial_{u}\widetilde{\phi}_{1}-\delta(\partial_{u}f_{1})\widetilde{\phi}_{2}-\delta^{2}f_{2}\partial_{u}f_{1}\right)}_{\partial_{u}Q_{0}(u)}\\=&q_{0}\delta k_{2}\mr{\widetilde{r}}^{2}(0)(-u)+O(q_{0}\delta k (-u)^{\gamma+1}\vert\log(-u)\vert),
\end{aligned}
\end{equation}
where $Q_{0}(0) = 0$. Therefore, integrating the above equation, we have \begin{align*}
    &\left\vert Q_{\epsilon}(u,0)+\frac{1}{2}q_{0}\delta k_{2}\mr{\widetilde{r}}^{2}(0)u^{2}\right\vert\lesssim \vert q_{0}\vert\delta k (-u)^{\gamma+2}\vert \log(-u)\vert+\vert q_{0}\vert\delta k\epsilon^{2}\vert \log(-\epsilon)\vert,\\&
    \left\vert Q_{\epsilon}(u,0)-Q_{0}(u)\right\vert\lesssim \vert q_{0}\vert\delta k\epsilon^{2}\left\vert\log(-\epsilon)\right\vert.
\end{align*}
 This concludes the proof.
\end{proof}
\begin{remark}
    One can observe from Proposition~\ref{prop: initial estimate on Q} that for each point $(u,0)$ on the ingoing cone $\{v= 0\}$, the spacetime charge $Q$ converges to $Q_{0}\approx u^{2}$ as $\epsilon\rightarrow 0$ in view of the estimate~\eqref{estimate for Qepsilon for u in the faraway region}. However, since $f_{1}\rightarrow1$ as $u\rightarrow 0$, the limit of the truncated perturbation $(f_{1}\chi_{\epsilon})^{\prime}$ as $\epsilon\rightarrow 0$ will be a Dirac function. Therefore, on the segment $u\in[-3\epsilon,-2\epsilon]$, one can only get the bound~\eqref{estimate on Qepsilon for u in the intermedia region}, which is more singular than $Q_{0}\approx u^{2}$ when $u$ is sufficiently small.
\end{remark}

Next, we derive the quantitative estimates for $\lambda_{p}$ and $\mu_{p}$ on the ingoing cone $\{v = 0\}$. 

\begin{proposition}
\label{prop: initial estimate on mu}
On the ingoing cone $\{v = 0\}$, the quantity $\mu_{p}$ satisfies the following estimate
\begin{equation}
\label{eq: initial estimate on mu}\begin{aligned}
    &\mu_{p}(u,0)\equiv 0,\quad u\in[-2\epsilon,0),\\&
    \left\vert\mu_{p}(u,0)\right\vert\lesssim k\delta \left(1-\left(\frac{2\epsilon}{(-u)}\right)^{1+k^{2}}\right)+\delta^{2} k^{2}q_{0}^{2}\epsilon^{2},\quad u\in[-3\epsilon,-2\epsilon],\\&
    \left\vert\mu_{p}(u,0)\right\vert\lesssim k\delta,\quad u\in[-1,-3\epsilon].
\end{aligned} 
\end{equation}
\end{proposition}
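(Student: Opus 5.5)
The estimate will come from the $u$-transport equation for $\mu$ restricted to the ingoing cone $\{v=0\}$, used for both the perturbed and the background solution. Since $r_\epsilon(u,0)=\widetilde r(u,0)=\mr{\widetilde r}(0)(-u)$, we have $\nu_\epsilon=\widetilde\nu=-\mr{\widetilde r}(0)$ on the cone, and the equation reads
\begin{equation*}
\partial_u\mu_\epsilon+\Big(\frac{\nu}{r}+\frac{r}{\nu}|D_u\phi_\epsilon|^2\Big)\mu_\epsilon=\frac{r}{\nu}|D_u\phi_\epsilon|^2+\frac{\nu}{r^3}Q_\epsilon^2 .
\end{equation*}
With the gauge $A_u(u,0)=0$ we get $D_u\phi_\epsilon=\partial_u\phi_\epsilon$ on $\{v=0\}$. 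We integrate from the singular end $u\to0$, where the data agree with the background. Indeed, on $[-2\epsilon,0)$ the cut-off vanishes, so $\phi_\epsilon=\widetilde\phi$ there, and $Q_\epsilon\equiv0$ by Proposition~\ref{prop: initial estimate on Q}. Both $\mu_\epsilon$ and $\widetilde\mu$ then solve the same linear ODE, and the regular solution at $u=0$ is unique: the homogeneous solutions grow like $(-u)^{-(1+k^2)}$, and the regular one is selected by boundedness, or by $\mu\to0$ consistently with $m\lesssim r$. Hence $\mu_p\equiv0$ on $[-2\epsilon,0)$.

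Next we subtract the equations. Writing $|\partial_u\phi_\epsilon|^2=|\partial_u\widetilde\phi|^2+E$ with $E=2\delta\,\Re(\partial_u\widetilde\phi\,\overline{\partial_u(f\chi_\epsilon)})+\delta^2|\partial_u(f\chi_\epsilon)|^2$, we obtain
\begin{equation*}
\partial_u\mu_p+\Big(\frac{\nu}{r}+\frac{r}{\nu}|\partial_u\phi_\epsilon|^2\Big)\mu_p=\frac{r}{\nu}E(1-\widetilde\mu)+\frac{\nu}{r^3}Q_\epsilon^2 .
\end{equation*}
The integrating-factor coefficient is $\frac{1+k^2+O(k\delta)}{u}$ up to the small perturbation. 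Thus the Green's kernel from $s$ to $u$ is approximately $\big(\tfrac{-s}{-u}\big)^{1+k^2}$, which is bounded by $1$ for $-u\ge -s$, and we integrate starting at $u=-2\epsilon$ with zero data.

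We then estimate the source in the two regions. On $[-3\epsilon,-2\epsilon]$, using $\partial_u\chi_\epsilon\approx 1/u$ from \eqref{eq: estimate on the cutoff} and $|\partial_u\widetilde\phi|\lesssim k/(-u)$, we get $\frac{r}{|\nu|}|E|\lesssim (-u)\big(\delta k/u^2+\delta^2/u^2\big)$, so the source is $\lesssim\delta k/(-u)$, taking $\delta\lesssim k$. Integrating against the kernel gives
\begin{equation*}
\int_u^{-2\epsilon}\frac{(-s)^{k^2}}{(-u)^{1+k^2}}\,ds\approx 1-\Big(\frac{2\epsilon}{-u}\Big)^{1+k^2}
\end{equation*}
up to constants, which is exactly the first term claimed. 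The charge term uses \eqref{estimate on Qepsilon for u in the intermedia region}: $\frac{|\nu|}{r^3}Q_\epsilon^2\lesssim q_0^2\delta^2k^2 (-u)|\log(-u)|^2$, and integrating over an interval of length $\epsilon$ yields $\lesssim q_0^2\delta^2k^2\epsilon^2$, absorbing the logarithm by slightly weakening the power or by the $(-u)$ factor. On $[-1,-3\epsilon]$, $\chi_\epsilon\equiv1$, and \eqref{eq: condition on f} gives $|\partial_u f_i|\lesssim(-u)^{\gamma-1}$, so the source is $\lesssim\delta k(-u)^{\gamma-1}+\delta k(-u)^{\gamma}$. The charge term is $\lesssim q_0^2\delta^2k^2(-u)$ by Proposition~\ref{prop: initial estimate on Q}. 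Adding the contribution propagated from $u=-3\epsilon$, which is $O(k\delta)$ times a kernel at most $1$, gives $|\mu_p|\lesssim k\delta$.

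The main obstacle is justifying $\mu_p\equiv0$ near $u=0$, that is, identifying the correct solution of the singular ODE at $u=0$. The second delicate point is controlling the nonlinear feedback: the coefficient contains $|\partial_u\phi_\epsilon|^2$, and $\widetilde\mu$ enters through the factor $(1-\widetilde\mu)$. I would handle both by a bootstrap assumption $|\mu_p|\le Ck\delta$, using $\widetilde\mu=k^2/(1+k^2)$ from Proposition~\ref{prop: estimate on the background spacetime}.
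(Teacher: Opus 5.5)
Your proposal is correct and is essentially the paper's argument. The paper also solves the $u$-transport equation for $\mu$ on $\{v=0\}$ with the integrating factor $(-u)^{1+k^{2}}e^{\int_{0}^{u}\mathcal{P}_{\epsilon}}$, taking the solution regular at $u=0$. Its splitting $\mu_{\epsilon}=\mu_{\epsilon}^{(1)}+\mu_{\epsilon}^{(2)}$, together with its formula for $\mu_{\epsilon}^{(1)}-\frac{k^{2}}{1+k^{2}}$, is exactly your subtracted equation: the source is $\mathcal{P}_{\epsilon}(1-\widetilde{\mu})=\frac{\mathcal{P}_{\epsilon}}{1+k^{2}}$ plus the charge term.

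Two small corrections:
\begin{enumerate}
\item No bootstrap is needed. On the cone this is a linear ODE whose coefficients are fixed by the data, and $\widetilde{\mu}=\frac{k^{2}}{1+k^{2}}$ is explicit. The regular solution is selected by boundedness of $\mu$, not by $\mu\to 0$.
\item On $[-3\epsilon,-2\epsilon]$ the charge contribution is really $\lesssim q_{0}^{2}\delta^{2}k^{2}\epsilon^{2}\log^{2}\epsilon\,\bigl(1-(2\epsilon/(-u))^{1+k^{2}}\bigr)$. Its logarithm cannot be absorbed into $\epsilon^{2}$, but the whole term is harmlessly absorbed into the $k\delta\bigl(1-(2\epsilon/(-u))^{1+k^{2}}\bigr)$ term. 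The paper's own proof carries the same $\log^{2}$ factor.
\end{enumerate}
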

\begin{proof}
    Recall that on the ingoing cone $\{v = 0\}$, the equation for $\mu$ takes the form \begin{equation*}
        \partial_{u}\mu_{\epsilon}+\left(\frac{\nu_{\epsilon}}{r_{\epsilon}}+\frac{r_{\epsilon}}{\nu_{\epsilon}}\left\vert\partial_{u}\phi_{\epsilon}\right\vert^{2}\right)\mu_{\epsilon} = \frac{r_{\epsilon}}{\nu_{\epsilon}}\left\vert\partial_{u}\phi_{\epsilon}\right\vert^{2}+\frac{\nu_{\epsilon}}{r_{\epsilon}^{3}}Q_{\epsilon}^{2}.
    \end{equation*}
We decompose $\mu_{\epsilon}$ into two parts $\mu_{\epsilon} = \mu_{\epsilon}^{(1)}+\mu_{\epsilon}^{(2)} $:\begin{align*}
&\partial_{u}\mu_{\epsilon}^{(1)}+\left(\frac{1}{u}+u\left\vert\partial_{u}\phi_{\epsilon}\right\vert^{2}\right)\mu_{\epsilon}^{(1)}=u\left\vert\partial_{u}\phi_{\epsilon}\right\vert^{2},\\&
    \partial_{u}\mu_{\epsilon}^{(2)}+\left(\frac{1}{u}+u\left\vert\partial_{u}\phi_{\epsilon}\right\vert^{2}\right)\mu_{\epsilon}^{(2)} = \frac{Q_{\epsilon}^{2}}{u^{3}},
    \end{align*}
    where we have used the fact that $\frac{\nu_{\epsilon}}{r_{\epsilon}}(u,0) = \frac{1}{u}$.
    Using the integrating factor, we have \begin{align*}
    \partial_{u}\left((-u)^{k^{2}+1}e^{\int_{0}^{u}\mathcal{P}_{\epsilon}(s)ds}\mu_{\epsilon}^{(1)}\right) =& -k^{2}(-u)^{k^{2}}e^{\int_{0}^{u}\mathcal{P}_{\epsilon}(s)ds}+(-u)^{k^{2}+1}\mathcal{P}_{\epsilon}(u)e^{\int_{0}^{u}\mathcal{P}_{\epsilon}(s)ds}\\=&
    \partial_{u}\left(\frac{k^{2}}{1+k^{2}}(-u)^{k^{2}+1}e^{\int_{0}^{u}\mathcal{P}_{\epsilon}(s)ds}\right)+\frac{1}{1+k^{2}}(-u)^{k^{2}+1}\mathcal{P}_{\epsilon}(u)e^{\int_{0}^{u}\mathcal{P}_{\epsilon}(s)ds},
\end{align*}
where \begin{equation}
\begin{aligned}
    \mathcal{P}_{\epsilon}(u) :=& -2k_{1}\delta \partial_{u}(f_{1}\chi_{\epsilon})-2k_{2}\delta \partial_{u}(f_{2}\chi_{\epsilon})+\delta^{2}u(\partial_{u}(f_{1}\chi_{\epsilon}))^{2}+\delta^{2}u(\partial_{u}(f_{2}\chi_{\epsilon}))^{2}\\=&
    -2k_{1}\delta \partial_{u}\chi_{\epsilon}+\delta^{2}u(\partial_{u}\chi_{\epsilon})^{2}-2k_{1}\delta\partial_{u}((f_{1}-1)\chi_{\epsilon})-2k_{2}\delta \partial_{u}(f_{2}\chi_{\epsilon})\\&+\delta^{2}u(\partial_{u}((f_{1}-1)\chi_{\epsilon}))^{2}+\delta^{2}u(\partial_{u}(f_{2}\chi_{\epsilon}))^{2}\\:=&
    -2k_{1}\delta\partial_{u}\chi_{\epsilon}+\delta^{2}u(\partial_{u}\chi_{\epsilon})^{2}+\mathcal{P}_{1,\epsilon}(u).
    \end{aligned}
    \label{eq: definition of Pepsilon}
\end{equation}
By the assumption~\eqref{eq: condition on f} on $f_{1}$ and $f_{2}$, we have \begin{equation}
    \left\vert\mathcal{P}_{1,\epsilon}\right\vert\lesssim \delta k (-u)^{\gamma-1}.
\end{equation}
Hence, we have \begin{equation*}
   \mu_{\epsilon}^{(1)}-\frac{k^{2}}{1+k^{2}}= \frac{1}{1+k^{2}}(-u)^{-(1+k^{2})}\int_{0}^{u}(-s)^{1+k^{2}}\mathcal{P}_{\epsilon}(s) e^{\int_{u}^{s}\mathcal{P}_{\epsilon}(t)dt}ds.
\end{equation*}
For $u\in[-2\epsilon,0)$, we have $\mathcal{P}_{\epsilon} = 0$. It follows that $\mu_{\epsilon}^{(1)} = \frac{k^{2}}{1+k^{2}}$ for $u\in[-2\epsilon,0)$. For $u\in[-3\epsilon,-2\epsilon]$, we have \begin{equation*}
    \left\vert\mu_{\epsilon}^{(1)}-\frac{k^{2}}{1+k^{2}}\right\vert\lesssim k\delta (-u)^{-(1+k^{2})}\int_{u}^{-2\epsilon}(-s)^{k^{2}}ds\lesssim k\delta \left(1-\left(\frac{2\epsilon}{(-u)}\right)^{1+k^{2}}\right).
\end{equation*}
For $u\in[-1,-3\epsilon]$, we have \begin{align*}
    \left\vert\mu_{\epsilon}^{(1)}-\frac{k^{2}}{1+k^{2}}\right\vert\lesssim& k\delta (-u)^{-(1+k^{2})}\int_{-3\epsilon}^{-2\epsilon}(-s)^{k^{2}}ds+k\delta (-u)^{-(1+k^{2})}\int_{u}^{-3\epsilon}(-s)^{1+k^{2}}(-s)^{\gamma-1} ds\\\lesssim &k\delta \left(\frac{\epsilon}{-u}\right)^{1+k^{2}}+k\delta \left((-u)^{\gamma}-\frac{(3\epsilon)^{\gamma+k^{2}+1}}{(-u)^{1+k^{2}}}\right)\lesssim k\delta.
\end{align*}
For $\mu_{\epsilon}^{(2)}$, similarly we have \begin{equation*}
    \mu_{\epsilon}^{(2)} = (-u)^{-(1+k^{2})}\int_{0}^{u}\frac{Q_{\epsilon}^{2}}{s^{3}}(-s)^{1+k^{2}}e^{\int_{u}^{s}\mathcal{P}_{\epsilon}(t)dt}ds.
\end{equation*}

Using \eqref{estimate on Qepsilon for u close to singularity}--\eqref{estimate for Qepsilon for u in the faraway region}, we have \begin{align*}
    \mu_{\epsilon}^{(2)}(u,0)&\equiv 0,\quad -2\epsilon\leq u<0,\\
    \left\vert\mu_{\epsilon}^{(2)}\right\vert(u,0)&\lesssim q_{0}^{2}k^{2}\delta^{2}u^{2}(\log(-u))^{2}\left(1-\left(\frac{2\epsilon}{(-u)}\right)^{1+k^{2}}\right),\quad -3\epsilon\leq u\leq-2\epsilon,\\
    \left\vert\mu_{\epsilon}^{(2)}\right\vert(u,0)&\lesssim k\delta,\quad -1\leq u\leq-3\epsilon.
\end{align*}
Putting everything together concludes the proof.
\end{proof}
\begin{proposition}
\label{prop: initial estimate on Omegap}
    On the ingoing cone $\{v = 0\}$, we have that \begin{align}
    \label{eq: initial estimate on Omegap}    &\left\vert(\Omega^{2})_{p}\right\vert\lesssim k\delta (-u)^{k^{2}},\\&
    \left\vert\lambda_{p}\right\vert\lesssim k\delta (-u)^{k^{2}}.\label{eq: initial estimate on lambdap}
    \end{align}
\end{proposition}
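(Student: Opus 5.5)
Everything is read off from the $u$-Raychaudhuri equation \eqref{eq:spherical symmetric equaion1} and the definition of $\mu$ along $\{v=0\}$, using the estimates on $\mu_{p}$ from Proposition~\ref{prop: initial estimate on mu}. On $\{v=0\}$ the gauge gives $r_\epsilon=\widetilde r$ and $\nu_\epsilon=\widetilde\nu=\nu(u,0)$ is a fixed negative constant ($r(u,0)=\mr{\widetilde r}(0)(-u)$), and $A_{\epsilon,u}(u,0)=0$, so $D_u\phi_\epsilon=\partial_u\phi_\epsilon$. Hence \eqref{eq:spherical symmetric equaion1} becomes a linear first-order ODE for $\log\Omega_\epsilon^2$:
\begin{equation*}
\partial_u\log\Omega_\epsilon^{2}(u,0)=\frac{r_\epsilon}{\nu_\epsilon}\left\vert\partial_u\phi_\epsilon\right\vert^{2}(u,0),
\end{equation*}
and the same holds for the background with $\widetilde\phi$. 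I will anchor both at $u\to0$, or equivalently on $[-2\epsilon,0)$, where the data coincide because $\chi_\epsilon\equiv0$ there. Gluing to $\mathcal{Q}^{(in)}_\epsilon$ forces $\Omega_\epsilon^2=\widetilde\Omega^2$ on that segment.

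Subtracting the two equations gives
\begin{equation*}
\partial_u\log\frac{\Omega_\epsilon^2}{\widetilde\Omega^2}=\frac{r}{\nu}\big(\vert\partial_u\phi_\epsilon\vert^2-\vert\partial_u\widetilde\phi\vert^2\big)=-\frac{\widetilde r}{\widetilde\nu}\,\mathcal{P}_\epsilon(u)\cdot(\text{normalization}),
\end{equation*}
which is exactly the quantity $\mathcal{P}_\epsilon$ of \eqref{eq: definition of Pepsilon} up to the factor $r/\nu=u$. Indeed, the cross term is $2\delta\,\partial_u\widetilde\phi_i\,\partial_u(f_i\chi_\epsilon)$ with $\partial_u\widetilde\phi_i=-k_i/u$, and the quadratic term is $\delta^2\vert\partial_u(f\chi_\epsilon)\vert^2$. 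Integrating from $-2\epsilon$ to $u$, I need $\big\vert\int_u^{-2\epsilon}s\,\mathcal{P}_\epsilon\,ds\big\vert\lesssim k\delta$ uniformly in $\epsilon$. This is the main point. The leading piece $2k_1\delta\,\partial_u\chi_\epsilon$ integrates to at most $2k_1\delta$ because $\chi_\epsilon$ is monotone with total variation $1$. The quadratic piece $\delta^2 u(\partial_u\chi_\epsilon)^2$ is, by \eqref{eq: estimate on the cutoff}, of size $\delta^2/\vert u\vert$ on an interval of length $\epsilon$, hence $O(\delta^2)$, which is $\lesssim k\delta$ provided $\delta\lesssim k$. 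I will impose this smallness of $\delta$ relative to $k$; if it is not available, the estimate degrades to $\delta(k+\delta)$. The remaining terms $\mathcal{P}_{1,\epsilon}$ are bounded by $k\delta(-u)^{\gamma-1}$ and are integrable. Exponentiating then gives $\vert\Omega_\epsilon^2/\widetilde\Omega^2-1\vert\lesssim k\delta$, and since $\widetilde\Omega^2(u,0)\approx(-u)^{k^2}$ by Proposition~\ref{prop: estimate on the background spacetime}, this yields \eqref{eq: initial estimate on Omegap}.

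For $\lambda_p$ I use the definition of $\mu$:
\begin{equation*}
\lambda_\epsilon=\frac{\Omega_\epsilon^2(\mu_\epsilon-1)}{4\nu_\epsilon}.
\end{equation*}
Since $\nu_\epsilon=\widetilde\nu$ on $\{v=0\}$,
\begin{equation*}
\lambda_p=\frac{(\Omega^2)_p(\widetilde\mu-1)+\Omega_\epsilon^2\,\mu_p}{4\widetilde\nu}.
\end{equation*}
Using $\vert\widetilde\nu\vert\approx1$, the bound on $(\Omega^2)_p$ just obtained, and $\vert\mu_p\vert\lesssim k\delta+\delta^2k^2q_0^2\epsilon^2\lesssim k\delta$ from Proposition~\ref{prop: initial estimate on mu}, I get $\vert\lambda_p\vert\lesssim k\delta(-u)^{k^2}$, which is \eqref{eq: initial estimate on lambdap}.

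The only delicate step is the $\epsilon$-uniform control of the cutoff region $[-3\epsilon,-2\epsilon]$. There the derivative of the perturbation acts like a Dirac mass, and it is the total-variation argument for $\partial_u\chi_\epsilon$, rather than a pointwise bound, that saves the estimate.
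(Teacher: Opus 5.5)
Your proposal is correct and follows the paper's proof: the $u$-Raychaudhuri equation on $\{v=0\}$ gives $\Omega_{\epsilon}^{2}=\widetilde{\Omega}^{2}e^{\int_{0}^{u}\mathcal{P}_{\epsilon}(s)\,ds}$, the cutoff contribution (the paper's $L_{1}$) and the $\mathcal{P}_{1,\epsilon}$ part are bounded uniformly in $\epsilon$, and $\lambda_{p}$ then follows from $\lambda=\Omega^{2}(\mu-1)/(4\nu)$ together with Proposition~\ref{prop: initial estimate on mu}. One slip: the difference of the right-hand sides is exactly $\mathcal{P}_{\epsilon}$, with no extra factor $s$ in the integrand, but your piecewise bounds already control $\int|\mathcal{P}_{\epsilon}|$, so the argument is unaffected; your remark that the $\delta^{2}$ term requires $\delta\lesssim k$ spells out an assumption the paper uses without comment when it asserts $|L_{1}|\lesssim k\delta$.
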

\begin{proof}
    We consider the Raychaudhuri equation \eqref{eq:spherical symmetric equaion1} on the ingoing cone $\{v = 0\}$: \begin{equation*}
        \partial_{u}\log\Omega_{\epsilon}^{2} = \frac{k^{2}}{u}+\mathcal{P}_{\epsilon}(u).
    \end{equation*}
    Solving the equation, we have \begin{equation*}
        \Omega_{\epsilon}^{2}(u,0) = \widetilde{\Omega}^{2}(u,0)e^{\int_{0}^{u}\mathcal{P}_{\epsilon}(s)ds}.
    \end{equation*}
    For $u\leq-3\epsilon$, we have \begin{equation*}
        \int_{0}^{u}\mathcal{P}_{\epsilon}(s)ds = \int_{-2\epsilon}^{-3\epsilon}-2k_{1}\delta\partial_{u}\chi_{\epsilon}+\delta^{2}u(\partial_{u}\chi_{\epsilon})^{2} du+\int_{-2\epsilon}^{u}\mathcal{P}_{1,\epsilon}(s)ds.
    \end{equation*}
    By the choice of the cut-off functions, we have that \begin{equation*}
        \int_{-2\epsilon}^{-3\epsilon}-2k_{1}\delta\partial_{u}\chi_{\epsilon}+\delta^{2}u(\partial_{u}\chi_{\epsilon})^{2} du: = L_{1},
    \end{equation*}
    for some constant $L_{1}$ independent of $\epsilon$. Moreover, we have \begin{equation*}
        \left\vert L_{1}\right\vert\lesssim k\delta.
    \end{equation*} 
    Hence, we have \begin{equation*}
        \left\vert\Omega^{2}_{\epsilon}-\widetilde{\Omega}^{2}\right\vert(u,0)\lesssim k\delta (-u)^{k^{2}}.
    \end{equation*}
    Since $\lambda_{\epsilon} = \frac{\Omega_{\epsilon}^{2}}{4\nu_{\epsilon}}(1-\mu_{\epsilon})$, using \eqref{eq: initial estimate on mu}, we obtain the estimate~\eqref{eq: initial estimate on lambdap}. This concludes the proof.
\end{proof}

\begin{proposition}
\label{prop: initial estimate on dvphi}
    On the ingoing cone $\{v = 0,\ -1\leq u\leq -\epsilon\}$, the quantities $\left(\partial_{v}(r\phi_{1})\right)_{p}$ and $\left(\partial_{v}(r\phi_{2})\right)_{p}$ satisfy the following estimates \begin{align}
        (-u)^{-k^{2}}\left\vert\partial_{v}(r_{\epsilon}\phi_{1,p})\right\vert\lesssim \delta, \quad
        (-u)^{-k^{2}}\left\vert\partial_{v}(r_{\epsilon}\phi_{2,p})\right\vert\lesssim\delta.
    \end{align}
\end{proposition}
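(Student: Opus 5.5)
The plan is to treat the wave equation for $\phi$, restricted to the ingoing cone $\{v=0\}$, as a linear ODE in $u$ for the unknown $\partial_v(r_\epsilon\phi_{j,\epsilon})(u,0)$. All other quantities on the cone are already known from the data and from Propositions~\ref{prop: initial estimate on Q}--\ref{prop: initial estimate on Omegap}. Under the gauge $A_v\equiv0$, $A_u(u,0)=0$, the equation~\eqref{eq: pre wave equation for phi} can be rewritten as
\begin{equation*}
\partial_u\partial_v(r\phi_1) = -\lambda\,\partial_u\phi_1 \cdot 0+\partial_u\lambda\,\phi_1\text{-terms}\ldots
\end{equation*}
The cleaner form is $\partial_u\partial_v(r\phi_j)=\phi_j\,\partial_u\lambda \pm q_0\frac{Q\Omega^2}{4r}\phi_{3-j}$, where $\partial_u\lambda=-\frac{\Omega^2}{4r^2}\big(\mu-\frac{Q^2}{r^2}\big)$ by~\eqref{eq:wave equation for r}. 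This identity follows from $\partial_u\partial_v(r\phi)=r\partial_u\partial_v\phi+\nu\partial_v\phi+\lambda\partial_u\phi+\phi\partial_u\lambda$. I would then subtract the same identity for the background $(\widetilde r,\widetilde\phi,\widetilde Q\equiv0)$. This gives a transport equation
\begin{equation*}
\partial_u\big(\partial_v(r\phi_j)\big)_p = \big(\phi_j\partial_u\lambda\big)_\epsilon-\big(\widetilde\phi_j\partial_u\widetilde\lambda\big)\pm q_0\frac{Q_\epsilon\Omega_\epsilon^2}{4r_\epsilon}\phi_{3-j,\epsilon}.
\end{equation*}
The initial value is $\big(\partial_v(r\phi_j)\big)_p(-\epsilon,0)=0$, because on $\{u=-\epsilon\}$ the data coincide with the background. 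I would integrate from $u=-\epsilon$ backwards to $u\in[-1,-\epsilon]$. The quantity $\partial_v(r_\epsilon\phi_{j,p})$ in the statement differs from $(\partial_v(r\phi_j))_p$ only by $\lambda_p\widetilde\phi_j$. That term is bounded by $k\delta(-u)^{k^2}|\log(-u)|k$ via Proposition~\ref{prop: initial estimate on Omegap}, and this is acceptable after absorbing $k|\log(-u)|$ when $k\ll1$; if necessary, I would replace it by $\delta$ using $\gamma$-smallness.

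Next I would estimate the source terms. Write $(\phi_j\partial_u\lambda)_\epsilon-\widetilde\phi_j\partial_u\widetilde\lambda=\phi_{j,p}\partial_u\lambda_\epsilon+\widetilde\phi_j(\partial_u\lambda)_p$. On the cone, $r_\epsilon=\widetilde r=\mr{\widetilde r}(0)(-u)$, and the data give $|\phi_{j,p}|\lesssim\delta$. From the background, $|\partial_u\lambda_\epsilon|\lesssim\Omega^2\mu/r^2\lesssim k^2(-u)^{k^2-2}$ up to $O(k\delta)$ corrections. For $(\partial_u\lambda)_p$ I would use Propositions~\ref{prop: initial estimate on mu} and~\ref{prop: initial estimate on Omegap}, giving $|(\partial_u\lambda)_p|\lesssim(-u)^{k^2-2}\big(k\delta+Q_\epsilon^2/r^2\big)$. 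For the charge term, Proposition~\ref{prop: initial estimate on Q} gives $|Q_\epsilon|\lesssim|q_0|\delta k u^2|\log(-u)|$, so that term is $\lesssim|q_0|\delta k(-u)^{k^2+1}\log^2(-u)$ and integrates harmlessly.

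The main obstacle is that the naive bound $\int_u^{-\epsilon}(-s)^{k^2-2}\,ds\sim\epsilon^{k^2-1}$ is not uniform in $\epsilon$. To handle it, I would split $[-1,-\epsilon]$ into the three regions used before. On $[-2\epsilon,-\epsilon]$ the perturbations vanish ($\chi_\epsilon=0$, $\mu_p=Q_\epsilon=0$), so the source is zero and $(\partial_v(r\phi))_p\equiv0$ there. On $[-3\epsilon,-2\epsilon]$ the interval has length $\epsilon$, so the source contributes $\lesssim\epsilon\cdot k\delta\epsilon^{k^2-2}\cdot k|\log\epsilon|$. This is still too large unless it is weighted properly.

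I therefore expect the correct approach to be to carry the weight $(-u)^{-k^2}$ and bound $r\cdot$ rather than $\partial_v(r\phi)$ alone. Concretely, I would first try to redo the ODE for $\partial_v\phi_{j,p}$ via $\partial_u(r\partial_v\phi)=-\lambda\partial_u\phi\pm\ldots$. The source is then $\lambda_p\partial_u\widetilde\phi+\lambda_\epsilon\partial_u\phi_p$, of size $\delta(-u)^{k^2}|\partial_u(f\chi_\epsilon)|$, which integrates to $\lesssim\delta(-u)^{k^2}$ because $\int|\partial_u\chi_\epsilon|\lesssim1$ and $|\partial_u f|\lesssim(-u)^{\gamma-1}$. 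Dividing by $r\approx(-u)$ and recombining with $\lambda\phi_p$ then gives the claimed $(-u)^{k^2}\delta$ bound.
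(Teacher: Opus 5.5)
Your last paragraph is the argument that works, and it is essentially the paper's proof. The route you develop first should be deleted rather than kept as a fallback, because two things go wrong in it.

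\textbf{The slip.} From \eqref{eq:wave equation for r} and $\frac{\lambda\nu}{1-\mu}=-\frac{\Omega^2}{4}$ one gets $\partial_u\lambda=-\frac{\Omega^2}{4r}\bigl(\mu-\frac{Q^2}{r^2}\bigr)$, with $r$ rather than $r^2$ in the denominator. Hence $|\partial_u\lambda_\epsilon|\lesssim k^2(-u)^{k^2-1}$, and the non-uniform $\epsilon^{k^2-1}$ ``main obstacle'' is an artifact of that slip.

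\textbf{The genuine defect.} The conversion $\partial_v(r_\epsilon\phi_{j,p})=(\partial_v(r\phi_j))_p-\lambda_p\widetilde\phi_j$ on $\{v=0\}$ loses a logarithm that cannot be absorbed. By the proof of Proposition~\ref{prop: initial estimate on Omegap}, for $u\le-3\epsilon$ one has $\Omega_\epsilon^2=\widetilde\Omega^2e^{L_1+O(k\delta(-u)^{\gamma})}$. Here $L_1\approx-2k_1\delta$ is produced by the cutoff transition. By the proof of Proposition~\ref{prop: initial estimate on mu}, $\mu_p\to0$ for $\epsilon\ll-u\ll1$. Therefore $\lambda_p\approx L_1\widetilde\lambda$ there, and
\[
|\lambda_p\widetilde\phi_j|\approx 2k_1k_j\delta\,\widetilde\lambda\,|\log(-u)|.
\]
At $-u=\sqrt\epsilon$ the ratio of this to $\delta(-u)^{k^2}$ is of order $k_1k_j|\log\epsilon|$, which is unbounded as $\epsilon\to0$. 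Taking $k\ll1$ does not help, since the constant must be uniform in $\epsilon$ for the bootstrap and for the limit $\epsilon\to0$. Neither does $\gamma$, since $L_1$ does not come from $f-1$. In fact, when $k_1k_j\neq0$, $(\partial_v(r\phi_j))_p$ itself fails the bound uniformly in $\epsilon$; only $\partial_v(r_\epsilon\phi_{j,p})$ satisfies it.

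\textbf{Fixing the bookkeeping in your last paragraph.} Along $\{v=0\}$, the source for $r_\epsilon\partial_v\phi_{j,p}$ consists of three pieces:
\begin{itemize}
\item $-\lambda_p\partial_u\widetilde\phi_j$;
\item $-\lambda_\epsilon\,\delta\,\partial_u(f_j\chi_\epsilon)$;
\item the charge term $\mp q_0\frac{Q_\epsilon\Omega_\epsilon^2}{4r_\epsilon}\phi_{3-j,\epsilon}$.
\end{itemize}
Only the second piece has size $\delta(-u)^{k^2}|\partial_u(f_j\chi_\epsilon)|$. The first is $\lesssim k^2\delta(-u)^{k^2-1}$. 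It integrates to at most $\delta(-u)^{k^2}$ precisely because $k^2\int_\epsilon^{-u}x^{k^2-1}dx\le(-u)^{k^2}$. The charge term is $O(q_0^2\delta k(-u)^{1+k^2}\log^2(-u))$. With zero data at $u=-\epsilon$, this gives $|\partial_v\phi_{j,p}|\lesssim\delta(-u)^{k^2-1}$. Then $\partial_v(r_\epsilon\phi_{j,p})=\lambda_\epsilon\phi_{j,p}+r_\epsilon\partial_v\phi_{j,p}$ concludes.

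\textbf{Comparison with the paper.} The paper transports $\partial_v(r_\epsilon\phi_{j,p})$ directly. The product rule moves $\lambda_\epsilon\partial_u\phi_{j,p}$ inside the derivative, so $\partial_u\chi_\epsilon$ never appears. The sources are then $\lambda_p\partial_u\widetilde\phi_j$, the charge term, and $\phi_{j,p}\partial_u\lambda_\epsilon=O(\delta k^2(-u)^{k^2-1})$. This last term is omitted from the paper's displayed equation but is harmless. Your variant instead spends $\int|\partial_u\chi_\epsilon|=1$. In exchange it yields the bound on $\partial_v\phi_{j,p}$ itself, which is the form the bootstrap quantity $\mathcal{R}^{\epsilon}_{\partial_v\phi_p}$ needs on the initial cone.
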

    \begin{proof}
        On the ingoing cone $\{v = 0\}$, we consider the difference between the equations for $\widetilde{\phi}$ and $\phi_{\epsilon}$ \eqref{eq: uncharged phi1 wave equation}-\eqref{eq: uncharged phi2 wave equation}. We have 
        \begin{align}
            \partial_{u}\partial_{v}(r_{\epsilon}\phi_{1,p})&=\frac{q_{0}}{u}\frac{\lambda_{\epsilon}}{1-\mu_{\epsilon}}Q_{\epsilon}\phi_{2,\epsilon}+\frac{k_{1}}{(-u)}\lambda_{p},\label{eq: reduction of dvphi1 on the ingoing cone}\\
            \partial_{u}\partial_{v}(r_{\epsilon}\phi_{2,p})& = \frac{q_{0}}{(-u)}\frac{\lambda_{\epsilon}}{1-\mu_{\epsilon}}Q_{\epsilon}\phi_{1,\epsilon}+\frac{k_{2}}{(-u)}\lambda_{p}.\label{eq: reduction of dvphi2 on the ingoing cone}
        \end{align}
        On the domain $u\in[-2\epsilon,0)$, we have that $(\partial_{v}(r\phi_{1}))_{p} = (\partial_{v}(r\phi_{2}))_{p}\equiv 0$. Using \eqref{eq: condition on f}, \eqref{estimate on Qepsilon for u close to singularity}--\eqref{estimate for Qepsilon for u in the faraway region}, \eqref{eq: initial estimate on mu}, and \eqref{eq: initial estimate on lambdap}, we have \begin{align*}
            &\left\vert\frac{q_{0}}{u}\frac{\lambda_{\epsilon}}{1-\mu_{\epsilon}}Q_{\epsilon}\phi_{1,\epsilon}\right\vert\lesssim \delta kq_{0}^{2}(-u)^{1+k^{2}}\vert\log(-u)\vert,\quad \left\vert\frac{q_{0}}{u}\frac{\lambda_{\epsilon}}{1-\mu_{\epsilon}}Q_{\epsilon}\phi_{2,\epsilon}\right\vert\lesssim \delta kq_{0}^{2}(-u)^{1+k^{2}}\vert\log(-u)\vert,\\&
            \left\vert\frac{1}{(-u)}\lambda_{p}\right\vert\lesssim \delta k(-u)^{k^{2}-1}.
        \end{align*}
        Hence, integrating \eqref{eq: reduction of dvphi1 on the ingoing cone}-\eqref{eq: reduction of dvphi2 on the ingoing cone} concludes the proof.
    \end{proof}
\subsubsection{Bootstrap assumption in the truncated region}
In this section, we set up our bootstrap assumption. Motivated by the bounds in Proposition \ref{prop: initial estimate on Q}, Proposition \ref{prop: initial estimate on mu}, Proposition \ref{prop: initial estimate on Omegap}, and Proposition \ref{prop: initial estimate on dvphi}, for a region $\mathcal{D}\subset\mathcal{Q}^{(in)}\backslash\mathcal{Q}^{(in)}_{\epsilon}$, we define \begin{align*}
    &\mathcal{R}_{Q}^{\epsilon}(\mathcal{D}) := \sup_{\mathcal{D}}\left\vert\frac{Q_{\epsilon}}{u^{2}\log(-u)}\right\vert,\quad \mathcal{R}_{\nu_{p}}^{\epsilon}(\mathcal{D}) := \sup_{\mathcal{D}}\left\vert\nu_{p}\right\vert,\quad \mathcal{R}_{\lambda_{p}}^{\epsilon}(\mathcal{D}): = \sup_{\mathcal{D}}\left\vert\frac{\lambda_{p}}{(-u)^{k^{2}}}\right\vert,\\&
    \mathcal{R}_{\partial_{v}\phi_{p}}^{\epsilon}(\mathcal{D}): = \sup_{\mathcal{D}}(-u)^{q_{k}}\left\vert\partial_{v}\phi_{1,p}\right\vert+\sup_{\mathcal{D}}(-u)^{q_{k}}\left\vert\partial_{v}\phi_{2,p}\right\vert,\\&\mathcal{R}_{\partial_{u}\phi_{p}}^{\epsilon}(\mathcal{D}): = \sup_{\mathcal{D}}(-u)\left\vert\partial_{u}\phi_{1,p}\right\vert+\sup_{\mathcal{D}}(-u)\left\vert\partial_{u}\phi_{2,p}\right\vert,\\&
    \mathcal{R}^{\epsilon} (\mathcal{D}):=\mathcal{R}_{Q}^{\epsilon}(\mathcal{D})+\mathcal{R}_{\nu_{p}} ^{\epsilon}(\mathcal{D})+\mathcal{R}_{\lambda_{p}}^{\epsilon}(\mathcal{D})+\mathcal{R}_{\partial_{v}(\phi_{p})}^{\epsilon}(\mathcal{D})+\mathcal{R}_{\partial_{u}(\phi_{p})}^{\epsilon}(\mathcal{D}).
\end{align*}
{By the local well-posedness result in Proposition~\ref{prop: mixed lwp}}, for given initial data in Section \ref{sec: setup of the perturbations}, there exists $\widetilde{\epsilon}>0$ such that the solution to the Einstein--Maxwell-charged scalar field equations arising from this initial data exists in the region $\mathcal{Q}^{(in)}_{\epsilon+\widetilde{\epsilon}}\backslash\mathcal{Q}_{\epsilon}^{(in)}$. Now we assume that there exists a universal constant $C_{0}$ such that in the region $\mathcal{Q}_{\epsilon+\widetilde{\epsilon}}^{(in)}\backslash\mathcal{Q}_{\epsilon}^{(in)}$, we have that \begin{equation}
    \mathcal{R}^{\epsilon}(\mathcal{Q}_{\epsilon+\widetilde{\epsilon}}^{(in)}\backslash\mathcal{Q}_{\epsilon}^{(in)})\leq 2C_{0}\delta.\label{eq: bootstrap assumption}
\end{equation}
Our goal in the next few sections is to improve the bound \begin{equation*}
    \mathcal{R}^{\epsilon}(\mathcal{Q}_{\epsilon+\widetilde{\epsilon}}^{(in)}\backslash\mathcal{Q}_{\epsilon}^{(in)})\leq C_{0}\delta.
\end{equation*}
Then the standard local well-posedness and extension principle will give the existence of the solution in the whole region $\mathcal{Q}^{(in)}\backslash\mathcal{Q}^{(in)}_{\epsilon}$.

To facilitate our limiting argument, we also extend the definition of $\mathcal{R}^{\epsilon}$ to measure the difference between $(r_{\epsilon_{1}},\mu_{\epsilon_{1}},\phi_{\epsilon_{1}},Q_{\epsilon_{1}})$ and $(r_{\epsilon_{2}},\mu_{\epsilon_{2}},\phi_{\epsilon_{2}},Q_{\epsilon_{2}})$. \begin{align*}
    &\mathcal{R}_{Q}^{\epsilon_{1},\epsilon_{2}}(\mathcal{D}) := \sup_{\mathcal{D}}\left\vert\frac{Q_{\epsilon_{1}}-Q_{\epsilon_{2}}}{u^{2}\log(-u)}\right\vert,\quad \mathcal{R}_{\nu}^{\epsilon_{1},\epsilon_{2}}(\mathcal{D}) := \sup_{\mathcal{D}}\left\vert\nu_{\epsilon_{1}}-\nu_{\epsilon_{2}}\right\vert,\quad \mathcal{R}_{\lambda}^{\epsilon_{1},\epsilon_{2}}(\mathcal{D}): = \sup_{\mathcal{D}}\left\vert\frac{\lambda_{\epsilon_{1}}-\lambda_{\epsilon_{2}}}{(-u)^{k^{2}}}\right\vert,\\&
    \mathcal{R}_{\partial_{v}\phi}^{\epsilon_{1},\epsilon_{2}}(\mathcal{D}): = \sup_{\mathcal{D}}(-u)^{q_{k}}\left\vert\partial_{v}\phi_{1,\epsilon_{1}}-\partial_{v}\phi_{1,\epsilon_{2}}\right\vert+\sup_{\mathcal{D}}(-u)^{q_{k}}\left\vert\partial_{v}\phi_{2,\epsilon_{1}}-\partial_{v}\phi_{2,\epsilon_{2}}\right\vert,\\&\mathcal{R}_{\partial_{u}\phi}^{\epsilon_{1},\epsilon_{2}}(\mathcal{D}): = \sup_{\mathcal{D}}(-u)\left\vert\partial_{u}\phi_{1,\epsilon_{1}}-\partial_{u}\phi_{1,\epsilon_{2}}\right\vert+\sup_{\mathcal{D}}(-u)\left\vert\partial_{u}\phi_{2,\epsilon_{1}}-\partial_{u}\phi_{2,\epsilon_{2}}\right\vert,\\&
    \mathcal{R}^{\epsilon_{1},\epsilon_{2}}(\mathcal{D}) :=\mathcal{R}_{Q}^{\epsilon_{1},\epsilon_{2}}(\mathcal{D})+\mathcal{R}_{\nu} ^{\epsilon_{1},\epsilon_{2}}(\mathcal{D})+\mathcal{R}_{\lambda}^{\epsilon_{1},\epsilon_{2}}(\mathcal{D})+\mathcal{R}_{\partial_{v}\phi}^{\epsilon_{1},\epsilon_{2}}(\mathcal{D})+\mathcal{R}_{\partial_{u}\phi}^{\epsilon_{1},\epsilon_{2}}(\mathcal{D}).
\end{align*}
In the remainder of this section, we prove some direct consequences of the bootstrap assumption. 

We first derive the following equations for the perturbed quantities $\Psi_{p} = \Psi_{\epsilon}-\widetilde{\Psi}$, where $\Psi\in\{r,\lambda,\nu,\phi,\partial_{u}\phi,\partial_{v}\phi,Q,A\}$.
\begin{align}
\partial_{u}\lambda_{p} = &\left(\frac{\mu}{1-\mu}\frac{\lambda\nu}{r}\right)_{p}-\frac{\lambda_{\epsilon}\nu_{\epsilon}}{(1-\mu_{\epsilon})r_{\epsilon}}\frac{Q_{\epsilon}^{2}}{r_{\epsilon}^{2}},\label{eq: difference equation for lambdap in bootstrap argument}\allowdisplaybreaks\\
\partial_{v}\nu_{p}=&\left(\frac{\mu}{1-\mu}\frac{\lambda\nu}{r}\right)_{p}-\frac{\lambda_{\epsilon}\nu_{\epsilon}}{(1-\mu_{\epsilon})r_{\epsilon}}\frac{Q_{\epsilon}^{2}}{r_{\epsilon}^{2}},\allowdisplaybreaks\\
    \partial_{u}\mu_{p}+\left(\frac{\nu_{\epsilon}}{r_{\epsilon}}+\frac{r_{\epsilon}}{\nu_{\epsilon}}\left\vert D_{u}\phi_{\epsilon}\right\vert^{2}\right)\mu_{p} = &-\left[\left(\frac{\nu}{r}\right)_{p}+\left(\frac{r_{\epsilon}}{\nu_{\epsilon}}\left\vert D_{u}\phi_{\epsilon}\right\vert^{2}-\frac{\widetilde{r}}{\widetilde{\nu}}\left\vert\partial_{u}\widetilde{\phi}\right\vert^{2}\right)\right]\widetilde\mu\nonumber\\&+\frac{r_{\epsilon}}{\nu_{\epsilon}}\left\vert D_{u}\phi_{\epsilon}\right\vert^{2}-\frac{\widetilde{r}}{\widetilde{\nu}}\left\vert\partial_{u}\widetilde{\phi}\right\vert^{2}+\frac{\nu_{\epsilon}}{r_{\epsilon}^{3}}Q_{\epsilon}^{2},\allowdisplaybreaks\\\partial_{v}\mu_{p}+\left(\frac{\lambda_{\epsilon}}{r_{\epsilon}}+\frac{r_{\epsilon}}{\lambda_{\epsilon}}\left\vert\partial_{v}\phi_{\epsilon}\right\vert^{2}\right)\mu_{p}=&-\left[\left(\frac{\lambda}{r}\right)_{p}+\left(\frac{r_{\epsilon}}{\lambda_{\epsilon}}\left\vert\partial_{v}\phi_{\epsilon}\right\vert^{2}-\frac{\widetilde{r}}{\widetilde{\lambda}}\left\vert\partial_{v}\widetilde{\phi}\right\vert^{2}\right)\right]\widetilde{\mu}\nonumber\\&+\frac{r_{\epsilon}}{\lambda_{\epsilon}}\left\vert\partial_{v}\phi_{\epsilon}\right\vert^{2}-\frac{\widetilde{r}}{\widetilde{\lambda}}\left\vert\partial_{v}\widetilde{\phi}\right\vert^{2}+\frac{\lambda_{\epsilon}}{r_{\epsilon}^{3}}Q_{\epsilon}^{2},\label{eq: difference equations for dvmu}\allowdisplaybreaks\\
    \partial_{u}\partial_{v}(r_{\epsilon}\phi_{1,p})=& q_{0}\left(A_{u,\epsilon}\partial_{v}(r\phi_{2})_{\epsilon}+\frac{\lambda_{\epsilon}\nu_{\epsilon}}{1-\mu_{\epsilon}}\frac{Q_{\epsilon}}{r_{\epsilon}^{2}}(r\phi_{2})_{\epsilon}\right)-\nu_{p}\partial_{v}\widetilde{\phi}_{1}-\lambda_{p}\partial_{u}\widetilde{\phi_{1}}\nonumber\\&-r_{p}\partial_{u}\partial_{v}\widetilde{\phi}_{1}-\phi_{1,p}\partial_{u}\partial_{v}r_{\epsilon},\label{eq: difference equation for phi1 in the bootstrap argument}\allowdisplaybreaks\\
    \partial_{u}\partial_{v}(r_{\epsilon}\phi_{2,p})=& -q_{0}\left(A_{u,\epsilon}\partial_{v}(r\phi_{1})_{\epsilon}+\frac{\lambda_{\epsilon}\nu_{\epsilon}}{1-\mu_{\epsilon}}\frac{Q_{\epsilon}}{r_{\epsilon}^{2}}(r\phi_{1})_{\epsilon}\right)-\nu_{p}\partial_{v}\widetilde{\phi}_{2}-\lambda_{p}\partial_{u}\widetilde{\phi_{2}}\nonumber\\&-r_{p}\partial_{u}\partial_{v}\widetilde{\phi}_{2}-\phi_{2,p}\partial_{u}\partial_{v}r_{\epsilon},\label{eq: difference equation for phi2 in the bootstrap argument}
\end{align}
\begin{proposition}
    Under the bootstrap assumption \eqref{eq: bootstrap assumption}, for $\delta$ sufficiently small, in the region $\mathcal{Q}_{\epsilon+\widetilde{\epsilon}}^{(in)}\backslash\mathcal{Q}_{\epsilon}^{(in)}$, we have \begin{align*}
      & \lambda_{\epsilon}\approx (-u)^{k^{2}},\quad r_{\epsilon}\approx (-u)(z+1),\quad -\nu_{\epsilon}\approx 1,\quad \left\vert r_{p}\right\vert\lesssim C_{0}\delta (-u)(z+1)\\&
      \left\vert\partial_{u}(\phi_{1,\epsilon})\right\vert\lesssim \frac{k}{(-u)},\quad \left\vert\partial_{u}\phi_{2,\epsilon}\right\vert\lesssim \frac{k}{(-u)},\quad \left\vert\partial_{v}\phi_{1,\epsilon}\right\vert\lesssim k(-u)^{k^{2}-1},\quad \left\vert\partial_{v}\phi_{2,\epsilon}\right\vert\lesssim k(-u)^{k^{2}-1},\\&
      \left\vert Q_{\epsilon}\right\vert\lesssim C_{0}\delta u^{2}\left\vert\log(-u)\right\vert,\quad \left\vert A_{u,\epsilon}\right\vert\lesssim C_{0}\delta (-u)\left\vert\log(-u)\right\vert\left\vert z\right\vert,\quad 
      \left\vert\mu_{p}\right\vert\lesssim C_{0}k\delta,\quad \left\vert \mu_{\epsilon}\right\vert\lesssim k^{2}.
    \end{align*}
    \label{prop: consequence of the bootstrap in the existence argument}
\end{proposition}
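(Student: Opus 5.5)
The plan is to derive each bound by combining the background estimates of Proposition~\ref{prop: estimate on the background spacetime} with the bootstrap assumption~\eqref{eq: bootstrap assumption}, integrating the relevant quantities along null directions from the initial hypersurface $\mathcal{H}_{\epsilon}$ or from the center $\Gamma=\{z=-1\}$. Since $\delta$ is small relative to $k$ and $C_0$ is fixed, every perturbative contribution of size $C_0\delta$ is absorbed into the background bounds. The bounds are obtained in the following order.

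\textbf{First-order quantities.} The bootstrap gives $|\lambda_p|\le 2C_0\delta(-u)^{k^2}$ and $|\nu_p|\le 2C_0\delta$. Together with $\widetilde\lambda\approx(-u)^{k^2}$ and $-\widetilde\nu\approx 1$, this yields $\lambda_\epsilon\approx(-u)^{k^2}$ and $-\nu_\epsilon\approx1$ once $C_0\delta\ll1$. Both $r_\epsilon$ and $\widetilde r$ vanish on $\Gamma$ (same gauge of the center), so I integrate $\lambda_p$ in $v$ from the center along $u=\mathrm{const}$. The $v$-length of the segment from $z=-1$ to $z$ is $(-u)^{q_k}(z+1)$, which gives $|r_p|\lesssim C_0\delta(-u)^{k^2}(-u)^{q_k}(z+1)=C_0\delta(-u)(z+1)$. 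Hence $r_\epsilon\approx(-u)(z+1)$.

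\textbf{Scalar field, charge and potential.} The scalar-field bounds follow from the bootstrap bounds on $(-u)|\partial_u\phi_p|$ and $(-u)^{q_k}|\partial_v\phi_p|$ plus the background bounds. For $\partial_v\widetilde\phi$ I use~\eqref{eq: estimate on dvphitilde} together with the fact that $\mr{\widetilde\phi}'_i$ is bounded on $[-1,0]$ in the regular coordinates. This gives $|\partial_v\phi_{i,\epsilon}|\lesssim k(-u)^{k^2-1}$, up to the $\delta$ contribution, which is dominated provided $\delta\ll k$. The $Q$ bound is the bootstrap $\mathcal{R}_Q$.

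For $A_{u,\epsilon}$ I integrate~\eqref{eq:spherical symmetric equation last} with $A_v\equiv0$, which reads $\partial_vA_u=-Q\Omega^2/(2r^2)$, from $v=0$ where $A_u=0$. Writing $\Omega^2=-4\lambda\nu/(1-\mu)\approx(-u)^{k^2}$ gives
\[
|A_u|\lesssim\int_v^0 \frac{C_0\delta u^2|\log(-u)|(-u)^{k^2}}{r^2}\,dv'.
\]
\textbf{This is the main obstacle.} Near the center $r\to0$ and $r^{-2}$ is not integrable, so a crude bound fails. I would first try to exploit that $Q$ vanishes at $\Gamma$ with $|Q|\lesssim r^3\sup|\partial_vQ|/\lambda$-type behavior, using the transport equation~\eqref{eq:v-transport equation for Q}, $\partial_vQ=q_0r^2\Im(\phi\overline{\partial_v\phi})$. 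This gives $|Q|\lesssim r^3|\phi||\partial_v\phi|/\lambda$ near the center and removes the singularity. Away from the center, $r\gtrsim(-u)$, and integrating over a $v$-interval of length $(-u)^{q_k}|z|$ yields $C_0\delta(-u)|\log(-u)||z|$ as claimed. If the logarithmic factor of $\phi$ spoils this near $\Gamma$, I would instead split the $Q$-integral at $z=-1/2$ and use the bootstrap $\mathcal{R}_Q$ only on the outer part.

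\textbf{Mass ratio.} Finally I treat $\mu_p$ through the $v$-transport equation~\eqref{eq: difference equations for dvmu}, integrating from the center where $\mu_p=0$; alternatively I can use the algebraic identity $1-\mu=-4\lambda\nu/\Omega^2$ with a bound on $\Omega^2_p$. All source terms are $O(C_0k\delta)$ times integrable weights, by the bounds above and by $Q^2\lambda/r^3$ being controlled via the near-center vanishing of $Q$. Combined with $\widetilde\mu\lesssim k^2$ from Proposition~\ref{prop: estimate on the background spacetime}, this gives $|\mu_p|\lesssim C_0k\delta$ and $|\mu_\epsilon|\lesssim k^2$ for $\delta\ll k$.
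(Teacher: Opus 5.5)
Your overall route is the paper's: the bootstrap plus Proposition~\ref{prop: estimate on the background spacetime}, $r_p$ obtained by integrating $\lambda_p$ from $\Gamma$, $A_{u,\epsilon}$ by integrating $\partial_vA_u=-Q\Omega^2/(2r^2)$ from $\{v=0\}$, and $\mu_p$ from \eqref{eq: difference equations for dvmu}. You are right that $r_\epsilon^{-2}$ is not integrable at $\Gamma$; the paper's one-line bound for $A_{u,\epsilon}$ tacitly uses $r_\epsilon\approx(-u)$. Starting the $\mu_p$ integration at $\Gamma$ is also better than starting at $v=0$, which produces a factor $r_\epsilon(u,0)/r_\epsilon(u,v)$.

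\textbf{First gap: the center loses the $\delta$-dependence.} Bounding $\partial_vQ_\epsilon$ by $|q_0|r_\epsilon^2|\phi_\epsilon||\partial_v\phi_\epsilon|$ gives only $|Q_\epsilon|\lesssim|q_0|k^2|\log(-u)|u^2(1+z)^3$ near $\Gamma$. This yields a contribution $|q_0|k^2(-u)|\log(-u)|$ to $A_{u,\epsilon}$ and $q_0^2k^4u^2\log^2(-u)$ to $\mu_p$. Since $\delta$ is chosen small depending on $k$, these are not $O(C_0\delta)$, resp.\ $O(C_0k\delta)$. Splitting at $z=-\frac12$ does not change this. Even interpolating with $|Q_\epsilon|\le 2C_0\delta u^2|\log(-u)|$ only yields $(C_0\delta)^{2/3}(|q_0|k^2)^{1/3}$ in place of $C_0\delta$ for $A_{u,\epsilon}$.

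The missing idea is the cancellation $\widetilde\phi_2\partial_v\widetilde\phi_1-\widetilde\phi_1\partial_v\widetilde\phi_2\equiv0$, which the paper uses when closing the $Q$ bootstrap. After subtracting it, every term of $\phi_{2,\epsilon}\partial_v\phi_{1,\epsilon}-\phi_{1,\epsilon}\partial_v\phi_{2,\epsilon}$ contains $\phi_{i,p}$ or $\partial_v\phi_{i,p}$. Moreover $|\phi_{i,p}|\lesssim C_0\delta$, from $|\phi_{i,p}(u,0)|\lesssim\delta$ and the bootstrap bound on $\partial_v\phi_p$ over a $v$-interval of length at most $(-u)^{q_k}$. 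On $\{z\le-\frac12\}$ this gives $|\partial_vQ_\epsilon|\lesssim|q_0|C_0k\delta\,r_\epsilon^2|\log(-u)|(-u)^{-q_k}$. Integrating from $\Gamma$, where $Q_\epsilon=0$, then gives $|Q_\epsilon|\lesssim|q_0|C_0k\delta|\log(-u)|u^2(1+z)^3$. This tames both $\int Q_\epsilon\Omega_\epsilon^2r_\epsilon^{-2}$ and $\int\lambda_\epsilon Q_\epsilon^2r_\epsilon^{-3}$ with the correct size.

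\textbf{Second gap: the bound on $\partial_v\phi_{i,\epsilon}$.} Your derivation of $|\partial_v\phi_{i,\epsilon}|\lesssim k(-u)^{k^2-1}$ fails. Boundedness of $\mr{\widetilde\phi}'_i$ on $[-1,0]$ carries no factor $k$, and \eqref{eq: estimate on dvphitilde} carries the weight $|z|^{-\eta}$. In fact the bound cannot hold near $z=0$. With $\theta_i$ the self-similar variables of Section~\ref{sec: construction of uncharged solution}, one has $\partial_v\widetilde\phi_i=\widetilde\lambda\theta_i/\widetilde r$ and $\theta_i\to k_i/k^2$. Hence $(-u)^{q_k}|\partial_v\widetilde\phi|(u,0)\approx k^{-1}$; the paper itself records $\mr{\widetilde\phi}'_1\approx 1/k_1$ near $z=0$ (compare Example~\ref{example: largeness}). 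The bootstrap allows only an $O(C_0\delta)$ deviation from this, with $\delta\ll k$. The paper's proof does not justify this item either; it just defers to Proposition~\ref{prop: estimate on the background spacetime}.

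What you can prove is the weighted bound $(-u)^{q_k}|\partial_v\phi_{i,\epsilon}|\lesssim\frac{k}{\eta}|z|^{-\eta}+C_0\delta$. Its square is integrable in $z$ with size $O(k^2/\eta^2)$ for $\eta<\frac12$. This is what the later arguments actually use: the term $II$ in the $\mu_p$ estimate, and the right-hand sides of \eqref{eq: difference equation for phi1 in the bootstrap argument}--\eqref{eq: difference equation for phi2 in the bootstrap argument}. State and use that bound instead.
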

\begin{proof}
    It suffices to prove the estimates for $r_{p}$, $A_{u,\epsilon}$, $\mu_{p}$, and $\mu_{\epsilon}$. All the other estimates follow from the bootstrap assumption \eqref{eq: bootstrap assumption} and Proposition \ref{prop: estimate on the background spacetime}. Since \begin{equation*}
        r_{p}(u,v)  = \int_{-(-u)^{q_{k}}}^{v}\lambda_{p}d\widetilde{v},
    \end{equation*}
    we have that \begin{equation*}
        \left\vert r_{p}\right\vert\leq\int_{-(-u)^{q_{k}}}^{v}\left\vert\lambda_{p}\right\vert\lesssim \delta(-u)^{k^{2}}\left(v+(-u)^{q_{k}}\right)\lesssim \delta (-u)(z+1).
    \end{equation*}
 Using the equation \eqref{eq:spherical symmetric equation last}, we have \begin{equation}
    \begin{aligned}
        \left\vert A_{u,\epsilon}\right\vert(u,v)\leq\int_{v}^{0}\left\vert\frac{2Q_{\epsilon}}{r_{\epsilon}^{2}}\frac{\lambda_{\epsilon}\nu_{\epsilon}}{1-\mu_{\epsilon}}\right\vert(u,\widetilde v)d\widetilde v\lesssim C_{0}\delta(-u)^{k^{2}}\left\vert\log(-u)\right\vert(-v)\lesssim C_{0}\delta (-u)\left\vert\log(-u)\right\vert \left\vert z\right\vert.
        \end{aligned}
    \end{equation}
    To estimate $\mu_{p}$, we can rewrite the equation \eqref{eq: difference equations for dvmu} as \begin{equation}
        \partial_{v}(e^{\int_{0}^{v}\frac{r_{\epsilon}}{\lambda_{\epsilon}}\left\vert\partial_{v}\phi_{\epsilon}\right\vert^{2}}r_{\epsilon}\mu_{p}) = r_{\epsilon}e^{\int_{0}^{v}\frac{r_{\epsilon}}{\lambda_{\epsilon}}\left\vert\partial_{v}\phi_{\epsilon}\right\vert^{2}}(\text{RHS of} \eqref{eq: difference equations for dvmu}).\label{eq: for estimate the bootstrap of mu}
    \end{equation}
    Using the estimates on $\lambda_{\epsilon}$, $r_{\epsilon}$, and $\partial_{v}\phi_{\epsilon}$, we have that \begin{equation*}
        e^{\int_{-(-u)^{q_{k}}}^{v}\frac{r_{\epsilon}}{\lambda_{\epsilon}}\left\vert\partial_{v}\phi_{\epsilon}\right\vert^{2}}
    \end{equation*}
    is uniformly bounded. Integrating the equation~\eqref{eq: for estimate the bootstrap of mu}, we have \begin{equation}
        \label{eq: final version of the muepsilon bootstrap estimet}\vert\mu_{p}\vert\lesssim k\delta+\underbrace{\int_{v}^{0}\left\vert\left[\left(\frac{\lambda}{r}\right)_{p}+\left(\frac{r_{\epsilon}}{\lambda_{\epsilon}}\vert\partial_{v}\phi_{\epsilon}\vert^{2}-\frac{\widetilde{r}}{\widetilde{\lambda}}\left\vert\partial_{v}\widetilde{\phi}\right\vert^{2}\right)\right]\widetilde{\mu}\right\vert}_{I}+\underbrace{\int_{v}^{0}\left\vert\frac{r_{\epsilon}}{\lambda_{\epsilon}}\vert\partial_{v}\phi_{\epsilon}\vert^{2}-\frac{\widetilde{r}}{\widetilde{\lambda}}\vert\partial_{v}\widetilde{\phi}\vert^{2}\right\vert}_{II}+\underbrace{\int_{v}^{0}\left\vert\frac{\lambda_{\epsilon}}{r_{\epsilon}^{3}}\right\vert Q_{\epsilon}^{2}}_{III}.
    \end{equation}
    For $I$ on the right-hand side of~\eqref{eq: final version of the muepsilon bootstrap estimet}, due to the presence of $\widetilde{\mu}$, we have \begin{equation*}
        \vert I\vert\lesssim C_{0}k\delta.
    \end{equation*}
    For $II$ on the right-hand side of~\eqref{eq: final version of the muepsilon bootstrap estimet}, we have \begin{align*}
        II\lesssim \int_{v}^{0}\frac{r_{\epsilon}}{\lambda_{\epsilon}}\left\vert\vert\partial_{v}\phi_{\epsilon}\vert^{2}-\vert\partial_{v}\widetilde{\phi}\vert^{2}
        \right\vert+\left\vert\left(\frac{r}{\lambda}\right)_{p}\right\vert\vert\partial_{v}\widetilde{\phi}\vert^{2}.
    \end{align*}
    Due to the estimate~\eqref{eq: estimate on dvphitilde} for $\partial_{v}\widetilde{\phi}$, we have \begin{equation*}
        II\lesssim C_{0}k\delta.
    \end{equation*}
    For the term $III$ on the right-hand side of~\eqref{eq: final version of the muepsilon bootstrap estimet}, using the bootstrap assumption, we have \begin{equation*}
        III\lesssim C_{0}^{2}\delta^{2}.
    \end{equation*}
    Hence, we have \begin{equation}
        \left\vert\mu_{p}\right\vert\lesssim C_{0}(k\delta+\delta^{2}).
    \end{equation}
    Taking $\delta\ll k$, we can conclude the proof.
\end{proof}

\subsubsection{Closing the bootstrap argument in the truncated region}
\label{sec: closing the bootstrap}
In this section, we close the bootstrap argument for a given scalar field charge $q_{0}$ and $k$ sufficiently small.
\begin{proposition}
\label{prop: closing the bootstrap}
    Assuming \eqref{eq: bootstrap assumption}, for any given scalar field charge $q_{0}$, there exists $k_{0}$ sufficiently small, depending on $q_{0}$ and a universal constant $C_{0}$, such that for any $0<k<k_{0}$, there exists $\delta_{0}$ sufficiently small depending on $k$, such that $\mathcal{R}^{\epsilon}(\mathcal{Q}_{\epsilon+\widetilde{\epsilon}}^{(in)}\backslash\mathcal{Q}_{\epsilon}^{(in)})\leq C_{0}\delta$.
\end{proposition}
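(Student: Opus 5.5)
We improve each of the five pieces of $\mathcal{R}^{\epsilon}$ separately. Throughout we use the consequences of the bootstrap in Proposition~\ref{prop: consequence of the bootstrap in the existence argument}, the background bounds of Proposition~\ref{prop: estimate on the background spacetime}, and the initial bounds on $\mathcal{H}_{\epsilon}$ from Propositions~\ref{prop: initial estimate on Q}--\ref{prop: initial estimate on dvphi}. On the outgoing cone $\{u=-\epsilon\}$ all perturbations vanish. The estimates are closed by integrating transport equations from $\mathcal{H}_{\epsilon}$.

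\emph{Charge.} We integrate \eqref{eq:v-transport equation for Q} in $v$ from $\{v=0\}$ backward to $v\ge -(-u)^{q_k}$, using the data bound $|Q_\epsilon(u,0)|\lesssim |q_0|\delta k u^2|\log(-u)|$. The integrand $q_0 r^2\Im(\phi\overline{\partial_v\phi})$ vanishes identically for the background, because $\widetilde\phi_1/k_1=\widetilde\phi_2/k_2$ by Proposition~\ref{prop: admissible condition for uncharge to charge}. It is therefore bilinear in $(\phi_p,\partial_v\phi)$ and $(\widetilde\phi,\partial_v\phi_p)$. We have $|\widetilde\phi|\lesssim k|\log(-u)|$. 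The factor $|\phi_{p}|\lesssim C_0\delta|\log(-u)|$ follows by integrating $\partial_u\phi_p$ from the data. Combined with $r\lesssim(-u)$ and a $v$-interval of length $(-u)^{q_k}$, this gives $|Q_\epsilon|\lesssim (k|q_0|+|q_0|C_0 k)\delta u^2|\log(-u)|$. This is $\le \tfrac{C_0}{10}\delta u^2|\log(-u)|$ once $C_0\gtrsim |q_0|$ and $k|q_0|$ is small.

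\emph{Geometric quantities.} For $\lambda_p$ we integrate \eqref{eq: difference equation for lambdap in bootstrap argument} in $u$ from $\{u=-\epsilon\}$, where $\lambda_p=0$. For $\nu_p$ we integrate the $\partial_v\nu_p$ equation in $v$ from $v=0$; there $\nu_p=0$ because $r_\epsilon(u,0)=\widetilde r(u,0)$. The right-hand sides consist of two kinds of terms. The first is a linearization of $\frac{\mu}{1-\mu}\frac{\lambda\nu}{r}$, which always carries a factor $\widetilde\mu\lesssim k^2$ or $\mu_p\lesssim C_0k\delta$. The second is the charge term, bounded by $C_0^2\delta^2 (-u)^{k^2+1}\log^2(-u)$. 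The $\lambda_p$ equation has a $1/r$ weight, which must be integrated carefully near the center. To handle it we use $\lambda_p\big|_{\Gamma}=-\nu_p\big|_{\Gamma}$, or equivalently we integrate along $v$ from the center, exploiting $r\approx(-u)(1+z)$. Both routes yield bounds of the form $(Ck+Ck C_0+C C_0^2\delta)\delta$.

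\emph{Scalar field.} This is the main step. We write \eqref{eq: difference equation for phi1 in the bootstrap argument}--\eqref{eq: difference equation for phi2 in the bootstrap argument} and integrate in $u$ from $\{u=-\epsilon\}$ to control $\partial_v(r\phi_{i,p})$, and in $v$ from $\{v=0\}$ to control $\partial_u(r\phi_{i,p})$. The source terms are bounded as follows.
\begin{itemize}
\item The $q_0$ terms contribute $|q_0|C_0\delta(-u)^{k^2}|\log(-u)|$ times powers of $(-u)$, using the bounds on $A_{u,\epsilon}$ and $Q_\epsilon$.
\item The terms $\nu_p\partial_v\widetilde\phi$, $\lambda_p\partial_u\widetilde\phi$ and $r_p\partial_u\partial_v\widetilde\phi$ carry a factor $k$ coming from the background.
\item The term $\phi_{i,p}\partial_u\partial_v r_\epsilon$ carries $\widetilde\mu$ together with the charge term.
\end{itemize}

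The main obstacle is the $u$-integration of quantities scaling like $(-u)^{k^2-1}$ all the way to $-1$. This produces factors $1/k^2$ or $|\log\epsilon|$ that could destroy the smallness. My first attempt is to weight with $(-u)^{q_k}$, exactly as in $\mathcal{R}_{\partial_v\phi_p}$. In the resulting integrals the background-$k$ terms become $\int (-u')^{k^2-1}k\,du'\lesssim 1/k$, which is dangerous. To avoid this, I would instead keep $\widetilde{\mu}$-type smallness, so that effectively $k^2\cdot\frac1{k^2}=O(1)$ with an extra $k$. The charge terms gain a positive power $(-u)^{1+k^2}$ and are harmless.

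Passing from $\partial_v(r\phi_p)$ to $\partial_v\phi_p$ requires dividing by $r$ after subtracting $\lambda\phi_p$, with $\phi_p$ integrated from the center. This is the usual averaging argument, with $\phi_p=\frac1r\int\partial_v(r\phi_p)$. Collecting everything gives
\[
\mathcal{R}^\epsilon\le C\delta + C(k+k|q_0|+kC_0|q_0|)\delta + CC_0^2\delta^2 .
\]
Here $C$ is universal and the first term comes from the data. We choose $C_0=4C$, then $k_0$ small depending on $q_0$, then $\delta_0$ small depending on $k$. This gives $\mathcal{R}^\epsilon\le C_0\delta$.
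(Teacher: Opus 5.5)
Your architecture matches the paper's. You handle $\mathcal R_Q$ by $v$-integration from $\{v=0\}$ using $\widetilde\phi_2\partial_v\widetilde\phi_1-\widetilde\phi_1\partial_v\widetilde\phi_2=0$, $\nu_p$ from $\{v=0\}$, and $\lambda_p$ from $\{u=-\epsilon\}$ and the center. You integrate the difference wave equations in $v$ and in $u$, and choose constants in the order $C_0,k_0,\delta_0$. However, the step you flag as the main obstacle is not closed, and your remedy does not apply to the terms that cause it.

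The bootstrap \eqref{eq: bootstrap assumption} gives only $|\lambda_p|\le 2C_0\delta(-u)^{k^2}$ and, through Proposition~\ref{prop: consequence of the bootstrap in the existence argument}, $|\mu_p|\lesssim C_0k\delta$. Two sources are therefore only known to be $\lesssim C_0k\delta(-u)^{k^2-1}$: $\lambda_p\partial_u\widetilde\phi_i$ in \eqref{eq: difference equation for phi1 in the bootstrap argument}--\eqref{eq: difference equation for phi2 in the bootstrap argument}, and $\frac{\widetilde\lambda\widetilde\nu}{\widetilde r}\bigl(\frac{\mu}{1-\mu}\bigr)_p$ in \eqref{eq: difference equation for lambdap in bootstrap argument}. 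Integrating in $u$ gives
\[
C_0k\delta\int_u^{-\epsilon}(-u')^{k^2-1}\,du'=\frac{C_0\delta}{k}\bigl((-u)^{k^2}-\epsilon^{k^2}\bigr),
\]
which at $u=-1$ is comparable to $C_0\delta/k$ once $\epsilon\le e^{-1/k^2}$. Since $\epsilon\to0$ with $k,\delta$ fixed, neither $\mathcal R^{\epsilon}_{\lambda_p}\le\frac{1}{100}C_0\delta$ nor $\mathcal R^{\epsilon}_{\partial_v\phi_p}\le\frac{1}{10}C_0\delta$ follows from the available bounds.

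Neither source contains $\widetilde\mu$. Each has a single power of $k$, from $\partial_u\widetilde\phi_i$ or from $\mu_p$, against a loss of $k^{-2}$ from the integral. So ``keeping $\widetilde\mu$-type smallness'' cannot supply the missing factor, and your bound $(Ck+CkC_0+CC_0^2\delta)\delta$ for $\lambda_p$ fails for the same reason.

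What is needed is structure that \eqref{eq: bootstrap assumption} does not record, for instance decay of $\mu_p$ in $u$ away from the cut-off region. Such decay holds on $\{v=0\}$: by the proof of Proposition~\ref{prop: initial estimate on mu}, $|\mu_p(u,0)|\lesssim k\delta\bigl((\epsilon/(-u))^{1+k^2}+(-u)^{\gamma}\bigr)$ up to charge terms. With it, $\int(-u')^{-1}|\mu_p|\,du'$ is bounded independently of $\epsilon$, and $\lambda_p$ stays $O(k\delta)(-u)^{k^2}$, which is the size the source $\lambda_p\partial_u\widetilde\phi_i$ can tolerate. The paper's own proof does not supply this either. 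It deduces $|\lambda_p|\lesssim kC_0^2\delta(-u)^{k^2}$ from a right-hand side containing the same $C_0k\delta(-u)^{k^2-1}$ term, and it treats $\mathcal R^{\epsilon}_{\partial_v\phi_p}$ by ``Similarly''.

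There are three smaller issues.

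(i) The terms $\nu_p\partial_v\widetilde\phi_i$ and $r_p\partial_u\partial_v\widetilde\phi_i$, and the term $\phi_p\,\partial_v\widetilde\phi$ in your charge estimate, do not carry a pointwise factor $k$. Since $\partial_v\widetilde\phi_i=\theta_i\widetilde\lambda/\widetilde r$ and $\theta_i\to k_i/k^2$, near $z=0$ one has $\partial_v\widetilde\phi_i\approx\frac{k_i}{k^2}(-u)^{k^2-1}$. So the pointwise bound on $\partial_v\phi_{i,\epsilon}$ listed in Proposition~\ref{prop: consequence of the bootstrap in the existence argument} cannot be used as stated. The paper instead uses the weighted bound \eqref{eq: estimate on dvphitilde}, $|\partial_v\widetilde\phi_i|\lesssim\frac{k_i}{\eta}(-u)^{-q_k}|z|^{-\eta}$ with $\eta\gg k$. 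It then gets $\partial_u(r_\epsilon\phi_{i,p})$ by $v$-integration from $\{v=0\}$, where $\int_v^0(-u)^{-q_k}|z(u,v')|^{-\eta}\,dv'\lesssim 1$. You need this weight for those terms.

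(ii) Integrating $\partial_u\phi_p$ along $\{v=\text{const}\}$ from $\{u=-\epsilon\}$ gives $|\phi_p|\lesssim C_0\delta\log(-u/\epsilon)$, not $C_0\delta|\log(-u)|$, which would make your $Q$ bound depend on $\epsilon$. Instead, integrate $\partial_v\phi_p$ from $\{v=0\}$ over a $v$-range of length at most $(-u)^{q_k}$; this gives $|\phi_p|\lesssim C_0\delta$ uniformly.

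(iii) In the $(u,v)$ gauge the center condition is $\nu_p+q_k(-u)^{-k^2}\lambda_p=0$ on $\Gamma$. The relation $\lambda_p=-\nu_p$ holds only in the hatted self-similar coordinates.
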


\begin{proof}
For the spacetime charge $Q_{\epsilon}$, we consider the equation \eqref{eq:v-transport equation for Q} \begin{align*}
    \partial_{v}Q_{\epsilon} =& q_{0}r_{\epsilon}^{2}(\phi_{2,\epsilon}\partial_{v}\phi_{1,\epsilon}-\phi_{1,\epsilon}\partial_{v}\phi_{2,\epsilon})\\=&
    q_{0}r_{\epsilon}^{2}(\phi_{2,\epsilon}-\widetilde{\phi}_{2}+\widetilde{\phi}_{2})(\partial_{v}\phi_{1,\epsilon}-\partial_{v}\widetilde{\phi}_{1}+\partial_{v}\widetilde{\phi}_{1})\\&-q_{0}r_{\epsilon}^{2}(\phi_{1,\epsilon}-\widetilde{\phi}_{1}+\widetilde{\phi}_{1})(\partial_{v}\phi_{2,\epsilon}-\partial_{v}\widetilde{\phi}_{2}+\partial_{v}\widetilde{\phi}_{2}).
\end{align*}
Since \begin{equation*}
    \widetilde{\phi}_{2}\partial_{v}\widetilde{\phi}_{1}-\widetilde{\phi}_{1}\partial_{v}\widetilde{\phi}_{2} = 0,
\end{equation*}
using the bootstrap assumption~\eqref{eq: bootstrap assumption}, we have that \begin{equation*}
     \left\vert \partial_{v}Q_{\epsilon}\right\vert\lesssim \vert q_{0}\vert k\delta u^{2-q_k}\left\vert \log(-u)\right\vert.
 \end{equation*}
 Integrating from $\{v=  0\}$ and using \eqref{estimate on Qepsilon for u close to singularity}--\eqref{estimate for Qepsilon for u in the faraway region}, we can conclude that \begin{equation}
     \left\vert Q_{\epsilon}\right\vert\lesssim k\delta \vert q_{0}\vert(-u)^{2}\left\vert \frac{v}{(-u)^{q_{k}}}\right\vert \left\vert\log(-u)\right\vert+k\delta \vert q_{0}\vert u^{2}\left\vert\log(-u)\right\vert\lesssim k\delta \vert q_{0}\vert u^{2}\left\vert\log(-u)\right\vert.
 \end{equation}
 Taking $k$ to be sufficiently small, we have that \begin{equation*}
     \mathcal{R}_{Q}^{\epsilon}(\mathcal{Q}_{\epsilon+\widetilde{\epsilon}}^{(in)}\backslash\mathcal{Q}_{\epsilon}^{(in)})\leq\frac{1}{100}C_{0}\delta.
 \end{equation*}
 Next, we consider the quantity $\lambda_{p}$. We first estimate the right-hand side of \eqref{eq: difference equation for lambdap in bootstrap argument} \begin{align*}
     \left\vert\text{RHS of \eqref{eq: difference equation for lambdap in bootstrap argument}}\right\vert\lesssim& \frac{\mu_{\epsilon}}{1-\mu_{\epsilon}}\left\vert\left(\frac{\lambda\nu}{r}\right)_{p}\right\vert+\frac{\widetilde{\lambda}\widetilde{\nu}}{\widetilde{r}}\left\vert\left(\frac{\mu}{1-\mu}\right)_{p}\right\vert+\left\vert\frac{\lambda_{\epsilon}\nu_{\epsilon}}{(1-\mu_{\epsilon})r_{\epsilon}}\right\vert\frac{Q_{\epsilon}^{2}}{r_{\epsilon}^{2}}\\\lesssim&C_{0}
     k^{2}\delta (-u)^{k^{2}-1}+C_{0}k\delta (-u)^{k^{2}-1}+C_{0}^{2}\delta^{2}(-u)^{1+k^{2}}\left(\log(-u)\right)^{2}.
 \end{align*}
 We consider two cases here. For $v\in[-\epsilon^{q_{k}},0]$, integrating from the initial outgoing cone, we have \begin{equation*}
     \left\vert\lambda_{p}\right\vert(u,v)\lesssim k C_{0}^{2}\delta (-u)^{k^{2}}.
 \end{equation*}
 For $v\in[-(-u)^{q_{k}},-\epsilon^{q_{k}}]$, integrating from the center $\Gamma$, we have \begin{equation*}
     \left\vert\lambda_{p}\right\vert(u,v)\lesssim kC_{0}^{2}\delta (-u)^{k^{2}}+\left\vert\lambda_{p}\right\vert(-(-v)^{\frac{1}{q_{k}}},v).
 \end{equation*}
  On the center $\Gamma$, we have that \begin{equation*}
     \nu_{p}|_{\Gamma}+q_{k}(-u)^{-k^{2}}\lambda_{p}|_{\Gamma} = 0.
 \end{equation*}
 Therefore, taking $k$ sufficiently small, we can conclude that \begin{equation*}
     \mathcal{R}_{\lambda_{p}}^{\epsilon}(\mathcal{Q}_{\epsilon+\widetilde{\epsilon}}^{(in)}\backslash\mathcal{Q}_{\epsilon}^{(in)})\leq\frac{1}{100}C_{0}\delta.
 \end{equation*}
 Similarly, we can show that \begin{equation*}
     \mathcal{R}_{\nu_{p}}^{\epsilon}(\mathcal{Q}_{\epsilon+\widetilde{\epsilon}}^{(in)}\backslash\mathcal{Q}_{\epsilon}^{(in)})\leq\frac{1}{100}C_{0}\delta.
 \end{equation*}
 Next, we consider $\partial_{u}(r_{\epsilon}\phi_{p})$. For the right-hand side of \eqref{eq: difference equation for phi1 in the bootstrap argument}, we have \begin{equation*}
     \left\vert\text{RHS of \eqref{eq: difference equation for phi1 in the bootstrap argument}}\right\vert\lesssim C_{0}^{2}\vert q_{0}\vert\delta k(-u)^{1+k^{2}}\left\vert\log(-u)\right\vert^{2}+\frac{\delta k}{\eta} (-u)^{-q_{k}}\left\vert z\right\vert^{-\eta}.
 \end{equation*}
 Similarly, for the right-hand side of \eqref{eq: difference equation for phi2 in the bootstrap argument}, we have \begin{equation*}
     \left\vert\text{RHS of \eqref{eq: difference equation for phi2 in the bootstrap argument}}\right\vert\lesssim \vert q_{0}\vert\delta k(-u)^{1+k^{2}}\left\vert\log(-u)\right\vert^{2}+\frac{\delta k}{\eta} (-u)^{-q_{k}}\left\vert z\right\vert^{-
     \eta}.
 \end{equation*}
 Hence, integrating from $\{v= 0\}$ and taking $k$ sufficiently small and $\eta\gg k$, we have \begin{equation*}
     \sup_{\mathcal{Q}_{\epsilon+\widetilde{\epsilon}}^{(in)}\backslash\mathcal{Q}_{\epsilon}^{(in)}}\left\vert\partial_{u}(r_{\epsilon}\phi_{1,p})\right\vert\leq \frac{1}{100}C_{0}\delta,\quad \sup_{\mathcal{Q}_{\epsilon+\widetilde\epsilon}^{(in)}\backslash\mathcal{Q}_{\epsilon}^{(in)}}\left\vert\partial_{u}(r_{\epsilon}\phi_{2,p})\right\vert\leq \frac{1}{100}C_{0}\delta.
 \end{equation*}
 Then we have \begin{equation*}
     \sup_{\mathcal{Q}_{\epsilon+\widetilde{\epsilon}}^{(in)}\backslash\mathcal{Q}_{\epsilon}^{(in)}}\left\vert\phi_{1,p}\right\vert\lesssim \frac{1}{100}C_{0}\delta\left(1+\frac{\epsilon}{u}\right),\quad \sup_{\mathcal{Q}_{\epsilon+\widetilde{\epsilon}}^{(in)}\backslash\mathcal{Q}_{\epsilon}^{(in)}}\left\vert\phi_{2,p}\right\vert\lesssim \frac{1}{100}C_{0}\delta\left(1+\frac{\epsilon}{u}\right).
 \end{equation*}
 Hence, we have \begin{align*}
     \mathcal{R}^{\epsilon}_{\partial_{u}\phi_{p}}(\mathcal{Q}_{\epsilon+\widetilde{\epsilon}}^{(in)}\backslash\mathcal{Q}_{\epsilon}^{(in)})\lesssim& \sup_{\mathcal{Q}_{\epsilon+\widetilde{\epsilon}}^{(in)}\backslash\mathcal{Q}_{\epsilon}^{(in)}}\left\vert\partial_{u}(r_{\epsilon}\phi_{1,p})\right\vert+\sup_{\mathcal{Q}_{\epsilon+\widetilde{\epsilon}}^{(in)}\backslash\mathcal{Q}_{\epsilon}^{(in)}}\left\vert\partial_{u}(r_{\epsilon}\phi_{2,p})\right\vert\\&+\sup_{\mathcal{Q}_{\epsilon+\widetilde{\epsilon}}^{(in)}\backslash\mathcal{Q}_{\epsilon}^{(in)}}\left\vert\phi_{1,p}\right\vert+\sup_{\mathcal{Q}_{\epsilon+\widetilde{\epsilon}}^{(in)}\backslash\mathcal{Q}_{\epsilon}^{(in)}}\left\vert\phi_{2,p}\right\vert\\\lesssim& \frac{1}{10}C_{0}\delta.
 \end{align*}
 Similarly, for $\mathcal{R}_{\partial_{v}\phi_{p}}^{\epsilon}$, we have that \begin{equation*}
     \mathcal{R}_{\partial_{v}\phi_{p}}^{\epsilon}(\mathcal{Q}_{\epsilon+\widetilde{\epsilon}}^{(in)}\backslash\mathcal{Q}_{\epsilon}^{(in)})\leq\frac{1}{10}C_{0}\delta.
 \end{equation*}
 Hence, putting everything together, we have \begin{equation*}
     \mathcal{R}^{\epsilon}(\mathcal{Q}_{\epsilon+\widetilde{\epsilon}}^{(in)}\backslash\mathcal{Q}_{\epsilon}^{(in)})\leq C_{0}\delta.
 \end{equation*}
 This concludes the proof.
\end{proof}
A direct consequence of the above improved bootstrap estimates is the existence of a solution arising from the initial data given in Section \ref{sec: setup of the perturbations} in the whole region $\mathcal{Q}^{(in)}\backslash\mathcal{Q}_{\epsilon}^{(in)}$.
\begin{corollary}
    For any given scalar field charge $q_{0}\neq0$, any $(k_{1},k_{2})$ with $0<k_{1}^{2}+k^{2}_{2}\leq k_{0}^{2}$ and any $\delta$ sufficiently small, there exists a family of solutions $(g_{\epsilon},\phi_{\epsilon},Q_{\epsilon})$ to the Einstein--Maxwell-charged scalar field equations in the whole region $\mathcal{Q}^{(in)}$, which coincide with the $(k_{1},k_{2})$-self-similar solution $(\widetilde{g}^{k_{1},k_{2}},\widetilde{\phi}^{k_{1},k_{2}},Q\equiv 0)$ constructed in Section \ref{sec: construction of uncharged solution} in the region $\mathcal{Q}_{\epsilon}^{(in)}$, and satisfy the initial condition imposed on $\mathcal{H}_{\epsilon}$ defined in Section \ref{sec: setup of the perturbations} in the region $\mathcal{Q}^{(in)}\backslash\mathcal{Q}_{\epsilon}^{(in)}$.
\end{corollary}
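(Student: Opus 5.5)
The corollary follows by a continuity argument on the $u$-interval of existence, built on the improved bootstrap of Proposition~\ref{prop: closing the bootstrap}, and then by gluing. Fix $q_{0}\neq0$. Let $k_{0}$ be as in Proposition~\ref{prop: closing the bootstrap}, fix $(k_{1},k_{2})$ with $0<k^{2}\le k_{0}^{2}$, and take $\delta<\delta_{0}(k)$.

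\textbf{Step 1: local solution.} We solve backward from the characteristic hypersurface $\mathcal{H}_{\epsilon}$ using Proposition~\ref{prop: mixed lwp}. The center is fixed at $\{z=-1\}$, and the gauge is $A_{v}\equiv0$, $A_{u}(u,0)=0$. We must check that the data are compatible $C^{1}$ data. Near $u=-\epsilon$ we have $\chi_{\epsilon}\equiv0$ on $[-2\epsilon,0)$, so the ingoing data agree with the self-similar data on $[-2\epsilon,-\epsilon]$. The outgoing data on $\{u=-\epsilon\}$ are exactly $(\widetilde r,\widetilde\phi)$. By Proposition~\ref{prop: admissible condition for uncharge to charge}, the triple $(\widetilde g,\widetilde\phi,Q\equiv0)$ solves the charged system there, so the corner and center compatibility conditions hold. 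By uniqueness, the solution coincides with the self-similar one in the domain of dependence of the unperturbed part. This yields a solution on $\mathcal{Q}^{(in)}_{\epsilon+\widetilde\epsilon}\setminus\mathcal{Q}^{(in)}_{\epsilon}$ for some $\widetilde\epsilon>0$.

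\textbf{Step 2: continuity argument.} Define
\[
u_{*}:=\inf\bigl\{u_{1}\in[-1,-\epsilon]:\ \text{a } C^{1} \text{ solution exists on } \mathcal{Q}^{(in)}\cap\{u_{1}\le u\le-\epsilon\} \text{ with } \mathcal{R}^{\epsilon}\le 2C_{0}\delta\bigr\}.
\]
The set is nonempty because, on the strip $\{-\epsilon-\widetilde\epsilon\le u\le-\epsilon\}$ with $\widetilde\epsilon$ small, Propositions~\ref{prop: initial estimate on Q}--\ref{prop: initial estimate on dvphi} give $\mathcal{R}^{\epsilon}\lesssim\delta$ on $\mathcal{H}_{\epsilon}$, and continuity propagates this to the strip.

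On the region up to $u_{*}$, Proposition~\ref{prop: closing the bootstrap} improves the bound to $\mathcal{R}^{\epsilon}\le C_{0}\delta$. Its proof integrates only from $\{v=0\}$, from the initial cone $\{u=-\epsilon\}$, or from $\Gamma$, so it applies to any such region, not only the $\widetilde\epsilon$-strip. Proposition~\ref{prop: consequence of the bootstrap in the existence argument} then gives uniform bounds away from degeneracy on the region: $r_{\epsilon}\approx(-u)(z+1)$, $\lambda_{\epsilon}\approx(-u)^{k^{2}}$, $-\nu_{\epsilon}\approx1$, $\mu_{\epsilon}\lesssim k^{2}<1$, together with bounds on $\partial\phi_{\epsilon}$, $Q_{\epsilon}$ and $A_{u,\epsilon}$.

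If $u_{*}>-1$, these bounds hold uniformly up to the ingoing cone $\{u=u_{*}\}$, where $(-u)\ge\epsilon$ is bounded below. Hence the data induced on $\{u=u_{*}\}\cap\mathcal{Q}^{(in)}$, together with the given ingoing data on $\{v=0\}$, are $C^{1}$ with the regular-center compatibility. Applying Proposition~\ref{prop: mixed lwp} again extends the solution past $u_{*}$. By continuity the bound $\mathcal{R}^{\epsilon}\le2C_{0}\delta$ persists slightly beyond, contradicting minimality. Therefore $u_{*}=-1$.

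\textbf{Step 3: gluing.} We define $(g_{\epsilon},\phi_{\epsilon},Q_{\epsilon})$ to be the self-similar solution on $\mathcal{Q}^{(in)}_{\epsilon}$ and the constructed solution on $\mathcal{Q}^{(in)}\setminus\mathcal{Q}^{(in)}_{\epsilon}$. Across $\{u=-\epsilon\}$ the two pieces share the same $(r,\partial_{v}r,\phi)$ by construction. The transversal quantities $\partial_{u}r$, $\partial_{u}\phi$, $Q$ and $A_{u}$ are determined on that cone by the same constraint ODEs from the same corner data, because the data near $(-\epsilon,0)$ are unperturbed. Hence the glued object is a $C^{1}$ solution in $\mathcal{Q}^{(in)}$.

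\textbf{Main obstacle.} The delicate point is Step 2: justifying that Proposition~\ref{prop: closing the bootstrap} applies on arbitrary regions $\{u_{1}\le u\le-\epsilon\}$ and that the extension criterion really only needs the controlled quantities. In particular, we must check near the center $\Gamma$ that the mixed local theory does not degenerate. The relevant quantities are $\lambda_{\epsilon}/r_{\epsilon}$ and $Q_{\epsilon}/r_{\epsilon}^{2}$. I would handle this with the $C^{1}$ center regularity from Proposition~\ref{prop: C1lwp} and the relation $\nu_{p}+q_{k}(-u)^{-k^{2}}\lambda_{p}=0$ on $\Gamma$. Uniformity of the local existence time is secured by the lower bound $(-u)\ge\epsilon$.
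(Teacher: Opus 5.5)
Your proposal is correct and matches the paper's route. The paper likewise gets local existence from Proposition~\ref{prop: mixed lwp}, combines the improvement $\mathcal{R}^{\epsilon}\le C_{0}\delta$ of Proposition~\ref{prop: closing the bootstrap} with the standard extension principle to reach $u=-1$, and glues to the self-similar solution along $\{u=-\epsilon\}$ (a seamless gluing, since by uniqueness the constructed solution already coincides with it on $\{-2\epsilon\le u\le-\epsilon\}$); one minor slip is that $\{u=u_{*}\}$ is an outgoing cone, not an ingoing one.
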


\subsubsection{Limiting argument}
In this section, we show that the solution $(g_{\epsilon},\phi_{\epsilon},Q_{\epsilon})$ constructed in Section \ref{sec: closing the bootstrap} in the whole region $\mathcal{Q}^{(in)}$ has a nontrivial limit solving the Einstein--Maxwell-charged scalar field equations when $\epsilon\rightarrow0$. First, we define the following norm \begin{equation*}
\left\Vert(r,\mu,\phi,Q)\right\Vert_{X(\Omega)}: = \left\Vert r\right\Vert_{C^{1}(\Omega)}+\left\Vert\mu\right\Vert_{C^{0}(\Omega)}+\left\Vert \phi\right\Vert_{C^{1}(\Omega)}+\left\Vert Q\right\Vert_{C^{0}(\Omega)},
\end{equation*}
and the function space $X(\Omega)$:\begin{equation*}
    X(\Omega):=\left\{(r,\mu,\phi,Q)|\left\Vert(r,\mu,\phi,Q)\right\Vert_{X(\Omega)}<\infty\right\}.
\end{equation*}

The goal in this section is to prove \begin{equation*}
\left\Vert(r_{\epsilon_{1}}-r_{\epsilon_{2}},\mu_{\epsilon_{1}}-\mu_{\epsilon_{2}},\phi_{\epsilon
_{1}}-\phi_{\epsilon_{2}},Q_{\epsilon_{1}}-Q_{\epsilon_{2}})\right\Vert_{X(\mathcal{Q}^{(in)})}\lesssim \left\vert\epsilon_{1}-\epsilon_{2}\right\vert.
\end{equation*}
Then by the standard Arzelà--Ascoli lemma, we can prove the existence of the limit $(r_{0},\mu_{0},\phi_{0}, Q_{0})$ solving the Einstein--Maxwell-charged scalar field equations.

Without loss of generality, we may assume that $0<\epsilon_{1}<\epsilon_{2}$. In the region $\mathcal{Q}_{2\epsilon_{1}}^{(in)}$, trivially we have \begin{equation*}
    \left\Vert(r_{\epsilon_{1}}-r_{\epsilon_{2}},\mu_{\epsilon_{1}}-\mu_{\epsilon_{2}},\phi_{\epsilon_{1}}-\phi_{\epsilon_{2}},Q_{\epsilon_{1}}-Q_{\epsilon_{2}})\right\Vert_{X(\mathcal{Q}_{2\epsilon_{1}}^{(in)})} = 0.
\end{equation*}
We first estimate the difference on the ingoing cone $\{v = 0\}$.
\begin{proposition}
\label{prop: difference estimate for Q}
On the ingoing cone $\{v = 0\}$, for the spacetime charge, we have \begin{align}
    \left\vert Q_{\epsilon_{1}}-Q_{\epsilon_{2}}\right\vert(u,0)\lesssim& \delta k\vert q_{0}\vert\epsilon_{1}^{2}(-\log\epsilon_{1})+\delta k\vert q_{0}\vert\epsilon_{2}^{2}(-\log\epsilon_{2}),\quad u\in[-1,-3\epsilon_{2}]. 
\end{align}
\end{proposition}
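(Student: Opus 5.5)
This estimate follows almost directly from Proposition~\ref{prop: initial estimate on Q}, by comparing both charges with the same $\epsilon$-independent profile $Q_{0}$. The key observation is that $Q_{0}$, defined through~\eqref{eq: definition of Q0} with $Q_{0}(0)=0$, does not depend on $\epsilon$. Its $u$-derivative involves only the background fields $\widetilde{\phi}_{1},\widetilde{\phi}_{2}$, the functions $f_{1},f_{2}$, and $r_{\epsilon}(u,0)=\widetilde{r}(u,0)$, and the gauge $r(u,0)=\widetilde r(u,0)$ is the same for every $\epsilon$. So for $u\in[-1,-3\epsilon_{2}]\subset[-1,-3\epsilon_{1}]$ I will apply the triangle inequality:
\begin{equation*}
\left\vert Q_{\epsilon_{1}}-Q_{\epsilon_{2}}\right\vert(u,0)\leq\left\vert Q_{\epsilon_{1}}(u,0)-Q_{0}(u)\right\vert+\left\vert Q_{\epsilon_{2}}(u,0)-Q_{0}(u)\right\vert .
\end{equation*}
Each term on the right is then bounded by \eqref{estimate for Qepsilon for u in the faraway region}, used with $\epsilon=\epsilon_{1}$ and with $\epsilon=\epsilon_{2}$. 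This is legitimate because $u\leq-3\epsilon_{2}\leq-3\epsilon_{1}$ lies in the far region for both parameters.

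The only step that needs care is checking that \eqref{estimate for Qepsilon for u in the faraway region} really bounds the distance to a common $Q_{0}$. I will redo that computation explicitly. On $[-1,-3\epsilon]$ we have $\chi_{\epsilon}\equiv 1$, so $\partial_{u}Q_{\epsilon}(u,0)=\partial_{u}Q_{0}(u)$ exactly there. Integrating from $u=-3\epsilon$ gives
\begin{equation*}
Q_{\epsilon}(u,0)-Q_{0}(u)=Q_{\epsilon}(-3\epsilon,0)-Q_{0}(-3\epsilon).
\end{equation*}
It then suffices to bound the two terms on the right. For the first, I use \eqref{estimate on Qepsilon for u in the intermedia region} at $u=-3\epsilon$, which gives $|Q_{\epsilon}(-3\epsilon,0)|\lesssim|q_{0}|\delta k\epsilon^{2}|\log\epsilon|$. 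For the second, I use the expansion of $Q_{0}$ from Proposition~\ref{prop: initial estimate on Q}, which gives $|Q_{0}(-3\epsilon)|\lesssim|q_{0}|\delta k\epsilon^{2}$. Here I use that $k_{2}\leq k$ and that $\gamma>0$ makes the correction term smaller. Both are bounded by $\delta k|q_{0}|\epsilon^{2}(-\log\epsilon)$.

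Adding the two bounds, for $\epsilon_{1}$ and $\epsilon_{2}$, gives the claimed estimate. I do not expect a real obstacle. The one point to watch is that the transition-region bound \eqref{estimate on Qepsilon for u in the intermedia region} has a constant uniform in $\epsilon$. This holds because $|\partial_{u}\chi_{\epsilon}|\approx 1/|u|$ on $[-3\epsilon,-2\epsilon]$, by \eqref{estimate on the cutoff}, and that interval has length $\epsilon$.
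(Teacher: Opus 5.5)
Your proof is correct and is essentially the paper's argument. The paper notes that on $[-1,-3\epsilon_{2}]$ both cut-offs are identically $1$, so $Q_{\epsilon_{1}}-Q_{\epsilon_{2}}$ is constant there and is controlled by its value at $u=-3\epsilon_{2}$ through the bounds of Proposition~\ref{prop: initial estimate on Q}; your triangle inequality through the $\epsilon$-independent profile $Q_{0}$ is the same mechanism, written out explicitly.
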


\begin{proof}
Considering the equation \eqref{eq:u-transport equation for Q} and taking the difference, we have \begin{equation}
    \partial_{u}(Q_{\epsilon_{1}}-Q_{\epsilon_{2}}) = q_{0}r^{2}(\phi_{1,\epsilon_{1}}\partial_{u}\phi_{2,\epsilon_{1}}-\phi_{2,\epsilon_{1}}\partial_{u}\phi_{1,\epsilon_{1}})-q_{0}r^{2}(\phi_{1,\epsilon_{2}}\partial_{u}\phi_{2,\epsilon_{2}}-\phi_{2,\epsilon_{2}}\partial_{u}\phi_{1,\epsilon_{2}}).
\end{equation}
By the choice of our cut-off functions, for $u\in[-1,-3\epsilon_{2}]$, we have that $Q_{\epsilon_{1}}-Q_{\epsilon_{2}}\equiv const$. For $u\in[-2\epsilon_{1},0)$, we have $Q_{\epsilon_{1}} = Q_{\epsilon_{2}}$.

Moreover, by \eqref{estimate on Qepsilon for u close to singularity}--\eqref{estimate for Qepsilon for u in the faraway region}, in the region $[-3\epsilon_{2},0]$, we have that \begin{equation*}
    \left\vert Q_{\epsilon_{1}}-Q_{\epsilon_{2}}\right\vert\lesssim \delta k \vert q_{0}\vert\epsilon_{1}^{2}(-\log\epsilon_{1})+\delta k \vert q_{0}\vert\epsilon_{2}^{2}(-\log\epsilon_{2}).
\end{equation*}
This concludes the proof.
\end{proof}
Next, we study the difference of $\mu_{\epsilon}$ and $\lambda_{\epsilon}$ on the ingoing cone.
\begin{proposition}
\label{prop: estimate on the difference of Hawking mass}
    On the ingoing cone $\{v = 0\}$, the difference of the mass ratios $\mu_{\epsilon_{1}}-\mu_{\epsilon_{2}}$ satisfies \begin{equation}
        \left\vert\mu_{\epsilon_{1}}-\mu_{\epsilon_{2}}\right\vert(u)\lesssim k\delta \epsilon_{2}^{\frac{1}{2}},\qquad u\leq -\epsilon_{2}^{\frac{1}{2}}.
    \end{equation}
\end{proposition}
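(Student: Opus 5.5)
We plan to reuse the integrating-factor representation from the proof of Proposition~\ref{prop: initial estimate on mu}. On $\{v=0\}$ we have $\nu_\epsilon/r_\epsilon=1/u$, so both $\mu_{\epsilon_1}$ and $\mu_{\epsilon_2}$ solve
\begin{equation*}
\partial_u\mu_{\epsilon}+\left(\frac{1}{u}+u\vert\partial_u\phi_\epsilon\vert^2\right)\mu_\epsilon=u\vert\partial_u\phi_\epsilon\vert^2+\frac{Q_\epsilon^2}{u^3}.
\end{equation*}
Writing $\mu_\epsilon=\mu^{(1)}_\epsilon+\mu^{(2)}_\epsilon$ as before, we get the explicit formulas
\begin{equation*}
\mu^{(1)}_\epsilon-\frac{k^2}{1+k^2}=\frac{(-u)^{-(1+k^2)}}{1+k^2}\int_0^u(-s)^{1+k^2}\mathcal{P}_\epsilon(s)e^{\int_u^s\mathcal{P}_\epsilon}ds,
\end{equation*}
and the analogous formula for $\mu^{(2)}_\epsilon$ with source $Q_\epsilon^2/s^3$. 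We will take the difference of these representations for $\epsilon_1$ and $\epsilon_2$ at a fixed $u\le-\epsilon_2^{1/2}$.

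\textbf{Step 1 (structure of $\mathcal{P}_\epsilon$).} On $[-1,-3\epsilon_2]$ we have $\chi_{\epsilon_1}=\chi_{\epsilon_2}=1$, so $\mathcal{P}_{\epsilon_1}=\mathcal{P}_{\epsilon_2}=\mathcal{P}_{1}$ there, with $\vert\mathcal{P}_1\vert\lesssim\delta k(-s)^{\gamma-1}$. Both functions vanish on $[-2\epsilon_1,0)$. Their differences are therefore supported on $[-3\epsilon_2,-2\epsilon_1]$.

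\textbf{Step 2 (integrating factors).} The integrals $\int_{-3\epsilon_i}^{-2\epsilon_i}\mathcal{P}_{\epsilon_i}$ are both $L_1+O(\delta k\epsilon_i^{\gamma})$, where $L_1$ is the $\epsilon$-independent constant from Proposition~\ref{prop: initial estimate on Omegap}. Hence $e^{\int_u^s\mathcal{P}_{\epsilon_1}}$ and $e^{\int_u^s\mathcal{P}_{\epsilon_2}}$ agree up to $O(\delta k)$ and are uniformly bounded.

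\textbf{Step 3 (splitting the integral).} We split $\int_0^u$ into $[-3\epsilon_2,0]$ and $[u,-3\epsilon_2]$.

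On $[-3\epsilon_2,0]$ we have $\vert\mathcal{P}_{\epsilon_i}\vert\lesssim\delta k/(-s)+\delta^2/(-s)$ by \eqref{eq: estimate on the cutoff}. The contribution of this interval to each $\mu^{(1)}_{\epsilon_i}$ is therefore bounded by
\begin{equation*}
\delta k(-u)^{-(1+k^2)}\int_{-3\epsilon_2}^{0}(-s)^{k^2}ds\lesssim\delta k\left(\frac{\epsilon_2}{-u}\right)^{1+k^2}\le\delta k\,\epsilon_2^{1/2},
\end{equation*}
using $-u\ge\epsilon_2^{1/2}$.

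On $[u,-3\epsilon_2]$ the integrands share the common factor $(-s)^{1+k^2}\mathcal{P}_1(s)$. They differ only through the integrating factor, namely $e^{\int_{-3\epsilon_2}^{0}\mathcal{P}_{\epsilon_1}}$ versus $e^{\int_{-3\epsilon_2}^{0}\mathcal{P}_{\epsilon_2}}$. This is the delicate point. The leading part $L_1$ cancels in the difference, leaving a remainder of size $O(\delta k\epsilon_2^{\gamma})$ plus the $\mathcal{P}_1$-contribution on $[-3\epsilon_2,-3\epsilon_1]$. The latter is bounded by $\delta k\epsilon_2^{\gamma}$. Multiplying by $\int\vert\mathcal{P}_1\vert\lesssim\delta k$, we get a contribution $\lesssim\delta^2k^2\epsilon_2^{\gamma}$. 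If $\gamma<1/2$, this is not obviously $\le\delta k\epsilon_2^{1/2}$. We expect this to be the main obstacle, and we would try to absorb it by assuming $\gamma\ge1/2$ without loss of generality (shrinking $\gamma$ is allowed only in the other direction). Alternatively, we would arrange $\chi_{\epsilon}$ so that $\int\mathcal{P}_{\epsilon}$ over the cutoff region differs by $O(\epsilon)$, since $f_1-1,f_2=O(\epsilon^\gamma)$ there.

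\textbf{Step 4 ($\mu^{(2)}$).} For $\mu^{(2)}$ we use Proposition~\ref{prop: difference estimate for Q} and \eqref{estimate on Qepsilon for u close to singularity}--\eqref{estimate for Qepsilon for u in the faraway region}. We bound $Q_{\epsilon_1}^2-Q_{\epsilon_2}^2$ by $\vert Q_{\epsilon_1}-Q_{\epsilon_2}\vert(\vert Q_{\epsilon_1}\vert+\vert Q_{\epsilon_2}\vert)$. On $[u,-3\epsilon_2]$ this gives
\begin{equation*}
\lesssim\delta^2k^2q_0^2\,\epsilon_2^2\vert\log\epsilon_2\vert\,s^2\vert\log(-s)\vert,
\end{equation*}
and dividing by $s^3$ and integrating produces $\lesssim\delta^2k^2\epsilon_2^{2}\log^2\epsilon_2$. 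The interval near zero contributes $\lesssim\delta^2k^2\epsilon_2^{2}\log^2$ times the factor $(\epsilon_2/(-u))^{1+k^2}$. Both are $\ll k\delta\epsilon_2^{1/2}$ for small $\delta$.

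Collecting Steps 1--4 yields $\vert\mu_{\epsilon_1}-\mu_{\epsilon_2}\vert(u)\lesssim k\delta\epsilon_2^{1/2}$ for $u\le-\epsilon_2^{1/2}$, as claimed.
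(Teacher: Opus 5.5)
Your skeleton matches the paper's proof. You use the same split $\mu_\epsilon=\mu^{(1)}_\epsilon+\mu^{(2)}_\epsilon$, the same integrating-factor representation, the same cut at $s=-3\epsilon_2$, and the same bound $(\epsilon_2/(-u))^{1+k^2}\le\epsilon_2^{1/2}$. However, Step 3 goes wrong at the point you call delicate, and as written the proposal does not prove the statement when $\gamma<1/2$. Neither remedy you offer is admissible. First, $\gamma\ge 1/2$ is not a without-loss-of-generality assumption: $\gamma$ is fixed by $f_1,f_2$ through \eqref{eq: condition on f} and can only be lowered. Second, redesigning $\chi_\epsilon$ changes the construction rather than proving the proposition for the given one.

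The obstacle you identify does not exist. In your own formula the weight is $e^{\int_u^s\mathcal{P}_\epsilon(t)\,dt}$. For $u\le s\le-3\epsilon_2$, the variable $t$ only ranges over $[u,s]\subset[-1,-3\epsilon_2]$, where $\mathcal{P}_{\epsilon_1}\equiv\mathcal{P}_{\epsilon_2}$ by your Step 1. Hence the $\epsilon_1$- and $\epsilon_2$-integrands coincide identically on $[u,-3\epsilon_2]$, and the term you estimate by $\delta^2k^2\epsilon_2^{\gamma}$ is exactly zero. A factor like $e^{\int_{-3\epsilon_2}^{0}\mathcal{P}_{\epsilon_i}}$ would only appear if you kept $e^{-\int_0^u\mathcal{P}_\epsilon}$ and $e^{\int_0^s\mathcal{P}_\epsilon}$ separate, and there it cancels.

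The whole difference therefore lives on $s\in[-3\epsilon_2,0]$; this is the paper's reduction of $\int_u^0$ to $\int_{-3\epsilon_2}^0$. Your first estimate in Step 3 then finishes the $\mu^{(1)}$ bound, absorbing the $\delta^2$ contributions into $\delta k$ via $\delta\ll k$.

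Your Step 4 already relies on this fact: on $[u,-3\epsilon_2]$ you only estimate $Q_{\epsilon_1}^2-Q_{\epsilon_2}^2$, with a tacitly common weight. That is correct, but it is inconsistent with your Step 3 concern. Step 2's comparison with $L_1$ is also unnecessary. On $[-3\epsilon_2,0]$ you only need the weights to be uniformly bounded, which follows from $\int_{-1}^{0}\vert\mathcal{P}_{\epsilon_i}\vert\lesssim\delta k+\delta^{2}$.
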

\begin{proof}
Given the initial data on the ingoing cone $\{v = 0\}$, the equation for $\mu$ can be viewed as a linear ODE. Recall that in the proof of Proposition \ref{prop: initial estimate on mu}, we split this ODE into the following two parts: \begin{align*}
    &\partial_{u}\mu_{\epsilon}^{(1)}+\left(\frac{1}{u}+u\left\vert\partial_{u}\phi_{\epsilon}\right\vert^{2}\right)\mu_{\epsilon}^{(1)}=u\left\vert\partial_{u}\phi_{\epsilon}\right\vert^{2},\\&
    \partial_{u}\mu_{\epsilon}^{(2)}+\left(\frac{1}{u}+u\left\vert\partial_{u}\phi_{\epsilon}\right\vert^{2}\right)\mu_{\epsilon}^{(2)} = \frac{Q_{\epsilon}^{2}}{u^{3}}.
\end{align*}
Then we have \begin{equation*}
    \left\vert\mu_{\epsilon_{1}}-\mu_{\epsilon_{2}}\right\vert\leq\sum_{i = 1}^{2} \left\vert\mu_{\epsilon_{1}}^{(i)}-\mu_{\epsilon_{2}}^{(i)}\right\vert.
\end{equation*}
We first estimate $\mu_{\epsilon_{1}}^{(2)}-\mu_{\epsilon_{2}}^{(2)}$. We have \begin{equation*}
    \mu_{\epsilon}^{(2)} = (-u)^{-(1+k^{2})}\int_{0}^{u}\frac{Q_{\epsilon}^{2}(s)}{s^{3}}(-s)^{1+k^{2}}e^{\int_{u}^{s}\mathcal{P}_{\epsilon}(t)dt}ds,
\end{equation*}
where $\mathcal{P}_{\epsilon}$ is defined in~\eqref{eq: definition of Pepsilon}. Therefore, we have 
\begin{equation}
\label{eq: equation to estimate the difference of mu}\begin{aligned}
\left\vert\mu_{\epsilon_{1}}^{(2)}-\mu_{\epsilon_{2}}^{(2)}\right\vert\leq& (-u)^{-(1+k^{2})}\int_{u}^{0}(-s)^{-2+k^{2}}\left\vert Q_{\epsilon_{1}}^{2}e^{\int_{u}^{s}\mathcal{P}_{\epsilon
    _{1}}(t)dt}-Q_{\epsilon_{2}}^{2}e^{\int_{u}^{s}\mathcal{P}_{\epsilon_{2}}(t)dt}\right\vert ds\\\leq&(-u)^{-(1+k^{2})}\int_{u}^{-2\epsilon_{1}}(-s)^{-2+k^{2}}\left\vert Q_{\epsilon_{1}}^{2}-Q_{\epsilon_{2}}^{2}\right\vert e^{\int_{u}^{s}\mathcal{P}_{\epsilon_{1}}(t)dt}ds\\&+(-u)^{-(1+k^{2})}\int_{u}^{0}(-s)^{-2+k^{2}}Q_{\epsilon_{2}}^{2}\left\vert e^{\int_{u}^{s}\mathcal{P}_{\epsilon_{1}}(t)dt}-e^{\int_{u}^{s}\mathcal{P}_{\epsilon_{2}}(t)dt}\right\vert ds
\end{aligned}
\end{equation}
For the first term on the right-hand side of~\eqref{eq: equation to estimate the difference of mu}, we have \begin{align*}
    (-u)^{-(1+k^{2})}\int_{u}^{-2\epsilon_{1}}(-s)^{-2+k^{2}}\vert Q_{\epsilon_{1}}^{2}-Q_{\epsilon_{2}}^{2}\vert e^{\int_{u}^{s}\mathcal{P}_{\epsilon_{1}}(t)dt}ds&\lesssim q_{0}^{2}\delta^{2}k^{2}u^{2}(\log(-u))^{2},\quad -3\epsilon_{2}\leq u\leq-2\epsilon_{1},\\(-u)^{-(1+k^{2})}\int_{u}^{-2\epsilon_{1}}(-s)^{-2+k^{2}}\vert Q_{\epsilon_{1}}^{2}-Q_{\epsilon_{2}}^{2}\vert e^{\int_{u}^{s}\mathcal{P}_{\epsilon_{1}}(t)dt}ds&\lesssim q_{0}^{2}\delta^{2}k^{2}\epsilon_{2}^{2}(\log(-\epsilon_{2}))^{2},\quad -1\leq u\leq-3\epsilon_{2}.
\end{align*} 
Next, we estimate $\mu_{\epsilon}^{(1)}$. Recall that we have \begin{equation*}
   \mu_{\epsilon}^{(1)}-\frac{k^{2}}{1+k^{2}}= \frac{1}{1+k^{2}}(-u)^{-(1+k^{2})}\int_{0}^{u}(-s)^{1+k^{2}}\mathcal{P}_{\epsilon}(s) e^{\int_{u}^{s}\mathcal{P}_{\epsilon}(t)dt}ds.
\end{equation*}
For $u\in[-2\epsilon,0)$, we have that $\mu_{\epsilon}^{(1)}\equiv \frac{k^{2}}{1+k^{2}}$. For $u\in[-1,-\epsilon_{2}^{\frac{1}{2}}]$, we have \begin{equation}
\begin{aligned}
    \left\vert\mu_{\epsilon_{1}}^{(1)}-\mu_{\epsilon_{2}}^{(1)}\right\vert\lesssim& (-u)^{-(1+k^{2})}\int_{u}^{0}(-s)^{1+k^{2}}\left\vert\mathcal{P}_{\epsilon_{1}}(s)e^{\int_{u}^{s}\mathcal{P}_{\epsilon_{1}}(t)dt}-\mathcal{P}_{\epsilon_{2}}(s)e^{\int_{u}^{s}\mathcal{P}_{\epsilon_{2}}(t)dt}\right\vert ds\\\lesssim&
    (-u)^{-(1+k^{2})}\int_{-3\epsilon_{2}}^{0}(-s)^{1+k^{2}}\left\vert\mathcal{P}_{\epsilon_{1}}(s)e^{\int_{u}^{s}\mathcal{P}_{\epsilon_{1}}(t)dt}-\mathcal{P}_{\epsilon_{2}}(s)e^{\int_{u}^{s}\mathcal{P}_{\epsilon_{2}}(t)dt}\right\vert ds \\\lesssim& k\delta (-u)^{-(1+k^{2})}\epsilon_{2}^{1+k^{2}}\lesssim k\delta \epsilon_{2}^{\frac{1+k^{2}}{2}}.
    \end{aligned}
    \label{eq: estimate on mu1}
\end{equation} 
Putting everything together, we have \begin{equation*}
    \left\vert\mu_{\epsilon_{1}}-\mu_{\epsilon_{2}}\right\vert(u)\lesssim k\delta \epsilon_{2}^{\frac{1}{2}},\qquad u\leq -\epsilon_{2}^{\frac{1}{2}}.
\end{equation*}
\end{proof}
\begin{proposition}
\label{prop: difference estimate for Omega and lambda}
On the ingoing cone $\{v = 0\}$, we have the following estimates on $\Omega_{\epsilon_{1}}^{2}-\Omega_{\epsilon_{2}}^{2}$ and $\lambda_{\epsilon_{1}}-\lambda_{\epsilon_{2}}$:  \begin{align}
    &\left\vert\Omega_{\epsilon_{1}}^{2}-\Omega_{\epsilon_{2}}^{2}\right\vert(u,0)\lesssim k\delta(-u)^{k^{2}}\epsilon_{2}^{\gamma},\quad u\in[-1,-3\epsilon_{2}],\\&
    \left\vert\lambda_{\epsilon_{1}}-\lambda_{\epsilon_{2}}\right\vert(u,0)\lesssim k\delta\epsilon_{2}^{\min\{\gamma,\frac{1}{2}\}}(-u)^{k^{2}},\quad u\in[-1,-\epsilon_{2}^{\frac{1}{2}}].
\end{align}
\end{proposition}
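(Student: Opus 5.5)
I would follow the proof of Proposition~\ref{prop: initial estimate on Omegap}, now applied to the difference of two solutions. On $\{v=0\}$ the Raychaudhuri equation gives, for each $\epsilon$,
\begin{equation*}
\Omega_{\epsilon}^{2}(u,0)=\widetilde{\Omega}^{2}(u,0)\,e^{\int_{0}^{u}\mathcal{P}_{\epsilon}(s)ds},
\end{equation*}
with $\mathcal{P}_{\epsilon}$ as in~\eqref{eq: definition of Pepsilon}. Hence
\begin{equation*}
\Omega_{\epsilon_{1}}^{2}-\Omega_{\epsilon_{2}}^{2}=\widetilde{\Omega}^{2}(u,0)\Bigl(e^{\int_{0}^{u}\mathcal{P}_{\epsilon_{1}}}-e^{\int_{0}^{u}\mathcal{P}_{\epsilon_{2}}}\Bigr).
\end{equation*}
Since $\widetilde{\Omega}^{2}(u,0)\approx(-u)^{k^{2}}$ and both exponents are $O(k\delta)$, it suffices to show
\begin{equation*}
\Bigl|\int_{0}^{u}\mathcal{P}_{\epsilon_{1}}-\int_{0}^{u}\mathcal{P}_{\epsilon_{2}}\Bigr|\lesssim k\delta\epsilon_{2}^{\gamma}\quad\text{for } u\le-3\epsilon_{2}.
\end{equation*}
This follows from the mean value inequality for the exponential.

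To prove this integral bound, I split $\mathcal{P}_{\epsilon}=-2k_{1}\delta\partial_{u}\chi_{\epsilon}+\delta^{2}u(\partial_{u}\chi_{\epsilon})^{2}+\mathcal{P}_{1,\epsilon}$. The first two pieces are supported in $[-3\epsilon,-2\epsilon]$. Their integral over that interval is the constant $L_{1}$ from the proof of Proposition~\ref{prop: initial estimate on Omegap}, and by the scaling $\chi_{\epsilon}=\chi((u+2\epsilon)/(2\epsilon))$ it does not depend on $\epsilon$. So for $u\le-3\epsilon_{2}$ these contributions cancel exactly in the difference.

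For $\mathcal{P}_{1,\epsilon}$, the terms that are products of $\partial_{u}((f_{1}-1)\chi_{\epsilon})$ or $\partial_{u}(f_{2}\chi_{\epsilon})$ coincide for $\epsilon_{1}$ and $\epsilon_{2}$ wherever both cut-offs equal $1$, i.e.\ on $[-1,-3\epsilon_{2}]$. The difference is therefore an integral over $[-3\epsilon_{2},0]$ only. There I bound each piece using~\eqref{eq: condition on f} and~\eqref{eq: estimate on the cutoff}: $|\partial_{u}((f_{1}-1)\chi_{\epsilon})|\lesssim(-u)^{\gamma-1}$, and similarly for $f_{2}$. This gives a contribution $\lesssim k\delta\int_{-3\epsilon_{2}}^{0}(-s)^{\gamma-1}ds\lesssim k\delta\epsilon_{2}^{\gamma}$. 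The quadratic $\delta^{2}$ terms are handled the same way and are smaller. This yields the $\Omega^{2}$ estimate on $[-1,-3\epsilon_{2}]$.

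For $\lambda$, I would use $\lambda_{\epsilon}=\frac{\Omega_{\epsilon}^{2}}{4\nu_{\epsilon}}(1-\mu_{\epsilon})$ with $\nu_{\epsilon}(u,0)=-1$ for both solutions, since the gauge $r(u,0)=\widetilde{r}(u,0)$ is the same. This gives
\begin{equation*}
|\lambda_{\epsilon_{1}}-\lambda_{\epsilon_{2}}|\lesssim|\Omega_{\epsilon_{1}}^{2}-\Omega_{\epsilon_{2}}^{2}|+\Omega_{\epsilon_{2}}^{2}\,|\mu_{\epsilon_{1}}-\mu_{\epsilon_{2}}|.
\end{equation*}
On $u\le-\epsilon_{2}^{1/2}$ (which lies inside $u\le-3\epsilon_{2}$ for small $\epsilon_{2}$), the first term is $\lesssim k\delta(-u)^{k^{2}}\epsilon_{2}^{\gamma}$. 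For the second term, Proposition~\ref{prop: estimate on the difference of Hawking mass} gives $\lesssim(-u)^{k^{2}}k\delta\epsilon_{2}^{1/2}$. Summing the two gives the $\epsilon_{2}^{\min\{\gamma,1/2\}}$ bound.

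I expect the main obstacle to be the exact cancellation of the cut-off contributions, in particular showing that $L_{1}$ is truly independent of $\epsilon$. The integrand has $\delta^{2}u(\partial_{u}\chi_{\epsilon})^{2}$, and under the substitution $u=-2\epsilon-2\epsilon t$ this becomes $\delta^{2}(-1-t)\chi'(t)^{2}\cdot\frac{1}{2\epsilon}\cdot 2\epsilon$ times constants. This needs a careful check, since the factor $u$ scales like $\epsilon$ while $(\partial_{u}\chi_{\epsilon})^{2}$ scales like $\epsilon^{-2}$. The cross terms where $f_{i}$ multiplies $\partial_{u}\chi_{\epsilon}$ must also be checked on the transition interval. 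If the cancellation fails, I would instead bound the residual directly by $k\delta\epsilon_{2}^{\gamma}$, using that $f_{1}-1=O((-u)^{\gamma})$ there.
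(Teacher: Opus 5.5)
Your proposal is correct and follows the paper's proof essentially step for step. The paper also writes $\Omega^2_\epsilon(u,0)=\widetilde\Omega^2(u,0)e^{\int_0^u\mathcal{P}_\epsilon}$ and uses the $\epsilon$-independence of the cut-off contribution $L_1$, so that only the $\mathcal{P}_{1,\epsilon}$ integrals over $[-3\epsilon_2,0]$ differ; it then gets the $\lambda$ bound from $1-\mu=-4\lambda\nu/\Omega^2$ together with Proposition~\ref{prop: estimate on the difference of Hawking mass}.

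The scaling check you flagged goes through: with $u=2\epsilon(t-1)$ one has $u(\partial_u\chi_\epsilon)^2\,du=(t-1)\chi'(t)^2\,dt$, so $L_1$ is indeed independent of $\epsilon$. The cross term $2\delta^2u\,\partial_u\chi_\epsilon\,\partial_u((f_1-1)\chi_\epsilon)$, which the paper's splitting silently omits, is $O(\delta^2\epsilon_2^\gamma)$, exactly as your fallback predicts. This is absorbed into $k\delta\epsilon_2^\gamma$ under the standing assumption $\delta\ll k$.
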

\begin{proof}
Recall that on the ingoing cone $\{v = 0\}$, we have 
   \begin{equation*}
        \Omega_{\epsilon}^{2}(u,0) = \widetilde{\Omega}^{2}(u,0)e^{\int_{0}^{u}\mathcal{P}_{\epsilon}(s)ds}.
    \end{equation*}
    Then for $u\leq-3\epsilon_{2}$, we have \begin{equation*}
        \left\vert\Omega_{\epsilon_{1}}^{2}-\Omega_{\epsilon_{2}}^{2}\right\vert\lesssim \widetilde{\Omega}^{2}(u,0)e ^{L_{1}}\left\vert e^{\int_{0}^{-3\epsilon_{2}}\mathcal{P}_{1,\epsilon_{1}}(s)ds}-e^{\int_{0}^{-3\epsilon_{2}}\mathcal{P}_{1,\epsilon_{2}}(s)ds}\right\vert\lesssim (-u)^{k^{2}}k\delta \epsilon_{2}^{\gamma},\quad u\in[-1,-3\epsilon_{2}].
    \end{equation*}
    Since $1-\mu = \frac{4\lambda\nu}{\Omega^{2}}$, we have \begin{equation*}
        \left\vert\lambda_{\epsilon_{1}}-\lambda_{\epsilon_{2}}\right\vert(u,0)\lesssim k\delta \epsilon_{2}^{\min\{\gamma,\frac{1}{2}\}}(-u)^{k^{2}},\quad u\in[-1,-\epsilon_{2}^{\frac{1}{2}}].
    \end{equation*}
\end{proof}
For $\partial_{u}\phi_{\epsilon}$ on the ingoing cone, trivially we have \begin{equation*}
    \left\vert\partial_{u}\phi_{1,\epsilon_{1}}-\partial_{u}\phi_{1,\epsilon_{2}}\right\vert = 0,\quad \left\vert\partial_{u}\phi_{2,\epsilon_{1}}-\partial_{u}\phi_{2,\epsilon_{2}}\right\vert = 0,\quad u\leq -\epsilon_{2}^{\frac{1}{2}}.
\end{equation*}
Next, we consider the difference of $\partial_{v}\phi_{\epsilon}$.
\begin{proposition}
\label{prop: difference estimate for dvphi}
    On the ingoing cone $\{v = 0\}$, we have the following estimate on $\partial_{v}\phi_{\epsilon_{1}}-\partial_{v}\phi_{\epsilon_{2}}$: \begin{align}
       & (-u)^{q_{k}}\left\vert\partial_{v}\phi_{1,\epsilon_{1}}-\partial_{v}\phi_{1,\epsilon
        _{2}}\right\vert\lesssim \delta \epsilon_{2}^{\min\{\gamma,\frac{1}{2}\}},\quad u\in[-1,-\epsilon_{2}^{\frac{1}{2}}],\label{eq: estimate on the difference of dvphi on the ingoing cone}\\&
        (-u)^{q_{k}}\left\vert\partial_{v}\phi_{2,\epsilon_{1}}-\partial_{v}\phi_{2,\epsilon_{2}}\right\vert\lesssim \delta \epsilon_{2}^{\min\{\gamma,\frac{1}{2}\}},\quad u\in[-1,-\epsilon_{2}^{\frac{1}{2}}].\label{eq: estimate on the difference of dvphi2 on the ingoing cone}
    \end{align}
\end{proposition}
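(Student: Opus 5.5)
We mirror the proof of Proposition~\ref{prop: initial estimate on dvphi} and apply it to the difference of two solutions instead of to the perturbation. On the ingoing cone $\{v=0\}$ we have $r_{\epsilon_1}=r_{\epsilon_2}=\widetilde r=\mathring{\widetilde r}(0)(-u)$. Both solutions satisfy the reduced transport equations \eqref{eq: reduction of dvphi1 on the ingoing cone}--\eqref{eq: reduction of dvphi2 on the ingoing cone}, where we write them in the unreduced form
\begin{equation*}
\partial_u\partial_v(r\phi_{1,\epsilon})=\frac{q_0}{u}\frac{\lambda_\epsilon}{1-\mu_\epsilon}Q_\epsilon\phi_{2,\epsilon}-\lambda_\epsilon\partial_u\phi_{1,\epsilon},
\end{equation*}
using $A_{u}(u,0)=0$, $\nu=-1$ and $\partial_u\partial_v r=-\frac{\lambda\nu}{r}\frac{\mu}{1-\mu}-\frac{\lambda\nu Q^2}{(1-\mu)r^3}$; the analogous equation holds for $\phi_{2,\epsilon}$. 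Subtracting the equations for $\epsilon_1$ and $\epsilon_2$ gives a transport equation in $u$ for $D_i:=\partial_v(r\phi_{i,\epsilon_1})-\partial_v(r\phi_{i,\epsilon_2})$. Its initial value is $D_i=0$ for $u\in[-2\epsilon_1,0)$, since the two solutions coincide with the self-similar one there.

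We first split the integration interval. For $u\le-\epsilon_2^{1/2}$ we write
\begin{equation*}
D_i(u)=-\int_u^{0}\partial_uD_i\,ds=-\int_{-3\epsilon_2}^{0}\partial_uD_i\,ds-\int_{u}^{-3\epsilon_2}\partial_uD_i\,ds.
\end{equation*}
The piece on $[-3\epsilon_2,0]$ is handled by crude bounds. Each solution separately satisfies the bounds of Proposition~\ref{prop: initial estimate on Q}, Proposition~\ref{prop: initial estimate on mu} and Proposition~\ref{prop: initial estimate on Omegap}. In particular, $|\partial_u\phi_\epsilon|\lesssim k/(-u)+\delta|\partial_u\chi_\epsilon|\lesssim 1/(-u)$ and $\lambda_\epsilon\approx(-u)^{k^2}$. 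Hence the integrand is $O(\delta(-s)^{k^2-1})$ in the cutoff region, plus $O(\delta k q_0^2(-s)^{1+k^2}|\log(-s)|)$ from the charge term. The $\delta$ in the first bound comes from the difference structure: away from the cutoff, $\partial_u\phi$ agrees with $\partial_u\widetilde\phi$ up to $\delta$ terms. Integrating therefore contributes $\lesssim\delta\epsilon_2^{k^2}$. Here we have to be more careful, because this is not small. The fix is to note that the leading cutoff term $-\lambda_\epsilon\delta\partial_u\chi_\epsilon$ integrates to $\delta\lambda(u_*)$ times a number independent of $\epsilon$. After the $\epsilon$-dependent rescaling, the same is true for both $\epsilon_1$ and $\epsilon_2$. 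We therefore compare the two contributions through the scaling $u\mapsto (\epsilon_2/\epsilon_1)u$: $\widetilde\lambda(u,0)\propto(-u)^{k^2}$ and the cutoff is a rescaling of $\chi$. The two integrals then differ by a factor $(\epsilon_2/\epsilon_1)^{k^2}$ times $\epsilon$-powers. This gives a residual of size $\delta\epsilon_2^{k^2}$ relative to $(-u)^{k^2}$. Multiplied by $(-u)^{-k^2}$ with $(-u)\ge\epsilon_2^{1/2}$, this is $\lesssim\delta\epsilon_2^{k^2/2}$. This is the step we expect to be the main obstacle. If the scaling comparison does not close cleanly, we fall back on weighting by $(-u)^{q_k}/r$: dividing by $r\approx(-u)$ converts $\partial_v(r\phi)$ to $\partial_v\phi$, and the near-singularity contribution is damped by the ratio $\epsilon_2/(-u)\le\epsilon_2^{1/2}$.

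On $[u,-3\epsilon_2]$ both cutoffs equal $1$, so $\partial_u\phi_{\epsilon_1}=\partial_u\phi_{\epsilon_2}$ and the difference of the equations is linear in the earlier differences. Proposition~\ref{prop: difference estimate for Omega and lambda} bounds $|\lambda_{\epsilon_1}-\lambda_{\epsilon_2}|\lesssim k\delta\epsilon_2^{\min\{\gamma,1/2\}}(-s)^{k^2}$. Proposition~\ref{prop: estimate on the difference of Hawking mass} bounds the $\mu$-difference by $k\delta\epsilon_2^{1/2}$, and Proposition~\ref{prop: difference estimate for Q} bounds the $Q$-difference by $\delta k|q_0|\epsilon_2^2|\log\epsilon_2|$. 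The difference of $\phi$ itself is obtained by integrating $\partial_u\phi$ from $u=0$, so on this range it is a constant of size $\lesssim\delta$. It multiplies $Q\lambda/u=O(\delta(-s)^{1+k^2}|\log(-s)|)$ and is therefore harmless. In the range $[-\epsilon_2^{1/2},-3\epsilon_2]$, where the $\mu$ and $\lambda$ difference bounds are not available, we again use the crude individual bounds times the short length. Multiplying the integrand by $1/(-s)\cdot\partial_u\widetilde\phi\sim k/(-s)$ and integrating gives $|D_i(u)|\lesssim\delta\epsilon_2^{\min\{\gamma,1/2\}}(-u)^{k^2}$.

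Finally we convert $D_i$ to $\partial_v\phi$. On $v=0$ the $r$'s coincide, so
\begin{equation*}
\partial_v\phi_{i,\epsilon_1}-\partial_v\phi_{i,\epsilon_2}=\frac{1}{r}\Big(D_i-(\lambda_{\epsilon_1}-\lambda_{\epsilon_2})\phi_{i,\epsilon_1}-\lambda_{\epsilon_2}(\phi_{i,\epsilon_1}-\phi_{i,\epsilon_2})\Big).
\end{equation*}
We use $|\phi_{i,\epsilon}|\lesssim k|\log(-u)|$, which is absorbed since $\epsilon_2^{\min\{\gamma,1/2\}}$ can be reduced slightly, or the logarithm is absorbed into the implicit constant for $u\ge-1$. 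With $r\approx(-u)$ and $(-u)^{q_k}(-u)^{k^2}/(-u)=1$, multiplying by $(-u)^{q_k}$ yields \eqref{eq: estimate on the difference of dvphi on the ingoing cone} and \eqref{eq: estimate on the difference of dvphi2 on the ingoing cone}.
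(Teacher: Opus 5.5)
Your route is the same as the paper's. You integrate the $u$-transport equation for $r\partial_v\phi_i$ along $\{v=0\}$, starting from the region where both solutions are self-similar. You then cut out a neighbourhood of $u=0$ and use the difference bounds for $\Omega^2,\mu,\lambda$ beyond it.

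\textbf{Main gap: the near-singularity piece.} Neither of your two remedies closes it.
\begin{itemize}
\item On the transition interval of $\chi_\epsilon$, $\partial_u\chi_\epsilon\sim\epsilon^{-1}$ on a set of length $\sim\epsilon$. Hence $\delta\int\lambda_\epsilon\partial_u\chi_\epsilon\,ds\approx-\delta\lambda_\epsilon(-3\epsilon,0)\sim\delta\epsilon^{k^2}$, not $O(\delta\epsilon)$.
\item After dividing by $r$ and multiplying by $(-u)^{q_k}$, this becomes $\delta(\epsilon_2/(-u))^{k^2}$. So the damping factor $\epsilon_2/(-u)$ of your fallback does not exist.
\item On $(-u)\ge\epsilon_2^{1/2}$ you therefore get only $\delta\epsilon_2^{k^2/2}$, which is exactly what your scaling comparison produces. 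This is not $\lesssim\delta\epsilon_2^{\min\{\gamma,1/2\}}$ unless $\gamma\le k^2/2$.
\end{itemize}

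It cannot be improved. Take $f_1\equiv1$, $f_2\equiv0$, which satisfy \eqref{eq: condition on f} for every $\gamma>0$. Neglecting the charge terms, which are of higher order in $\epsilon$:
\begin{itemize}
\item $\Omega_\epsilon^2/\widetilde\Omega^2$ and $\lambda_\epsilon/\widetilde\lambda$ on $\{v=0\}$ are functions of $u/\epsilon$ alone.
\item Beyond the cutoff, $\lambda_\epsilon\approx e^{-2k_1\delta}\widetilde\lambda$.
\item With $\widetilde\lambda(u,0)=c(-u)^{k^2}$, exact scaling gives $r\partial_v\phi_{1,\epsilon}(u,0)=F(u)+K\epsilon^{k^2}+O(\epsilon^{1+k^2}/(-u))$, with $F$ independent of $\epsilon$ and $K\approx\delta c\,(2k_1^2/k^2-1)$.
\item The $-1$ in $K$ is the direct cutoff term. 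The $2k_1^2/k^2$ comes from $\int(\lambda_\epsilon-\widetilde\lambda)\partial_u\widetilde\phi_1\,ds$, whose lower limit contributes $\sim\epsilon^{k^2}/k^2$.
\end{itemize}
Hence for $k_1\neq k_2$ (and $\delta,k$ small) the difference at $u=-1$ is genuinely of size $|K|\,|\epsilon_1^{k^2}-\epsilon_2^{k^2}|$. The stated bound therefore fails when $\gamma>k^2$.

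What your argument yields, once correctly completed, is the exponent $\min\{\gamma,\frac12,\frac{k^2}{2}\}$. The paper's proof has the same weak point: it bounds the contribution of $[-\epsilon_2^{1/2},0]$ by $\delta\epsilon_2^{k^2/2}$ and then asserts the stated estimate. The weaker exponent is, however, all that the subsequent Cauchy-sequence argument needs.

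\textbf{Intermediate range.} On $[-\epsilon_2^{1/2},-3\epsilon_2]$ the crude individual bounds give no smallness near $u=-\epsilon_2^{1/2}$. The product $|\lambda_{\epsilon_1}-\lambda_{\epsilon_2}|\,|\partial_u\phi|\lesssim k^2\delta(-s)^{k^2-1}$ integrates to $\delta\epsilon_2^{k^2/2}$. The weight $(-u)^{-k^2}$ turns this into $O(\delta)$ at $u=-\epsilon_2^{1/2}$. The shortness of the interval does not help, because the integrand is unbounded. Instead, note that the computation in \eqref{eq: estimate on mu1} gives $|\mu_{\epsilon_1}-\mu_{\epsilon_2}|\lesssim k\delta(\epsilon_2/(-s))^{1+k^2}$ for every $s\le-3\epsilon_2$. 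Combined with the $\Omega^2$ difference bound, this reduces that range to $O(\delta\epsilon_2^{k^2}+\delta\epsilon_2^{\gamma+k^2/2})$.

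\textbf{Equation and conversion step.}
\begin{itemize}
\item $-\lambda\partial_u\phi_1+\dots$ is the right-hand side of $\partial_u(r\partial_v\phi_1)$, not of $\partial_u\partial_v(r\phi_1)$, whose right-hand side is $\phi_1\partial_u\lambda+\dots$. Working with $r\partial_v\phi_i$ makes your final conversion step unnecessary.
\item In that step, $\phi_{\epsilon_1}=\phi_{\epsilon_2}$ holds exactly on $\{v=0\}$ for $u\le-3\epsilon_2$. Using only an $O(\delta)$ bound there, multiplied by $\lambda_{\epsilon_2}/r$ and the weight, would destroy all $\epsilon_2$-smallness.
\end{itemize}
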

\begin{proof}
    The equation for $\partial_{v}\phi_{1}$ and $\partial_{v}\phi_{2}$ on the ingoing cone $\{v = 0\}$ can be reduced to\begin{align*}
        &\partial_{u}\left((-u)\partial_{v}\phi_{1,\epsilon_{1}}-(-u)\partial_{v}\phi_{1,\epsilon_{2}}\right) = -\left(\frac{Q_{\epsilon_{1}}\Omega_{\epsilon_{1}}^{2}}{4r_{\epsilon_{1}}}\phi_{2,\epsilon_{1}}-\frac{Q_{\epsilon_{2}}\Omega_{\epsilon_{2}}^{2}}{4r_{\epsilon_{2}}}\phi_{2,\epsilon_{2}}\right)-\left(\lambda_{\epsilon_{1}}\partial_{u}\phi_{1,\epsilon_{1}}-\lambda_{\epsilon_{2}}\partial_{u}\phi_{1,\epsilon_{2}}\right),\\&\partial_{u}\left((-u)\partial_{v}\phi_{2,\epsilon_{1}}-(-u)\partial_{v}\phi_{2,\epsilon_{2}}\right) = \left(\frac{Q_{\epsilon_{1}}\Omega_{\epsilon_{1}}^{2}}{4r_{\epsilon_{1}}}\phi_{1,\epsilon_{1}}-\frac{Q_{\epsilon_{2}}\Omega_{\epsilon_{2}}^{2}}{4r_{\epsilon_{2}}}\phi_{1,\epsilon_{2}}\right)-\left(\lambda_{\epsilon_{1}}\partial_{u}\phi_{2,\epsilon_{1}}-\lambda_{\epsilon_{2}}\partial_{u}\phi_{2,\epsilon_{2}}\right).
    \end{align*}
Directly integrating the equation for $\partial_{v}\phi_{1,\epsilon_{1}}-\partial_{v}\phi_{1,\epsilon_{2}}$, for $u\leq -\epsilon_{2}^{\frac{1}{2}}$, we have \begin{equation}
\label{eq: step one in estimating the difference of dvphi}
\begin{aligned}
    (-u)\left\vert \partial_{v}\phi_{1,\epsilon_{1}}-\partial_{v}\phi_{1,\epsilon_{2}}\right\vert(u,0)\leq& \int_{u}^{0}\left\vert\frac{Q_{\epsilon_{1}}\Omega_{\epsilon_{1}}^{2}}{4r_{\epsilon_{1}}}\phi_{2,\epsilon_{1}}-\frac{Q_{\epsilon_{2}}\Omega_{\epsilon_{2}}^{2}}{4r_{\epsilon_{2}}}\phi_{2,\epsilon_{2}}\right\vert+\left\vert\lambda_{\epsilon_{1}}\partial_{u}\phi_{1,\epsilon_{1}}-\lambda_{\epsilon_{2}}\partial_{u}\phi_{1,\epsilon_{2}}\right\vert ds\\\lesssim &\int_{-\epsilon_{2}^{\frac{1}{2}}}^{0}\left\vert\frac{Q_{\epsilon_{1}}\Omega_{\epsilon_{1}}^{2}}{4r_{\epsilon_{1}}}\phi_{2,\epsilon_{1}}-\frac{Q_{\epsilon_{2}}\Omega_{\epsilon_{2}}^{2}}{4r_{\epsilon_{2}}}\phi_{2,\epsilon_{2}}\right\vert+\left\vert\lambda_{\epsilon_{1}}\partial_{u}\phi_{1,\epsilon_{1}}-\lambda_{\epsilon_{2}}\partial_{u}\phi_{1,\epsilon_{2}}\right\vert ds\\&+\int_{u}^{-\epsilon_{2}^{\frac{1}{2}}}\left\vert\frac{Q_{\epsilon_{1}}\Omega_{\epsilon_{1}}^{2}}{4r_{\epsilon_{1}}}\phi_{2,\epsilon_{1}}-\frac{Q_{\epsilon_{2}}\Omega_{\epsilon_{2}}^{2}}{4r_{\epsilon_{2}}}\phi_{2,\epsilon_{2}}\right\vert+\left\vert\lambda_{\epsilon_{1}}\partial_{u}\phi_{1,\epsilon_{1}}-\lambda_{\epsilon_{2}}\partial_{u}\phi_{1,\epsilon_{2}}\right\vert ds
    \end{aligned}
\end{equation}
For the first term on the right-hand side of \eqref{eq: step one in estimating the difference of dvphi}, we have \begin{align*}
    \text{First term on the RHS of \eqref{eq: step one in estimating the difference of dvphi}}\lesssim \delta \epsilon_{2}^{\frac{k^{2}}{2}}.
\end{align*}
For the second term on the right-hand side of \eqref{eq: step one in estimating the difference of dvphi}, we have \begin{equation*}
    \text{Second term on the RHS of \eqref{eq: step one in estimating the difference of dvphi}}\lesssim \delta \epsilon_{2}^{\min\{\gamma,\frac{1}{2}\}}\left((-u)^{k^{2}}-\epsilon_{2}^{\frac{k^{2}}{2}}\right).
\end{equation*}
Hence, we have \begin{equation*}
    (-u)^{q_{k}}\left\vert\partial_{v}\phi_{1,\epsilon_{1}}-\partial_{v}\phi_{1,\epsilon_{2}}\right\vert\lesssim \delta \epsilon_{2}^{\min\{\gamma,\frac{1}{2}\}}.
\end{equation*}
This concludes the proof of \eqref{eq: estimate on the difference of dvphi on the ingoing cone}. The estimate~\eqref{eq: estimate on the difference of dvphi2 on the ingoing cone} follows similarly.
\end{proof}
For $\zeta>0$, let $$\Theta_{\zeta,\epsilon}: = \{-1\leq u\leq-\epsilon,\ -\min\{(-u)^{q_{k}},\zeta\}\leq v\leq0\}.$$
Since we have established the estimates for $\mu_{\epsilon_{1}}-\mu_{\epsilon_{2}}$, $\Omega_{\epsilon_{1}}^{2}-\Omega_{\epsilon_{2}}^{2}$, $\lambda_{\epsilon_{1}}-\lambda_{\epsilon_{2}}$, $\partial_{u}(\phi_{\epsilon_{1}}-\phi_{\epsilon_{2}})$, $\partial_{v}(\phi_{\epsilon_{1}}-\phi_{\epsilon_{2}})$, and $Q_{\epsilon_{1}}-Q_{\epsilon_{2}}$ in Proposition \ref{prop: difference estimate for Q}, Proposition \ref{prop: estimate on the difference of Hawking mass}, Proposition \ref{prop: difference estimate for Omega and lambda}, and Proposition \ref{prop: difference estimate for dvphi}, we assume that for some $\zeta>0$, there exists a universal constant $B$ such that \begin{equation}
    \mathcal{R}^{\epsilon_{1},\epsilon_{2}}\left(\Theta_{\zeta,\epsilon^{\frac{1}{2}}_{2}}\right)\leq 2B\delta\epsilon_{2}^{\min\{\gamma,\frac{1}{2}\}}.\label{eq: bootstrap assumption for difference}
\end{equation} 
We will prove that this bound can be improved to \begin{equation*}
    \mathcal{R}^{\epsilon_{1},\epsilon_{2}}\left(\Theta_{\zeta,\epsilon^{\frac{1}{2}}_{2}}\right)\leq B\delta\epsilon_{2}^{\min\{\gamma,\frac{1}{2}\}},
\end{equation*}  
thus establishing the global $\epsilon_{2}$-smallness of $\mathcal{R}^{\epsilon_{1},\epsilon_{2}}$ in the region $\mathcal{Q}\backslash\mathcal{Q}^{\epsilon_{2}^{\frac{1}{2}}}$.

For any $\Psi$ that is a combination of the quantities in $\{r,\nu,\lambda,\mu,\phi,\partial_{u}\phi,\partial_{v}\phi,Q,A\}$, we define $$\Psi_{\epsilon_{1},\epsilon_{2}}: = \Psi_{\epsilon_{1}}-\Psi_{\epsilon_{2}}.$$

Analogously to the derivation of~\eqref{eq: difference equation for lambdap in bootstrap argument}-\eqref{eq: difference equation for phi2 in the bootstrap argument}, we obtain the difference equations between  $(g_{\epsilon_{1}},\phi_{\epsilon_{1}},Q_{\epsilon_{1}})$ and $(g_{\epsilon_{2}},\phi_{\epsilon_{2}},Q_{\epsilon_{2}})$. \begin{align}
    \partial_{u}\lambda_{\epsilon_{1},\epsilon_{2}} =& \left(\frac{\mu}{1-\mu}\frac{\lambda\nu}{r}\right)_{\epsilon_{1},\epsilon_{2}}-\left(\frac{\lambda\nu}{(1-\mu)r}\frac{Q^{2}}{r^{2}}\right)_{\epsilon_{1},\epsilon_{2}},\allowdisplaybreaks\label{eq: epsilon1 and epsilon2 difference equation first}\allowdisplaybreaks\\
    \partial_{v}\nu_{\epsilon_{1},\epsilon_{2}} = &\left(\frac{\mu}{1-\mu}\frac{\lambda\nu}{r}\right)_{\epsilon_{1},\epsilon_{2}}-\left(\frac{\lambda\nu}{(1-\mu)r}\frac{Q^{2}}{r^{2}}\right)_{\epsilon_{1},\epsilon_{2}},\allowdisplaybreaks\\
    \partial_{u}\partial_{v}\left(r_{\epsilon_{1}}(\phi_{1})_{\epsilon_{1},\epsilon_{2}}\right) =& -\nu_{\epsilon_{1},\epsilon_{2}}\partial_{v}\phi_{1,\epsilon_{2}}-\lambda_{\epsilon_{1},\epsilon_{2}}\partial_{u}\phi_{1,\epsilon_{2}}-(\phi_{1})_{\epsilon_{1},\epsilon_{2}}\partial_{u}\partial_{v}r_{\epsilon_{1}}\nonumber\\&+q_{0}\left(A_{u}\partial_{v}(r\phi_{2})+\frac{Q}{r}\frac{\lambda\nu}{1-\mu}\phi_{2}\right)_{\epsilon_{1},\epsilon_{2}},\allowdisplaybreaks\\
    \partial_{u}\partial_{v}\left(r_{\epsilon_{1}}(\phi_{2})_{\epsilon_{1},\epsilon_{2}}\right) =& -\nu_{\epsilon_{1},\epsilon_{2}}\partial_{v}\phi_{2,\epsilon_{2}}-\lambda_{\epsilon_{1},\epsilon_{2}}\partial_{u}\phi_{2,\epsilon_{2}}-(\phi_{2})_{\epsilon_{1},\epsilon_{2}}\partial_{u}\partial_{v}r_{\epsilon_{1}}\nonumber\\&-q_{0}\left(A_{u}\partial_{v}(r\phi_{1})+\frac{Q}{r}\frac{\lambda\nu}{1-\mu}\phi_{1}\right)_{\epsilon_{1},\epsilon_{2}},\allowdisplaybreaks\\
\partial_{v}\mu_{\epsilon_{1},\epsilon_{2}}+\left(\frac{\lambda_{\epsilon_{1}}}{r_{\epsilon_{1}}}+\frac{r_{\epsilon_{1}}}{\lambda_{\epsilon_{1}}}\left\vert \partial_{v}\phi_{\epsilon_{1}}\right\vert^{2} \right)\mu_{\epsilon_{1},\epsilon_{2}}=&-\left(\frac{\lambda}{r}+\frac{r}{\lambda}\left\vert \partial_{v}\phi\right\vert^{2}\right)_{\epsilon_{1},\epsilon_{2}}\mu_{\epsilon_{2}}+\left(\frac{r}{\lambda}\left\vert \partial_{v}\phi\right\vert^{2}\right)_{\epsilon_{1},\epsilon_{2}}\nonumber\\&+\left(\frac{\lambda Q^{2}}{r^{3}}\right)_{\epsilon_{1},\epsilon_{2}},\allowdisplaybreaks\label{eq: dvmu epsilon1 epsilon2 difference equation}\\\partial_{u}\mu_{\epsilon_{1},\epsilon_{2}}+\left(\frac{\nu_{\epsilon_{1}}}{r_{\epsilon_{1}}}+\frac{r_{\epsilon_{1}}}{\nu_{\epsilon_{1}}}\left\vert D_{u}\phi_{\epsilon_{1}}\right\vert^{2} \right)\mu_{\epsilon_{1},\epsilon_{2}}=&-\left(\frac{\nu}{r}+\frac{r}{\nu}\left\vert D_{u}\phi\right\vert^{2}\right)_{\epsilon_{1},\epsilon_{2}}\mu_{\epsilon_{2}}+\left(\frac{r}{\nu}\left\vert D_{u}\phi\right\vert^{2}\right)_{\epsilon_{1},\epsilon_{2}}\nonumber\\&+\left(\frac{\nu Q^{2}}{r^{3}}\right)_{\epsilon_{1},\epsilon_{2}},\allowdisplaybreaks\\
\partial_{v}Q_{\epsilon_{1},\epsilon_{2}} = &q_{0}\left(r^{2}(\phi_{2}\partial_{v}\phi_{1}-\phi_{1}\partial_{v}\phi_{2})\right)_{\epsilon_{1},\epsilon_{2}},\allowdisplaybreaks\\
\partial_{v}(A_{u})_{\epsilon_{1},\epsilon_{2}}=&2\left(\frac{Q\lambda\nu}{(1-\mu)r^{2}}\right)_{\epsilon_{1},\epsilon_{2}}.
\label{eq: epsilon 1 and epsilon 2 equation last}
\end{align}
 The following proposition is a direct consequence of the bootstrap assumption \eqref{eq: bootstrap assumption for difference}.
\begin{proposition}
    Under the bootstrap assumption \eqref{eq: bootstrap assumption for difference}, for $\delta$ and $\epsilon_{2}$ sufficiently small, in the region $\Theta_{\zeta,\epsilon_{2}^{\frac{1}{2}}}$, we have \begin{align*}
        &\left\vert r_{\epsilon_{1},\epsilon_{2}}\right\vert(u,v)\lesssim B\delta \epsilon_{2}^{\min\{\gamma,\frac{1}{2}\}}(-u),\quad \left\vert(\phi_{1})_{\epsilon_{1},\epsilon_{2}}\right\vert(u,v)\lesssim B\delta\epsilon_{2}^{\min\{\gamma,\frac{1}{2}\}},\quad \left\vert(\phi_{2})_{\epsilon_{1},\epsilon_{2}}\right\vert\lesssim B\delta\epsilon_{2}^{\min\{\gamma,\frac{1}{2}\}},\\&
        \left\vert\mu_{\epsilon_{1},\epsilon_{2}}\right\vert\lesssim Bk\delta \epsilon_{2}^{\min\{\gamma,\frac{1}{2}\}},\quad \left\vert (A_{u})_{\epsilon_{1},\epsilon_{2}}\right\vert\lesssim B\delta k\epsilon_{2}^{\min\{\gamma,\frac{1}{2}\}}. 
    \end{align*}
\end{proposition}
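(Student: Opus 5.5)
The plan mirrors the proof of Proposition~\ref{prop: consequence of the bootstrap in the existence argument}. Each quantity not directly controlled by $\mathcal{R}^{\epsilon_{1},\epsilon_{2}}$ is obtained by integrating a transport equation from a hypersurface where the difference is already known to be small. These hypersurfaces are either the ingoing cone $\{v=0\}$, where Propositions~\ref{prop: difference estimate for Q}--\ref{prop: difference estimate for dvphi} apply, or the center $\Gamma$. Throughout I use the a priori bounds $\lambda_{\epsilon_i}\approx(-u)^{k^{2}}$, $-\nu_{\epsilon_i}\approx1$, $r_{\epsilon_i}\approx(-u)(z+1)$ and $\mu_{\epsilon_i}\lesssim k^{2}$, together with $|Q_{\epsilon_i}|\lesssim\delta u^{2}|\log(-u)|$, which hold uniformly in $\epsilon$ by the closed bootstrap (Proposition~\ref{prop: closing the bootstrap}).

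\textbf{Step 1: $r$.} Since $r_{\epsilon_{1}}=r_{\epsilon_{2}}=0$ on the common gauge center $\Gamma=\{z=-1\}$, we can write $r_{\epsilon_{1},\epsilon_{2}}(u,v)=\int_{-(-u)^{q_{k}}}^{v}\lambda_{\epsilon_{1},\epsilon_{2}}\,d\tilde v$. The bootstrap assumption \eqref{eq: bootstrap assumption for difference} bounds the integrand by $2B\delta\epsilon_{2}^{\min\{\gamma,1/2\}}(-u)^{k^{2}}$, and the $v$-length of the integration range is at most $(-u)^{q_{k}}$. This gives $|r_{\epsilon_{1},\epsilon_{2}}|\lesssim B\delta\epsilon_{2}^{\min\{\gamma,1/2\}}(-u)$.

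\textbf{Step 2: $\phi$.} On $\{v=0\}$, for $u\le-\epsilon_{2}^{1/2}$, the data coincide, except for the value of $\phi$, which is fixed at the regular end. So the difference $(\phi_i)_{\epsilon_{1},\epsilon_{2}}(u,0)$ equals its value at $u=-\epsilon_{2}^{1/2}$ plus an integral of $\partial_u$-differences, and the latter vanish there. I would bound the value at $u=-\epsilon_2^{1/2}$ by integrating $\partial_u\phi_{\epsilon_j}$ over $[-\epsilon_2^{1/2},0)$. Alternatively, I would note that both perturbations equal $\delta f_i$ on $[-1,-3\epsilon_2]$, so the difference vanishes there. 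Then I integrate $\partial_v(\phi_i)_{\epsilon_{1},\epsilon_{2}}$ in $v$. The bootstrap weight $(-u)^{q_k}$ exactly compensates the $v$-range $\le(-u)^{q_k}$, so the integral is $\lesssim B\delta\epsilon_2^{\min\{\gamma,1/2\}}$.

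\textbf{Step 3: $\mu$.} I integrate \eqref{eq: dvmu epsilon1 epsilon2 difference equation} in $v$ from $\{v=0\}$ using the integrating factor $e^{\int r|\partial_v\phi|^2/\lambda}\,r_{\epsilon_1}$, which is bounded exactly as in Proposition~\ref{prop: consequence of the bootstrap in the existence argument}. The initial value is $\lesssim k\delta\epsilon_2^{1/2}$ by Proposition~\ref{prop: estimate on the difference of Hawking mass}. The right-hand side splits into three terms:
\begin{itemize}
\item the term multiplied by $\mu_{\epsilon_2}\lesssim k^2$;
\item the term $(r|\partial_v\phi|^2/\lambda)_{\epsilon_1,\epsilon_2}$, which is linear in differences with coefficient $|\partial_v\widetilde\phi|\lesssim k$;
\item the $Q^2$ term, which is quadratic in $\delta$.
\end{itemize}
Together these give $\lesssim B(k+\delta)\delta\epsilon_2^{\min\{\gamma,1/2\}}$, which becomes $Bk\delta\epsilon_2^{\min\{\gamma,1/2\}}$ once $\delta\ll k$.

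\textbf{Step 4: $A_u$.} I integrate \eqref{eq: epsilon 1 and epsilon 2 equation last} from $v=0$, where the normalization gives $A_u=0$. The difference of $Q\lambda\nu/((1-\mu)r^2)$ is bounded by the $Q$-difference term, of size $B\delta\epsilon_2^{\min\{\gamma,1/2\}}u^2|\log(-u)|$, times $(-u)^{k^2}/r^2$, plus the product of $|Q|$ with the differences of $\lambda,\nu,\mu,r$. The factor $k$ comes from $|Q_{\epsilon_i}|\lesssim k\delta u^2|\log(-u)|$ (the improved bound in Proposition~\ref{prop: closing the bootstrap}) and from the $v$-length $(-u)^{q_k}$.

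\textbf{Main obstacle.} The main difficulty is the $1/r^2$ singularity near $\Gamma$ in the $A_u$ and $\mu$ integrands. I would handle it as in the existence bootstrap, by integrating from $v=0$ and using that $(-u)^{q_k}|z|$ absorbs it. If that fails, I would use the $Q/r^2$ structure near the center, where $Q=O(r^3)$.
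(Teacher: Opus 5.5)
Your Steps 2 and 3 match the paper's argument: you integrate in $v$ from $\{v=0\}$. In Step 2, keep only your second alternative. The ingoing data for $\phi$ of the two solutions coincide for $u\le -3\epsilon_2$, hence on all of $\{u\le-\epsilon_2^{1/2}\}$.

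\textbf{Step 1 fails as written.} You integrate $\lambda_{\epsilon_1,\epsilon_2}$ up from $\Gamma$, copying Proposition~\ref{prop: consequence of the bootstrap in the existence argument}. That was legitimate there, because that bootstrap region contained whole $v$-segments down to the center. Here \eqref{eq: bootstrap assumption for difference} holds only on $\Theta_{\zeta,\epsilon_2^{1/2}}$, i.e.\ for $v\ge-\min\{(-u)^{q_k},\zeta\}$. Whenever $(-u)^{q_k}>\zeta$, the part $[-(-u)^{q_k},-\zeta)$ of your integration range lies outside this region, and there you have no $\epsilon_2$-small bound on $\lambda_{\epsilon_1,\epsilon_2}$. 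This is a backward problem from $\{v=0\}\cup\{u=-\epsilon\}$, so the region toward $\Gamma$ is exactly what the continuity argument in $\zeta$ has not yet reached. The paper integrates from $\{v=0\}$ instead. Since $r_\epsilon(u,0)=\widetilde r(u,0)$ for every $\epsilon$, we have $r_{\epsilon_1,\epsilon_2}(u,0)=0$, and therefore
\[
|r_{\epsilon_1,\epsilon_2}|(u,v)\le\int_v^0|\lambda_{\epsilon_1,\epsilon_2}|\lesssim B\delta\epsilon_2^{\min\{\gamma,\frac12\}}(-u)^{k^2}|v|\le B\delta\epsilon_2^{\min\{\gamma,\frac12\}}(-u).
\]

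\textbf{Step 4 does not produce the factor $k$ you claim.} You bound the main term $Q_{\epsilon_1,\epsilon_2}\lambda\nu/((1-\mu)r^2)$ through $\mathcal{R}^{\epsilon_1,\epsilon_2}_Q\le 2B\delta\epsilon_2^{\min\{\gamma,\frac12\}}$, which carries no $k$. The $v$-length $(-u)^{q_k}$ only trades powers of $(-u)$. The bound $|Q_{\epsilon_i}|\lesssim k\delta u^2|\log(-u)|$ helps only in the cross terms with $\lambda_{\epsilon_1,\epsilon_2},\nu_{\epsilon_1,\epsilon_2},\mu_{\epsilon_1,\epsilon_2},r_{\epsilon_1,\epsilon_2}$. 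What you actually obtain is $B\delta\epsilon_2^{\min\{\gamma,\frac12\}}(-u)|\log(-u)|$.

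The paper gets the $k$, and a much better power of $\epsilon_2$, from Proposition~\ref{prop: difference estimate for Q}: $|Q_{\epsilon_1}-Q_{\epsilon_2}|\lesssim\delta k|q_0|\epsilon_2^2|\log\epsilon_2|$. Then $\int_v^0(-u)^{-2+k^2}\,ds\le(-u)^{-1}\le\epsilon_2^{-1/2}$ yields $\delta k\epsilon_2^{3/2}|\log\epsilon_2|$. That proposition is stated on $\{v=0\}$. To use it inside $\Theta_{\zeta,\epsilon_2^{1/2}}$, propagate it with the $v$-transport equation for $Q_{\epsilon_1,\epsilon_2}$. There, $|\phi_{\epsilon_i}|\lesssim k(1+|\log(-u)|)$ and $\|\partial_v\phi_{\epsilon_i}(u,\cdot)\|_{L^1_v}\lesssim k$ supply the $k$.

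The same point applies in Step 3. $\partial_v\widetilde\phi$ is not pointwise $\lesssim k$: since $\theta_1\to k_1/k^2$, it is of order $k_1k^{-2}(-u)^{-q_k}$ as $z\to0$. It is small only in $L^1_v$, by \eqref{eq: estimate on dvphitilde}, which suffices because you integrate in $v$.

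The $1/r^2$ weight near $\Gamma$ that you call the main obstacle is not treated in the paper's proof either, which bounds $r^{-2}$ by $(-u)^{-2}$. Your fallback $Q=O(r^3)$ would in any case give no $\epsilon_2$-smallness for the difference $Q_{\epsilon_1,\epsilon_2}$.
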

\begin{proof}
    For $r_{\epsilon_{1},\epsilon_{2}}$, we have \begin{equation*}
        \left\vert r_{\epsilon_{1},\epsilon_{2}}\right\vert(u,v)\lesssim \int_{v}^{0}\left\vert\lambda_{\epsilon_{1},\epsilon_{2}}\right\vert(u,s)ds\lesssim \delta \epsilon_{2}^{\min\{\gamma,\frac{1}{2}\}}(-u).
    \end{equation*}
    For $(\phi_{1})_{\epsilon_{1},\epsilon_{2}}$, we have \begin{equation*}
\left\vert(\phi_{1})_{\epsilon_{1},\epsilon_{2}}\right\vert(u,v)\lesssim\int_{v}^{0}\left\vert\partial_{v}\phi_{\epsilon_{1},\epsilon_{2}}\right\vert(u,s)ds\lesssim B\delta \epsilon_{2}^{\min\{\gamma,\frac{1}{2}\}}\int_{v}^{0}(-u)^{-q_{k}}ds\lesssim B\delta \epsilon_{2}^{\min\{\gamma,\frac{1}{2}\}}.
    \end{equation*}
    We can derive the estimate for $(\phi_{2})_{\epsilon_{1},\epsilon_{2}}$ similarly. To show the estimate for $\mu_{\epsilon_{1},\epsilon_{2}}$, using \eqref{eq: dvmu epsilon1 epsilon2 difference equation}, we have \begin{equation}
        \partial_{v}\left(r_{\epsilon_{1}}e^{\int_{0}^{v}\frac{r_{\epsilon_{1}}}{\lambda_{\epsilon_{1}}}\left\vert\partial_{v}\phi_{\epsilon_{1}}\right\vert^{2}}\mu_{\epsilon_{1},\epsilon_{2}}\right) = r_{\epsilon_{1}}e^{\int_{0}^{v}\frac{r_{\epsilon_{1}}}{\lambda_{\epsilon_{1}}}\left\vert\partial_{v}\phi_{\epsilon_{1}}\right\vert^{2}}(\text{RHS of \eqref{eq: dvmu epsilon1 epsilon2 difference equation}}).
    \end{equation}
    Directly integrating the above equation and estimating the right-hand side similar as in the estimate for $\mu_{p}$ in Proposition~\ref{prop: consequence of the bootstrap in the existence argument}, we have \begin{equation*}
        \left\vert\mu_{\epsilon_{1},\epsilon_{2}}\right\vert\lesssim Bk\delta \epsilon_{2}^{\min\{\gamma,\frac{1}{2}\}}.
    \end{equation*}
    Integrating the equation \eqref{eq: epsilon 1 and epsilon 2 equation last} and using Proposition \ref{prop: difference estimate for Q}, we have \begin{equation*}
        \left\vert(A_{u})_{\epsilon_{1},\epsilon_{2}}\right\vert\lesssim B\delta k \epsilon_{2}^{2}\left\vert\log\epsilon_{2}\right\vert\int_{v}^{0} (-u)^{-2+k^{2}}ds\lesssim B\delta k\epsilon_{2}^{\min\{\gamma,\frac{1}{2}\}}.
    \end{equation*}
    This concludes the proof.
\end{proof}
Now we are ready to close the bootstrap argument. We have the following proposition.
\begin{proposition}
\label{prop: closing the bootstrap for epsilon}
    Assuming \eqref{eq: bootstrap assumption for difference} and $0<\epsilon_{1}<\epsilon_{2}$, for any given scalar field charge $q_{0}$, there exists $k_{0}$ sufficiently small, depending on $q_{0}$ and the universal constant $B$, such that $\mathcal{R}^{\epsilon_{1},\epsilon_{2}}\left(\Theta_{\zeta,\epsilon_{2}^{\frac{1}{2}}}\right)\leq B\delta \epsilon_{2}^{\min\{\gamma,\frac{1}{2}\}}$ for any $0<k<k_{0}$. Moreover, we have \begin{equation*}
        \mathcal{R}^{\epsilon_{1},\epsilon_{2}}(\mathcal{Q}^{(in)}\backslash\mathcal{Q}^{(in)}_{\sqrt{\epsilon_{2}}})\lesssim k\delta \epsilon_{2}^{\min\{\gamma,\frac{1}{2}\}}.
    \end{equation*}
\end{proposition}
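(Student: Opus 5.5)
The plan mirrors the proof of Proposition~\ref{prop: closing the bootstrap}, now applied to the difference system \eqref{eq: epsilon1 and epsilon2 difference equation first}--\eqref{eq: epsilon 1 and epsilon 2 equation last} on $\Theta_{\zeta,\epsilon_{2}^{1/2}}$. The initial bounds on $\{v=0\}$ for $u\le-\epsilon_{2}^{1/2}$ come from Propositions~\ref{prop: difference estimate for Q}, \ref{prop: estimate on the difference of Hawking mass}, \ref{prop: difference estimate for Omega and lambda} and \ref{prop: difference estimate for dvphi}. Each is of size $k\delta\epsilon_{2}^{\min\{\gamma,1/2\}}$, or $\delta\epsilon_{2}^{\min\{\gamma,1/2\}}$ for $\partial_v\phi$; for $Q$ the size is $\delta k|q_0|\epsilon_2^2|\log\epsilon_2|$, which is much smaller after dividing by $u^2|\log(-u)|$ with $|u|\ge\epsilon_2^{1/2}$. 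The differences also vanish identically on $\{u=-\epsilon_2^{1/2}\}$ near $\mathcal{O}$ only in the sense that both solutions coincide with the self-similar one for $u\ge-2\epsilon_1$. Rather than rely on that, I will treat the outgoing cone $\{u=-\epsilon_{2}^{1/2}\}$ as data: on it, $\mathcal{R}^{\epsilon_1,\epsilon_2}$ is controlled by the same local argument, since the wave and transport equations along $u=-\epsilon_2^{1/2}$ inherit the smallness from the ingoing data at $(-\epsilon_2^{1/2},0)$.

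The estimates are closed one quantity at a time, using the bootstrap \eqref{eq: bootstrap assumption for difference} together with the preceding proposition, which gives bounds on $r_{\epsilon_1,\epsilon_2}$, $\phi_{\epsilon_1,\epsilon_2}$, $\mu_{\epsilon_1,\epsilon_2}$ and $(A_u)_{\epsilon_1,\epsilon_2}$.

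\textbf{(i) Charge.} Expand $\partial_vQ_{\epsilon_1,\epsilon_2}$ bilinearly. Every term contains one difference factor and one factor of size $k(-u)^{k^2-1}$ or $k$, so $|\partial_vQ_{\epsilon_1,\epsilon_2}|\lesssim |q_0|kB\delta\epsilon_2^{\min}u^{2-q_k}|\log(-u)|$. Integrating in $v$ over a length at most $(-u)^{q_k}$ and adding the ingoing data gives $\mathcal{R}_Q\lesssim |q_0|kB\delta\epsilon_2^{\min}$, which is at most $\tfrac1{100}B\delta\epsilon_2^{\min}$ once $k$ is small depending on $q_0$.

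\textbf{(ii) $\lambda$ and $\nu$.} The right-hand sides carry a factor $\mu\lesssim k^2$, a factor $\mu_{\epsilon_1,\epsilon_2}\lesssim Bk\delta\epsilon_2^{\min}$, or a factor $Q^2/r^2$ that is quadratically small. For $v\ge-\epsilon_2^{q_k/2}$ I integrate $\partial_u\lambda$ from $\{u=-\epsilon_2^{1/2}\}$. Otherwise I integrate from the center, where the boundary relation $\nu+q_k(-u)^{-k^2}\lambda=0$ converts the $\nu$ bound into a $\lambda$ bound. As in Proposition~\ref{prop: closing the bootstrap}, this yields a gain of a factor $k$.

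\textbf{(iii) Scalar field.} Integrate $\partial_u\partial_v(r_{\epsilon_1}\phi_{\epsilon_1,\epsilon_2})$ in $u$ from $\{u=-\epsilon_2^{1/2}\}$ for $\partial_v$, and in $v$ from $\{v=0\}$ for $\partial_u$. The source terms $\nu_{\epsilon_1,\epsilon_2}\partial_v\phi$ and $\lambda_{\epsilon_1,\epsilon_2}\partial_u\phi$ carry $k$ by Proposition~\ref{prop: estimate on the background spacetime} and the bound \eqref{eq: estimate on dvphitilde}, with the factor $|z|^{-\eta}$ integrable. The term $\phi_{\epsilon_1,\epsilon_2}\partial_u\partial_vr$ carries $\mu/r^2$ and hence $k^2$. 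The charged terms carry $|q_0|$ times $Q$ or $A_u$ and are controlled as in step (i). Then $\partial_v\phi_{\epsilon_1,\epsilon_2}$ and $\partial_u\phi_{\epsilon_1,\epsilon_2}$ are recovered from $\partial(r\phi)$ using $r_{\epsilon_1}\approx(-u)(1+z)$, $\lambda\approx(-u)^{k^2}$ and the bounds on $\phi_{\epsilon_1,\epsilon_2}$.

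Summing the pieces improves the bootstrap constant from $2B$ to $B$. A continuity argument in $\zeta$ then extends the bound to all of $\mathcal{Q}^{(in)}\setminus\mathcal{Q}^{(in)}_{\sqrt{\epsilon_2}}$. Feeding the improved estimates back through the right-hand sides once more, every term now gains a factor $k$, so the final bound is $\lesssim k\delta\epsilon_2^{\min\{\gamma,1/2\}}$.

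\textbf{Main obstacle.} The hardest part will be the recovery of $\partial_v\phi_{\epsilon_1,\epsilon_2}$ near the center and near $\{z=0\}$. There one must divide by $r$, and the background $\partial_v\widetilde\phi$ is only bounded with the weight $|z|^{-\eta}$. I would handle this by choosing $\eta\gg k$ but small, as in Proposition~\ref{prop: closing the bootstrap}, and by averaging $\partial_v(r\phi)$ from the center to obtain $\phi$ itself.
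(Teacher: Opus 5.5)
Your outline fills in the paper's proof, which is only a reference back to Proposition~\ref{prop: closing the bootstrap}. However, the step you add at $\{u=-\epsilon_2^{1/2}\}$ does not work. That cone lies in $\{u<-2\epsilon_1\}$, so the differences do not vanish on it, and it is not a data surface for the difference problem. Along an outgoing cone only $\nu$, $\partial_u\phi$, $\mu$, $Q$, $A_u$ obey $v$-transport equations, and these take $\lambda$ and $\partial_v\phi$ on the cone as input. The values of $\lambda_{\epsilon_1,\epsilon_2}$ and $(\partial_v\phi)_{\epsilon_1,\epsilon_2}$ on $\{u=-\epsilon_2^{1/2}\}$ are produced by $u$-transport starting from $\{u=-2\epsilon_1\}$, which is the real analogue of the cone $\{u=-\epsilon\}$ in Proposition~\ref{prop: closing the bootstrap}. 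That transport crosses the strip $-4\epsilon_2\le u\le-2\epsilon_1$, where $\chi_{\epsilon_1}\neq\chi_{\epsilon_2}$ and the two solutions differ at order $\delta$ in the scale-invariant norms. So the corner value at $(-\epsilon_2^{1/2},0)$ controls nothing, and your steps (ii) and (iii), which integrate from that cone, start from an unproved bound. The paper's one-line proof does not address this point either.

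Carrying the $u$-integration through the strip shows that the target rate is out of reach for $\partial_v\phi$. On $\{v=0\}$ with $u\le-4\epsilon_2$, linearize $\partial_u(r\partial_v\phi)=-\lambda\,\partial_u\phi+iq_0\frac{Q\Omega^2}{4r}\phi$ in $\delta$; two contributions survive.
\begin{enumerate}
\item The first is $\delta\int\lambda\,\partial_s(\chi_{\epsilon_1}-\chi_{\epsilon_2})\,ds$ in the $\phi_1$ component, since $f_1\to1$. Here $\partial_s\chi_\epsilon$ has mass $-1$ at $|s|\sim\epsilon$, and $\lambda\approx c(-s)^{k^2}$ with $c=\mathring{\widetilde\Omega}^2(0)/(4(1+k^2))$.
\item The second is $\int(\lambda_{\epsilon_1}-\lambda_{\epsilon_2})\,\partial_s\widetilde\phi\,ds$ over the strip. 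It comes from $\log\Omega^2_\epsilon-\log\widetilde\Omega^2\approx-2k_1\delta\chi_\epsilon$ there, which follows from $\Omega^2_\epsilon(u,0)=\widetilde\Omega^2(u,0)e^{\int_0^u\mathcal{P}_\epsilon}$.
\end{enumerate}
Together they give
\[
r\bigl(\partial_v\phi_{\epsilon_1}-\partial_v\phi_{\epsilon_2}\bigr)(u,0)\approx c\,\delta\,\frac{(k_1+ik_2)^2}{k^2}\bigl(\epsilon_1^{k^2}-\epsilon_2^{k^2}\bigr),
\]
up to relative errors $O(k^2+\delta/k)$ and additive errors $O(\delta\epsilon_2^{\gamma})$. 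The coefficient has modulus one, so nothing cancels, even though its real part vanishes when $k_1=k_2$. Apply the weight $(-u)^{q_k}/r\approx(-u)^{-k^2}$ and take $\epsilon_1^{k^2}\le\tfrac12\epsilon_2^{k^2}$. Then already at $(-\epsilon_2^{1/2},0)\in\Theta_{\zeta,\epsilon_2^{1/2}}$ one gets $\mathcal{R}^{\epsilon_1,\epsilon_2}_{\partial_v\phi}\gtrsim\delta\epsilon_2^{k^2/2}$, with no factor $k$. This exceeds $B\delta\epsilon_2^{\min\{\gamma,1/2\}}$ for small $\epsilon_2$ once $k^2<\min\{\gamma,1\}$.

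This has three consequences:
\begin{enumerate}
\item The input you take from Proposition~\ref{prop: difference estimate for dvphi} does not hold at its stated rate. Its proof bounds exactly this near-singularity contribution by $\delta\epsilon_2^{k^2/2}$, and its conclusion does not absorb it.
\item The bootstrap \eqref{eq: bootstrap assumption for difference} cannot even be initiated, and the final bound is false as stated.
\item Re-feeding the estimates cannot produce the factor $k$.
\end{enumerate}

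What can realistically be proved, and still suffices for the Cauchy-sequence step, is a bound with rate of order $\epsilon_2^{k^2/2}$. To get it, bootstrap the differences on all of $\{-1\le u\le-2\epsilon_1\}$ with a norm that allows $O(\delta)$ differences inside the strip and carries a weight such as $(\epsilon_2/(-u))^{k^2}$ outside it. Then use that the strip enters the $u$-transport equations only through $L^1_u$ integrals such as $\int\lambda\,|\partial_u\phi_{\epsilon_1}-\partial_u\phi_{\epsilon_2}|\,ds$.
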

\begin{proof}
    The proof is similar to that of Proposition~\ref{prop: closing the bootstrap}. 
\end{proof}
\begin{corollary}
    Let $N$ be a positive integer. Then $(g_{\epsilon},\phi_{\epsilon},Q_{\epsilon})$ is a Cauchy sequence in $X(\mathcal{Q}^{(in)}\backslash \mathcal{Q}_{\frac{1}{N}}^{(in)})$ for $0<\epsilon<\frac{1}{N^{2}}$. Moreover, there exists $(g,\phi,Q)$ independent of $N$ such that $(g_{\epsilon},\phi_{\epsilon},Q_{\epsilon})\rightarrow (g,\phi,Q)$ in $X\left(\mathcal{Q}^{(in)}\backslash\mathcal{Q}^{(in)}_{\frac{1}{N}}\right)$   when $\epsilon\rightarrow 0 $ for any $N>0$. The limit $(g,\phi,Q)$ is a solution to the Einstein--Maxwell-charged scalar field equations.
\end{corollary}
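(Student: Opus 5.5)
The plan is to combine the difference estimate of Proposition~\ref{prop: closing the bootstrap for epsilon} with the uniform bounds of Proposition~\ref{prop: consequence of the bootstrap in the existence argument}, and then pass to the limit in the equations.

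\emph{Cauchy property.} Fix $N$ and take $0<\epsilon_{1}<\epsilon_{2}<\frac{1}{N^{2}}$, so that $\sqrt{\epsilon_{2}}<\frac{1}{N}$ and $\mathcal{Q}^{(in)}\backslash\mathcal{Q}^{(in)}_{1/N}\subset\mathcal{Q}^{(in)}\backslash\mathcal{Q}^{(in)}_{\sqrt{\epsilon_{2}}}$. Proposition~\ref{prop: closing the bootstrap for epsilon} then gives $\mathcal{R}^{\epsilon_{1},\epsilon_{2}}\lesssim k\delta\epsilon_{2}^{\min\{\gamma,1/2\}}$ on this region. On $\{u\le -1/N\}$ the weights $(-u)^{q_{k}}$, $(-u)$, $(-u)^{k^{2}}$ and $u^{2}|\log(-u)|$ are bounded above and below by constants depending only on $N$. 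Hence the unweighted $C^{0}$ differences of $\nu,\lambda,\partial_{u}\phi,\partial_{v}\phi,Q$ are $\lesssim_{N}\epsilon_{2}^{\min\{\gamma,1/2\}}$.

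\emph{Lower-order quantities.} The differences of $r$, $\phi$ and $\mu$ come from the preceding proposition (the consequences of \eqref{eq: bootstrap assumption for difference}). These are obtained by integrating $\lambda_{\epsilon_{1},\epsilon_{2}}$ and $\partial_{v}\phi_{\epsilon_{1},\epsilon_{2}}$ from $\{v=0\}$, together with the $\mu$ transport estimate. This shows $\|\cdot\|_{X}$ of the difference tends to $0$ as $\epsilon_{2}\to0$, so the family is Cauchy in $X(\mathcal{Q}^{(in)}\backslash\mathcal{Q}^{(in)}_{1/N})$. The regions are nested, and the limit on a larger region restricts to the limit on a smaller one by uniqueness of limits. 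The limits for different $N$ are therefore compatible and define a single $(r,\mu,\phi,Q)$ on $\mathcal{Q}^{(in)}\cap\{u<0\}$. From $\mu$, $\lambda$, $\nu$ we recover $\Omega^{2}=4\lambda\nu/(\mu-1)$ (this is legitimate since $\mu\lesssim k^{2}<1$ uniformly) and hence $g$. We recover $A_{u}$ by integrating \eqref{eq:spherical symmetric equation last} in $v$ from $\{v=0\}$.

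\emph{Limit solves the system.} The main obstacle is to show that the limit solves the Einstein--Maxwell-charged scalar field equations, since convergence is only in $C^{1}$ for $(r,\phi)$ and not in second derivatives. I would use the integrated (transport) form of the equations. Each of \eqref{eq:spherical symmetric equaion1}--\eqref{eq:spherical symmetric equation last}, written as first-order transport equations for $\lambda,\nu,\mu,Q,A_{u}$ and as \eqref{eq: pre wave equation for phi} in the form $\partial_{u}(\partial_{v}(r\phi))=\dots$, expresses a first derivative as an integral of a continuous expression in $(r,\lambda,\nu,\mu,\phi,\partial\phi,Q,A_{u})$. For example,
\[
\lambda(u,v)=\lambda(u_{1},v)+\int_{u_{1}}^{u}\Big(\tfrac{\mu}{1-\mu}\tfrac{\lambda\nu}{r}-\tfrac{\lambda\nu Q^{2}}{(1-\mu)r^{3}}\Big)du'.
\]
The integrands converge uniformly on compact subsets away from $\Gamma$ and from $u=0$, where $r$ is bounded below. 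Near $\Gamma$, the uniform bounds $r_{\epsilon}\approx(-u)(z+1)$, $|Q_{\epsilon}|\lesssim u^{2}|\log(-u)|$ together with integration from $\Gamma$ as in the bootstrap handle the singular weights; dominated convergence then passes the integral identities to the limit. The limit therefore satisfies the integrated equations, hence the equations in the $C^{1}$ (classical for first derivatives) sense, with the same boundary conditions on $\Gamma$ and data on $\{v=0\}$. Finally, the uniform bounds of Proposition~\ref{prop: consequence of the bootstrap in the existence argument} pass to the limit, so the solution is nontrivial: $Q(u,0)=\lim Q_{\epsilon}(u,0)=Q_{0}(u)\approx u^{2}$ by Proposition~\ref{prop: initial estimate on Q}.
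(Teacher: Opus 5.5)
Your proposal is correct and follows essentially the paper's route. Cauchy-ness in $X(\mathcal{Q}^{(in)}\backslash\mathcal{Q}^{(in)}_{1/N})$ comes from Proposition~\ref{prop: closing the bootstrap for epsilon} once the weights are removed; strictly, $u^{2}|\log(-u)|$ is only bounded above on $\{u\le -1/N\}$, since it vanishes at $u=-1$, but the upper bound is all you need. The limit then solves the system because the equations transfer the convergence to $\partial_{u}\lambda_{\epsilon},\partial_{v}\nu_{\epsilon},\partial_{u}\partial_{v}\phi_{\epsilon},\dots$; the paper states this directly, and your integrated identities say the same thing. One small remark: verifying the equations only needs these identities locally inside $\{r>0\}$, so your appeal to dominated convergence up to $\Gamma$ is unnecessary, and as stated it would not follow from $r_{\epsilon}\approx(-u)(z+1)$ and $|Q_{\epsilon}|\lesssim u^{2}|\log(-u)|$ alone, since with only these bounds e.g.\ $Q^{2}/r^{3}$ is not integrable at the center.
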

\begin{proof}
    According to Proposition \ref{prop: closing the bootstrap for epsilon}, we have that $$(r_{\epsilon},\lambda_{\epsilon},\nu_{\epsilon},\mu_{\epsilon},\phi_{\epsilon},\partial_{u}\phi_{\epsilon},\partial_{v}\phi_{\epsilon},Q_{\epsilon},A_{u,\epsilon})$$ is a Cauchy sequence in $X\left(\mathcal{Q}^{(in)}\backslash\mathcal{Q}_{\frac{1}{N}}^{(in)}\right)$ for any $N\geq1$. Using the equations \eqref{eq:spherical symmetric equaion1}-\eqref{eq:spherical symmetric equation last}, we have $$(\partial_{u}\lambda_{\epsilon},\partial_{v}\nu_{\epsilon},\partial_{u}\mu_{\epsilon},\partial_{v}\mu_{\epsilon},\partial_{u}\partial_{v}\phi_{\epsilon},\partial_{u}Q_{\epsilon},\partial_{v}Q_{\epsilon},\partial_{u}A_{u,\epsilon})$$ is also a Cauchy sequence in $X$. Hence, we have \begin{align}
       & r_{\epsilon}\rightarrow r,\quad \lambda_{\epsilon}\rightarrow\lambda = \partial_{v}r,\quad \nu_{\epsilon}\rightarrow\nu = \partial_{u}r,\quad \mu_{\epsilon}\rightarrow \mu,\\&\phi_{\epsilon}\rightarrow \phi,\quad \partial_{u}\phi_{\epsilon
        }\rightarrow \partial_{u}\phi,\quad \partial_{v} \phi_{\epsilon}\rightarrow \partial_{v}\phi,\\&
        Q_{\epsilon}\rightarrow Q,\quad A_{u,\epsilon}\rightarrow A_{u}.
    \end{align}
    Moreover, we have that $(r,\lambda,\nu,\mu,\phi,Q,A_{u})$ solves the Einstein--Maxwell-charged scalar field equations.
\end{proof}
To conclude this section, we prove the following proposition giving quantitative estimates for the limit $(g,\phi,Q)$.
\begin{proposition}
    The following estimates hold for $(g,\phi,Q)$: \begin{align}
        &r\approx (-u)(z+1),\quad \lambda\approx (-u)^{k^{2}},\quad -\nu\approx 1,\quad \mu\approx k^{2},\\&
       \left\vert \partial_{u}\phi\right\vert\lesssim \frac{1}{(-u)},\quad\left\vert\partial_{v}\phi\right\vert\lesssim \frac{1}{(-u)^{q_{k}}},\quad \left\vert Q\right\vert\lesssim u^{2}.
    \end{align}
    Moreover, on the ingoing cone $\{v = 0\}$, we have \begin{equation}
        \left\vert Q\right\vert\approx u^{2}.
    \end{equation}
\end{proposition}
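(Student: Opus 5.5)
The plan is to pass to the limit $\epsilon\to0$ in the uniform bounds already available for the approximate solutions $(g_\epsilon,\phi_\epsilon,Q_\epsilon)$, and then sharpen the two statements that need more than boundedness: $\mu\approx k^2$ and $|Q|(u,0)\approx u^2$.

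\textbf{Uniform bounds in the limit.} On $\mathcal{Q}^{(in)}\setminus\mathcal{Q}^{(in)}_\epsilon$, Proposition~\ref{prop: closing the bootstrap} and Proposition~\ref{prop: consequence of the bootstrap in the existence argument} give, uniformly in $\epsilon$,
\[
r_\epsilon\approx(-u)(z+1),\quad \lambda_\epsilon\approx(-u)^{k^2},\quad -\nu_\epsilon\approx1,\quad |\partial_u\phi_\epsilon|\lesssim k(-u)^{-1},\quad |\partial_v\phi_\epsilon|\lesssim (-u)^{-q_k}.
\]
On $\mathcal{Q}^{(in)}_\epsilon$ the solution coincides with the self-similar background, so the same bounds hold there by Proposition~\ref{prop: estimate on the background spacetime}. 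The implicit constants do not depend on $\epsilon$. The convergence in $X(\mathcal{Q}^{(in)}\setminus\mathcal{Q}^{(in)}_{1/N})$ established in the preceding Corollary is pointwise, so these bounds pass to the limit $(g,\phi,Q)$ at every point with $u<0$.

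\textbf{The mass ratio.} The upper bound $\mu\lesssim k^2$ follows from $|\mu_\epsilon|\lesssim k^2$. For the lower bound I would write $\mu=\widetilde\mu+\mu_p$. We have $|\mu_p|\lesssim C_0k\delta$, and choosing $\delta\ll k$ makes this at most $\tfrac12 k^2$ times the background constant. The background satisfies $\widetilde\mu\approx k^2\,\mr{\widetilde r}$, which is comparable to $k^2$ away from the center.

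Near the center both $\widetilde\mu$ and $\mu$ vanish, so $\mu\approx k^2$ can only be meant away from $\Gamma$, or in the sense $\mu/\mr r\approx k^2$ as in Proposition~\ref{prop: estimate on the background spacetime}. I would prove the version $\mu\lesssim k^2$ globally together with $\mu\gtrsim k^2$ on $\{z\ge -1/2\}$. Alternatively, I would estimate $\mu_p/\mr r$ by integrating the $\partial_v\mu$ equation from $\Gamma$, where $r\mu_p$ vanishes.

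\textbf{The charge.} For $Q$, I would use $Q_\epsilon(u,0)\to Q_0(u)$ for each fixed $u$, by \eqref{estimate for Qepsilon for u in the faraway region}. Proposition~\ref{prop: initial estimate on Q} gives
\[
\bigl|Q_0(u)+\tfrac12 q_0\delta k_2\mr{\widetilde r}^2(0)u^2\bigr|\lesssim |q_0|\delta k(-u)^{\gamma+2}|\log(-u)|.
\]
For $-u$ small this yields $|Q(u,0)|\approx u^2$, since $\gamma>0$ makes $(-u)^\gamma|\log(-u)|$ small. To cover all $u\in[-1,0)$, I would choose $f_1,f_2$ with small constants, or $k_2$ dominant over the error, so that the error term is absorbed by the leading term.

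For the bulk bound $|Q|\lesssim u^2$, I would integrate $\partial_vQ$ from $\{v=0\}$ in the limit solution. The bootstrap only gave $\partial_vQ_\epsilon\lesssim k\delta u^{2-q_k}|\log(-u)|$, and the logarithm there comes only from the transition layer $[-3\epsilon,-2\epsilon]$, which disappears in the limit.

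\textbf{Main obstacle.} I expect the hardest step to be removing this logarithm. One has to show that in the limit $|\phi_p|\lesssim\delta$ and $|\partial_v\phi_p|\lesssim\delta(-u)^{-q_k}$, with no $\log$, using $Q_0=O(u^2)$ in place of the bootstrap bound $u^2|\log(-u)|$. I would first redo the $Q$ estimate in the limit: $\partial_vQ=q_0r^2(\phi_2\partial_v\phi_1-\phi_1\partial_v\phi_2)$ is bilinear in the perturbation and the background. This bilinear term is then controlled by $\widetilde\phi\,\partial_v\phi_p$ plus $\phi_p\,\partial_v\widetilde\phi$, where $|\widetilde\phi_i|\lesssim k|\log(-u)|$. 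Because of this, the logarithm may reappear, multiplied by $k\delta(-u)^{2}|z|$ after integration over $v$.

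If it does reappear, I would instead use the gauge-invariant structure. The combination $\widetilde\phi_2\partial_v\widetilde\phi_1-\widetilde\phi_1\partial_v\widetilde\phi_2$ vanishes identically, and the remaining cross terms should be estimated through $\rho^2\partial_v\theta$-type quantities rather than $\phi$ itself. The aim is to obtain $|Q|\lesssim u^2$ up to a constant.
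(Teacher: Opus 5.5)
Your route is the paper's route. Its proof is the single sentence that everything "follows directly from the limiting process". The parts you make concrete are fine: the uniform-in-$\epsilon$ bounds on $r,\lambda,\nu,\partial\phi$ and $\mu\lesssim k^2$ pass to the pointwise limit, and $Q(u,0)=Q_0(u)$ by \eqref{estimate for Qepsilon for u in the faraway region}. Your two caveats are also correct, and the one-line proof glosses over both: there is no lower bound $\mu\gtrsim k^2$ near $\Gamma$, and $|Q(u,0)|\approx u^2$ on all of $[-1,0)$ needs $f_1-1,f_2$ small relative to $k_2/k$.

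The genuine gap is the bulk bound $|Q|\lesssim u^2$, which your last paragraph only states as an aim. Your first claim there is wrong: the logarithm in $|\partial_vQ_\epsilon|\lesssim k\delta u^{2-q_k}|\log(-u)|$ does not come only from the layer $[-3\epsilon,-2\epsilon]$. In the proof of Proposition~\ref{prop: closing the bootstrap} it comes from $|\widetilde\phi_i|\lesssim k|\log(-u)|$ multiplying $|\partial_v\phi_p|\lesssim\delta(-u)^{-q_k}$, and that does not go away as $\epsilon\to0$. So the limit of the available bounds gives only $|Q|\lesssim u^2|\log(-u)|$. The paper's one-line proof has the same gap, since $\mathcal{R}^\epsilon_Q$ is normalized by $u^2\log(-u)$.

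What closes it is the following. Set $e=(k_1,k_2)/k$ and $e^\perp=(-k_2,k_1)/k$, and write $\phi=a\,e+b\,e^\perp$. By \eqref{eq: admissible condition} and Proposition~\ref{prop: relation between phi1 and phi2}, both $\widetilde\phi$ and $\partial_v\widetilde\phi$ are parallel to $e$. Hence $b$ is purely perturbative and
\[
\partial_vQ=q_0r^2\bigl(b\,\partial_va-a\,\partial_vb\bigr).
\]
Since $|b|\lesssim\delta$ and $|\partial_va|\lesssim(-u)^{-q_k}$, the first term integrates to $O(\delta u^2)$. The logarithm comes only from $a\,\partial_vb$, where $|a|\approx k|\log(-u)|$. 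Projecting \eqref{eq: pre wave equation for phi} onto $e^\perp$ gives
\[
r\partial_u\partial_vb+\nu\partial_vb+\lambda\partial_ub=-q_0\Bigl(A_u\partial_v(ra)-\frac{Q\Omega^2}{4r}a\Bigr).
\]
The left side annihilates constants, and the right side is of lower order in $(-u)$. In the limit, \eqref{eq: condition on f} gives $b(u,0)=-\delta k_2/k+O(\delta(-u)^{\gamma})$.

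So you need an additional estimate, not contained in the bootstrap: $|\partial_vb|\lesssim\delta(-u)^{\gamma'-q_k}$ for some $\gamma'>0$. Prove it for the limiting solution, where the layer and the data on $\{u=-\epsilon\}$ are gone, applied to $\beta:=b+\delta k_2/k$. On $\{v=0\}$, integrate $\partial_u(r\partial_vb)=-\lambda\partial_ub+\dots$ from $u=0$. In the bulk, transport $\partial_u(r\beta)$ in $v$ from $\{v=0\}$, reflect at $\Gamma$, and transport $\partial_v(r\beta)$ back in $u$. The bound $|\partial_u\partial_vr|\lesssim k^2(-u)^{k^2-1}$ keeps the errors small. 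Then
\[
\int_v^0r^2|a\,\partial_vb|\,dv'\lesssim k\delta u^2(-u)^{\gamma'}|\log(-u)|\lesssim k\delta u^2,
\]
which gives $|Q|\lesssim u^2$. Rewriting in terms of $\rho^2\partial_v\theta$, as you suggest, produces the same quantity $a\,\partial_vb-b\,\partial_va$ and does not by itself supply this gain.
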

\begin{proof}
    The proof follows directly from the limiting process.
\end{proof}
\section{Overview of the exterior construction}
\label{sec: overview of the exterior construction}
\subsection{Decomposition into four regions}
For the construction of the exterior region, we divide it into four regions: 
\begin{figure}[htbp]
    \centering
\begin{tikzpicture}[ scale=0.8, >=Latex, every node/.style={font=\small} ]
\coordinate (O) at (0,0);
\coordinate (A) at (4,-4);
\coordinate (B) at (6,6);
\coordinate (C) at (10,2);
\draw (O)--(A);
\draw[dashed] (O)--(B);
\draw (A)--(C);
\draw[dashed] (B) -- (C)
    node[midway, above, sloped]
    {\rotatebox{90}{$\mathscr{I}$}};
\node[left] at (O) { $\mathcal{O}$};
\coordinate (A1) at (6,-2);
\coordinate (A2) at (8,0);
\draw[dashed] (O)--(A1);
\draw[dashed] (O)--(A2);
\coordinate (A3) at (3,0);
\coordinate (A4) at (1.5,1.5);
\draw[dashed] (A3)--(A4);
\coordinate (L1) at (2.5,-1.5);
\node at (L1){I};
\coordinate (L2) at (3.5,-0.5);
\node at (L2) {II};
\coordinate (L3) at (1.5,0.5);
\node at (L3) {III};
\coordinate (L4) at (5,3);
\node at (L4){IV};
\end{tikzpicture}
    \caption{Decomposition of regions}
    \label{fig:decomposition of regions}
\end{figure}

\subsection{A brief review of the construction of Christodoulou's naked singularity exterior in the region I \cite{christodoulou1994examples}}
\label{sub:a_quick_propagational_construction_of_christodoulou_s_naked_singularity_exterior_in_the_region_rmnum_1}
In this section, we revisit the construction of naked singularities for the Einstein-scalar field equations. In the seminal work \cite{christodoulou1994examples}, assuming spherical symmetry and exact $k$-self-similarity, Christodoulou reduced the Einstein-scalar field equations to a $2\times 2$ autonomous system of ODEs. Then, a careful phase portrait analysis in the exterior region shows the existence of naked singularities for $0<k^{2}<\frac{1}{3}$.

Although Christodoulou's construction provides the first mathematical insight into the existence of naked singularities, his approach is not immediately generalizable to Einstein equations coupled with other models. In particular, for the Einstein vacuum equations, naked singularities cannot exist in the regime of spherical symmetry due to the famous Birkhoff theorem. In the celebrated work of Rodnianski and Shlapentokh--Rothman \cite{rodnianski2023naked}, the authors developed a robust hyperbolic approach to construct the exterior region of a naked singularity spacetime. Now we briefly review this approach in the setting of spherically symmetric Einstein-scalar field equations in the region $\Rmnum{1}$, which is the most difficult region in the whole construction.

Under spherical symmetry, the Einstein-scalar field equations reduce to~\eqref{eq: uncharged u Ray equation}-\eqref{eq: uncharged dvm equation} with a real scalar field $\phi = \phi_{1},\ \phi_{2} = 0$ \begin{align}
    \partial_{u}\left(\frac{\partial_{u}r}{\Omega^{2}}\right) =& -\frac{r}{\Omega^{2}}(\partial_{u}\phi)^{2},\allowdisplaybreaks\\
    \partial_{v}\left(\frac{\partial_{v}r}{\Omega^{2}}\right) = &-\frac{r}{\Omega^{2}}(\partial_{v}\phi)^{2},\allowdisplaybreaks\\
    r\partial_{u}\partial_{v}r+\partial_{u}r\partial_{v}r=&-\frac{1}{4}\Omega^{2},\label{eq: wave equation for the ESF}\allowdisplaybreaks\\
    r\partial_{u}\partial_{v}\phi+\partial_{u}r\partial_{v}\phi+\partial_{v}r\partial_{u}\phi =& 0,\allowdisplaybreaks\\
    \partial_{u}m+\frac{r}{\partial_{u}r}(\partial_{u}\phi)^{2}m=&\frac{1}{2}\frac{r^{2}}{\partial_{u}r}(\partial_{u}\phi)^{2},\allowdisplaybreaks\\
    \partial_{v}m+\frac{r}{\partial_{v}r}(\partial_{v}\phi)^{2}m=&\frac{1}{2}\frac{r^{2}}{\partial_{v}r}(\partial_{v}\phi)^{2}.
\end{align}
We prescribe suitably-designed initial data on the characteristic hypersurface $\Sigma_{-1}\cup\underline{\Sigma}_{0}$, where $\underline{\Sigma}_{0}$ is the past light cone emanating from the singularity $\mathcal{O}$. We then establish global existence in the whole exterior region $\{-1\leq u<0,\ v\geq 0\}$, which is free of trapped surfaces.

On the ingoing cone $\underline{\Sigma}_{0}$, we prescribe the following initial data and gauge choice, which coincide with those in \cite{christodoulou1994examples} up to scaling.\begin{equation*}
r(u,0) = -u,\quad \partial_{u}\phi(u,0) = \frac{k}{(-u)},\quad \Omega^{2}(-1,0) = \frac{4}{1+k^{2}},
\end{equation*}
where $k$ is the parameter for the twisted self-similarity.

As a consequence of the above initial data, using the wave equations \eqref{eq:wave equation for r}-\eqref{eq:wave equation for phi}, on the ingoing cone, we have\begin{equation}
r_{v}(u,0) = (-u)^{k^{2}},\quad \Omega^{2}(u,0) = \frac{4}{1+k^{2}}(-u)^{k^{2}},\quad \partial_{v}\phi(u,0) = \frac{1}{(-u)}\left(\partial_{v}\phi(-1,0)-\frac{1}{k}\right)+\frac{1}{k}(-u)^{k^{2}-1}.
\end{equation}
To avoid the well-known blue-shift instability in \cite{christodoulou1999instability}, the choice of $\partial_{v}\phi$ on the bifurcating sphere $S_{-1,0}$ is $$\partial_{v}\phi(-1,0) = \frac{1}{k}.$$

To establish the global existence, we will choose $k$ sufficiently small as our smallness assumption on the initial data. However, one can immediately see that $\partial_{v}\phi$ on the initial ingoing cone $\partial_{v}\phi(u,0) = \frac{1}{k}(-u)^{k^{2}-1}$ is large pointwise when $k$ is small. 

To overcome this largeness, we adopt the idea introduced in \cite{rodnianski2023naked}. Instead of considering the $L^{\infty}$-norm of $\partial_{v}\phi$, using the exact $k$-self-similarity, we construct an auxiliary function $\left(\partial_{v}\phi\right)_{S}(u,v)$ in the region $\Rmnum{1}$ such that $\left(\partial_{v}\phi\right)_{S} = \partial_{v}\phi$ on $\underline{\Sigma}_{0}$ and $\Vert \left(\partial_{v}\phi\right)_{S}(u,\cdot)\Vert_{L_{v}^{p}([0,\underline{v}(-u)^{q_{k}}])}\lesssim k$. Then one can consider the renormalized quantity $\partial_{v}\phi-\left(\partial_{v}\phi\right)_{S}$ in the region $\Rmnum{1}$ which is exactly zero on the initial ingoing cone $\underline{\Sigma}_{0}$ and small in the $L^{p}$-norm on the initial outgoing cone $\{u = -1,\ 0\leq v\leq \underline{v}\}$. Then one can recover smallness for the renormalized quantity $\partial_{v}\phi-\left(\partial_{v}\phi\right)_{S}$. A well-designed bootstrap argument can close the construction in the region $\Rmnum{1}$.

To construct the auxiliary function $\left(\partial_{v}\phi\right)_{S}$, we consider the wave equation for $\phi$:\begin{equation}
\partial_{u}\partial_{v}\phi+\frac{r_{u}}{r}\partial_{v}\phi+\frac{r_{v}}{r}\partial_{u}\phi = 0.\label{eq:wave for phi in the Cri argument}
\end{equation}
Heuristically using the $k$-self-similarity, we assume $r$, $r_{u}$, $r_{v}$, and $\partial_{u}\phi$ take their values on the initial ingoing cone $\underline{\Sigma}_{0}$ and $\left(\partial_{v}\phi\right)_{S}$ takes the form of $\left(\partial_{v}\phi\right)_{S} = \frac{1}{(-u)^{q_{k}}}F\left(\frac{v}{(-u)^{q_{k}}}\right)$. Then we can derive the following equation for $F$ from \eqref{eq:wave for phi in the Cri argument}:\begin{equation}
q_{k}\frac{v}{(-u)^{q_{k}}}F^{\prime}\left(\frac{v}{(-u)^{q_{k}}}\right)-k^{2}F\left(\frac{v}{(-u)^{q_{k}}}\right)+k = 0.
\end{equation}
Solving the above equation directly, we obtain\begin{equation}
F(z) = z^{\frac{k^{2}}{1-k^{2}}}\left(\underline{v}\right)^{-\frac{k^{2}}{1-k^{2}}}F(\underline{v})+\frac{1}{k}\left(1-z^{\frac{k^{2}}{1-k^{2}}}(\underline{v})^{-\frac{k^{2}}{1-k^{2}}}\right),\ 0\leq z\leq\underline{v}.
\end{equation}
Let $F(\underline{v} ) = 0$. We can see that $\left(\partial_{v}\phi\right)_{S}$ agrees with $\partial_{v}\phi$ on $\underline{\Sigma}_{0}$. Moreover, one can directly check that $\Vert\left(\partial_{v}\phi\right)_{S}(u,\cdot)\Vert_{L_{v}^{p}([0,\underline{v}(-u)^{q_{k}}])}\lesssim k$. 
\begin{remark}
The regularity of the auxiliary function $\left(\partial_{v}\phi\right)_{S}(u,v)$ is $C_{v}^{0,\frac{k^{2}}{1-k^{2}}}$ in the $v$-direction, which is consistent with the regularity of the scalar field in Christodoulou's original construction \cite{christodoulou1994examples}. Intuitively, the largeness on the initial ingoing cone $\underline{\Sigma}_{0}$ and the smallness requirement of the $L^{p}$-norm suggest that $\left(\partial_{v}\phi\right)_{S}$ should decay fast near $\underline{\Sigma}_{0}$, hence the regularity of $\left(\partial_{v}\phi\right)_{S}$ should be limited.
\end{remark}
\subsection{A new perspective on the construction of an auxiliary function}
One can observe that the construction of $(\partial_{v}\phi)_{S}$ essentially uses the $k$-self-similar ansatz, in the sense that we assume the forms of $r$ and $(\partial_{v}\phi)_{S}$ in a small neighborhood of $\underline{\Sigma}_{0}$ are exactly determined by the $k$-self-similarity. In particular, we assume that $\left(\partial_{v}\phi\right)_{S}$  takes the form \begin{equation}
\frac{1}{(-u)^{q_{k}}}F\left(\frac{v}{(-u)^{q_{k}}}\right).\label{eq: speicial form of auxiliary function}
\end{equation}
The role of the $k$-self-similarity here is that, although the wave equation \eqref{eq: wave equation for the ESF} can be viewed as a $u$-transport equation for $\partial_{v}\phi$, the $k$-self-similarity allows one to propagate $\partial_{v}\phi$ in the $v$-direction, hence solving $(\partial_{v}\phi)_{S}$ in the whole region $\Rmnum{1}$ with given data on $\underline{\Sigma}_{0}$.

However, the specific $k$-self-similar form we assumed for $(\partial_{v}\phi)_{S},r,\partial_{u}\phi$ might be too specific to be generalized to the setting of this paper. We give a new perspective on constructing the auxiliary function $\left(\partial_{v}\phi\right)_{S}$. Instead of assuming the special form \eqref{eq: speicial form of auxiliary function} of $\left(\partial_{v}\phi\right)_{S}$, we can directly consider the $u$-transport equation of $\partial_{v}\phi$:\begin{equation}
\partial_{u}(r\partial_{v}\phi) = -r_{v}\partial_{u}\phi.\label{eq: heuristic ode}
\end{equation}
Let $r$, $r_{v}$ and $\partial_{u}\phi$ take their values on $\underline{C}$ and set $\partial_{v}\phi\left(-\left(\frac{v}{\underline{v}}\right)^{\frac{1}{q_{k}}},v\right) = 0$. For each fixed $0<v<\underline{v}$, solving the ODE \eqref{eq: heuristic ode} in the $u$-direction, we obtain the same auxiliary function:\begin{equation}
\left(\partial_{v}\phi\right)_{S} = \frac{1}{k}\frac{1}{(-u)^{q_{k}}}\left(1-\left(\frac{v}{\underline{v}}\right)^{\frac{k^{2}}{1-k^{2}}}(-u)^{-k^{2}}\right).
\end{equation}
In the above construction, we avoid guessing the special form of $\left(\partial_{v}\phi\right)_{S}$. The $k$-self-similar ansatz was used in the sense that $\left(\partial_{v}\phi\right)_{S}$ is zero on the cone $\{(u,v):\ \frac{v}{(-u)^{q_{k}}} = \underline{v}\}$, which corresponds to $\{z = \underline{v}\}$ under the $k$-self-similar coordinate $z: = \frac{v}{(-u)^{q_{k}}}$.
\section{Initial data for the exterior construction}
In this section, we construct the exterior region of the naked singularity spacetime solving the Einstein--Maxwell-charged scalar field equations. We prescribe initial data on the characteristic hypersurface $\underline{\Sigma}_{0} \cup \Sigma_{-e^{-1}}$ and establish global existence of the solution in the region $\{-e^{-1} \leq u < 0,\ v \geq 0\}$. Moreover, we show that no trapped surfaces form in this region. On the initial ingoing cone $\underline{\Sigma}_{0}$, instead of using the data from the interior construction, we consider more general data and aim to provide a general framework for the exterior construction. The free initial data consist of \begin{itemize}
    \item On the ingoing cone $\underline{\Sigma}_{0}$, we shall fix the gauge choice of $r$ and give initial data for $\phi$.
    \item On the outgoing cone $\Sigma_{-e^{-1}}$, we shall give the initial data for $\lambda$ and $\partial_{v}\phi$.
    \item On the intersecting sphere $(u,v) = (-e^{-1},0)$, we shall give the values of $\Omega^{2}(-e^{-1},0) = \Delta$ and $\phi(-e^{-1},0) = 0$.
\end{itemize}

\subsection{Setup of the initial data on the initial ingoing cone $\underline{\Sigma}_{0}$}
On the ingoing cone $\underline{\Sigma}_{0}$, we choose $r(u,0) = (-u)$. Then we have $\nu = \partial_{u}r = -1$. We consider the polar form of the scalar field \begin{equation*}
    \phi = \rho e^{i\theta} = \rho\cos\theta+i\rho\sin\theta.
\end{equation*}
Let $k>0$ and $\kappa\geq0$ be two constants defined as follows \begin{equation*}
    k: = \lim_{u\rightarrow 0^{-}}(-u)\partial_{u}\rho(u,0),\quad \kappa: = \lim_{u\rightarrow 0^{-}}(-u)\rho(u,0)\partial_{u}\theta(u,0),\quad q^{2}: = k^{2}+\kappa^{2}.
\end{equation*}
Therefore, we can assume that \begin{equation*}
    \partial_{u}\rho(u,0) = \frac{k}{(-u)}-\rho_{1}(u),\quad \rho\partial_{u}\theta (u,0)= \frac{\kappa}{(-u)}-\theta_{1}(u),
\end{equation*}
where $\rho_{1}$ and $\theta_{1}$ are integrable on $[-e^{-1},0]$ and satisfy \begin{equation}
    \left\vert\rho_{1}(u)\right\vert\lesssim q\frac{1}{\left\vert u\right\vert \left\vert\log (-u)\right\vert^{2}},\quad \left\vert\theta_{1}(u)\right\vert\lesssim q^{2}\frac{1}{\left\vert u\right\vert\left\vert\log(-u)\right\vert}.
\end{equation} 
Then the initial data for $\rho$ and $\theta$ on the initial ingoing cone $\underline{\Sigma}_{0}$ can be written as \begin{align*}
    \rho(u,0) &= -k\log(-u)+\rho_{0}+\int_{u}^{0}\rho_{1}(s)ds,\\
    \theta(u,0)& = \int_{-1}^{u}\frac{\kappa}{(-s)\rho(s,0)}ds+\theta_{0}+\int_{u}^{0}\frac{\theta_{1}(s)}{\rho(s,0)}ds.
\end{align*}
Then, we can derive the following estimate for the spacetime charge $Q$ on the ingoing cone $\underline{\Sigma}_{0}$:
\begin{equation}
\begin{aligned}
   \left\vert Q(u,0) \right\vert&= \left\vert\int_{0}^{u}q_{0}r^{2}(s,0)\rho^{2}(s,0)\partial_{u}\theta(s,0)\right\vert\\&\approx \vert q_{0}\kappa k\vert u^{2}\vert \log(-u)\vert.
   \end{aligned}
   \label{eq: upper bound for Q in the exterior construction}
\end{equation}
\begin{remark}
    If $\kappa\neq0$ and $k\neq 0$, then along the ingoing cone $\{v = 0\}$, the spacetime charge $Q(u,0)\sim q_{0}k\kappa u^{2}\log(-u)$ decays more slowly as $u\rightarrow 0$ than the $u^{2}$ behavior of the interior solution constructed in Section~\ref{sec: interior construction}. Thus, the exterior construction is more general, allowing a slower decay rate for $Q$ along the ingoing cone $\{v = 0\}$. By taking $\kappa=0$, the exterior construction can be glued to the interior solution constructed in Section~\ref{sec: interior construction}.
\end{remark}

\begin{example}
For the spacetime constructed in Section \ref{sec: interior construction}, taking $\gamma = 0$ and $\delta = 1$, we have \begin{equation*}
    \phi_{1}(u,0) = -k_{1}\log(-u)+1,\quad \phi_{2}(u,0) = -k_{2}\log(-u).
\end{equation*}
Then, we have \begin{equation*}
    \rho^{2} = k^{2}\log^{2}(-u)-2k_{1}\log(-u)+1,\quad \cos\theta = \frac{-k_{1}\log(-u)+1}{\rho},\quad \sin\theta = \frac{-k_{2}\log(-u)}{\rho}.
\end{equation*}
Taking the $u$-derivative, we have \begin{align*}
    -\sin\theta\partial_{u}\theta = \partial_{u}\left(\frac{-k_{1}\log(-u)+1}{\rho}\right).
\end{align*}
Then we can compute $\partial_{u}\theta(u,0)$: \begin{equation*}
    \rho\partial_{u}\theta = \frac{k_{2}}{(-u)\rho\log(-u)}.
\end{equation*}
Then, we have \begin{equation*}
    \lim_{u\rightarrow 0^{-}}(-u)\rho\partial_{u}\theta = 0\Rightarrow \kappa = 0.
\end{equation*}
Hence, we have \begin{equation*}
   \left\vert \partial_{u}\rho-\frac{k}{(-u)}\right\vert\approx \frac{k}{(-u)\log^{2}(-u)},\quad \left\vert \rho\partial_{u}\theta\right\vert\approx \frac{k_{2}}{u\log(-u)}.
\end{equation*}
\end{example}
\begin{example}
    Taking $\rho = -k\log(-u)$ and $\theta = \frac{\kappa}{k}\log(-\log(-u))$, we have \begin{equation*}
        \partial_{u}\rho = \frac{k}{(-u)},\quad \rho\partial_{u}\theta = \frac{\kappa}{(-u)}.
    \end{equation*}
    In this case, we have \begin{align*}
        \partial_{u}Q = q_{0}r^{2}\rho^{2}\partial_{u}\theta = q_{0}k\kappa u\log(-u).
    \end{align*}
    Therefore, we obtain\begin{equation*}
Q(u,0) = \frac{1}{2}q_{0}k\kappa u^{2}\left(\log(-u)-\frac{1}{2}\right).
    \end{equation*}
\end{example}
The $u$-transport equation \eqref{eq:u-transport equation for Q} for $Q$ can be reduced to \begin{equation*}
    \partial_{u}Q(u,0) = q_{0}u^{2}\left(\phi_{1}\partial_{u}\phi_{2}-\phi_{2}\partial_{u}\phi_{1}\right) = q_{0}u^{2}\rho^{2}\partial_{u}\theta.
\end{equation*}
Integrating the above equation, we have \begin{equation}
    \left\vert Q\right\vert(u,0)\leq \vert q_{0}\vert k\kappa u^{2}\left\vert\log(-u)\right\vert.\label{eq: exterior construction initial estimate for Q}
\end{equation}
\begin{remark}
The estimate \eqref{eq: exterior construction initial estimate for Q} for the spacetime charge $Q$ on the ingoing cone $\underline{\Sigma}_{0}$ is sharp, in the sense that the upper bound is attained when one takes $\rho(u,0) = -k\log(-u)$ and $\theta = \frac{\kappa}{k}\log(-\log(-u))$. However, in view of the interior construction in Section \ref{sec: interior construction}, for the choice of $\phi$ arising from the naked singularity interior, one obtains at most a $u^{2}$ decay rate for the spacetime charge $Q$ on $\underline{\Sigma}_{0}$. Since the interior construction is a perturbation of the $(k_{1},k_{2})$-self-similar solution, which is in the same spirit as Christodoulou's original construction \cite{christodoulou1994examples}, this $u^{2}$ decay may be regarded as characteristic of the self-similar framework. Consequently, our exterior construction applies to a strictly larger class of initial data than that considered in \cite{christodoulou1994examples}, as it accommodates the logarithmically weaker $u^{2}\log(-u)$ decay for $Q$ on $\underline{\Sigma}_{0}$, rather than being restricted to the $u^{2}$ rate dictated by the interior model.
\end{remark}

Let $\Omega^{2}(-e^{-1},0) = \Delta$. Using \eqref{eq:spherical symmetric equaion1}, we can determine $\Omega^{2}(u,0)$: \begin{equation}
    \Omega^{2}(u,0) = \Delta e^{\int_{-e^{-1}}^{u}s\left\vert\partial_{u}\phi\right\vert^{2}(s,0)ds} = \Delta e^{\int_{-e^{-1}}^{u}s\left((\partial_{u}\rho)^{2}+\rho^{2}(\partial_{u}\theta)^{2}\right)(s,0)ds}.
\end{equation}
The leading order behavior for $\Omega^{2}(u,0)$ is \begin{equation}
    \Omega^{2}(u,0)\approx (-u)^{k^{2}+\kappa^{2}},\quad u\rightarrow0^{-}.\label{eq: estimate on omega on the ingoing cone for the exterior construction}
\end{equation}
Using \eqref{eq:wave equation for r}, we have \begin{equation*}
    (-u)\partial_{u}\lambda(u,0)+(-1)\lambda(u,0) = -\frac{\Omega^{2}}{4}\left(1-\frac{Q^{2}}{r^{2}}\right)(u,0).
\end{equation*}
Directly integrating the above equation, we have \begin{equation*}
    \lambda(u,0) = \frac{1}{(-u)}\int_{u}^{0}\frac{1}{4}\Omega^{2}(s,0)\left(1-\frac{Q^{2}(s,0)}{s^{2}}\right)ds.
\end{equation*}
Hence, the leading order behavior for $\lambda(u,0)$ is \begin{equation}
    \lambda(u,0)\approx (-u)^{k^{2}+\kappa^{2}},\quad u\rightarrow 0^{-}.\label{eq: estimate for lambda on the ingoing cone for the exterior construction}
\end{equation}
Using the relation $1-\mu = \frac{-4\lambda\nu}{\Omega^{2}}$, we have \begin{equation*}
    \mu (u,0)= 1-\frac{4\lambda}{\Omega^{2}}(u,0)\rightarrow\frac{q^{2}}{1+q^{2}},\quad u\rightarrow 0^{-}.
\end{equation*}
Next, we study the behavior of $\partial_{v}\phi_{1}(u,0)$ and $\partial_{v}\phi_{2}(u,0)$. We can write the wave equation \eqref{eq:wave equation for phi} for $\phi$ as \begin{align*}
    \partial_{u}\left((-u)\partial_{v}\phi_{1}(u,0)\right)& = -\lambda(u,0)\partial_{u}\phi_{1}(u,0)-q_{0}\frac{Q(u,0)}{(-u)}\frac{\Omega^{2}(u,0)}{4}\phi_{2}(u,0),\\
    \partial_{u}\left((-u)\partial_{v}\phi_{2}(u,0)\right)& = -\lambda(u,0)\partial_{u}\phi_{2}(u,0)+q_{0}\frac{Q(u,0)}{(-u)}\frac{\Omega^{2}(u,0)}{4}\phi_{1}(u,0).
\end{align*}
Integrating the above equations, we have \begin{align*}
    (-u)\partial_{v}\phi_{1}(u,0) =& \frac{1}{e}\partial_{v}\phi_{1}(-e^{-1},0)+\underbrace{\int_{-e^{-1}}^{0}-\lambda(u,0)\partial_{u}\phi_{1}(u,0)-q_{0}\frac{Q(u,0)}{(-u)}\frac{\Omega^{2}(u,0)}{4}\phi_{2}(u,0)du}_{I_{1}}\\&-\int_{u}^{0}-\lambda(s,0)\partial_{u}\phi_{1}(s,0)-q_{0}\frac{Q(s,0)}{(-s)}\frac{\Omega^{2}(s,0)}{4}\phi_{2}(s,0)ds,\\
     (-u)\partial_{v}\phi_{2}(u,0) =& \frac{1}{e}\partial_{v}\phi_{2}(-e^{-1},0)+\underbrace{\int_{-e^{-1}}^{0}-\lambda(u,0)\partial_{u}\phi_{2}(u,0)+q_{0}\frac{Q(u,0)}{(-u)}\frac{\Omega^{2}(u,0)}{4}\phi_{1}(u,0)du}_{I_{2}}\\&-\int_{u}^{0}-\lambda(s,0)\partial_{u}\phi_{2}(s,0)+q_{0}\frac{Q(s,0)}{(-s)}\frac{\Omega^{2}(s,0)}{4}\phi_{1}(s,0)ds.
\end{align*}
Although we will postpone the choice of the initial data on $\Sigma_{-e^{-1}}$ to the next section, we shall fix the values of $\partial_{v}\phi_{1}(-e^{-1},0)$ and $\partial_{v}\phi_{2}(-e^{-1},0)$ to be \begin{equation*}
    \partial_{v}\phi_{1}(-e^{-1},0) = -eI_{1},\quad \partial_{v}\phi_{2}(-e^{-1},0) = -eI_{2}.
\end{equation*}
\begin{example}
    For the initial data given by the interior construction in Section \ref{sec: interior construction}, taking $\phi_{1}(u,0) = -k_{1}\log(-u)+\delta$ and $\phi_{2} = -k_{2}\log(-u)$, we have \begin{align*}
        &Q(u,0) = -\frac{1}{2}k_{2}\delta q_{0}u^{2},\quad \Omega^{2}(u,0) = \Delta e^{k^{2}}(-u)^{k^{2}},\\& \lambda(u,0) = \frac{\Delta e^{k^{2}}}{4(1+k^{2})}(-u)^{k^{2}}-\frac{\Delta e^{k^{2}}k_{2}^{2}\delta^{2}q_{0}^{2}}{16(k^{2}+3)}(-u)^{k^{2}+2}.
    \end{align*}
    Hence, we can compute $I_{1}$ and $I_{2}$: \begin{equation*}
        I_{1}\approx \frac{k_{1}}{k^{2}},\quad I_{2}\approx \frac{k_{2}}{k^{2}},
    \end{equation*}
    which are large quantities when $k$ is sufficiently small.
    \label{example: largeness}
\end{example}
\begin{example}
\label{example: exact loglog behavior}
    For the initial data given by \begin{equation*}
        \rho = -k\log(-u),\quad \theta = \frac{\kappa}{k}\log(-\log(-u)),
    \end{equation*}
    we have \begin{equation*}
        \phi_{1}(u,0) = -k\log(-u)\cos\left(\frac{\kappa}{k}\log(-\log(-u))\right),\quad \phi_{2}(u,0) = -k\log(-u)\sin\left(\frac{\kappa}{k}\log(-\log(-u))\right).
    \end{equation*}
    Then, we have \begin{align*}
        Q(u,0) =& \frac{1}{2}q_{0}k\kappa u^{2}\left(\log(-u)-\frac{1}{2}\right),\quad \Omega^{2}(u,0) = \Delta e^{q^{2}}(-u)^{q^{2}},\\
        \lambda(u,0) =&\frac{\Delta e^{q^{2}}}{4(1+q^{2})}(-u)^{q^{2}}-\frac{\Delta e^{q^{2}}}{16}q_{0}^{2}k^{2}\kappa^{2}(-u)^{q^{2}+2}\frac{1}{q^{2}+3}\left(\log(-u)-\frac{1}{2}\right)^{2}\\&-\frac{\Delta e^{q^{2}}}{16}q_{0}^{2}k^{2}\kappa^{2}(-u)^{q^{2}+2}\left(-\frac{2}{(q^{2}+3)^{2}}\left(\log(-u)-\frac{1}{2}\right)+\frac{2}{(q^{2}+3)^{3}}\right).
    \end{align*}
    Then for $q^{2}$ sufficiently small, we have \begin{align*}
        &-I_{1}\approx \int_{-e^{-1}}^{0}(-u)^{q^{2}}\left(\frac{k}{(-u)}\cos\left(\frac{\kappa}{k}\log(-\log(-u))\right)-\frac{\kappa}{(-u)}\sin\left(\frac{\kappa}{k}\log(-\log(-u))\right)\right)du,\\&
        -I_{2}\approx \int_{-e^{-1}}^{0}(-u)^{q^{2}}\left(\frac{k}{(-u)}\sin\left(\frac{\kappa}{k}\log(-\log(-u))\right)+\frac{\kappa}{(-u)}\cos\left(\frac{\kappa}{k}\log(-\log(-u))\right)\right)du.
    \end{align*}
    Taking $k = \kappa$, we can draw the graph (Figure \ref{fig:placeholder}) on the behavior of \begin{equation*}
        \int_{-e^{-1}}^{0}k(-u)^{q^{2}-1}\cos\left(\log(-\log(-u))\right) du = \frac{\sqrt{2}}{2}q\int_{1}^{\infty}e^{-q^{2}t}\cos(\log(t))dt=:I(q)
        \end{equation*}
    when $q\rightarrow 0$, which shows $I(q)$ oscillates between $\frac{1}{q}$ and $-\frac{1}{q}$ when $q\rightarrow0$.
\begin{figure}
    \centering
    \includegraphics[width=0.5\linewidth]{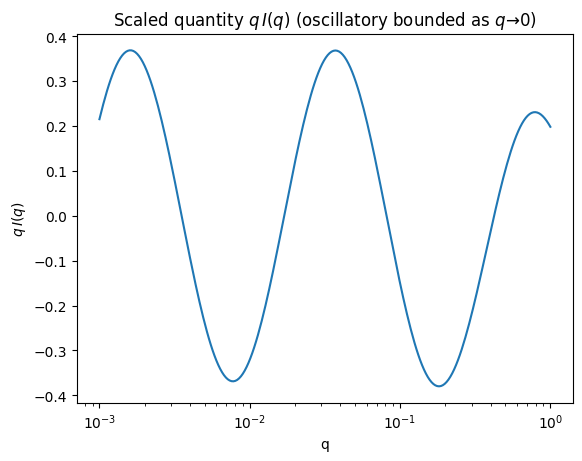}
    \caption{Graph for $qI(q)$}
    \label{fig:placeholder}
\end{figure}
\end{example}
\subsection{Construction of auxiliary functions}
In this section, we construct auxiliary functions $(\partial_{v}\phi_{1})_{S}$ and $\left(\partial_{v}\phi_{2}\right)_{S}$ that are necessary for our exterior construction. First, we define a projection operator: \begin{equation*}
    \mathcal{P}(f(u,v)) = f(u,0),
\end{equation*}
which projects a function to its value on the ingoing cone $\underline{\Sigma}_{0}$. We consider the wave equation \eqref{eq:wave equation for phi} in the region $\left\{0\leq \frac{v}{(-u)^{1-q^{2}}}\leq \underline{v}\right\}$ for $\underline{v}>0$, which will be chosen to be small later. Let $\left(\partial_{v}\phi\right)_{S} = \left(\partial_{v}\phi_{1}\right)_{S}+i\left(\partial_{v}\phi_{2}\right)_{S}$ be a solution to \begin{equation*}
    \mathcal{P}(r)\partial_{u}\left(\partial_{v}\phi\right)_{S}+\mathcal{P}(\partial_{u}r)\left(\partial_{v}\phi\right)_{S}+\mathcal{P}(\partial_{v}r)\mathcal{P}(\partial_{u}\phi) = iq_{0}\mathcal{P}\left(\frac{Q\Omega^{2}}{4r}\phi\right).
\end{equation*}
Since $\mathcal{P}(r) = r(u,0) = (-u)$, we have \begin{align}
    \partial_{u}\left((-u)\left(\partial_{v}\phi_{1}\right)_{S}\right) &= -\lambda(u,0)\partial_{u}\phi_{1}(u,0)+q_{0}\frac{Q(u,0)\Omega^{2}(u,0)}{4(-u)}\phi_{2}(u,0),\label{eq: the equation for constructing dvphi1s}\\
    \partial_{u}\left((-u)\left(\partial_{v}\phi_{2}\right)_{S}\right) &= -\lambda(u,0)\partial_{u}\phi_{2}(u,0)-q_{0}\frac{Q(u,0)\Omega^{2}(u,0)}{4(-u)}\phi_{1}(u,0),\label{eq: the equation for constructing dvphi2s}.
\end{align}
Since the above equations~\eqref{eq: the equation for constructing dvphi1s}--\eqref{eq: the equation for constructing dvphi2s} are transport equations in the $u$ direction, to construct the auxiliary functions $(\partial_{v}\phi_{1})_{S}$ and $(\partial_{v}\phi_{2})_{S}$ in the region $\left\{0\leq\frac{v}{(-u)^{1-q^{2}}}\leq\underline{v}\right\}$, we have to impose data for $(\partial_{v}\phi_{1})_{S}$ and $(\partial_{v}\phi_{2})_{S}$ on the hypersurface $\left\{\frac{v}{(-u)^{1-q^{2}}}=\underline{v}\right\}$. We set \begin{equation*}
    \left(\partial_{v}\phi_{1}\right)_{S}\left(u,(-u)^{1-q^{2}}\underline{v}\right) = 0,\quad \left(\partial_{v}\phi_{2}\right)_{S}\left(u,(-u)^{1-q^{2}}\underline{v}\right) = 0.
\end{equation*}
Then we can solve for functions $\left(\partial_{v}\phi_{1}\right)_{S}$ and $\left(\partial_{v}\phi_{2}\right)_{S}$ \begin{align}
    \left(\partial_{v}\phi_{1}\right)_{S}(u,v) =& \frac{1}{(-u)}\int^{-\left(\frac{v}{\underline{v}}\right)^{\frac{1}{1-q^{2}}}}_{u}-\lambda(s,0)\partial_{u}\phi_{1}(s,0)-q_{0}\frac{Q(s,0)\Omega^{2}(s,0)}{4(-s)}\phi_{2}(s,0)ds,\\
    \left(\partial_{v}\phi_{2}\right)_{S}(u,v)=&\frac{1}{(-u)}\int_{u}^{-\left(\frac{v}{\underline{v}}\right)^{\frac{1}{1-q^{2}}}}-\lambda(s,0)\partial_{u}\phi_{2}(s,0)+q_{0}\frac{Q(s,0)\Omega^{2}(s,0)}{4(-s)}\phi_{1}(s,0)ds.
\end{align}
One can check that when $v = 0$, these auxiliary functions are consistent with $\partial_{v}\phi_{1}(u,0)$ and $\partial_{v}\phi_{2}(u,0)$, respectively. Although the numerical simulation (see Figure \ref{fig:placeholder}) shows that $\partial_{v}\phi_{1}(u,0)$ and $\partial_{v}\phi_{2}(u,0)$ might behave like $\frac{1}{q}$ as $q\rightarrow0$, which will cause severe difficulty when proving the global existence, the following lemma shows that these auxiliary functions are $q$-small in the $L^{1}$-sense.
\begin{lemma}
    For the auxiliary functions $(\partial_{v}\phi_{1})_{S}$ and $\left(\partial_{v}\phi_{2}\right)_{S}$, we have the following estimates 
    \begin{align}
&\left\vert\left(\partial_{v}\phi_{1}\right)_{S}\right\vert\lesssim\frac{1}{q}(-u)^{q^{2}-1}\left[1-\left(\frac{v}{(-u)^{1-q^{2}}}\right)^{\frac{q^{2}}{1-q^{2}}}\left(\frac{1}{\underline{v}}\right)^{\frac{q^{2}}{1-q^{2}}}\right],\label{eq: pointwise estimate for auxiliary function}\\&\left\vert\left(\partial_{v}\phi_{2}\right)_{S}\right\vert\lesssim\frac{1}{q}(-u)^{q^{2}-1}\left[1-\left(\frac{v}{(-u)^{1-q^{2}}}\right)^{\frac{q^{2}}{1-q^{2}}}\left(\frac{1}{\underline{v}}\right)^{\frac{q^{2}}{1-q^{2}}}\right],\\
        &\int_{0}^{(-u)^{1-q^{2}}\underline{v}}\left\vert\left(\partial_{v}\phi_{1}\right)_{S}\right\vert(u,v)dv\lesssim q\underline{v},\quad 
        \int_{0}^{(-u)^{1-q^{2}}\underline{v}}\left\vert\left(\partial_{v}\phi_{2}\right)_{S}\right\vert(u,v)dv\lesssim q\underline{v}.\label{eq: L1 estimate for the auxilary function}
    \end{align}
\end{lemma}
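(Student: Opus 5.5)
The plan is to estimate the integrand of the explicit formula for $(\partial_{v}\phi_{1})_{S}$ pointwise, integrate it between $u$ and the cut-off point $-(v/\underline{v})^{\frac{1}{1-q^{2}}}$, and then integrate the resulting pointwise bound in $v$. The treatment of $(\partial_{v}\phi_{2})_{S}$ is identical.

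\emph{Step 1: the integrand on $\underline{\Sigma}_{0}$.} We combine the following bounds on the ingoing cone:
\begin{itemize}
\item $\lambda(s,0)\approx(-s)^{q^{2}}$ from \eqref{eq: estimate for lambda on the ingoing cone for the exterior construction};
\item $\Omega^{2}(s,0)\approx(-s)^{q^{2}}$ from \eqref{eq: estimate on omega on the ingoing cone for the exterior construction};
\item $|Q(s,0)|\lesssim |q_{0}|k\kappa s^{2}|\log(-s)|$ from \eqref{eq: exterior construction initial estimate for Q};
\item $|\phi(s,0)|\lesssim q|\log(-s)|+1$ and $|\partial_{u}\phi(s,0)|\le |\partial_{u}\rho|+\rho|\partial_{u}\theta|\lesssim q/(-s)$, both from the admissible ansatz, since the corrections $\rho_{1},\theta_{1}$ are lower order.
\end{itemize}
With these, the integrand is bounded by
\begin{equation*}
q(-s)^{q^{2}-1}+|q_{0}|q^{2}(-s)^{1+q^{2}}\log^{2}(-s)\lesssim q(-s)^{q^{2}-1}.
\end{equation*}
The charge term is harmless because it carries a factor $(-s)^{2}$ relative to the main term.

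\emph{Step 2: the pointwise bound.} Integrating from $u$ to $s_{*}:=-(v/\underline{v})^{\frac{1}{1-q^{2}}}$ gives
\begin{equation*}
|(\partial_{v}\phi_{1})_{S}|\lesssim \frac{1}{(-u)}\,\frac{q}{q^{2}}\Bigl((-u)^{q^{2}}-(-s_{*})^{q^{2}}\Bigr).
\end{equation*}
Factoring out $(-u)^{q^{2}}$ and using $(-s_{*})^{q^{2}}(-u)^{-q^{2}}=\bigl(z/\underline{v}\bigr)^{\frac{q^{2}}{1-q^{2}}}$ with $z=v/(-u)^{1-q^{2}}$, we obtain exactly \eqref{eq: pointwise estimate for auxiliary function}.

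\emph{Step 3: the $L^{1}$ bound.} We change variables $v=(-u)^{1-q^{2}}\underline{v}\,w$ with $w\in[0,1]$ and set $p=\frac{q^{2}}{1-q^{2}}$. Then
\begin{equation*}
\int_{0}^{(-u)^{1-q^{2}}\underline{v}}|(\partial_{v}\phi_{1})_{S}|\,dv\lesssim \frac{1}{q}(-u)^{q^{2}-1}(-u)^{1-q^{2}}\underline{v}\int_{0}^{1}(1-w^{p})\,dw=\frac{\underline{v}}{q}\,\frac{p}{1+p}=q^{2}\frac{\underline{v}}{q}.
\end{equation*}
This equals $q\underline{v}$, which is \eqref{eq: L1 estimate for the auxilary function}. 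The key point is that the factor $1-w^{p}$ is $O(q^{2})$ away from $w=0$, and this cancels the large $1/q$ prefactor.

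\emph{Main obstacle.} The main obstacle is Step 1, specifically justifying $|\partial_{u}\phi(s,0)|\lesssim q/(-s)$ together with the two-sided bound $\lambda(s,0)\gtrsim(-s)^{q^{2}}$ uniformly on $[-e^{-1},0)$, not just as $s\to0^{-}$. For $\lambda$, I would get uniformity by bounding the factor $1-Q^{2}/s^{2}$ from below using smallness of $q$. For $\partial_{u}\phi$, I would use the integrability of $\rho_{1},\theta_{1}$, which gives $\rho\lesssim q|\log(-u)|+O(1)$ and hence control of $\rho\,\partial_{u}\theta$.

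Only the upper bounds matter for the stated estimates. If the constant in $\Omega^{2}$ depends on $\Delta$, it is absorbed into $\lesssim$.
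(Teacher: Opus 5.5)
Your proposal is correct, and it reaches the $L^{1}$ bound by a different ordering of the same computation. The paper's proof only treats the $L^{1}$ estimate, and it does so by Fubini. For fixed $s\in[u,0)$ the admissible $v$ range over an interval of length $(-s)^{1-q^{2}}\underline{v}$. This length cancels the $(-s)^{q^{2}-1}$ in the integrand bound, so one gets $\frac{q\underline{v}}{(-u)}\int_{u}^{0}ds=q\underline{v}$ without ever incurring a factor $1/q$.

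You integrate in the other order. Integrating first in $s$ produces the pointwise bound with its $1/q$ loss; this is the half of the lemma that the paper's proof leaves implicit. Integrating then in $v$ recovers the factor $q^{2}$ from $\int_{0}^{1}(1-w^{p})\,dw=\frac{p}{1+p}=q^{2}$. By Tonelli the two computations coincide. Your route proves both halves of the lemma at once, while the Fubini route shows more transparently that the gain of $q^{2}$ comes from truncating along $\{v=(-u)^{1-q^{2}}\underline{v}\}$.

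Two minor remarks:
\begin{itemize}
\item The charge term carries $q_{0}^{2}$ rather than $|q_{0}|$. This is harmless and is absorbed into the constant.
\item The ``main obstacle'' you flag is not really one. For $\partial_{u}\phi$, the data prescribe $\rho\partial_{u}\theta=\frac{\kappa}{(-u)}-\theta_{1}$ directly, so no control of $\rho$ is needed. For $\lambda$, only the upper bound is needed, and it follows from $1-Q^{2}/s^{2}\le 1$ together with the upper bound on $\Omega^{2}$.
\end{itemize}
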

\begin{proof}
    We only prove the estimate for $(\partial_{v}\phi_{1})_{S}$; the estimate for $(\partial_{v}\phi_{2})_{S}$ follows similarly. Using the Fubini theorem, we have \begin{equation}
        \begin{aligned}
            &\frac{1}{(-u)}\int_{0}^{(-u)^{1-q^{2}}\underline{v}}\int_{u}^{-\left(\frac{v}{\underline{v}}\right)^{\frac{1}{1-q^{2}}}}\left\vert\lambda(s,0)\partial_{u}\phi_{1}(s,0)\right\vert+\left\vert q_{0}\frac{Q(s,0)\Omega^{2}(s,0)}{4(-s)}\phi_{2}(s,0)\right\vert dsdv\\=&
            \frac{1}{(-u)}\int_{u}^{0}\int_{0}^{(-s)^{1-q^{2}}\underline{v}}\left\vert\lambda(s,0)\partial_{u}\phi_{1}(s,0)\right\vert+\left\vert q_{0}\frac{Q(s,0)\Omega^{2}(s,0)}{4(-s)}\phi_{2}(s,0)\right\vert dvds\\=&\frac{1}{(-u)}\int_{u}^{0}(-s)^{1-q^{2}}\underline{v}\left(\left\vert\lambda(s,0)\partial_{u}\phi_{1}(s,0)\right\vert+\left\vert q_{0}\frac{Q(s,0)\Omega^{2}(s,0)}{4(-s)}\phi_{2}(s,0)\right\vert\right)ds\\&\lesssim \frac{1}{(-u)}\int_{u}^{0}(-s)^{1-q^{2}}\underline{v}\left((-s)^{q^{2}}\frac{q}{(-s)}+q_{0}^{2}q(-s)^{1+q^{2}}\log(-s)^{2}\right)ds\\&\lesssim \frac{q\underline{v}}{(-u)}\int_{u}^{0}ds\lesssim q\underline{v},
        \end{aligned}
    \end{equation}
    where we have used the estimates \eqref{eq: exterior construction initial estimate for Q}, \eqref{eq: estimate on omega on the ingoing cone for the exterior construction}, and \eqref{eq: estimate for lambda on the ingoing cone for the exterior construction}.
\end{proof}
\begin{remark}
    The construction of the exterior solutions may be viewed as a global existence result for the Einstein--Maxwell-charged scalar field equations in the region $\mathcal{Q}^{(ex)}$, with initial data prescribed on $\underline{\Sigma}_{0}\cup\Sigma_{-e^{-1}}$. In view of the nonlinear hyperbolic nature of the system, we impose suitable smallness assumptions on the initial data. Nevertheless, as illustrated in Example~\ref{example: largeness} and Example~\ref{example: exact loglog behavior}, the quantity $\partial_{v}\phi$ may be large in $L^{\infty}$ near the intersecting sphere $(u,v) = (-e^{-1},0)$. Thus, one cannot in general expect the initial data for $\partial_{v}\phi$ to be small in $L^{\infty}$ on the outgoing initial hypersurface $\Sigma_{-e^{-1}}$. It is, however, still reasonable to impose smallness in $L^{p}$. The auxiliary functions $(\partial_{v}\phi_{1})_{S}$ and $(\partial_{v}\phi_{2})_{S}$ introduced above are designed to impose small initial data in $L^{1}$ on $\Sigma_{-e^{-1}}$. The initial data for $\partial_{v}\phi$ on $\Sigma_{-e^{-1}}$ can be thought of as a suitable perturbation of $(\partial_{v}\phi)_{S}$.
\end{remark}

\subsection{Setup of the initial data on the initial outgoing cone $\Sigma_{-e^{-1}}$}
The initial data on the outgoing cone $\{u = -e^{-1},\ v\geq 0\}$ are stated as in Theorem~\ref{main theorem1}.
\section{Exterior construction}
\subsection{Construction of region I}
In this section, we show that under the given characteristic initial data, the solution exists and is free of trapped surfaces in the region I: $\left\{-e^{-1}\leq u<0,\ 0\leq\frac{v}{(-u)^{1-q^{2}}}\leq \underline{v}\right\}$.\begin{figure}[htbp]
    \centering
    \begin{tikzpicture}[ scale=0.8, >=Latex, every node/.style={font=\small} ]
\coordinate (O) at (0,0);
\coordinate (A) at (4,-4);
\coordinate (B) at (6,6);
\coordinate (C) at (10,2);
\draw (O)--(A);
\draw[dashed] (O)--(B);
\draw (A)--(C);
\draw[dashed] (B) -- (C)
    node[midway, above, sloped]
    {\rotatebox{90}{$\mathscr{I}$}};
\node[left] at (O) { $\mathcal{O}$};
\coordinate (A1) at (6,-2);
\coordinate (A2) at (8,0);
\draw[dashed] (O)--(A1);
\draw[dashed] (O)--(A2);
\coordinate (A3) at (3,0);
\coordinate (A4) at (1.5,1.5);
\draw[dashed] (A3)--(A4);
\coordinate (L1) at (2.5,-1.5);
\coordinate (L2) at (3.5,-0.5);
\node at (L2) {II};
\coordinate (L3) at (1.5,0.5);
\node at (L3) {III};
\coordinate (L4) at (5,3);
\node at (L4){IV};
\fill[gray!30] (O) -- (A1) -- (A) -- cycle;
\node at (L1){I};
\end{tikzpicture}
    \caption{$R_{I}$}
    \label{fig: region 1}
\end{figure}
\subsubsection{Local well-posedness and continuation criteria}
To prove the existence of a solution that arises from our initial data in the region $I$, we will appeal to the standard local well-posedness result and apply the bootstrap argument to extend the solution to the whole region $I$, using the continuation criterion. The standard local well-posedness result is stated as follows.\begin{proposition}
For some $v_{0}>0$, given $BV$ ($C^{k}$) initial data on $\underline{\Sigma}_{0}\cup\Sigma_{-e^{-1}}$, there exists a decreasing function $T(u):[-1,0]\rightarrow\mathbb{R}$ with $T(-1) = v_{0}$ and $T(0) = 0$ such that there exists a unique $BV$ ($C^{k}$) solution to the Einstein--Maxwell-charged scalar field equations arising from the initial data in the region\begin{equation*}
\{(u,v),\ -1\leq u<0,\ 0\leq v<T(u)\}.
\end{equation*}
\label{local well-posedness: label}
\end{proposition}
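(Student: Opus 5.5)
The plan is to reduce the statement to Proposition~\ref{prop: BVlwp}, which is applied repeatedly along the ingoing cone $\underline{\Sigma}_{0}$, and then assemble the resulting local domains into one region under a graph $v<T(u)$. The difficulty is that the data degenerate at $u\to0$. On $\underline{\Sigma}_{0}$ we have $\partial_{u}\phi(u,0)\sim q/(-u)$, and $\partial_{v}\phi(u,0)$ may be of size $q^{-1}(-u)^{q^{2}-1}$. So Proposition~\ref{prop: BVlwp} cannot be applied once on all of $[-e^{-1},0)$. Instead it will be applied on each compact segment $[-e^{-1},u_{1}]$ with $u_{1}<0$.

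\textbf{Step 1 (compact segments).} Fix $u_{1}<0$. On $[-e^{-1},u_{1}]\times\{0\}$ the data $r(u,0)=-u$, $\partial_{u}\phi(u,0)$, $Q(u,0)$, $\Omega^{2}(u,0)$ and $\lambda(u,0)$ computed in the previous section are bounded and $BV$ (respectively $C^{k}$), and $r\ge -u_{1}>0$. Proposition~\ref{prop: BVlwp} (after translating the gauge $\partial_{v}r=$ const on $\Sigma_{-e^{-1}}$ to the normalisation \eqref{eq: gauge choice in LWP}) then gives a unique $BV$ ($C^{k}$) solution on $[-e^{-1},u_{1}]\times[0,\delta(u_{1}))$. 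Here $\delta(u_{1})>0$ depends only on the size of the data on this segment together with a lower bound for $r$. Uniqueness makes the solutions for different $u_{1}$ agree on their overlaps, so they glue together.

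\textbf{Step 2 (choosing $T$).} Since the data bounds get worse as $u_{1}\to0$, the existence time $\delta(u_{1})$ can be taken non-increasing in $u_{1}$. Choose a decreasing sequence $u^{(n)}\uparrow 0$ and set $\delta_{n}=\delta(u^{(n)})$, arranged so that $\delta_{n}\downarrow0$. Let $T$ be any continuous decreasing function with $T(u)\le\delta_{n}$ on $[u^{(n-1)},u^{(n)}]$, $T(-1)=v_{0}:=\delta_{1}$, and $T(u)\to0$ as $u\to0$. Every point with $v<T(u)$ lies in the domain of some local solution, so the glued solution is defined on $\{-1\le u<0,\ 0\le v<T(u)\}$. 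Uniqueness in this region follows from the uniqueness part of Proposition~\ref{prop: BVlwp} on each piece. $C^{k}$ regularity is propagated by the same proposition's persistence-of-regularity statement.

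\textbf{Main obstacle.} The step I expect to be hardest is making the dependence of $\delta$ on the data quantitative enough to guarantee $\delta(u_{1})>0$ for every $u_{1}<0$, in particular in the Maxwell terms $Q\Omega^{2}/r^{2}$ and $A_{u}$. The first thing I would try is a scaling check: on $[-e^{-1},u_{1}]$ all coefficients are bounded by powers of $|u_{1}|^{-1}$ and $\log|u_{1}|$, and the $BV$ iteration in Appendix~\ref{app: BVLWP} closes once $\delta\cdot(\text{data norm})\ll1$. That gives an explicit positive $\delta(u_{1})$, which is sufficient because no uniformity as $u_{1}\to0$ is needed; this is exactly why $T(0)=0$ is allowed.
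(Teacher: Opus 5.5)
Your argument is correct. The paper gives no proof of this proposition; it invokes it only as ``the standard local well-posedness result.'' The only local theory the paper actually proves is Proposition~\ref{prop: BVlwp} (Appendix~\ref{app: BVLWP}), which works on a single rectangle $[u_0,u_1]\times[v_0,v_0+\delta)$ with $r$ bounded below and data of finite size.

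Your reduction is the natural way to pass from that rectangle result to the present statement. On the full cone $\underline{\Sigma}_0$ those hypotheses fail:
\begin{itemize}
\item $r(u,0)=-u\to0$;
\item $|\phi(u,0)|\approx k|\log(-u)|$ is unbounded;
\item for $\kappa\neq0$, even $r\partial_u\phi$ has infinite total variation, since its modulus tends to $q$ while its phase grows like $\frac{\kappa}{k}\log|\log(-u)|$.
\end{itemize}
So the constants $I$ in the appendix are infinite on the whole cone. Exhausting by compact segments $[-e^{-1},u_1]$, accepting a $u_1$-dependent $\delta(u_1)$, and gluing by uniqueness is exactly what is needed. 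It also explains why $T(0)=0$ cannot be avoided.

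Two steps you use implicitly deserve a sentence each.
\begin{itemize}
\item \emph{Uniqueness.} Uniqueness on $\{v<T(u)\}$ relies on $T$ being non-increasing. A competing solution is only defined on that region, not on your rectangles $[-e^{-1},u^{(n)}]\times[0,\delta_n)$. Monotonicity of $T$ guarantees that for every $(u,v)$ in the region the rectangle $[-e^{-1},u]\times[0,v']$, with $v<v'<T(u)$, lies inside it. Rectangle uniqueness is then applied there.
\item \emph{$C^k$ regularity.} Proposition~\ref{prop: BVlwp} only asserts persistence of $C^1$ regularity. The $C^k$ case needs the usual argument of commuting the equations with derivatives.
\end{itemize}

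Two minor slips: the $u$-range should start at $-e^{-1}$, where the data are posed, rather than at $-1$ as in the statement; and your $u^{(n)}$ should be an increasing sequence.
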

The standard continuation criterion is given in the following proposition.
\begin{proposition}
Given $BV$ initial data on the characteristic bifurcating hypersurface $\Sigma_{-e^{-1}}\cup\underline{\Sigma}_{0}$, let $R$ be the maximal hyperbolic development of the initial data. Then one of the quantities $\{(\partial_{v}r)^{-1},(-\partial_{u}r),\partial_{u}\phi,\partial_{v}\phi,Q,\Omega^{2}\}$ must be unbounded in the region $R$.
\end{proposition}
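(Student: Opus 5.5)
The plan is to argue by contraposition. Suppose the maximal development $R$ of the $BV$ data on $\Sigma_{-e^{-1}}\cup\underline{\Sigma}_{0}$ is not the whole domain of interest, and that all of $(\partial_{v}r)^{-1}$, $-\partial_{u}r$, $\partial_{u}\phi$, $\partial_{v}\phi$, $Q$, $\Omega^{2}$ are bounded on $R$. We then show that the solution extends beyond $R$, which contradicts maximality.

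Since $R$ is the maximal development of characteristic data on a bifurcate null hypersurface in $1+1$ dimensions, its boundary within the domain is achronal. So if $R$ is not everything, there is a first point $(u_{*},v_{*})$ on the future boundary such that the closed rectangle $[-e^{-1},u_{*}]\times[0,v_{*}]$, minus that corner, lies in $R$. We aim to extend the solution to a neighbourhood of this corner.

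\textbf{Step 1: bounds on the remaining quantities.} Under the assumed bounds we recover the rest of the system on the half-open rectangle:
\begin{itemize}
\item $\partial_{v}r$ is bounded, by integrating \eqref{eq:wave equation for r} in $u$. This uses boundedness of $\Omega^{2}$, $\partial_{u}r$ and $Q$, together with $r$ being bounded below.
\item $r$ is bounded below away from zero, since $r(u,0)=(-u)\geq -u_{*}>0$ and $\partial_{v}r>0$. This last fact follows from $(\partial_{v}r)^{-1}$ being bounded (and finite), so $\partial_{v}r$ cannot change sign.
\item $\Omega^{-2}$ is bounded, by integrating \eqref{eq:wave equation for omega} in both directions. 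Its right-hand side is bounded in terms of the quantities already controlled, using the bounds on $1/r$, and $\Omega^{2}$ on the initial cones is positive and bounded.
\item $A_{u}$ is bounded, from \eqref{eq:spherical symmetric equation last} with $A_{v}\equiv0$.
\item $\phi$ is bounded, by integrating $\partial_{v}\phi$.
\end{itemize}

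\textbf{Step 2: $BV$ control up to the corner.} Next we show that $\partial_{u}\phi(\cdot,v)$ has bounded total variation in $u$, uniformly in $v<v_{*}$, and similarly $\partial_{v}\phi(u,\cdot)$ in $v$. Write the wave equation \eqref{eq: pre wave equation for phi} as transport equations for $\partial_{v}(r\phi)$ in $u$ and for $\partial_{u}(r\phi)$ in $v$. The mixed derivative $\partial_{u}\partial_{v}\phi$ is then pointwise bounded by products of quantities bounded in Step 1. Hence the variation of $\partial_{u}\phi$ along an ingoing segment is at most its initial variation plus a bounded integral. The same argument applied to \eqref{eq:spherical symmetric equaion1}--\eqref{eq:dv-Ray equation} and to the $Q$-transport equations controls the variations of $\partial r$ and $Q$.

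Consequently the traces of all unknowns on the segments $\{u=u_{*}-\eta\}$ and $\{v=v_{*}-\eta\}$ are $BV$, with norms uniform in $\eta$. Here $(\partial_{v}r)^{-1}$ stays bounded, so the gauge of Proposition~\ref{prop: BVlwp} can be restored by a bounded reparametrization of $v$.

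\textbf{Step 3: extension and contradiction.} Apply Proposition~\ref{prop: BVlwp} (equivalently Proposition~\ref{local well-posedness: label}) with data on $\{u=u_{*}-\eta\}\cup\{v=v_{*}-\eta\}$. Its existence time $\delta$ depends only on the $BV$ norms and on lower bounds for $r$ and $\Omega^{2}$, which are uniform in $\eta$. Choosing $\eta<\delta$ therefore yields a solution on a neighbourhood of $(u_{*},v_{*})$. Uniqueness glues it to the solution on $R$, contradicting maximality.

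\textbf{Main obstacle.} The delicate point is Step 2: making the existence time of the local theory depend only on $BV$ norms and not on finer regularity. For this I would follow Christodoulou's $BV$ argument from \cite{chris93}, whose charged extension is Appendix~\ref{app: BVLWP}. The step I would check most carefully there is that the nonlinear charge terms $A_{u}\partial_{v}(r\phi)$ and $Q\Omega^{2}\phi/r$ are pointwise bounded, so that they contribute only to the integrals and not to the variation of the initial data.
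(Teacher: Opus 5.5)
The paper gives no proof of this criterion. It is quoted as standard, resting on the $BV$ theory of Proposition~\ref{prop: BVlwp} and Appendix~\ref{app: BVLWP}. Your scheme is the standard way to prove it: uniform bounds on the half-open rectangle, uniform $BV$ control of the traces, a uniform existence time, and a contradiction with maximality.

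However, the key inference in Step 2 is not valid. Since
\begin{equation*}
TV_{u}[\partial_{u}\phi](v)\leq TV_{u}[\partial_{u}\phi](0)+\int_{0}^{v}TV_{u}[\partial_{v}\partial_{u}\phi](v')\,dv',
\end{equation*}
what must be controlled is the $u$-variation of the mixed derivative, not its supremum. Integrating in $v$ a bounded function of unbounded $u$-variation gives unbounded $u$-variation. A pointwise bound on $\partial_u\partial_v\phi$ only yields the transverse variations $TV_u[\partial_v\phi]$ and $TV_v[\partial_u\phi]$.

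For the same-direction variations you must expand the right-hand sides of \eqref{eq: transport equation for zeta}--\eqref{eq: transport equation for theta}, with $\zeta=r\partial_u\phi$ and $\theta=r\partial_v\phi$, and bound the variation of every factor.
\begin{itemize}
\item In $u$ this requires pointwise bounds on the $u$-derivatives of $\theta,\lambda,\nu,\mu,Q,A_u$. Those involving $\partial_u\log\Omega^2$ are not given by Step 1: $\partial_u\nu$ via \eqref{eq:spherical symmetric equaion1}, $\partial_u\mu$, and $\partial_u A_u$. They require first bounding $\partial_u\log\Omega^2$ by integrating \eqref{eq:wave equation for omega} in $v$ from $\underline{\Sigma}_0$.
\item In $v$, the gauge term gives $TV_v[A_u\theta]\leq\|A_u\|_{L^\infty}TV_v[\theta]+\|\theta\|_{L^\infty}TV_v[A_u]$. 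So $TV_v[\theta]$ reappears on the right-hand side, and the estimate closes only through Gr\"onwall in $u$.
\end{itemize}
Pointwise boundedness of the charge terms, which you single out as the thing to check, is therefore not the relevant property. What matters is their variation transverse to the direction of integration. With these additions, Step 2 becomes the a priori $TV$ estimates of Appendix~\ref{app: BVLWP}, run on the whole rectangle with your Step 1 bounds and Gr\"onwall in place of $\delta$-smallness.

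A smaller point in Step 3: putting the data on $\{v=v_*-\eta\}$ into the gauge of Proposition~\ref{prop: BVlwp} also requires reparametrizing $u$ so that $\partial_u r=-1$. That needs a lower bound on $-\partial_u r$, which is not among the hypotheses. It follows from $\partial_u(\partial_u r/\Omega^2)\leq 0$, from $\partial_u r<0$ on $\Sigma_{-e^{-1}}$, and from your bound on $\Omega^{-2}$. You also need a gauge change $\chi=\chi(u)$, which preserves $A_v\equiv 0$, making $A_u$ vanish on that segment. Its effect on the $BV$ norm of $\partial_u\phi$ is controlled by $TV_u[A_u]$.
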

\subsubsection{The renormalized quantities and the renormalized equations}
Since the quantity $\partial_{v}\phi(u,0)$ is a pointwise large quantity when $q$ is taken to be sufficiently small, to deal with this largeness, we renormalize the quantities we shall consider and derive the equations for the renormalized quantities in this section. Without ambiguity\footnote{Note that the notation $\widetilde{\Psi}$ has been used in the interior construction to denote the background $(k_{1},k_{2})$-self-similar solution to the Einstein--Maxwell-uncharged scalar field equations. Since our exterior construction is not perturbative, we avoid introducing a new notation and simply use $\widetilde{\Psi}$ here to denote the renormalized quantity.}, we use $\widetilde{\Psi}$ to denote \begin{equation*}
    \widetilde{\Psi}(u,v) = \Psi(u,v)-\Psi(u,0),\quad \Psi\in\left\{\partial_{u}r,\ \partial_{v}r,\ \mu,\ \Omega^{2},\ \partial_{u}\phi,\ Q\right\}.
\end{equation*}
For $\partial_{v}\phi$, we make the following renormalization:\begin{equation*}
\widetilde{\partial_{v}\phi_{1}} = \partial_{v}\phi_{1}-\left(\partial_{v}\phi_{1}\right)_{S},\quad \widetilde{\partial_{v}\phi_{2}} = \partial_{v}\phi_{2}-\left(\partial_{v}\phi_{2}\right)_{S}.
\end{equation*}
The renormalized quantities satisfy the following equations: \begin{align}
\partial_{u}\widetilde{\lambda} &= \frac{\nu}{r(1-\mu)}\left(\mu-\frac{Q^{2}}{r^{2}}\right)\widetilde{\lambda}+\frac{\nu}{r(1-\mu)}\left(\mu-\frac{Q^{2}}{r^{2}}\right)\mathcal{P}(\lambda)\nonumber\\&\quad -\mathcal{P}\left(\frac{\lambda\nu}{r(1-\mu)}\left(\mu-\frac{Q^{2}}{r^{2}}\right)\right),\allowdisplaybreaks\label{eq:renormalized rv equation}\\
\partial_{v}\widetilde{\nu}&=\frac{\lambda}{r(1-\mu)}\left(\mu-\frac{Q^{2}}{r^{2}}\right)\widetilde{\nu}+\frac{\lambda}{r(1-\mu)}\left(\mu-\frac{Q^{2}}{r^{2}}\right)\mathcal{P}(\nu),\allowdisplaybreaks\label{eq:renormalized ru equation}\\
\partial_{u}\left(\widetilde{\partial_{v}\phi_{1}}\right)+\frac{\nu}{r}\widetilde{\partial_{v}\phi_{1}}&=-\widetilde{\left(\frac{\nu}{r}\right)}\left(\partial_{v}\phi_{1}\right)_{S}-\widetilde{\left(\frac{\lambda}{r}\partial_{u}\phi_{1}\right)}+q_{0}\left(\frac{A_{u}}{r}\partial_{v}(r\phi_{2})+\widetilde{\left(\frac{Q}{r^{2}}\frac{\lambda\nu}{1-\mu}\phi_{2}\right)}\right),\allowdisplaybreaks\label{eq:renormalized dv-phi1 equation}\\\partial_{u}\left(\widetilde{\partial_{v}\phi_{2}}\right)+\frac{\nu}{r}\widetilde{\partial_{v}\phi_{2}}&=-\widetilde{\left(\frac{\nu}{r}\right)}\left(\partial_{v}\phi_{2}\right)_{S}-\widetilde{\left(\frac{\lambda}{r}\partial_{u}\phi_{2}\right)}-q_{0}\left(\frac{A_{u}}{r}\partial_{v}(r\phi_{1})+\widetilde{\left(\frac{Q}{r^{2}}\frac{\lambda\nu}{1-\mu}\phi_{1}\right)}\right),\allowdisplaybreaks\\
\partial_{v}\left(\widetilde{\partial_{u}\phi_{1}}\right)+\frac{\lambda}{r}\widetilde{\partial_{u}\phi_{1}} &= -\frac{\lambda}{r}\mathcal{P}(\partial_{u}\phi_{1})-\frac{\nu}{r}\widetilde{\partial_{v}\phi_{1}}-\frac{\nu}{r}\left(\partial_{v}\phi_{1}\right)_{S}+q_{0}\left(\frac{A_{u}\partial_{v}(r\phi_{2})}{r}+\frac{Q}{r^{2}}\frac{\lambda\nu}{1-\mu}\phi_{2}\right),\allowdisplaybreaks\label{eq:renormalized du-phi1 equation}\\
\partial_{v}\left(\widetilde{\partial_{u}\phi_{2}}\right)+\frac{\lambda}{r}\widetilde{\partial_{u}\phi_{2}} &= -\frac{\lambda}{r}\mathcal{P}(\partial_{u}\phi_{2})-\frac{\nu}{r}\widetilde{\partial_{v}\phi_{2}}-\frac{\nu}{r}\left(\partial_{v}\phi_{2}\right)_{S}-q_{0}\left(\frac{A_{u}\partial_{v}(r\phi_{1})}{r}+\frac{Q}{r^{2}}\frac{\lambda\nu}{1-\mu}\phi_{1}\right),\allowdisplaybreaks\\
\partial_{v}\widetilde{\mu}+\left(\frac{\lambda}{r}+\frac{r}{\lambda}\left\vert\partial_{v}\phi\right\vert^{2}\right)\widetilde{\mu}&=-\left(\frac{\lambda}{r}+\frac{r}{\lambda}\left\vert\partial_{v}\phi\right\vert^{2}\right)\mathcal{P}(\mu)+\frac{r}{\lambda}\left\vert\partial_{v}\phi\right\vert^{2}+\frac{\lambda}{r^{3}}Q^{2},\allowdisplaybreaks\label{eq: renormalized dv-mu equation}
\\
\partial_{v}\widetilde{Q} &= q_{0}r^{2}\left(\phi_{2}\partial_{v}\phi_{1}-\phi_{1}\partial_{v}\phi_{2}\right),\\
\partial_{v}A_{u}& = \frac{2Q}{r^{2}}\frac{\lambda\nu}{1-\mu}.
\end{align}
\subsubsection{Bootstrap assumption and its consequence}
Let $\mathcal{Q}^{(ex)}_{u_{0},v_{0}}$ be the region \begin{equation*}
    \{-e^{-1}\leq u\leq u_{0},\ 0\leq v\leq v_{0}\}.
\end{equation*}
To show that the solution for the Einstein--Maxwell-charged scalar field equations exists in the whole region $\Rmnum{1}$, we make the following bootstrap assumption in the region $\mathcal{Q}_{u_{0},v_{0}}^{(ex)}$ for some universal constant $B$:
\begin{align}
&\vert \widetilde{\nu}\vert\leq Bq^{2}\left(\frac{v}{(-u)^{1-q^{2}}}\right)^{1-\delta},\quad \vert\widetilde{\lambda}\vert\leq Bq^{2}(-u)^{q^{2}}\left(\frac{v}{(-u)^{1-q^{2}}}\right)^{1-2\delta},\quad \vert\widetilde{\mu}\vert\leq Bq^{2}\left(\frac{v}{(-u)^{1-q^{2}}}\right)^{1-\delta}, \label{bootstrap 1:label}\\&
\vert\widetilde{\partial_{u}\phi_{1}}\vert\leq Bq(-u)^{-1}\left(\frac{v}{(-u)^{1-q^{2}}}\right)^{1-\delta},\quad \left\vert\widetilde{\partial_{u}\phi_{2}}\right\vert\leq Bq(-u)^{-1}\left(\frac{v}{(-u)^{1-q^{2}}}\right)^{1-\delta},\label{bootstrap2:label}\\&\vert\widetilde{\partial_{v}\phi_{1}}\vert\leq B q(-u)^{q^{2}-1}\left(\frac{v}{(-u)^{1-q^{2}}}\right)^{1-2\delta},\quad \left\vert\widetilde{\partial_{v}\phi_{2}}\right\vert\leq Bq(-u)^{q^{2}-1}\left(\frac{v}{(-u)^{1-q^{2}}}\right)^{1-2\delta},\label{bootstrap3:label}\\&
\left\vert \widetilde{Q}\right\vert\leq B\vert q_{0}\vert q^{2}u^{2}\left\vert\log(-u)\right\vert\left(\frac{v}{(-u)^{1-q^{2}}}\right)^{1-\delta}.\label{bootstrap 4}
\end{align}
The following proposition is a consequence of the bootstrap assumption \eqref{bootstrap 1:label}-\eqref{bootstrap 4}.
\begin{proposition}
\label{prop:consequence of the bootstrap}
Under the bootstrap assumption \eqref{bootstrap 1:label}-\eqref{bootstrap 4}, for $q$ and $\underline{v}$ sufficiently small, we have the following estimates in the region $\mathcal{Q}_{u_{0},v_{0}}^{(ex)}$:\begin{align}
&-\frac{3}{2}<\nu<-\frac{1}{2},\quad \lambda\approx (-u)^{q^{2}},\quad r\approx (-u),\quad \vert\mu\vert\leq 2q^{2},\\&
\vert\partial_{u}\phi_{1}\vert\leq\frac{2q}{(-u)},\quad\left\vert\partial_{u}\phi_{2}\right\vert\leq \frac{2q}{(-u)},\quad \vert \phi\vert\leq 2q\vert\log(-u)\vert, \\&
\left\vert Q\right\vert\lesssim \vert q_{0}\vert q^{2}u^{2}\left\vert\log(-u)\right\vert,\quad \left\vert A_{u}\right\vert\lesssim \vert q_{0}\vert q^{2}(-u)\left\vert\log(-u)\right\vert\frac{v}{(-u)^{1-q^{2}}}.
\end{align}
\end{proposition}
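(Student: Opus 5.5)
The plan is to combine the bootstrap bounds \eqref{bootstrap 1:label}--\eqref{bootstrap 4} with the explicit behaviour of the data on $\underline{\Sigma}_{0}$, writing each quantity as $\Psi=\mathcal{P}(\Psi)+\widetilde{\Psi}$. Throughout we use that in region I the self-similar variable satisfies $z:=v/(-u)^{1-q^{2}}\le\underline{v}$, so every factor $z^{1-\delta}$ or $z^{1-2\delta}$ is at most $\underline{v}^{1-2\delta}\ll1$.

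\emph{Metric quantities.} On $\underline{\Sigma}_{0}$ we have $\nu(u,0)=-1$ and $\lambda(u,0)\approx(-u)^{q^{2}}$ by \eqref{eq: estimate for lambda on the ingoing cone for the exterior construction}. Hence $|\widetilde\nu|\le Bq^{2}\underline{v}^{1-\delta}<\tfrac12$, which gives $-\tfrac32<\nu<-\tfrac12$. Likewise $|\widetilde\lambda|\le Bq^{2}(-u)^{q^{2}}\underline{v}^{1-2\delta}$ is much smaller than $(-u)^{q^{2}}$, so $\lambda\approx(-u)^{q^{2}}$. For $r$, integrate in $v$:
\[
r(u,v)=(-u)+\int_{0}^{v}\lambda\,dv'=(-u)+O\big((-u)^{q^{2}}v\big)=(-u)\big(1+O(z)\big),
\]
so $r\approx(-u)$. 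For $\mu$, we know $\mathcal{P}(\mu)\to q^{2}/(1+q^{2})$, and more precisely $\mathcal{P}(\mu)\le\tfrac32q^{2}$ after using the explicit integral formula for $\lambda(u,0)$ together with the smallness of $Q^{2}/r^{2}\lesssim q_{0}^{2}q^{4}u^{2}\log^{2}(-u)$ on $[-e^{-1},0)$. Adding $|\widetilde\mu|\le Bq^{2}\underline{v}^{1-\delta}$ then gives $|\mu|\le2q^{2}$ once $\underline{v}$ is small.

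\emph{Scalar field.} On the cone, $|\partial_{u}\phi_{i}(u,0)|\le|\partial_{u}\rho|+\rho|\partial_{u}\theta|\le(k+\kappa)/(-u)+|\rho_{1}|+|\theta_{1}|$. Since $k+\kappa\le\sqrt2\,q$ and the corrections $\rho_{1},\theta_{1}$ are of lower order, this is at most $\tfrac32q/(-u)$ for $q$ small. Adding $|\widetilde{\partial_{u}\phi_{i}}|\le Bq(-u)^{-1}\underline{v}^{1-\delta}$ gives the bound $2q/(-u)$.

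For $\phi$, on the cone $\phi(u,0)=\int_{-e^{-1}}^{u}\partial_{u}\phi$, which gives $|\phi(u,0)|\le\tfrac32q|\log(-u)|$; here we use $\phi(-e^{-1},0)=0$ and $|\log(-u)|\ge1$. Off the cone we integrate in $v$ using $\partial_{v}\phi=(\partial_{v}\phi)_{S}+\widetilde{\partial_{v}\phi}$. The auxiliary part contributes $\lesssim q\underline{v}$ by the $L^{1}$ estimate \eqref{eq: L1 estimate for the auxilary function}. The renormalized part contributes
\[
\int_{0}^{v}Bq(-u)^{q^{2}-1}z^{1-2\delta}\,dv'\lesssim Bq\,\underline{v}^{2-2\delta}.
\]
Both contributions are at most $\tfrac12q\le\tfrac12q|\log(-u)|$ for $\underline{v}$ small.

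\emph{Charge and potential.} On the cone, \eqref{eq: exterior construction initial estimate for Q} gives $|Q(u,0)|\le|q_{0}|k\kappa u^{2}|\log(-u)|\lesssim|q_{0}|q^{2}u^{2}|\log(-u)|$. Adding the bootstrap bound \eqref{bootstrap 4} on $\widetilde Q$ gives the stated bound for $Q$. For $A_{u}$, recall $A_{u}(u,0)=0$ and $\partial_{v}A_{u}=2Q\lambda\nu/\big(r^{2}(1-\mu)\big)$. Inserting the bounds just obtained,
\[
|\partial_{v}A_{u}|\lesssim|q_{0}|q^{2}|\log(-u)|(-u)^{q^{2}},
\]
and integrating in $v$ yields $|A_{u}|\lesssim|q_{0}|q^{2}|\log(-u)|(-u)^{q^{2}}v=|q_{0}|q^{2}(-u)|\log(-u)|\,z$.

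\emph{Main obstacle.} The delicate points are the sharp constants, namely $|\mu|\le2q^{2}$ and the factor $2$ in the $\partial_{u}\phi$ and $\phi$ bounds. These require exact control of $\mathcal{P}(\mu)$ and $\mathcal{P}(\partial_{u}\phi)$ on the whole cone $[-e^{-1},0)$, not just their limits as $u\to0^{-}$. I would first try to obtain them from the integrating-factor formula for $\mu(u,0)$ together with the smallness of $\rho_{1}$, $\theta_{1}$ and $Q^{2}/r^{2}$. The other steps are direct once $\underline{v}$ is chosen small relative to $B^{-1}$.
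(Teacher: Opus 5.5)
Your scheme (split $\Psi=\mathcal P(\Psi)+\widetilde\Psi$, make every renormalized piece small via $z=v/(-u)^{1-q^2}\le\underline v$, integrate $\lambda$ for $r$, use \eqref{eq: L1 estimate for the auxilary function} for the off-cone part of $\phi$, integrate $\partial_vA_u$ from $A_u(u,0)=0$) is the direct argument the paper intends. The paper states the proposition without proof. Your steps for $\nu$, $\lambda$, $r$, $\phi-\mathcal P(\phi)$, $\widetilde Q$ and $A_u$ are fine.

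The step that fails is the one you single out yourself: $|\partial_u\phi_i(u,0)|\le\frac32q/(-u)$ does \emph{not} follow from taking $q$ small. Only $\theta_1=O(q^2/|u|)$ is lower order in $q$. The hypothesis $|\rho_1|\lesssim q/(|u|\log^2(-u))$ scales with $q$ exactly like $k/(-u)$. Since $|\log(-u)|\ge1$ on $[-e^{-1},0)$, with equality at $u=-e^{-1}$, near the initial sphere $\rho_1$ is as large as the main term. With implicit constant $C_1$ there you only get $|\partial_u\phi(u,0)|\le(1+C_1+O(q))\,q/(-u)$. For $C_1>1$ the bound fails outright: take $\kappa=\theta_1=0$, $\theta\equiv0$, $k=q$ and $\rho_1=-C_1q/((-u)\log^2(-u))$. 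Then $\partial_u\phi_1(-e^{-1},0)=(1+C_1)q/(-u)$, while $\widetilde{\partial_u\phi_1}=0$ on $\{v=0\}$. So the constants $2$ for $\partial_u\phi_i$, and through it for $\phi$ and $\mu$, need an extra assumption, e.g.\ that the implicit constant in the $\rho_1$ bound is at most $\frac1{10}$. Without it, the correct conclusion has $2$ replaced by a data-dependent constant. That weaker form is all the later estimates in region~I use, since they invoke these three bounds only through $\lesssim$.

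Once $|\partial_u\phi(t,0)|\le(1+\eta)q/(-t)$ with $\eta$ small is available, the $\mu$ obstacle closes directly. Using $\nu(u,0)=-1$ and the formula for $\lambda(u,0)$,
\[
\mathcal P(\mu)=\frac1{(-u)}\int_u^0\Big(1-\frac{\Omega^2(s,0)}{\Omega^2(u,0)}\Big)ds+\frac1{(-u)}\int_u^0\frac{\Omega^2(s,0)}{\Omega^2(u,0)}\,\frac{Q^2(s,0)}{s^2}\,ds,\qquad \frac{\Omega^2(s,0)}{\Omega^2(u,0)}=e^{-\int_u^s(-t)|\partial_u\phi|^2dt}\le1 .
\]
Both terms are nonnegative. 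Then $1-e^{-x}\le x$ and Fubini give $0\le\mathcal P(\mu)\le\frac1{(-u)}\int_u^0t^2|\partial_u\phi(t,0)|^2dt+O(q_0^2q^4)\le(1+\eta)^2q^2+O(q_0^2q^4)$.

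Also, do not rely on the quoted bound $|Q(u,0)|\le|q_0|k\kappa u^2|\log(-u)|$. It drops the $\theta_1$ contribution, which is allowed even when $\kappa=0$; compare the interior data, where $\kappa=0$ but $Q(u,0)\neq0$. Instead, integrate $\partial_uQ(u,0)=q_0u^2\rho\,(\kappa/(-u)-\theta_1)$ from $u=0$, using $\rho=|\phi(u,0)|\lesssim q|\log(-u)|$. This gives $|Q(u,0)|\lesssim|q_0|(q\kappa+q^3)u^2|\log(-u)|$, which is what you need.
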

\subsubsection{Closing the bootstrap argument}
To show that the solution can be extended to the whole region $\Rmnum{1}$, it suffices to show that for $q$ and $\underline{v}$ sufficiently small, we can improve the bootstrap estimate \eqref{bootstrap 1:label}-\eqref{bootstrap 4}.
\begin{proposition}
Under the bootstrap assumption \eqref{bootstrap 1:label}-\eqref{bootstrap 4} in the region $\mathcal{Q}_{u_{0},v_{0}}$, for $0<q\ll\delta\ll1$, we have\begin{align}
&\vert \widetilde{\nu}\vert\leq \frac{1}{2}Bq^{2}\left(\frac{v}{(-u)^{1-q^{2}}}\right)^{1-\delta},\quad \vert\widetilde{\lambda}\vert\leq \frac{1}{2}Bq^{2}(-u)^{q^{2}}\left(\frac{v}{(-u)^{1-q^{2}}}\right)^{1-2\delta},\quad \vert\widetilde\mu\vert\leq\frac{1}{2}Bq^{2}\left(\frac{v}{(-u)^{1-q^{2}}}\right)^{1-\delta},\\&
\vert\widetilde{\partial_{u}\phi}\vert\leq \frac{1}{2}Bq(-u)^{-1}\left(\frac{v}{(-u)^{1-q^{2}}}\right)^{1-\delta},\quad \vert\widetilde{\partial_{v}\phi}\vert\leq \frac{1}{2}Bq(-u)^{q^{2}-1}\left(\frac{v}{(-u)^{1-q^{2}}}\right)^{1-2\delta},\\&
\left\vert\widetilde{Q}\right\vert\leq\frac{1}{2}B\vert q_{0}\vert q^{2}u^{2}\vert\log(-u)\vert\left(\frac{v}{(-u)^{1-q^{2}}}\right)^{1-\delta},
\end{align}
where $B$ is the constant in \eqref{bootstrap 1:label}-\eqref{bootstrap 4}.
\label{proposition:close the bootstrap assumption in the region 1}
\end{proposition}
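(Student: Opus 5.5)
We integrate each renormalized equation along its transport direction starting from the initial hypersurface, using the consequences of the bootstrap in Proposition~\ref{prop:consequence of the bootstrap}. The self-similar variable $z:=v/(-u)^{1-q^{2}}\in[0,\underline{v}]$ tracks vanishing at $\underline{\Sigma}_{0}$. Every renormalized quantity vanishes on $\underline{\Sigma}_{0}$: $\widetilde{\Psi}(u,0)=0$ by definition, and $\widetilde{\partial_{v}\phi_{i}}(u,0)=0$ because $(\partial_{v}\phi_{i})_{S}(u,0)=\partial_{v}\phi_{i}(u,0)$. On $\Sigma_{-e^{-1}}$ the data are $\widetilde{\partial_{v}\phi_{i}}=\epsilon f$ with $\epsilon\lesssim q$. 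The improvement comes from two sources. For $v$-integrations, integrating from $v=0$ produces a factor $z$ raised to a power, and the smallness of $\underline{v}$ then beats the constant $B$. For $u$-integrations, we use the $L^{1}$ smallness \eqref{eq: L1 estimate for the auxilary function} of $(\partial_{v}\phi)_{S}$ together with the pointwise bound \eqref{eq: pointwise estimate for auxiliary function}.

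\textbf{Step 1: the $v$-equations.} For $\widetilde{\nu}$ we integrate \eqref{eq:renormalized ru equation} in $v$. The coefficient is $\frac{\lambda}{r(1-\mu)}(\mu-Q^{2}/r^{2})\lesssim q^{2}(-u)^{q^{2}-1}$, and integrating over $[0,v]$ gives $\lesssim q^{2}z$, which is better than $Bq^{2}z^{1-\delta}$ once $\underline{v}^{\delta}\ll 1/B$.

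For $\widetilde{Q}$ we write $\phi_{2}\partial_{v}\phi_{1}-\phi_{1}\partial_{v}\phi_{2}$. We bound $\phi=O(q|\log(-u)|)$ and $\partial_{v}\phi=(\partial_{v}\phi)_{S}+\widetilde{\partial_{v}\phi}$, integrate in $v$, and use the $L^{1}$ bound on the truncated interval $[0,v]$. To get the factor $z^{1-\delta}$ we need the refinement of the $L^{1}$ estimate obtained from Fubini on $[0,v]$: $\int_{0}^{v}|(\partial_{v}\phi)_{S}|\lesssim q\,z$, up to $q$-losses handled by the $1/q$ prefactor in \eqref{eq: pointwise estimate for auxiliary function}. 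Then $A_{u}$ follows from $\widetilde Q$.

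For $\widetilde{\mu}$ we use the integrating-factor form of \eqref{eq: renormalized dv-mu equation}, as in the interior proof of Proposition~\ref{prop: consequence of the bootstrap in the existence argument}. The source $\frac{r}{\lambda}|\partial_{v}\phi|^{2}$ is quadratic. Its dangerous part $\frac{r}{\lambda}|(\partial_{v}\phi)_{S}|^{2}$ is bounded by the pointwise bound times the $L^{1}$ bound, giving $q^{-1}\cdot q\cdot(\text{power of }z)\cdot(-u)^{\dots}$.

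For $\widetilde{\partial_{u}\phi_{i}}$ we integrate \eqref{eq:renormalized du-phi1 equation} in $v$ from $0$. The term $\frac{\nu}{r}(\partial_{v}\phi_{1})_{S}$ contributes $\frac{1}{(-u)}\int_{0}^{v}|(\partial_{v}\phi)_{S}|\lesssim q(-u)^{-1}z^{1-\delta}$. This is exactly where the exponent $1-\delta$, rather than $1$, is needed to absorb the $(v/\underline v)^{q^{2}/(1-q^{2})}$ structure. The remaining terms carry explicit factors $z$.

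\textbf{Step 2: the $u$-equations.} For $\widetilde{\lambda}$ we use \eqref{eq:renormalized rv equation}, and for $\widetilde{\partial_{v}\phi_{i}}$ we use \eqref{eq:renormalized dv-phi1 equation}. In both we integrate in $u$ from $u=-e^{-1}$ with integrating factor $\exp(\int\nu/r)\approx(-u)^{-1}$. Along the integration curve $v$ is fixed and $z=v/(-u')^{1-q^{2}}$ grows as $u'\to u$, so the $z$-weights must be tracked: $z(u')\le z(u)$ for $u'\le u$. The forcing terms involve $\widetilde{\nu/r}\,(\partial_{v}\phi)_{S}$, $\widetilde{(\lambda/r)\partial_{u}\phi}$, $A_{u}$ and $\widetilde{Q}$-terms, all controlled by the bootstrap times $q$ or $q^{2}$. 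The step that needs care is the product $\widetilde{\nu}\,(\partial_{v}\phi)_{S}/r$, which is of order $Bq^{2}z^{1-\delta}\cdot q^{-1}(-u)^{q^{2}-1}/(-u)$. Integrated in $u'$ with the integrating factor, this yields $Bq\,z^{1-\delta}\cdot(\text{log factor})$. We expect to lose a $z^{-\delta}$ here, which is why the bootstrap uses $1-2\delta$ for $\widetilde\lambda$ and $\widetilde{\partial_{v}\phi}$. The factor $q$ multiplying $B$, or $q|\log q|$ from $\int du'/(-u')$ restricted by $z\le\underline v$, makes this at most $\frac14 Bq$ for $q\ll\delta$.

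\textbf{Step 3: main obstacle and closing.} The main obstacle is making the $u$-integration in Step 2 produce the weight $z^{1-2\delta}$ without a logarithmic divergence. Since $z(u')^{1-\delta}\sim v^{1-\delta}(-u')^{-(1-q^{2})(1-\delta)}$, the $u'$-integral of $(-u')^{-(1-\delta)(1-q^{2})-1+q^{2}}$ is dominated at $u'=u$. It therefore gives $z(u)^{1-\delta}$ divided by the exponent, which is $\approx 1/(1-\delta)$ and harmless. I would verify this exponent bookkeeping first.

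The $\epsilon f$ data term contributes $\epsilon (-u)^{-1}\int|f|$-type bounds. Since this is $\lesssim q z^{1-2\delta}$ when $\epsilon\lesssim q\underline v$, we choose the data smallness accordingly. Collecting all the estimates gives constants $C(1+Bq+B\underline v^{\delta})$, and choosing $B$ large, then $\underline v$ and $q$ small, closes the bootstrap with the factor $\tfrac12$.
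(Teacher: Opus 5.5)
Your scheme is the right one: you transport each renormalized quantity off $\underline{\Sigma}_0$ or $\Sigma_{-e^{-1}}$ and gain from powers of $z=v/(-u)^{1-q^2}\le\underline v$. Your steps for $\widetilde\nu$, $\widetilde\lambda$, $\widetilde Q$ and $\widetilde{\partial_u\phi}$ can be made to work.

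The gap is that you never use the one mechanism that produces $q$-smallness wherever $(\partial_v\phi)_S$ enters at full strength. Near $z=0$ the auxiliary function equals $\partial_v\phi(u,0)$, so it can genuinely be of size $q^{-1}(-u)^{q^2-1}$, and any sup bound costs $q^{-1}$. The only compensation is the factor $1-(z/\underline v)^{a}$, $a=q^2/(1-q^2)$, in \eqref{eq: pointwise estimate for auxiliary function}, and it must be integrated against a power of $z$.

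\textbf{The $\widetilde\mu$ step fails.} Your sup-times-$L^1$ bound gives
\[
\frac{r}{\lambda}\,\sup\left\vert(\partial_v\phi)_S\right\vert\int_0^v\left\vert(\partial_v\phi)_S\right\vert dv'\approx(-u)^{1-q^2}\cdot\frac{(-u)^{q^2-1}}{q}\cdot\frac{q\,\underline v^{\delta}}{\delta}z^{1-\delta}=\frac{\underline v^{\delta}}{\delta}\,z^{1-\delta}.
\]
This carries no power of $q$, while you need $\frac12Bq^2z^{1-\delta}$. Multiplying by $\widetilde\mu$ first and inserting its bootstrap bound, as the paper does, would still only give $|\widetilde\mu|\lesssim\sqrt{B}\,q\,z^{1-\delta}$ with this crude bound. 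What closes the estimate is integrating the square of the bracket:
\[
\int_0^v\frac{r}{\lambda}\left\vert(\partial_v\phi)_S\right\vert^2dv'\lesssim\frac{1}{q^2}\int_0^z\Bigl[1-\Bigl(\frac{z'}{\underline v}\Bigr)^{a}\Bigr]^2dz'\le\frac{z^{1-\delta}\underline v^{\delta}}{q^2}\cdot\frac{2a^2}{\delta(\delta+a)(\delta+2a)}\lesssim\frac{q^2\underline v^{\delta}}{\delta^3}\,z^{1-\delta}.
\]
Here $a^2\sim q^4$ beats the $q^{-2}$ from squaring.

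\textbf{The $\widetilde{\partial_v\phi}$ step has the same problem, hidden by a misidentified forcing term.} It is not true that $\widetilde{(\nu/r)}(\partial_v\phi)_S$ is ``bootstrap times $q$''. Write $\widetilde{(\nu/r)}=\widetilde\nu/r-\nu(u,0)\,(r-r(u,0))/(r\,r(u,0))$ and use $r-r(u,0)=\int_0^v\lambda\approx(-u)z$. The second piece is then $\approx z/(-u)$ with no smallness at all, and it dominates the $\widetilde\nu$-piece you analyzed.

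Integrate $\partial_u(r\,\widetilde{\partial_v\phi_i})=r\cdot(\text{RHS})$; the integrating factor is $r\approx(-u)$, not $(-u)^{-1}$. With the pointwise size $q^{-1}(-u)^{q^2-1}$, this term alone gives $\vert\widetilde{\partial_v\phi_i}\vert\sim q^{-1}z(-u)^{q^2-1}$, a loss of $q^{-2}$ against the bootstrap. Moreover, \eqref{eq: L1 estimate for the auxilary function} is an $L^1_v$ bound at fixed $u$, so it says nothing about this $u$-integral at fixed $v$.

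You must again use the bracket, now along $u$. With $du'/(-u')=dz'/((1-q^2)z')$, the paper's weighted energy (weight $(-u)^{2-2q^2}z^{-2+4\delta}$) reduces this contribution to
\[
\int_0^z z'^{2\delta-1}\Bigl[1-\Bigl(\frac{z'}{\underline v}\Bigr)^{a}\Bigr]dz'\le\frac{\underline v^{2\delta}\,a}{2\delta(2\delta+a)}\lesssim\frac{q^2\,\underline v^{2\delta}}{\delta^{2}}.
\]

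\textbf{A smaller correction.} The same bracket is what lies behind your truncated $L^1$ bound in the $\widetilde Q$ and $\widetilde{\partial_u\phi}$ steps. Your stated refinement $\int_0^v\vert(\partial_v\phi)_S\vert\lesssim qz$ is false: Fubini on $[0,v]$ gives $\frac zq[1-(z/\underline v)^a]+qz$, which carries a $\log(\underline v/z)$ loss. The correct bound is $\lesssim q\delta^{-1}\underline v^{\delta}z^{1-\delta}$, and that suffices for those two steps.
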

\begin{proof}
To estimate $\widetilde{\nu}$, we conjugate the equation \eqref{eq:renormalized ru equation} with $\left(\frac{v}{(-u)^{1-q^{2}}}\right)^{-2+2\delta}\widetilde{\nu}$. We have\begin{equation}
\begin{aligned}
\frac{1}{2}\partial_{v}\left(\left(\frac{v}{(-u)^{1-q^{2}}}\right)^{-2+2\delta}\widetilde{\nu}^{2}\right)+&\left(\frac{1-\delta}{v}-\frac{\lambda}{(1-\mu)r}\left(\mu-\frac{Q^{2}}{r^{2}}\right)\right)\left(\frac{v}{(-u)^{1-q^{2}}}\right)^{-2+2\delta}\widetilde\nu^{2}\\&
= \frac{\lambda}{(1-\mu)r}\left(\mu-\frac{Q^{2}}{r^{2}}\right)\mathcal{P}(\nu)\left(\frac{v}{(-u)^{1-q^{2}}}\right)^{-2+2\delta}\widetilde\nu.
\end{aligned}
\label{conjugate equation for ru}
\end{equation}
By Proposition \ref{prop:consequence of the bootstrap}, we have\begin{align*}
\frac{1-\delta}{v}-\frac{\lambda}{(1-\mu)r}\left(\mu-\frac{Q^{2}}{r^{2}}\right)&= \frac{1}{v}\left(1-\delta-v\frac{\lambda}{(1-\mu)r}\left(\mu-\frac{Q^{2}}{r^{2}}\right)\right)\\&\geq
\frac{1}{v}\left(1-\delta-C_{0}q^{2}\left(\frac{v}{(-u)^{1-q^{2}}}\right)\right)\\&\geq \frac{1}{v}\left(1-\delta-\underline{v}\right)\geq 0,
\end{align*}
provided that $\underline{v}$ and $q$ are sufficiently small. Hence, we can drop the second term on the left-hand side of the equation \eqref{conjugate equation for ru}. Then we have\begin{equation}
\begin{aligned}
\frac{1}{2}\left(\frac{v}{(-u)^{1-q^{2}}}\right)^{-2+2\delta}\widetilde{\nu}^{2}(u,v)&\leq \int_{0}^{v}\left\vert\frac{\lambda}{(1-\mu)r}\left(\mu-\frac{Q^{2}}{r^{2}}\right)\mathcal{P}(\nu)\left(\frac{v^{\prime}}{(-u)^{1-q^{2}}}\right)^{-2+2\delta}\widetilde{\nu}\right\vert dv^{\prime}\\&\lesssim\frac{q^{4}}{\delta}\left(\frac{v}{(-u)^{1-q^{2}}}\right)^{\delta}\lesssim \frac{q^{4}}{\delta}\underline{v}^{\delta}.
\end{aligned}
\end{equation}
By the assumption that $\underline{v}\ll\delta\ll1$, we have\begin{equation}
\vert\widetilde\nu\vert\leq \frac{1}{2}Bq^{2}\left(\frac{v}{(-u)^{1-q^{2}}}\right)^{1-\delta}.
\end{equation}
We only provide the proof of the improved bootstrap bound for $\widetilde{\partial_{u}\phi_{1}}$, as the estimate for $\widetilde{\partial_{u}\phi_{2}}$ will follow similarly. To estimate $\widetilde{\partial_{u}\phi_{1}}$, since $\frac{\lambda}{r}$ is already positive, we directly conjugate the equation \eqref{eq:renormalized du-phi1 equation} with $\widetilde{\partial_{u}\phi}$ and integrate over $[0,v]$. We have\begin{equation}
\begin{aligned}
\left(\widetilde{\partial_{u}\phi_{1}}\right)^{2}(u,v)\lesssim&\int_{0}^{v}\left\vert\frac{\lambda}{r}\mathcal{P}(\partial_{u}\phi_{1})\widetilde{\partial_{u}\phi_{1}}\right\vert+\left\vert\frac{\nu}{r}\widetilde{\partial_{v}\phi_{1}}\widetilde{\partial_{u}\phi_{1}}\right\vert+\left\vert\frac{\nu}{r}\left(\partial_{v}\phi_{1}\right)_{S}\widetilde{\partial_{u}\phi_{1}}\right\vert\\&+\left\vert q_{0}\right\vert\int_{0}^{v}\left\vert A_{u}\widetilde{\partial_{v}\phi_{2}}\widetilde{\partial_{u}\phi_{1}}\right\vert+\left\vert A_{u}\left(\partial_{v}\phi_{2}\right)_{S}\widetilde{\partial_{u}\phi_{1}}\right\vert+\left\vert A_{u}\frac{\lambda}{r}+\frac{Q}{r^{2}}\frac{\lambda\nu}{1-\mu}\right\vert\left\vert\phi_{2}\widetilde{\partial_{u}\phi_{1}}\right\vert.
\end{aligned}
\label{conjugate equation for du phi}
\end{equation}
To estimate the first term on the right-hand side of \eqref{conjugate equation for du phi}, we have \begin{align*}
\int_{0}^{v}\left\vert\frac{\lambda}{r}\mathcal{P}(\partial_{u}\phi_{1})\widetilde{\partial_{u}\phi_{1}}\right\vert dv^{\prime}&\lesssim  q^{2}(-u)^{-2}\int_{0}^{v}\frac{1}{(-u)^{1-q^{2}}}\left(\frac{v^{\prime}}{(-u)^{1-q^{2}}}\right)^{1-\delta}dv^{\prime}\\&\lesssim q^{2}(-u)^{-2}\left(\frac{v}{(-u)^{1-q^{2}}}\right)^{2-2\delta}\underline{v}^{\delta}.
\end{align*}
Using the fact that $\underline{v}\ll\delta\ll1$, we have \begin{equation*}
    \int_{0}^{v}\left\vert\frac{\lambda}{r}\mathcal{P}(\partial_{u}\phi_{1})\widetilde{\partial_{u}\phi_{1}}\right\vert dv^{\prime}\leq \frac{B^{2}q^{2}}{100}(-u)^{-2}\left(\frac{v}{(-u)^{1-q^{2}}}\right)^{2-2\delta}.
\end{equation*}
Similarly, using the bootstrap assumptions \eqref{bootstrap 1:label}-\eqref{bootstrap 4} and Proposition \ref{prop:consequence of the bootstrap}, we have \begin{align*}
    \int_{0}^{v}\left\vert\frac{\nu}{r}\widetilde{\partial_{v}\phi_{1}}\widetilde{\partial_{u}\phi_{1}}\right\vert\leq&\frac{B^{2}q^{2}}{100}(-u)^{-2}\left(\frac{v}{(-u)^{1-q^{2}}}\right)^{2-2\delta},\\
    \left\vert q_{0}\right\vert\int_{0}^{v}\left\vert A_{u}\widetilde{\partial_{v}\phi_{2}}\widetilde{\partial_{u}\phi_{1}}\right\vert\leq&\frac{B^{2}q^{2}}{100}(-u)^{-2}\left(\frac{v}{(-u)^{1-q^{2}}}\right)^{2-2\delta},\\
    \left\vert q_{0}\right\vert\int_{0}^{v}\left\vert A_{u}\frac{\lambda}{r}+\frac{Q}{r^{2}}\frac{\lambda\nu}{1-\mu}\right\vert\left\vert\phi_{2}\widetilde{\partial_{u}\phi_{1}}\right\vert\leq&\frac{B^{2}q^{2}}{100}(-u)^{-2}\left(\frac{v}{(-u)^{1-q^{2}}}\right)^{2-2\delta}.
\end{align*}

For the third term on the right-hand side of \eqref{conjugate equation for du phi}, we have\begin{align*}
&\int_{0}^{v}\left\vert\frac{\lambda}{r}\left(\partial_{v}\phi_{1}\right)_{S}\widetilde{\partial_{u}\phi_{1}}\right\vert \\\lesssim&\frac{q}{(-u)^{2}}\int_{0}^{v}\left(\frac{v^{\prime}}{(-u)^{1-q^{2}}}\right)^{1-\delta}\left\vert\left(\partial_{v}\phi_{1}\right)_{S}\right\vert dv^{\prime}\\\lesssim&
\frac{1}{(-u)^{2}}\int_{0}^{v}(-u)^{q^{2}-1}\left(\frac{v^{\prime}}{(-u)^{1-q^{2}}}\right)^{1-\delta}\left[1-\left(\frac{v^{\prime}}{(-u)^{1-q^{2}}}\right)^{\frac{q^{2}}{1-q^{2}}}\left(\frac{1}{\underline{v}}\right)^{\frac{q^{2}}{1-q^{2}}}\right]dv^{\prime}\\\lesssim&\frac{1}{u^{2}}\left(\frac{v}{(-u)^{1-q^{2}}}\right)^{2-2\delta}\int_{0}^{(-u)^{1-q^{2}}\underline{v}}\left(\frac{v^{\prime}}{(-u)^{1-q^{2}}}\right)^{-1+\delta}\left[1-\left(\frac{v^{\prime}}{(-u)^{1-q^{2}}}\right)^{\frac{q^{2}}{1-q^{2}}}\left(\frac{1}{\underline{v}}\right)^{\frac{q^{2}}{1-q^{2}}}\right] \frac{dv^{\prime}}{(-u)^{1-q^{2}}}\\\lesssim&\frac{q^{2}}{\delta((1-q^{2})\delta+q^{2})}\underline{v}^{\delta}\frac{1}{u^{2}}\left(\frac{v}{(-u)^{1-q^{2}}}\right)^{2-2\delta}.
\end{align*}
where we have used the estimate \eqref{eq: pointwise estimate for auxiliary function}. Hence, for $\underline{v}\ll\delta\ll1$, we have \begin{equation*}
\int_{0}^{v}\left\vert\frac{\lambda}{r}\left(\partial_{v}\phi_{1}\right)_{S}\widetilde{\partial_{u}\phi_{1}}\right\vert\leq \frac{B^{2}q^{2}}{100}\frac{1}{u^{2}}\left(\frac{v}{(-u)^{1-q^{2}}}\right)^{2-2\delta}.
\end{equation*}
Similarly, we have \begin{equation*}
    \left\vert q_{0}\right\vert\int_{0}^{v}\left\vert A_{u}\left(\partial_{v}\phi_{2}\right)_{S}\widetilde{\partial_{u}\phi_{1}}\right\vert\leq\frac{B^{2}q^{2}}{100}(-u)^{-2}\left(\frac{v}{(-u)^{1-q^{2}}}\right)^{2-2\delta}.
\end{equation*}
Putting everything together, we have \begin{equation*}
    \left\vert\widetilde{\partial_{u}\phi_{1}}\right\vert\leq \frac{1}{2}\frac{Bq}{(-u)}\left(\frac{v}{(-u)^{1-q^{2}}}\right)^{1-\delta}.
\end{equation*}
To estimate $\widetilde{Q}$, we have
\begin{equation}
\begin{aligned}
\left\vert\widetilde{Q}\right\vert\lesssim &\vert q_{0}\vert u^{2}\int_{0}^{v}\left\vert\phi_{1}\partial_{v}\phi_{2}\right\vert+\left\vert\phi_{2}\partial_{v}\phi_{1}\right\vert dv^{\prime}\\\lesssim& \vert q_{0}\vert u^{2}\int_{0}^{v}\left\vert\phi_{1}\widetilde{\partial_{v}\phi_{2}}\right\vert+\left\vert\phi_{2}\widetilde{\partial_{v}\phi_{1}}\right\vert+\left\vert\phi_{1}\left(\partial_{v}\phi_{2}\right)_{S}\right\vert+\left\vert\phi_{2}\left(\partial_{v}\phi_{1}\right)_{S}\right\vert dv^{\prime}.
\end{aligned}
\label{eq: first equation for the charge in region 1}
\end{equation}
For the first term on the right-hand side of \eqref{eq: first equation for the charge in region 1}, we have \begin{align*}
    u^{2}\int_{0}^{v}\left\vert\phi_{1}\widetilde{\partial_{v}\phi_{2}}\right\vert dv^{\prime}\lesssim &q^{2}u^{2}\int_{0}^{v}\left\vert\log(-u)\right\vert(-u)^{q^{2}-1}\left(\frac{v^{\prime}}{(-u)^{1-q^{2}}}\right)^{1-2\delta} dv^{\prime}\\\lesssim& q^{2}u^{2}\left\vert\log(-u)\right\vert\left(\frac{v}{(-u)^{1-q^{2}}}\right)^{2-2\delta}\\\lesssim&q^{2}u^{2}\left\vert\log(-u)\right\vert\left(\frac{v}{(-u)^{1-q^{2}}}\right)^{1-\delta}\underline{v}^{1-\delta}.
\end{align*}
Similarly, we have \begin{equation*}
    u^{2}\int_{0}^{v}\left\vert\phi_{2}\widetilde{\partial_{v}\phi_{1}}\right\vert\lesssim q^{2}u^{2}\left\vert\log(-u)\right\vert\left(\frac{v}{(-u)^{1-q^{2}}}\right)^{1-\delta}\underline{v}^{1-\delta}.
\end{equation*}
For the last term on the right-hand side of \eqref{eq: first equation for the charge in region 1}, using \eqref{eq: pointwise estimate for auxiliary function}, we have \begin{align*}
    u^{2}\int_{0}^{v}\left\vert\phi_{2}\left(\partial_{v}\phi_{1}\right)_{S}\right\vert dv^{\prime}\lesssim &(-u)^{2}\left\vert\log(-u)\right\vert\frac{1}{(-u)^{1-q^{2}}}\int_{0}^{v}1-\left(\frac{v}{(-u)^{1-q^{2}}}\right)^{\frac{q^{2}}{1-q^{2}}}\left(\frac{1}{\underline{v}}\right)^{\frac{q^{2}}{1-q^{2}}}dv^{\prime}\\\lesssim&(-u)^{2}\left\vert\log(-u)\right\vert\frac{v^{1-\delta}}{(-u)^{1-q^{2}}}\int_{0}^{(-u)^{1-q^{2}}\underline{v}}v^{-1+\delta}-\frac{v^{\frac{q^{2}}{1-q^{2}}+\delta-1}}{(-u)^{q^{2}}}\left(\frac{1}{\underline{v}}\right)^{\frac{q^{2}}{1-q^{2}}}dv\\\lesssim&\frac{q^{2}}{\delta^{2}}\underline{v}^{\delta}(-u)^{2}\left\vert\log(-u)\right\vert\left(\frac{v}{(-u)^{1-q^{2}}}\right)^{1-\delta}.
\end{align*}
Similarly, we have \begin{equation*}
    u^{2}\int_{0}^{v}\left\vert\phi_{1}\left(\partial_{v}\phi_{2}\right)_{S}\right\vert dv^{\prime}\lesssim \frac{q^{2}}{\delta^{2}}\underline{v}^{\delta} u^{2}\left\vert\log(-u)\right\vert\left(\frac{v}{(-u)^{1-q^{2}}}\right)^{1-\delta}.
\end{equation*}
Putting everything together, we have \begin{equation*}
    \left\vert\widetilde{Q}\right\vert\leq\frac{1}{2}B\vert q_{0}\vert q^{2}u^{2}\left\vert\log(-u)\right\vert.
\end{equation*}

To estimate $\widetilde{\lambda}$, since the coefficient before the zero-order term of $\widetilde{\lambda}$ in \eqref{eq:renormalized rv equation} already has a favorable sign when $q$ is sufficiently small, we directly conjugate the equation with $\widetilde{\lambda}$. Then we have\begin{equation}
\begin{aligned}
\widetilde{\lambda}^{2}(u,v)\lesssim& \int_{-e^{-1}}^{u}\left\vert\frac{\nu}{r(1-\mu)}\left(\mu-\frac{Q^{2}}{r^{2}}\right)-\mathcal{P}\left(\frac{\nu}{r(1-\mu)}\left(\mu-\frac{Q^{2}}{r^{2}}\right)\right)\right\vert\mathcal{P}(\lambda)\left\vert\widetilde{\lambda}\right\vert du^{\prime}\\\lesssim&q^{4}\int_{-e^{-1}}^{u}v^{2-3\delta}(-u^{\prime})^{-(1-q^{2})(2-3\delta)+2q^{2}-1}du^{\prime}\\\lesssim& q^{4}(-u)^{2q^{2}}\left(\frac{v}{(-u)^{1-q^{2}}}\right)^{2-4\delta}\underline{v}^{\delta}.
\end{aligned}
\end{equation}
Hence we have \begin{equation}
\left\vert\widetilde{\lambda}\right\vert\leq\frac{1}{2}Bq^{2}(-u)^{q^{2}}\left(\frac{v}{(-u)^{1-q^{2}}}\right)^{1-2\delta}.
\end{equation}
Next, we improve the bootstrap bounds for $\widetilde{\partial_{v}\phi_{1}}$ and $\widetilde{\partial_{v}\phi_{2}}$. To estimate $\widetilde{\partial_{v}\phi_{1}}$, we conjugate the equation \eqref{eq:renormalized dv-phi1 equation} with $$(-u)^{2-2q^{2}}\left(\frac{v}{(-u)^{1-q^{2}}}\right)^{-2+4\delta}\widetilde{\partial_{v}\phi_{1}}.$$ Then we have\begin{equation}
\begin{aligned}
&\frac{1}{2}\partial_{u}\left((-u)^{2-2q^{2}}\left(\frac{v}{(-u)^{1-q^{2}}}\right)^{-2+4\delta}\left(\widetilde{\partial_{v}\phi_{1}}\right)^{2}\right)\\&+\left((2-2\delta)(1-q^{2})\frac{1}{(-u)}+\frac{\nu}{r}\right)(-u)^{2-2q^{2}}\left(\frac{v}{(-u)^{1-q^{2}}}\right)^{-2+4\delta}\left(\widetilde{\partial_{v}\phi_{1}}\right)^{2} \\=& -(-u)^{2-2q^{2}}\left(\frac{v}{(-u)^{1-q^{2}}}\right)^{-2+4\delta}\widetilde{\left(\frac{\nu}{r}\right)}\left(\partial_{v}\phi_{1}\right)_{S}\widetilde{\partial_{v}\phi_{1}}-(-u)^{2-2q^{2}}\left(\frac{v}{(-u)^{1-q^{2}}}\right)^{-2+4\delta}\widetilde{\left(\frac{\lambda}{r}\partial_{u}\phi_{1}\right)}\widetilde{\partial_{v}\phi_{1}}\\&+q_{0}(-u)^{2-2q^{2}}\left(\frac{v}{(-u)^{1-q^{2}}}\right)^{-2+4\delta}\left(\frac{A_{u}}{r}\partial_{v}(r\phi_{2})-\widetilde{\left(\frac{Q\Omega^{2}\phi_{2}}{4r^{2}}\right)}\right)\widetilde{\partial_{v}\phi_{1}}.
\end{aligned}
\label{eq: differential equation for dvphi1 after conjugation}
\end{equation}
Since the coefficient\begin{equation*}
\frac{(2-2\delta)(1-q^{2})}{(-u)}+\frac{\nu}{r}>0
\end{equation*}
by the consequence of the bootstrap assumption \eqref{bootstrap 1:label}-\eqref{bootstrap 4},  we have\begin{equation}
\begin{aligned}
(-u)^{2-2q^{2}}\left(\frac{v}{(-u)^{1-q^{2}}}\right)^{-2+4\delta}\left(\widetilde{\partial_{v}\phi_{1}}\right)^{2}(u,v)\lesssim&v^{-2+4\delta}\left(\widetilde{\partial_{v}\phi_{1}}\right)^{2}(-e^{-1},v)\\&+\int_{-e^{-1}}^{u}\left\vert\text{RHS of \eqref{eq: differential equation for dvphi1 after conjugation}}\right\vert.
\end{aligned}
\label{conjugate equation for dv phi}
\end{equation}
For the right-hand side of \eqref{eq: differential equation for dvphi1 after conjugation}, we have \begin{equation}
    \left\vert\text{RHS of \eqref{eq: differential equation for dvphi1 after conjugation}}\right\vert\lesssim  q^{2}(-u)^{-1}\left(\frac{v}{(-u)^{1-q^{2}}}\right)^{\delta}+q(-u)^{-q^{2}}\left(\frac{v}{(-u)^{1-q^{2}}}\right)^{2\delta}\left\vert\left(\partial_{v}\phi_{1}\right)_{S}\right\vert. \label{eq: estimate for the rhs of dvphi1 estimate}
\end{equation}
For the first term on the right-hand side of \eqref{eq: estimate for the rhs of dvphi1 estimate}, we have \begin{equation*}
    \int_{-e^{-1}}^{u}q^{2}(-u^{\prime})^{-1}\left(\frac{v}{(-u^{\prime})^{1-q^{2}}}\right)^{\delta} du^{\prime}\lesssim q^{2}\underline{v}^{\delta}.
\end{equation*}
For the second term on the right-hand side of \eqref{eq: estimate for the rhs of dvphi1 estimate}, using \eqref{eq: pointwise estimate for auxiliary function}, we have \begin{align*}
    &q\int_{-e^{-1}}^{u}(-u^{\prime})^{-q^{2}}\left(\frac{v}{(-u^{\prime})^{1-q^{2}}}\right)^{2\delta}\left\vert(\partial_{v}\phi_{1})_{S}\right\vert du^{\prime}\\\lesssim& q\int_{-e^{-1}}^{u}(-u^{\prime})^{-1}\left(\frac{v}{(-u^{\prime})^{1-q^{2}}}\right)^{2\delta}(-u^{\prime})^{q^{2}-1}\left(1-\left(\frac{v}{(-u^{\prime})^{1-q^{2}}}\right)^{\frac{q^{2}}{1-q^{2}}}\left(\frac{1}{\underline{v}}\right)^{\frac{q^{2}}{1-q^{2}}}\right)du^{\prime}\\\lesssim&
    \frac{q^{2}}{2\delta(1-q^{2})\left(2\delta(1-q^{2})+q^{2}\right)}\left(\frac{v}{(-u)^{1-q^{2}}}\right)^{2\delta}\\\lesssim&
    \frac{q^{2}}{2\delta(1-q^{2})\left(2\delta(1-q^{2})+q^{2}\right)}\underline{v}^{2\delta}.
\end{align*}
Hence, we can conclude that \begin{equation*}
    \left\vert\widetilde{\partial_{v}\phi_{1}}\right\vert\leq \frac{Bq}{100}(-u)^{q^{2}-1}\left(\frac{v}{(-u)^{1-q^{2}}}\right)^{1-2\delta}.
\end{equation*}
It remains to estimate $\widetilde{\mu}$. Since the coefficient before the zeroth-order term $\widetilde{\mu}$ in \eqref{eq: renormalized dv-mu equation} is positive, we directly conjugate \eqref{eq: renormalized dv-mu equation} with $\widetilde{\mu}$. Then we have\begin{equation}
\begin{aligned}
\widetilde{\mu}^{2}(u,v)\lesssim \int_{0}^{v}\left\vert\left(\frac{\lambda}{r}+\frac{r}{\lambda}\left\vert\partial_{v}\phi\right\vert^{2}\right)\mathcal{P}(\mu)\widetilde{\mu}\right\vert+\left\vert\frac{r}{\lambda}\left\vert\partial_{v}\phi\right\vert^{2}\widetilde{\mu}\right\vert+\left\vert\frac{\lambda}{r^{3}}Q^{2}\widetilde{\mu}\right\vert dv^{\prime}.
\end{aligned}
\label{conjugate equation for mu}
\end{equation}
For the first term on the right-hand side of \eqref{conjugate equation for mu}, we have the estimate:\begin{align*}
&\int_{0}^{v}\left\vert\left(\frac{\lambda}{r}+\frac{r}{\lambda}\left\vert\partial_{v}\phi\right\vert^{2}\right)\mathcal{P}(\mu)\widetilde{\mu}\right\vert\\\lesssim&q^{4}\int_{0}^{v}\left(\frac{v^{\prime}}{(-u)^{1-q^{2}}}\right)^{1-\delta}\left((-u)^{q^{2}-1}+(-u)^{1-q^{2}}\left\vert\widetilde{\partial_{v}\phi}\right\vert^{2}+(-u)^{1-q^{2}}\left\vert\left(\partial_{v}\phi\right)_{S}\right\vert^{2}\right)\\\lesssim&q^{4}\left(\frac{v}{(-u)^{1-q^{2}}}\right)^{2-\delta}+q^{2}\int_{0}^{v}\left(\frac{v^{\prime}}{(-u)^{1-q^{2}}}\right)^{1-\delta}(-u)^{q^{2}-1}\left(1-\left(\frac{v^{\prime}}{(-u)^{1-q^{2}}}\right)^{\frac{q^{2}}{1-q^{2}}}\left(\frac{1}{\underline{v}}\right)^{\frac{q^{2}}{1-q^{2}}}\right)^{2}dv^{\prime}\\\lesssim&q^{2}\left(\frac{v}{(-u)^{1-q^{2}}}\right)^{2-2\delta}\int_{0}^{(-u)^{1-q^{2}}\underline{v}}\left(\frac{v^{\prime}}{(-u)^{1-q^{2}}}\right)^{-1+\delta}(-u)^{q^{2}-1}\left(1-\left(\frac{v^{\prime}}{(-u)^{1-q^{2}}}\right)^{\frac{q^{2}}{1-q^{2}}}\left(\frac{1}{\underline{v}}\right)^{\frac{q^{2}}{1-q^{2}}}\right)^{2}dv^{\prime}\\&+q^{4}\left(\frac{v}{(-u)^{1-q^{2}}}\right)^{2-\delta}\\\lesssim& q^{4}\left(\frac{v}{(-u)^{1-q^{2}}}\right)^{2-2\delta}\underline{v}^{\delta}+q^{2}\left(\frac{v}{(-u)^{1-q^{2}}}\right)^{2-2\delta}\underline{v}^{\delta}\left(\frac{1}{\delta}-\frac{2}{\delta+\frac{q^{2}}{1-q^{2}}}+\frac{1}{\delta+\frac{2q^{2}}{1-q^{2}}}\right)\\\lesssim& q^{4}\underline{v}^{\delta}\left(\frac{v}{(-u)^{1-q^{2}}}\right)^{2-2\delta}.
\end{align*}
For the second term on the right-hand side of \eqref{conjugate equation for mu}, similarly we have\begin{equation*}
\int_{0}^{v}\left\vert\frac{r}{\lambda}\left\vert\partial_{v}\phi\right\vert^{2}\widetilde{\mu}\right\vert\lesssim q^{4}\underline{v}^{\delta}\left(\frac{v}{(-u)^{1-q^{2}}}\right)^{2-2\delta}.
\end{equation*}
For the last term on the right-hand side of \eqref{conjugate equation for mu}, we have \begin{equation*}
    \int_{0}^{v}\left\vert\frac{\lambda}{r^{3}}Q^{2}\widetilde{\mu}\right\vert dv^{\prime}\lesssim q^{4}\underline{v}^{\delta}\left(\frac{v}{(-u)^{1-q^{2}}}\right)^{2-2\delta}.
\end{equation*}
Hence, we can conclude that \begin{equation*}
    \left\vert\widetilde{\mu}\right\vert\leq \frac{1}{2}Bq^{2}\left(\frac{v}{(-u)^{1-q^{2}}}\right)^{1-\delta}.
\end{equation*}
Putting everything together, we can conclude the proof of this proposition.
\end{proof}

\subsection{Construction of region II}
In this section, we construct the solution in the region $\Rmnum{2}$:$$\left\{\underline{v}\leq\frac{v}{(-u)^{1-q^{2}}}\leq \Lambda^{1-q^{2}}\right\},$$ where $\Lambda$ is a large number which will be determined later.\begin{figure}[htbp]
    \centering
    \begin{tikzpicture}[ scale=0.8, >=Latex, every node/.style={font=\small} ]
\coordinate (O) at (0,0);
\coordinate (A) at (4,-4);
\coordinate (B) at (6,6);
\coordinate (C) at (10,2);
\draw (O)--(A);
\draw[dashed] (O)--(B);
\draw (A)--(C);
\draw[dashed] (B) -- (C)
    node[midway, above, sloped]
    {\rotatebox{90}{$\mathscr{I}$}};
\node[left] at (O) { $\mathcal{O}$};
\coordinate (A1) at (6,-2);
\coordinate (A2) at (8,0);
\draw[dashed] (O)--(A1);
\draw[dashed] (O)--(A2);
\coordinate (A3) at (3,0);
\coordinate (A4) at (1.5,1.5);
\draw[dashed] (A3)--(A4);
\coordinate (L1) at (2.5,-1.5);
\coordinate (L2) at (3.5,-0.5);

\coordinate (L3) at (1.5,0.5);
\node at (L3) {III};
\coordinate (L4) at (5,3);
\node at (L4){IV};
\fill[gray!30] (O) -- (A1) -- (A2) -- cycle;
\node at (L1){I};
\node at (L2) {II};
\end{tikzpicture}
    \caption{$R_{II}$}
    \label{fig: region 2}
\end{figure} We mainly adopt the method in \cite{singh2024construction,rodnianski2023naked,cicortas2024discretely}. Since in this region we are away from the past null cone emanating from the naked singularity $(0,0)$, we transform back to the untwisted self-similar gauge:\begin{equation*}
V: = v^{\frac{1}{1-q^{2}}},\quad U: = u.
\end{equation*}
Then under the $(U,V)$ gauge, the region $\Rmnum{2}$ becomes: $$\{\underline{v}^{\frac{1}{1-q^{2}}}\leq\frac{V}{(-U)}\leq \Lambda\}.$$

By Proposition \ref{proposition:close the bootstrap assumption in the region 1}, in the $(U,V)$ gauge, we have the following proposition.
\begin{proposition}
On the cone $\left\{\frac{V}{(-U)} = \underline{v}^{\frac{1}{1-q^{2}}}\right\}$, we have\begin{align}
&\left\vert\partial_{U}r+1\right\vert\lesssim q^{2}\underline{v}^{1-\delta},\quad \left\vert\frac{1}{1-q^{2}}V^{q^{2}}\partial_{V}r(U,V)-\partial_{v}r(U,0)\right\vert\lesssim q^{2}\underline{v}^{1-2\delta}(-U)^{q^{2}},\\&\left\vert\mu(U,V)-\mu(u,0)\right\vert\lesssim q^{2}\underline{v}^{1-\delta},\quad 
\left\vert\partial_{U}\phi-\partial_{u}\phi(U,0)\right\vert\lesssim\frac{q}{(-U)}\underline{v}^{1-\delta},\quad \left\vert\partial_{V}\phi(U,V)\right\vert\lesssim \frac{q\underline{v}^{1-2\delta-\frac{q^{2}}{1-q^{2}}}}{(-U)},\\&
\left\vert Q(U,V)-Q(U,0)\right\vert\lesssim \vert q_{0}\vert q^{2}U^{2}\left\vert\log(-U)\right\vert\underline{v}^{1-\delta},\quad \left\vert A_{U}(U,V)\right\vert\lesssim  \vert q_{0}\vert q^{2}(-U)\left\vert\log(-U)\right\vert\underline{v}.
\end{align}
\end{proposition}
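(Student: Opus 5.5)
The plan is to read off each bound from the improved bootstrap estimates of Proposition~\ref{proposition:close the bootstrap assumption in the region 1}, evaluated on the cone $\{v=(-u)^{1-q^{2}}\underline{v}\}$, and then translate to the $(U,V)$ gauge. On this cone $\frac{v}{(-u)^{1-q^{2}}}=\underline{v}$. Since $U=u$, the $U$-derivatives coincide with the $u$-derivatives. For the $V$-derivatives we use $\partial_{V}=\frac{dv}{dV}\partial_{v}=(1-q^{2})V^{-q^{2}}\partial_{v}$.

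\textbf{Quantities not involving $\partial_{v}\phi$.} First, $\nu=\mathcal{P}(\nu)+\widetilde{\nu}$ with $\mathcal{P}(\nu)=-1$, so the bootstrap bound on $\widetilde{\nu}$ gives $\vert\partial_{U}r+1\vert\lesssim q^{2}\underline{v}^{1-\delta}$. Second, $\frac{1}{1-q^{2}}V^{q^{2}}\partial_{V}r=\partial_{v}r=\mathcal{P}(\lambda)+\widetilde{\lambda}$, so the $\widetilde{\lambda}$ bound gives the second estimate with $(-U)^{q^{2}}\underline{v}^{1-2\delta}$. The bounds for $\mu$, $\partial_{U}\phi$ and $Q$ are the $\widetilde{\mu}$, $\widetilde{\partial_{u}\phi}$ and $\widetilde{Q}$ bounds evaluated at $\frac{v}{(-u)^{1-q^{2}}}=\underline{v}$. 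For $A_{U}$, I use Proposition~\ref{prop:consequence of the bootstrap}, $\vert A_{u}\vert\lesssim\vert q_{0}\vert q^{2}(-u)\vert\log(-u)\vert\frac{v}{(-u)^{1-q^{2}}}$, at the same cone. Because $A_{v}\equiv 0$ and $U=u$, the component $A_{U}$ equals $A_{u}$.

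\textbf{The $\partial_{V}\phi$ bound.} This is the only step needing care. On the cone, the auxiliary functions $(\partial_{v}\phi_{i})_{S}$ vanish by construction: their data were set to zero on $\{v=(-u)^{1-q^{2}}\underline{v}\}$. Hence $\partial_{v}\phi=\widetilde{\partial_{v}\phi}$ there, and the improved bootstrap bound gives $\vert\partial_{v}\phi\vert\lesssim q(-u)^{q^{2}-1}\underline{v}^{1-2\delta}$. Multiplying by $(1-q^{2})V^{-q^{2}}$ and substituting $V=\underline{v}^{\frac{1}{1-q^{2}}}(-U)$ gives $V^{-q^{2}}=\underline{v}^{-\frac{q^{2}}{1-q^{2}}}(-U)^{-q^{2}}$. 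The powers of $(-U)$ then combine to $(-U)^{-1}$, which yields
\[
\vert\partial_{V}\phi\vert\lesssim\frac{q\,\underline{v}^{1-2\delta-\frac{q^{2}}{1-q^{2}}}}{(-U)}.
\]

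\textbf{Main obstacle.} The only real issue is making sure the singular auxiliary piece genuinely drops out on this cone. It does, because the cone is exactly the hypersurface on which $(\partial_{v}\phi)_{S}$ was initialized to zero. Otherwise the term $(\partial_{v}\phi)_{S}\sim\frac{1}{q}(-u)^{q^{2}-1}$ would destroy the $q$-smallness of the bound. Everything else is bookkeeping of exponents under the change of gauge.
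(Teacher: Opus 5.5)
Your proposal is correct and is essentially the paper's argument: every bound is read off from the improved region~I bootstrap estimates of Proposition~\ref{proposition:close the bootstrap assumption in the region 1} (together with the $A_u$ bound of Proposition~\ref{prop:consequence of the bootstrap}) evaluated on $\{v=(-u)^{1-q^{2}}\underline{v}\}$, and the $\partial_V\phi$ estimate comes from the vanishing of $(\partial_v\phi)_S$ on that cone. Your explicit handling of the Jacobian $\partial_V=(1-q^{2})V^{-q^{2}}\partial_v$, and of $A_U=A_u$ (which holds since $A_v\equiv 0$ and $V$ depends only on $v$), supplies details that the paper leaves implicit.
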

\begin{proof}
The estimate for $\partial_{V}\phi$ follows from the fact that $\left(\partial_{v}\phi\right)_{S} = 0$ on $\left\{\frac{v}{(-u)^{1-q^{2}}} = \underline{v}\right\}$. All the other estimates follow directly from Proposition \ref{proposition:close the bootstrap assumption in the region 1}.
\end{proof}

We define the region $$\mathcal{Q}_{u_{0},v_{0}}^{\Rmnum{2}}: = \mathcal{Q}_{u_{0},v_{0}}^{(ex)}\cap\left\{\underline{v}^{\frac{1}{1-q^{2}}}\leq\frac{V}{(-U)}\leq \Lambda\right\}.$$ 
To construct the spacetime in the region $\Rmnum{2}$, we make the following bootstrap assumption:\begin{align}
&e^{-K\frac{V}{(-U)}}\vert\partial_{U}r+1\vert\leq B_{1}q^{2-10\delta},\label{eq:bootstrap1 in the region 2}\quad 
e^{-K\frac{V}{(-U)}}\left\vert\frac{1}{1-q^{2}}V^{q^{2}}\partial_{V}r(U,V)-\partial_{v}r(U,0)\right\vert\leq B_{1}(-U)^{q^{2}}q^{2-10\delta},\\&
e^{-K\frac{V}{(-U)}}\vert\partial_{U}\phi\vert\leq \frac{B_{1}q^{1-2\delta}}{(-U)},\quad 
e^{-K\frac{V}{(-U)}}\left\vert\left(\frac{V}{(-U)}\right)^{q^{2}}\partial_{V}\phi\right\vert\leq \frac{B_{1}q^{1-2\delta}}{(-U)},\label{eq: bootstrap 2 in the region2}\\&
e^{-K\frac{V}{(-U)}}\vert\mu\vert\leq B_{1}q^{2-8\delta}.\label{eq:bootstrap5 in the region 2}
\end{align}
where $K$ is a large number such that $e^{K\Lambda}q^{\delta}\ll1$ (recall that in the previous section, we only assumed $\underline{v}\ll\delta\ll1$).

Immediately we have the following consequence of the bootstrap assumption \eqref{eq:bootstrap1 in the region 2}-\eqref{eq:bootstrap5 in the region 2}:
\begin{proposition}
\label{prop: consequence of the bootstrap in region 2}
In the region $\mathcal{Q}_{u_{0},v_{0}}^{\Rmnum{2}}$, under the bootstrap assumption \eqref{eq:bootstrap1 in the region 2}-\eqref{eq:bootstrap5 in the region 2}, there exists a constant $C$ such that\begin{align}
&-1-\frac{1}{2}Cq^{2-12\delta}\leq\partial_{U}r(U,V)\leq-1+2Cq^{2-12\delta},\label{eq: bootstrap consequence in the region 2 for dur}\\&
\frac{\partial_{v}r(U,0)}{2(-U)^{q^{2}}}\left(\frac{V}{(-U)}\right)^{-q^{2}}\leq\partial_{V}r(U,V)\leq \frac{2{\partial_{v}r}(U,0)}{(-U)^{q^{2}}}\left(\frac{V}{(-U)}\right)^{-q^{2}},\label{eq: bootstrap consequence in the region 2 for dvr}\\&
r\approx (-U)\left(1+\left(\frac{V}{(-U)}\right)^{1-q^{2}}\right),\label{eq: bootstrap consequence in the region2 for r}\quad 
\vert\mu\vert\leq q^{2-9\delta},\\&
\left\vert\phi\right\vert(U,V)\lesssim q^{1-3\delta}\left(\frac{V}{(-U)}\right)^{1-q^{2}}\left\vert\log(-U)\right\vert,\\& 
\left\vert Q(U,V)\right\vert\lesssim \vert q_{0}\vert q^{2-6\delta}\left(\frac{V}{(-U)}\right)^{2(1-q^{2})}\left(1+\left(\frac{V}{(-U)}\right)^{2(1-q^{2})}\right)U^{2}\left\vert\log(-U)\right\vert,\\& \left\vert A_{U}(U,V)\right\vert\lesssim \vert q_{0}\vert q^{2-6\delta}\left(\frac{V}{-U}\right)^{3(1-q^{2})}(-U)\left\vert\log(-U)\right\vert.
\end{align}
\end{proposition}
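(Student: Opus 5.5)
The plan is to integrate each bootstrap bound in \eqref{eq:bootstrap1 in the region 2}--\eqref{eq:bootstrap5 in the region 2} along the appropriate null direction. The data come from the cone $\{V/(-U)=\underline{v}^{\frac{1}{1-q^{2}}}\}$, where the previous proposition supplies them. Throughout we use that $V/(-U)\le\Lambda$ in region II, so that $e^{K V/(-U)}\le e^{K\Lambda}$ and, by the choice of $K$, $e^{K\Lambda}q^{\delta}\ll1$. This lets every factor $e^{K V/(-U)}$ be absorbed at the cost of one power $q^{-\delta}$, which explains the loss of exponents from $q^{2-10\delta}$ to $q^{2-12\delta}$.

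\textbf{The $\partial_U r$ and $\partial_V r$ bounds.} The bound \eqref{eq: bootstrap consequence in the region 2 for dur} follows directly from the first inequality in \eqref{eq:bootstrap1 in the region 2} after multiplying by $e^{K\Lambda}$. For \eqref{eq: bootstrap consequence in the region 2 for dvr}, the second inequality in \eqref{eq:bootstrap1 in the region 2} gives
\[
\Big|\tfrac{1}{1-q^{2}}V^{q^{2}}\partial_{V}r-\partial_{v}r(U,0)\Big|\lesssim e^{K\Lambda}q^{2-10\delta}(-U)^{q^{2}}.
\]
Since $\partial_{v}r(U,0)\approx(-U)^{q^{2}}$ by \eqref{eq: estimate for lambda on the ingoing cone for the exterior construction}, the error is a small fraction of the main term. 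Dividing by $V^{q^{2}}=(-U)^{q^{2}}(V/(-U))^{q^{2}}$ then yields the two-sided bound.

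\textbf{Bounds on $r$ and $\mu$.} For $r$, integrate $\partial_V r$ in $V$ starting from the point on the boundary cone. There $r\approx(-U)$, since region I gives $r\approx(-u)$. We have $\int_{V_0}^{V}\partial_V r\approx(-U)^{q^{2}}\big(V^{1-q^{2}}-V_0^{1-q^{2}}\big)(-U)^{-q^{2}}$, which gives $r\approx(-U)(1+(V/(-U))^{1-q^{2}})$. The $\mu$ bound is immediate from \eqref{eq:bootstrap5 in the region 2}.

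\textbf{Bounds on $\phi$.} Start from $\phi$ on the boundary cone, where $|\phi|\lesssim q|\log(-U)|$ by Proposition \ref{prop:consequence of the bootstrap}. Then integrate $|\partial_V\phi|\le e^{K\Lambda}B_1q^{1-2\delta}(-U)^{-1}(V/(-U))^{-q^{2}}$ in $V$; this contributes $q^{1-3\delta}(V/(-U))^{1-q^{2}}$. The logarithm arises from the boundary value, and the ratio $V/(-U)$ is bounded below by $\underline v^{1/(1-q^2)}$, so the combined bound takes the stated form. One adjustment is needed: I would either accept an implicit constant depending on $\underline v$, or keep the additive form $q|\log(-U)|+q^{1-3\delta}(V/(-U))^{1-q^{2}}$ and note that it is dominated by the stated expression.

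\textbf{Bounds on $Q$ and $A_U$.} For $Q$, integrate $\partial_V Q=q_0r^{2}\Im(\phi\overline{\partial_V\phi})$ in $V$ from the boundary value, which is $\lesssim|q_0|q^{2}U^{2}|\log(-U)|$. Substituting the bounds on $r^{2}$, $\phi$ and $\partial_V\phi$ gives $r^{2}|\phi||\partial_V\phi|\lesssim q^{2-5\delta}U^{2}(1+x^{1-q^2})^{2}x^{1-2q^{2}}|\log(-U)|(-U)^{-1}$, where $x=V/(-U)$. Integrating in $V$, using $dV=(-U)dx$, produces the powers $x^{2(1-q^{2})}(1+x^{2(1-q^{2})})$. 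The extra $q^{-\delta}$ comes from the boundary data and the $e^{K\Lambda}$ factors.

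For $A_U$, use $\partial_V A_U=2Q\lambda\nu/((1-\mu)r^{2})$ together with $\lambda\nu/(1-\mu)=-\Omega^{2}/4$ in the $(U,V)$ gauge. The integrand is then $|Q|\,|\partial_V r|\,|\partial_U r|/r^{2}$, and integrating it in $V$ gives one more power of $x^{1-q^{2}}$ and the factor $(-U)$.

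\textbf{Main obstacle.} I expect the bookkeeping in the $Q$ and $A_U$ estimates to be the hardest part. The exponents of $x$ must match exactly, and the boundary contributions near $x=\underline v^{1/(1-q^2)}$ must be absorbed into the stated powers. Since $\underline v$ is fixed, lower powers of $x$ can be bounded by higher ones at the cost of $\underline v$-dependent constants; I would absorb these by using $q\ll\underline v$ to pay with a power $q^{\delta}$.
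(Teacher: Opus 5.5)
Your proposal is correct and follows essentially the paper's argument. The paper likewise reads off the $\partial_U r$, $\partial_V r$ and $\mu$ bounds directly from the bootstrap assumptions. It then obtains $r,\phi,Q,A_U$ by integrating in $V$ from $\{v=0\}$, splitting at the cone $\{V/(-U)=\underline{v}^{\frac{1}{1-q^{2}}}\}$ and using the region~I estimates (including the $L^1$ bound on $(\partial_v\phi)_S$) on the inner piece, which is equivalent to starting from the boundary-cone values as you do.

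One small slip: with $x=V/(-U)$, the integral of $\partial_V r$ should read $(-U)^{q^{2}}\bigl(V^{1-q^{2}}-V_0^{1-q^{2}}\bigr)=(-U)\bigl(x^{1-q^{2}}-x_0^{1-q^{2}}\bigr)$, without your extra factor $(-U)^{-q^{2}}$. Your stated conclusion for $r$ is nonetheless correct.
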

\begin{proof}
The estimates \eqref{eq: bootstrap consequence in the region 2 for dur}-\eqref{eq: bootstrap consequence in the region 2 for dvr} follow directly from our bootstrap assumption. For $r(U,V)$, we have \begin{align*}
    r(U,V) =& r(U,V)-r\left(U,(-U)\underline{v}^{\frac{1}{1-q^{2}}}\right)+r\left(U,(-U)\underline{v}^{\frac{1}{1-q^{2}}}\right)-r(U,0)+r(U,0)\\=&
    \int_{(-U)\underline{v}^{\frac{1}{1-q^{2}}}}^{V}\partial_{V}r(U,V^{\prime})dV^{\prime}+\int_{0}^{(-U)^{1-q^{2}}\underline{v}}\partial_{v}r(U,v^{\prime})dv^{\prime}+(-U).
\end{align*}
Then we have \begin{equation*}
   0\leq  r(U,V)-(-U)\leq (-U)\frac{2\partial_{v}r(U,0)}{(-U)^{q^{2}}}\left(\frac{V}{(-U)}\right)^{1-q^{2}}.
\end{equation*}
This proves \eqref{eq: bootstrap consequence in the region2 for r}. To estimate $\left\vert\phi\right\vert$, we have \begin{align*}
    \left\vert\phi\right\vert(U,V)\leq&\left\vert\phi(U,V)-\phi\left(U,(-U)\underline{v}^{\frac{1}{1-q^{2}}}\right)\right\vert+\left\vert\phi\left(U,(-U)\underline{v}^{\frac{1}{1-q^{2}}}\right)-\phi(U,0)\right\vert+\left\vert\phi(U,0)\right\vert\\\leq& \int_{(-U)\underline{v}^{\frac{1}{1-q^{2}}}}^{V}\left\vert\partial_{V}\phi\right\vert dV^{\prime}+\int_{0}^{(-U)^{1-q^{2}}\underline{v}}\left\vert\partial_{v}\phi\right\vert dv+q\left\vert\log(-U)\right\vert\\\leq&Cq^{1-3\delta}\left(\frac{V}{(-U)}\right)^{1-q^{2}}+q\left\vert\log(-U)\right\vert\\\leq&Cq^{1-3\delta}\left(\frac{V}{(-U)}\right)^{1-q^{2}}\left\vert\log(-U)\right\vert,
\end{align*}
where the third inequality is due to the bootstrap assumption \eqref{eq: bootstrap 2 in the region2} and the inequality \eqref{eq: L1 estimate for the auxilary function}. 
To estimate $Q$, we have \begin{align*}
    \left\vert Q\right\vert(U,V)\leq&\left\vert Q(U,V)-Q\left(U,(-U)\underline{v}^{\frac{1}{1-q^{2}}}\right)\right\vert+\left\vert Q\left(U,(-U)\underline{v}^{\frac{1}{1-q^{2}}}\right)-Q\left(U,0\right)\right\vert+\left\vert Q(U,0)\right\vert\\\lesssim&
    \vert q_{0}\vert\int_{(-U)\underline{v}^{\frac{1}{1-q^{2}}}}^{V}r^{2}\left\vert\phi\right\vert\left\vert\partial_{V}\phi\right\vert dV^{\prime}+\vert q_{0}\vert\int_{0}^{(-U)^{1-q^{2}}\underline{v}}r^{2}\left\vert \phi\right\vert\left\vert\partial_{v}\phi\right\vert dv^{\prime}+\vert q_{0}\vert q^{2}u^{2}\left\vert\log(-U)\right\vert\\\lesssim&\vert q_{0}\vert q^{2-6\delta}\left(\frac{V}{(-U)}\right)^{2(1-q^{2})}\left(1+\left(\frac{V}{(-U)}\right)^{2(1-q^{2})}\right)U^{2}\left\vert\log(-U)\right\vert.
\end{align*}
To estimate $A_{U}$, we have \begin{align*}
    \left\vert A_{U}(U,V)\right\vert =&\left\vert A_{U}(U,V)-A_{U}\left(U,(-U)\underline{v}^{\frac{1}{1-q^{2}}}\right)+A_{U}\left(U,(-U)\underline{v}^{\frac{1}{1-q^{2}}}\right)\right\vert\\\leq&
    \int_{(-U)\underline{v}^{\frac{1}{1-q^{2}}}}^{V}\left\vert\partial_{V}A_{U}(U,V^{\prime})\right\vert dV^{\prime}+\left\vert A_{u}\left(U,(-U)^{1-q^{2}}\underline{v}\right)\right\vert\\\leq &\int_{(-U)\underline{v}^{\frac{1}{1-q^{2}}}}^{V}\frac{2\left\vert Q\right\vert\left\vert\partial_{U}r\right\vert\left\vert\partial_{V}r\right\vert}{r^{2}(1-\mu)}dV^{\prime}+\int_{0}^{(-U)^{1-q^{2}}\underline{v}}\frac{2\left\vert Q\right\vert\left\vert\partial_{v}r\right\vert\left\vert\partial_{u}r\right\vert}{r^{2}(1-\mu)}dv^{\prime}\\\lesssim&\vert q_{0}\vert q^{2-6\delta}\left(\frac{V}{-U}\right)^{3(1-q^{2})}(-U)\left\vert\log(-U)\right\vert.
\end{align*}
\end{proof}

By the standard local existence result, the solution constructed in the region $\Rmnum{1}$ can be extended uniquely in the region $\Rmnum{2}$ if we can improve the above bootstrap assumptions \eqref{eq:bootstrap1 in the region 2}-\eqref{eq:bootstrap5 in the region 2}.\begin{proposition}
\label{prop: closing the bootstrap in region 2}
Assuming the bootstrap assumptions \eqref{eq:bootstrap1 in the region 2}-\eqref{eq:bootstrap5 in the region 2} hold in the region $\mathcal{Q}_{u_{0},v_{0}}^{\Rmnum{2}}$ for some constant $C$, we have the following estimates improving the original bootstrap bounds in $Q_{u_{0},v_{0}}^{\Rmnum{2}}$ \begin{align}
&e^{-K\frac{V}{(-U)}}\vert\partial_{U}r+1\vert\leq \frac{1}{2}B_{1}q^{2-10\delta},\quad
e^{-K\frac{V}{(-U)}}\left\vert\frac{1}{1-q^{2}}V^{q^{2}}\partial_{V}r(U,V)-\partial_{v}r(U,0)\right\vert\leq \frac{1}{2}B_{1}(-U)^{q^{2}}q^{2-10\delta},\\&
e^{-K\frac{V}{(-U)}}\vert\partial_{U}\phi\vert\leq \frac{B_{1}q^{1-2\delta}}{2(-U)},\quad 
e^{-K\frac{V}{(-U)}}\left\vert\left(\frac{V}{(-U)}\right)^{q^{2}}\partial_{V}\phi\right\vert\leq \frac{B_{1}q^{1-2\delta}}{2(-U)},\quad
e^{-K\frac{V}{(-U)}}\vert\mu\vert\leq \frac{1}{2}B_{1}q^{2-8\delta}.
\end{align}
\label{prop:close the bootstrap in the region 2}
\end{proposition}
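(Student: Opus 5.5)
We will close the bootstrap in region II with the exponential-weight (Grönwall-in-$V/(-U)$) method of \cite{rodnianski2023naked,singh2024construction}. Set $x:=V/(-U)\in[\underline{v}^{\frac{1}{1-q^{2}}},\Lambda]$ and note that $\partial_{V}x=(-U)^{-1}$ and $\partial_{U}x=V/U^{2}>0$. For each quantity, we integrate its transport equation from the boundary of region II, namely the cone $\{x=\underline{v}^{\frac{1}{1-q^{2}}}\}$, where the preceding proposition gives data of size $q^{2}\underline{v}^{1-2\delta}$ (resp. $q\underline{v}^{1-2\delta-\frac{q^{2}}{1-q^{2}}}$). These are much smaller than the bootstrap constants $q^{2-10\delta}$, $q^{1-2\delta}$, since $\underline{v}$ is fixed and $q\to0$. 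Then we multiply by $e^{-Kx}$.

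The key identity is that for a source $S$ satisfying $|S|\le C e^{Kx}$,
\[
\int_{x_{0}}^{x}e^{Kx'}\,dx'\le \tfrac{1}{K}e^{Kx}.
\]
Therefore, after integrating in $V$ (an integral in $dV=(-U)\,dx$), every term that is linear in the bootstrapped quantities gains a factor $1/K$. Choosing $K$ large (before $q$) makes these contributions at most $\frac{1}{10}B_{1}$ times the target. The genuinely nonlinear terms carry extra powers of $q$ and are controlled by $e^{K\Lambda}q^{\delta}\ll1$.

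\textbf{Order of estimates.}

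First, we treat $\partial_{V}(r\phi)$ and $\partial_{U}\phi$ together, via
\[
\partial_{V}(\partial_{U}\phi)=-\frac{\partial_{V}r}{r}\partial_{U}\phi-\frac{\partial_{U}r}{r}\partial_{V}\phi+q_{0}(\ldots).
\]
We integrate in $V$ for $\partial_{U}\phi$, and in $U$ (along which $x$ increases, so the same $1/K$ gain applies in $dU$) for $(-U)x^{q^{2}}\partial_{V}\phi$. The coefficients $\partial_{V}r/r$ and $\partial_{U}r/r$ are $O((-U)^{-1})$ by Proposition~\ref{prop: consequence of the bootstrap in region 2}. The charge terms $A_{U}\partial_{V}(r\phi)$ and $Q\Omega^{2}\phi/r$ are bounded using the $Q$, $A_{U}$, $\phi$ bounds of that proposition; they are of size $|q_{0}|q^{2-6\delta}\Lambda^{C}|\log(-U)|$ times the bootstrapped quantity, hence negligible for $q$ small. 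The $\log(-U)$ factor is integrable against the $(-U)$ weights because region II has finite $x$-range.

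Second, we treat $\mu$ through \eqref{eq: v-transport equation for mu}. Its source $\frac{r}{\lambda}|\partial_{V}\phi|^{2}+\frac{\lambda Q^{2}}{r^{3}}$ is quadratic and gives $q^{2-4\delta}e^{2Kx}$. Improving to $\frac12B_{1}q^{2-8\delta}e^{Kx}$ requires $e^{K\Lambda}q^{4\delta}\ll1$, which holds by the choice of $K$.

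Third, we treat $\partial_{U}r$ and $V^{q^{2}}\partial_{V}r$ using the wave equation \eqref{eq:wave equation for r} in the form $\partial_{V}\partial_{U}r=\frac{\Omega^{2}}{4r}\bigl(\frac{Q^{2}}{r^{2}}-\mu\bigr)$, with $\Omega^{2}=-4\lambda\nu/(1-\mu)$. The source is $\lesssim \frac{\partial_{V}r}{r}(|\mu|+Q^{2}/r^{2})$, and the $\mu$ bound just obtained gives the $q^{2-10\delta}$ improvement after the $1/K$ gain (the $2\delta$ loss in exponent absorbs constants). For $\partial_{V}r$, we compare with $\partial_{v}r(U,0)$ by integrating $\partial_{U}\partial_{V}r$ from the initial cone at $U=-e^{-1}$ and from the region I boundary.

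\textbf{Main obstacle.} The delicate step is the $\partial_{V}\phi$ estimate near the inner boundary. There $\partial_{V}\phi$ inherits the $x^{-q^{2}}$ singular factor from the coordinate change $V=v^{1/(1-q^{2})}$, together with the largeness of $(\partial_{v}\phi)_{S}$ upstream. We must check that the weight $x^{q^{2}}$ exactly compensates this: the commutator $\partial_{U}(x^{q^{2}})=q^{2}x^{q^{2}-1}V/U^{2}$ is harmless, being $O(q^{2}(-U)^{-1})$. We must also check that the initial data on the region I boundary really are $O(q\underline{v}^{1-3\delta})\ll q^{1-2\delta}$. Once this is verified, all remaining terms close by the $1/K$ and $q^{\delta}e^{K\Lambda}$ smallness described above.
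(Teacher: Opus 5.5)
Your proposal is correct and is essentially the paper's argument. The shared structure is the same $e^{-KV/(-U)}$ weight, each transport equation integrated from the lower boundary of region~II, linear terms absorbed by taking $K$ large, and quadratic terms controlled by the $q$-power hierarchy together with $e^{K\Lambda}q^{\delta}\ll1$. The paper absorbs linear terms through squared identities whose bulk term $KV/U^{2}+\partial_{U}r/r>0$ absorbs sources via Cauchy--Schwarz; this is equivalent to your $\int e^{Kx'}dx'\le e^{Kx}/K$ Gr\"onwall version. Note that in $dU=(-U)\,dx/x$ the gain is really $1/(K\underline{v}^{\frac{1}{1-q^{2}}})$, so you need $K\underline{v}^{\frac{1}{1-q^{2}}}\gg1$, as the paper also requires.

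One small omission: for the $U$-integrated quantities, the lower boundary also contains the segment $\{U=-e^{-1}\}$ of the initial outgoing cone, for $V\ge e^{-1}\underline{v}^{\frac{1}{1-q^{2}}}$. So for $\partial_{V}\phi$ you must also use smallness of the outgoing data there relative to $q^{1-2\delta}$, as the paper does; the region~I interface bounds alone do not cover it.
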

\begin{proof}
The approach of the proof is similar to that used in the proof of Proposition \ref{proposition:close the bootstrap assumption in the region 1}. To estimate $\partial_{V}r$, let \begin{equation*}
    \partial_{low}\mathcal{Q}^{(ex)}_{\Rmnum{2}}: = \left\{\frac{V}{(-U)} = \underline{v}^{\frac{1}{1-q^{2}}},\ 0\leq V\leq\underline{v}^{\frac{1}{1-q^{2}}}\right\}\cup\left\{U = -e^{-1},\ \underline{v}^{\frac{1}{1-q^{2}}}\leq V\leq \Lambda\right\}
\end{equation*}
be the lower boundary of the region $\Rmnum{2}$. We denote the point on $\partial_{low}\mathcal{Q}^{(ex)}_{\Rmnum{2}}$ by $(b(V),V)$. 

To estimate $\partial_{V}r$, we write down the equation for $\frac{1}{1-q^{2}}V^{q^{2}}\partial_{V}r(U,V)-\partial_{v}r(U,0)$:
\begin{equation}
\begin{aligned}
&\partial_{U}\left(\frac{1}{1-q^{2}}V^{q^{2}}\partial_{V}r(U,V)-\partial_{v}r(U,0)\right)-\frac{\partial_{U}r}{r(1-\mu)}\left(\mu-\frac{Q^{2}}{r^{2}}\right)\left(\frac{1}{1-q^{2}}V^{q^{2}}\partial_{V}r(U,V)-\partial_{v}r(U,0)\right)\\=&\left(\frac{\partial_{U}r}{r(1-\mu)}\left(\mu-\frac{Q^{2}}{r^{2}}\right)-\mathcal{P}\left(\frac{\partial_{U}r}{r(1-\mu)}\left(\mu-\frac{Q^{2}}{r^{2}}\right)\right)\right)\partial_{v}r(U,0).
\end{aligned}
\label{eq:conjugate equation for dv r in the region 2}
\end{equation}
Conjugating the equation \eqref{eq:conjugate equation for dv r in the region 2} with $$e^{-2K\frac{V}{(-U)}}\left(\frac{1}{1-q^{2}}V^{q^{2}}\partial_{V}r-\partial_{v}r(U,0)\right),$$integrating from $\partial_{low}\mathcal{Q}^{(ex)}_{\Rmnum{2}}$, and using the smallness of $\frac{1}{1-q^{2}}V^{q^{2}}\partial_{V}r(U,V)-\partial_{v}r(U,0)$ on the boundary, we have
\begin{equation}
\begin{aligned}
&e^{-2K\frac{V}{(-U)}}\left\vert\frac{1}{1-q^{2}}V^{q^{2}}\partial_{V}r(U,V)-\partial_{v}r(U,0)\right\vert^{2}\allowdisplaybreaks\\\lesssim&
q^{4}\underline{v}^{2-4\delta}(-U)^{2q^{2}}+\int_{\partial_{low}\mathcal{Q}^{(ex)}_{\Rmnum{2}}}^{U}e^{-2K\frac{V}{(-U^{\prime})}}\left\vert\frac{\partial_{U}r}{r(1-\mu)}\left(\mu-\frac{Q^{2}}{r^{2}}\right)-\mathcal{P}\left(\frac{\partial_{U}r}{r(1-\mu)}\left(\mu-\frac{Q^{2}}{r^{2}}\right)\right)\right\vert\vert\mathcal{P}(\lambda)\vert\\&\hspace{5cm}\times\left\vert\frac{1}{1-q^{2}}V^{q^{2}}\partial_{V}r(U^{\prime},V)-\partial_{v}r(U^{\prime},0)\right\vert dU^{\prime}\allowdisplaybreaks\\\lesssim&q^{4}\underline{v}^{2-4\delta}(-U)^{2q^{2}}+q^{2-10\delta}\int_{\partial_{low}\mathcal{Q}^{(ex)}_{\Rmnum{2}}}^{U}(-U^{\prime})^{2q^{2}}e^{-K\frac{V}{(-U^{\prime})}}\left\vert\frac{\partial_{U}r}{r(1-\mu)}\left(\mu-\frac{Q^{2}}{r^{2}}\right)-\mathcal{P}\left(\frac{\partial_{U}r}{r(1-\mu)}\left(\mu-\frac{Q^{2}}{r^{2}}\right)\right) \right\vert\allowdisplaybreaks\\\lesssim& q^{4}\underline{v}^{2-4\delta}(-U)^{2q^{2}}+q^{4-19\delta}\Lambda^{8(1-q^{2})}V\int_{\partial_{low}\mathcal{Q}^{(ex)}_{\Rmnum{2}}}^{U}(-U^{\prime})^{2q^{2}-2}\left(\frac{V}{(-U^{\prime})}\right)^{-1}dU^{\prime}\allowdisplaybreaks\\\lesssim&q^{4}\underline{v}^{2-4\delta}(-U)^{2q^{2}}+q^{4-19\delta}\Lambda^{8(1-q^{2})}\frac{\Lambda}{\underline{v}}(-U)^{2q^{2}}.
\end{aligned}
\end{equation}
where the third inequality follows from the consequence of the bootstrap assumptions \eqref{eq:bootstrap1 in the region 2}-\eqref{eq:bootstrap5 in the region 2}. Hence, by taking $q$ sufficiently small, we have \begin{equation*}
    e^{-K\frac{V}{(-U)}}\left\vert\frac{1}{1-q^{2}}V^{q^{2}}\partial_{V}r(U,V)-\partial_{v}r(U,0)\right\vert\leq \frac{1}{2}B_{1}q^{2-10\delta}(-U)^{q^{2}}.
\end{equation*}

To estimate $\partial_{V}\phi$, conjugating with $e^{-2K\frac{V}{(-U)}}V^{2q^{2}}\overline{\partial_{V}\phi}$ and taking the real part, we have\begin{equation}
\begin{aligned}
&\frac{1}{2}\partial_{U}\left(e^{-2K\frac{V}{(-U)}}\left\vert V^{q^{2}}\partial_{V}\phi\right\vert^{2}\right)+\left(K\frac{V}{U^{2}}+\frac{\partial_{U}r}{r}\right)e^{-2K\frac{V}{(-U)}}\left\vert V^{q^{2}}\partial_{V}\phi\right\vert^{2} \\=& -e^{-2K\frac{V}{(-U)}}\frac{\partial_{V}r}{r}V^{q^{2}}\Re\left(\partial_{U}\phi\left({V^{q^{2}}\overline{\partial_{V}\phi}}\right)\right)+\frac{1}{r}\Re\left(iq_{0}V^{2q^{2}}\left(A_{U}\partial_{V}(r\phi)-\frac{Q\Omega^{2}}{4r}\phi\right)\overline{\partial_{V}\phi}\right).\label{eq:conjugate of dv phi in the region 2}
\end{aligned}
\end{equation}
Since $K$ is sufficiently large, we have\begin{equation*}
K\frac{V}{U^{2}}+\frac{\partial_{U}r}{r}\geq K\frac{V}{U^{2}}-\frac{2}{(-U)}\geq \left(K\underline{v}^{1-q^{2}}-2\right)\frac{1}{(-U)}>\frac{K}{2}\underline{v}^{1-q^{2}}\frac{1}{(-U)}.
\end{equation*}
Directly integrating \eqref{eq:conjugate of dv phi in the region 2} and using the smallness of the initial data on the outgoing null cone $\Sigma_{-e^{-1}}$, we have\begin{equation}
\begin{aligned}
&e^{-2K\frac{V}{(-U)}}\left\vert V^{q^{2}}\partial_{V}\phi\right\vert^{2}+K\underline{v}^{1-q^{2}}\int_{b(V)}^{U}\frac{1}{(-U^{\prime})}e^{-2K\frac{V}{(-U^{\prime})}}\left(V^{q^{2}}\partial_{V}\phi\right)^{2}dU^{\prime}\allowdisplaybreaks\\\leq&\frac{q^{2}}{(-U)^{2-2q^{2}}}+\int_{b(V)}^{U}\frac{V^{q^{2}}\partial_{V}r}{r}\left\vert\partial_{U}\phi\right\vert e^{-2K\frac{V}{(-U^{\prime})}}V^{q^{2}}\left\vert\partial_{V}\phi\right\vert dU^{\prime}\\&+\int_{b(V)}^{U}e^{-2K\frac{V}{(-U)^{\prime}}}\left(\left\vert \frac{A_{U}}{r}\partial_{V}(r\phi)\right\vert+\left\vert\frac{\partial_{U}r\partial_{V}r}{1-\mu}\frac{Q}{r^{2}}\phi\right\vert\right)V^{2q^{2}}\left\vert\partial_{V}\phi\right\vert dU^{\prime}\allowdisplaybreaks\\\leq&
\frac{q^{2-4\delta}}{(-U)^{2-2q^{2}}}q^{4\delta}+\frac{1}{4}K\underline{v}^{1-q^{2}}\int_{b(V)}^{U}\frac{1}{(-U^{\prime})}e^{-2K\frac{V}{(-U^{\prime})}}\left(V^{q^{2}}\partial_{V}\phi\right)^{2}dU^{\prime}\\&+\frac{1}{K}\underline{v}^{q^{2}-1}\int_{b(V)}^{U}(-U^{\prime})^{2q^{2}-1}e^{-2K\frac{V}{(-U^{\prime})}}\left(\partial_{U}\phi\right)^{2}dU^{\prime}\\&+\frac{1}{K}\underline{v}^{q^{2}-1}\int_{b(V)}^{U}V^{2q^{2}}(-U^{\prime})e^{-2K\frac{V}{(-U)}}\left(\left\vert\frac{A_{U}}{r}\partial_{V}(r\phi)\right\vert^{2}+\left\vert\frac{\partial_{U}r\partial_{V}r}{1-\mu}\frac{Q}{r^{2}}\phi\right\vert^{2}\right)\allowdisplaybreaks\\\leq&\frac{Cq^{2-4\delta}}{(-U)^{2-2q^{2}}}q^{4\delta}+\frac{C}{K}\underline{v}^{q^{2}-1}\frac{q^{2-4\delta}}{(-U)^{2-2q^{2}}}+\frac{1}{4}K\underline{v}^{1-q^{2}}\int_{b(V)}^{U}\frac{1}{(-U^{\prime})}e^{-2K\frac{V}{(-U^{\prime})}}\left(V^{q^{2}}\partial_{V}\phi\right)^{2}dU^{\prime}.
\end{aligned}
\end{equation}
By the largeness of $K$, we can absorb the third term on the right-hand side of the third inequality into the second term on the left. Then we have\begin{equation}
e^{-K\frac{V}{(-U)}}\left\vert\left(\frac{V}{(-U)}\right)^{q^{2}}\partial_{V}\phi\right\vert\leq\frac{B_{1}q^{1-2\delta}}{2(-U)}.
\end{equation}
To estimate $\partial_{U}r$, we have\begin{equation}
\begin{aligned}
&\partial_{V}\left(e^{-K\frac{V}{(-U)}}\left(\partial_{U}r+1\right)\right)+\left(\frac{K}{(-U)}-\frac{\partial_{V}r}{r(1-\mu)}\left(\mu-\frac{Q^{2}}{r^{2}}\right)\right)e^{-K\frac{V}{(-U)}}\left(\partial_{U}r+1\right)\\ =& -\frac{\partial_{V}r}{(1-\mu)r}\left(\mu-\frac{Q^{2}}{r^{2}}\right)e^{-K\frac{V}{(-U)}}.
\end{aligned}
\label{eq:conjugate equation of du r in the region 2}
\end{equation}
Since \begin{equation*}
\frac{K}{(-U)}-\frac{\mu}{1-\mu}\frac{\partial_{V}r}{r}\geq \frac{K}{(-U)}-\frac{Cq^{2-9\delta}\underline{v}^{-\frac{q^{2}}{1-q^{2}}}}{(-U)}>0,
\end{equation*}
multiplying \eqref{eq:conjugate equation of du r in the region 2} by $e^{-K\frac{V}{(-U)}}\left(\partial_{U}r+1\right)$ and integrating from $\frac{V}{(-U)} = \underline{v}^{\frac{1}{1-q^{2}}}$, we have
\begin{equation}
\begin{aligned}
e^{-2K\frac{V}{(-U)}}\left(\partial_{U}r+1\right)^{2}\leq&Cq^{4}\underline{v}^{2-2\delta}+\int_{(-U)\underline{v}^{\frac{1}{1-q^{2}}}}^{V}\left\vert\frac{\partial_{V}r}{(1-\mu)r}\left(\mu-\frac{Q^{2}}{r^{2}}\right)\right\vert e^{-2K\frac{V^{\prime}}{(-U)}}\left\vert\partial_{U}r+1\right\vert dV^{\prime}\\\leq&Cq^{4}\underline{v}^{2-2\delta}+C\int_{(-U)\underline{v}^{\frac{1}{1-q^{2}}}}^{V}q^{4-18\delta}\frac{(-U)^{q^{2}-1}}{(V^{\prime})^{q^{2}}}dV^{\prime}\\&+C\int_{(-U)\underline{v}^{\frac{1}{1-q^{2}}}}^{V}q^{6-22\delta}\frac{(-U)^{q^{2}-1}}{(V^{\prime})^{q^{2}}}\left(\frac{V^{\prime}}{(-U)}\right)^{2(1-q^{2})}\left(1+\left(\frac{V^{\prime}}{(-U)}\right)^{2(1-q^{2})}\right)dV^{\prime}\\\leq&Cq^{4}\underline{v}^{2-2\delta}+Cq^{4-18\delta}\Lambda^{1-q^{2}}+Cq^{6-22\delta}\Lambda^{5(1-q^{2})}.
\end{aligned}
\end{equation}
Therefore, we have \begin{equation*}
    e^{-K\frac{V}{(-U)}}\left\vert\partial_{U}r+1\right\vert\leq\frac{1}{2}B_{1}q^{2-10\delta}.
\end{equation*}
To estimate $\partial_{U}\phi$, we conjugate with $e^{-2K\frac{V}{(-U)}}\overline{\partial_{U}\phi}$ and take the real part:\begin{equation}
\begin{aligned}
&\frac{1}{2}\partial_{V}\left(e^{-2K\frac{V}{(-U)}}\left\vert\partial_{U}\phi\right\vert^{2}\right)+\left(\frac{K}{(-U)}+\frac{\partial_{V}r}{r}\right)e^{-2K\frac{V}{(-U)}}\left\vert\partial_{U}\phi\right\vert^{2} \\=& -\Re\left(\frac{\partial_{U}r\partial_{V}\phi}{r}e^{-2K\frac{V}{(-U)}}\overline{\partial_{U}\phi}\right)-\Re\left(iq_{0}\left(\frac{A_{U}}{r}\partial_{V}(r\phi)+\frac{\partial_{U}r\partial_{V}r}{(1-\mu)r}\frac{Q}{r^{2}}\phi\right)e^{-2K\frac{V}{(-U)}}\overline{\partial_{U}\phi}\right).
\end{aligned}\label{eq:conjugate equation for du-phi in the region 2}
\end{equation}
Directly integrating \eqref{eq:conjugate equation for du-phi in the region 2} and applying the Cauchy--Schwarz inequality, we have\begin{equation}
\begin{aligned}
&e^{-2K\frac{V}{(-U)}}\left\vert\partial_{U}\phi\right\vert^{2}+\frac{K}{(-U)}\int_{(-U)\underline{v}^{\frac{1}{1-q^{2}}}}^{V}e^{-2K\frac{V^{\prime}}{(-U)}}\vert\partial_{U}\phi\vert^{2}dV^{\prime}\\\lesssim&\frac{q^{2}}{(-U)^{2}}+\int_{(-U)\underline{v}^{\frac{1}{1-q^{2}}}}^{V}\left\vert\frac{\partial_{U}r\partial_{V}\phi}{r}\right\vert e^{-2K\frac{V^{\prime}}{(-U)}}\vert\partial_{U}\phi\vert dV^{\prime}\\&+\int^{V}_{(-U)\underline{v}^{\frac{1}{1-q^{2}}}}\left(\left\vert\frac{A_{U}}{r}\partial_{V}(r\phi)\right\vert+\left\vert\frac{\partial_{U}r\partial_{V}r}{(1-\mu)r}\frac{Q}{r^{2}}\phi\right\vert\right)e^{-2K\frac{V^{\prime}}{(-U)}}\left\vert\partial_{U}\phi\right\vert dV^{\prime}\\\lesssim&\frac{q^{2-4\delta}}{(-U)^{2}}q^{4\delta}+\frac{K}{2(-U)}\int_{(-U)\underline{v}^{\frac{1}{1-q^{2}}}}^{V}e^{-2K\frac{V^{\prime}}{(-U)}}\left\vert\partial_{U}\phi\right\vert^{2}dV^{\prime}+q^{2-4\delta}\frac{\Lambda^{2-2q^{2}}}{K}\frac{1}{U^{2}}.
\end{aligned}
\end{equation}
Hence, for $K$ sufficiently large, we have\begin{equation}
e^{-K\frac{V}{(-U)}}\left\vert\partial_{U}\phi\right\vert\leq\frac{B_{1}q^{1-2\delta}}{2(-U)}.
\end{equation}
It remains to close the estimate for $\mu$. We conjugate the equation \eqref{eq: v-transport equation for mu} with $e^{-2K\frac{V}{(-U)}}\mu$ and integrate:\begin{equation}
\begin{aligned}
e^{-2K\frac{V}{(-U)}}\mu^{2}\lesssim&q^{4}+\int_{(-U)\underline{v}^{\frac{1}{1-q^{2}}}}^{V}\left(\frac{r}{\partial_{V}r}\left\vert\partial_{V}\phi\right\vert^{2}+\frac{\partial_{V}r}{r^{3}}Q^{2}\right)e^{-2K\frac{V^{\prime}}{(-U)}}\vert\mu\vert dV^{\prime}\\\lesssim&
q^{4}+q^{4-12\delta}\Lambda^{2(1-q^{2})}.
\end{aligned}
\end{equation}
Hence, we can conclude the proof.
\end{proof}
\subsection{Construction of region III}
In this section, we consider the region:\begin{equation*}
R_{III}: = \left\{(U,V),\ -\frac{V}{\Lambda}\leq U<0,\ 0\leq V\leq \frac{1}{2}\right\}.
\end{equation*}
\begin{figure}[htbp]
    \centering
    \begin{tikzpicture}[ scale=0.8, >=Latex, every node/.style={font=\small} ]
\coordinate (O) at (0,0);
\coordinate (A) at (4,-4);
\coordinate (B) at (6,6);
\coordinate (C) at (10,2);
\draw (O)--(A);
\draw[dashed] (O)--(B);
\draw (A)--(C);
\draw[dashed] (B) -- (C)
    node[midway, above, sloped]
    {\rotatebox{90}{$\mathscr{I}$}};
\node[left] at (O) { $\mathcal{O}$};
\coordinate (A1) at (6,-2);
\coordinate (A2) at (8,0);
\draw[dashed] (O)--(A1);
\draw[dashed] (O)--(A2);
\coordinate (A3) at (3,0);
\coordinate (A4) at (1.5,1.5);
\draw[dashed] (A3)--(A4);
\coordinate (L1) at (2.5,-1.5);
\coordinate (L2) at (3.5,-0.5);

\coordinate (L3) at (1.5,0.5);

\coordinate (L4) at (5,3);
\node at (L4){IV};
\fill[gray!30] (O) -- (A4) -- (A3) -- cycle;
\node at (L1){I};
\node at (L2) {II};
\node at (L3) {III};
\end{tikzpicture}
    \caption{$R_{III}$}
    \label{fig: region 3}
\end{figure}
First, by Proposition \ref{prop:close the bootstrap in the region 2}, we have the following estimate:
\begin{proposition}
\label{prop: estimate on the cone of the right boundary of region 2}
On the hypersurface $\left\{\frac{V}{(-U)}=\Lambda\right\}$, we have\begin{align}
&r\approx V,\quad \left\vert\partial_{U}r+1\right\vert\leq q^{2-11\delta},\quad 
\left\vert\frac{1}{1-q^{2}}V^{q^{2}}\partial_{V}r(U,V)-\partial_{v}r(U,0)\right\vert\leq q^{2-11\delta}(-U)^{q^{2}},\\&\left\vert\phi\right\vert\lesssim q^{1-4\delta}\left\vert\log V\right\vert,\quad 
\left\vert\partial_{U}\phi\right\vert\leq \frac{q^{1-3\delta}}{(-U)},\quad
\left\vert\partial_{V}\phi\right\vert\leq\frac{q^{1-3\delta}}{(-U)},\quad 
\vert\mu\vert\leq q^{2-9\delta},\\&
\left\vert Q(U,V)\right\vert\leq\vert q_{0}\vert q^{2-7\delta}V^{2}\left\vert\log V\right\vert,\quad \left\vert A_{U}(U,V)\right\vert\leq \vert q_{0}\vert q^{2-7\delta}V\left\vert\log V\right\vert.
\end{align}
\end{proposition}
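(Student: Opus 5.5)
The plan is to read off every bound on $\{V/(-U)=\Lambda\}$ from the improved bootstrap bounds of Proposition~\ref{prop:close the bootstrap in the region 2} together with the consequences listed in Proposition~\ref{prop: consequence of the bootstrap in region 2}, specialised to $V/(-U)=\Lambda$. The first step is to note that on this cone the weight satisfies $e^{K V/(-U)}=e^{K\Lambda}$. Since $K$ was chosen so that $e^{K\Lambda}q^{\delta}\ll1$, each bound of the form $e^{-KV/(-U)}|X|\le B_{1}q^{a}$ becomes $|X|\le B_{1}e^{K\Lambda}q^{a}\le q^{a-\delta}$. This one extra factor $q^{-\delta}$ is the source of the slight loss of exponents in the statement. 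From it we obtain directly:
\begin{itemize}
\item $|\partial_{U}r+1|\le q^{2-11\delta}$, and the analogous bound for $\frac{1}{1-q^{2}}V^{q^{2}}\partial_{V}r-\partial_{v}r(U,0)$;
\item $|\partial_{U}\phi|\le q^{1-3\delta}/(-U)$;
\item $|\mu|\le q^{2-9\delta}$.
\end{itemize}
For $\partial_{V}\phi$, the bootstrap bound controls $\Lambda^{q^{2}}|\partial_{V}\phi|$, and $\Lambda^{q^{2}}\ge 1$, so the same estimate follows.

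The second step handles $r$. Proposition~\ref{prop: consequence of the bootstrap in region 2} gives $r\approx(-U)(1+\Lambda^{1-q^{2}})$. Using $(-U)=V/\Lambda$, this becomes $r\approx V\,\Lambda^{-q^{2}}(1+\Lambda^{-(1-q^{2})})\approx V$, with implicit constants uniform because $\Lambda$ is fixed and $q$ is small.

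The third step turns the $U$-logarithms into $V$-logarithms. Since $(-U)=V/\Lambda$, we have $|\log(-U)|\le|\log V|+\log\Lambda\lesssim|\log V|$ for $V\le\frac12$, with a constant depending on $\Lambda$. Inserting this into the bounds of Proposition~\ref{prop: consequence of the bootstrap in region 2} gives
\[
|\phi|\lesssim q^{1-3\delta}\Lambda^{1-q^{2}}|\log V|,\qquad |Q|\lesssim|q_{0}|q^{2-6\delta}\Lambda^{4}\Lambda^{-2}V^{2}|\log V|,
\]
and similarly $|A_{U}|\lesssim|q_{0}|q^{2-6\delta}\Lambda^{3}V|\log V|$.

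The main obstacle is that the constants carry powers of $\Lambda$, which are not small. I absorb them by giving up one further $q^{-\delta}$ and requiring $q$ small enough that $C\Lambda^{N}\le q^{-\delta}$. This yields the stated $q^{1-4\delta}$ for $\phi$ and $q^{2-7\delta}$ for $Q$ and $A_{U}$. The same absorption covers the implicit constants in the first step.
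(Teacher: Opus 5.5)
Your proposal is correct and follows the paper's argument: you specialize the improved region-II bootstrap bounds and the consequences in Proposition~\ref{prop: consequence of the bootstrap in region 2} to $V/(-U)=\Lambda$, pay one factor $q^{-\delta}$ for $e^{K\Lambda}$ and for the $\Lambda$-dependent constants, and obtain $r\approx(-U)(1+\Lambda^{1-q^{2}})\approx V$ exactly as the paper's proof does. Your restriction to $V\le\frac12$ when trading $|\log(-U)|$ for $|\log V|$ is in fact necessary: $V=1$ lies on this hypersurface and $|\log V|$ vanishes there, so the logarithmic bounds cannot hold as written near $V=1$. The restriction is consistent with the use of these bounds as boundary data for region III; for $V\ge 2$ the same conversion also works, since there $|\log(-U)|\le\log\Lambda$.
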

\begin{proof}
By Proposition \ref{prop: consequence of the bootstrap in region 2}, on the hypersurface $\left\{\frac{V}{(-U)} = \Lambda\right\}$, we have \begin{equation*}
    r\approx (-U)(1+\Lambda^{1-q^{2}}) = V\left(\Lambda^{-1}+\Lambda^{-q^{2}}\right)\approx V.
\end{equation*}
All the other estimates follow by an argument similar to that used in Proposition \ref{prop: closing the bootstrap in region 2}.
\end{proof}
In particular, the bound for $\partial_{V}r$ on $\left\{\frac{V}{(-U)} = \Lambda\right\}$ implies \begin{equation}
\begin{aligned}
    \left\vert\partial_{V}r-\frac{\partial_{v}r(U,0)}{(-U)^{q^{2}}}\right\vert\leq& \left\vert\partial_{V}r-(1-q^{2})\frac{\partial_{v}r(U,0)}{\Lambda^{q^{2}}(-U)^{q^{2}}}\right\vert+\left\vert\frac{1-q^{2}}{\Lambda^{q^{2}}}-1\right\vert\left\vert\frac{\partial_{v}r(U,0)}{(-U)^{q^{2}}}\right\vert\\\leq& q^{2-11\delta}(-U)^{q^{2}},
    \end{aligned}
\end{equation}
where we have used $ q\ll 1$.

Now we make the following bootstrap assumption in the region $\mathcal{Q}^{(ex)}_{u_{0},v_{0}}\cap R_{III}$ for some constant $0<p<\frac{1}{2}$ and $B_{2}>0$:
\begin{align}
&0<-\frac{\Lambda^{p}}{\left(\frac{V}{(-U)}\right)^{p}}\partial_{U}r\leq 2,\quad \label{eq:bootstrap 1 in the region 3}
\left\vert\partial_{V}r(U,V)-\frac{\partial_{v}r(-\frac{V}{\Lambda},0)}{(\frac{V}{\Lambda})^{q^{2}}}\right\vert\leq q^{2-12\delta},\\&
\left\vert V\partial_{V}\phi\right\vert\leq q^{1-4\delta},\quad 
\left\vert V\left(\frac{V}{(-U)}\right)^{-2p}\partial_{U}\phi\right\vert\leq q^{1-4\delta},\quad
\vert\mu\vert\leq q^{2-10\delta}.\label{eq:bootstrap 2 in the region 3}
\end{align}
We have the following consequences of the bootstrap assumption \eqref{eq:bootstrap 1 in the region 3}-\eqref{eq:bootstrap 2 in the region 3}:
\begin{proposition}
In the region $\mathcal{Q}_{u_{0},v_{0}}^{(ex)}\cap R_{\Rmnum{3}}$, under the bootstrap assumption \eqref{eq:bootstrap 1 in the region 3}-\eqref{eq:bootstrap 2 in the region 3}, we have:
\begin{align}
    &r\approx V,\quad (-\partial_{U}r)\leq\Lambda^{-p}\left(\frac{V}{(-U)}\right)^{p},\quad \partial_{V}r(U,V)\approx \frac{\partial_{v}r(-\frac{V}{\Lambda},0)}{(\frac{V}{\Lambda})^{q^{2}}}\approx 1,\\&
\left\vert\phi\right\vert\leq q^{1-5\delta}\left\vert\log V\right\vert,\quad \left\vert\partial_{V}\phi\right\vert\leq\frac{q^{1-4\delta}}{V},\quad \left\vert\partial_{U}\phi\right\vert\leq \frac{q^{1-4\delta}}{V}\left(\frac{V}{(-U)}\right)^{2p},\quad \left\vert\mu\right\vert\leq q^{2-10\delta},\\&
\left\vert Q(U,V)\right\vert\leq \vert q_{0}\vert q^{2-10\delta}V^{2}\left\vert\log V\right\vert,\quad \left\vert A_{U}\right\vert\leq \vert q_{0}\vert q^{2-10\delta}(-U)^{-p}V^{1+p}\left\vert\log V\right\vert.
\end{align}

\end{proposition}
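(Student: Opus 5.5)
We derive each bound in the order $r,\ \partial_V r,\ \partial_U r,\ \phi$ and its derivatives, $\mu,\ Q,\ A_U$. Every estimate is obtained by integrating the bootstrap assumptions \eqref{eq:bootstrap 1 in the region 3}--\eqref{eq:bootstrap 2 in the region 3} from a boundary of $R_{III}$ where data are already controlled. Integration in $U$ starts from the left boundary $\{V/(-U)=\Lambda\}$, using Proposition~\ref{prop: estimate on the cone of the right boundary of region 2}. Integration in $V$ is handled analogously where needed. Throughout we assume $q\ll\delta$ and that $\Lambda$ is large but fixed.

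\textbf{The metric quantities $\partial_V r$, $\partial_U r$ and $r$.} For $\partial_V r$, evaluate $\partial_v r(-V/\Lambda,0)$ using \eqref{eq: estimate for lambda on the ingoing cone for the exterior construction}; this gives $\partial_v r(-V/\Lambda,0)\approx (V/\Lambda)^{q^2}$. Hence the reference value $\partial_v r(-V/\Lambda,0)/(V/\Lambda)^{q^2}\approx 1$. Since the bootstrap error is only $q^{2-12\delta}$, we get $\partial_V r\approx1$. The bound on $-\partial_U r$ is exactly a restatement of the first assumption in \eqref{eq:bootstrap 1 in the region 3}. For $r$, write
\[
r(U,V)=r\bigl(-V/\Lambda,V\bigr)+\int_{-V/\Lambda}^{U}\partial_U r\,dU'.
\]
On the left boundary, Proposition~\ref{prop: estimate on the cone of the right boundary of region 2} gives $r\approx V$. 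The integral is bounded by
\[
\Lambda^{-p}\int_{-V/\Lambda}^{U}\Bigl(\frac{V}{-U'}\Bigr)^{p}dU',
\]
which is at most $C\Lambda^{-1}V/(1-p)$ and is therefore harmless for $\Lambda$ large since $p<\tfrac12$. Monotonicity ($\partial_U r<0$) then also gives the lower bound $r\gtrsim V$.

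\textbf{The scalar field $\phi$ and the mass ratio $\mu$.} The bounds on $\partial_V\phi$ and $\partial_U\phi$ are direct restatements of \eqref{eq:bootstrap 2 in the region 3}. For $\phi$ itself, integrate $\partial_U\phi$ from the left boundary, where $|\phi|\lesssim q^{1-4\delta}|\log V|$. The integral is controlled by
\[
q^{1-4\delta}V^{-1}\int_{-V/\Lambda}^{0}\Bigl(\frac{V}{-U'}\Bigr)^{2p}dU' \lesssim q^{1-4\delta}\Lambda^{2p-1}/(1-2p).
\]
This is integrable precisely because $2p<1$. Absorbing constants into $q^{-\delta}$ yields $|\phi|\le q^{1-5\delta}|\log V|$. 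The bound on $\mu$ is again immediate from the bootstrap assumption.

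\textbf{The charge $Q$ and the potential $A_U$.} Integrate \eqref{eq:u-transport equation for Q} in $U$ from the left boundary, where $|Q|\le |q_0|q^{2-7\delta}V^2|\log V|$. The integrand is bounded by
\[
|q_0| r^2|\phi||\partial_U\phi| \lesssim |q_0| V^2\, q^{2-9\delta}|\log V|\, V^{-1}\Bigl(\frac{V}{-U}\Bigr)^{2p},
\]
and the same $U$-integral as for $\phi$ gives an extra factor $V$. The gauge term $A_U$ inside $D_U\phi$ is lower order and is handled using the bootstrap bound on $A_U$ itself. This yields $|Q|\le |q_0|q^{2-10\delta}V^2|\log V|$.

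For $A_U$, use $\partial_U A_V-\partial_V A_U=Q\Omega^2/(2r^2)$ together with $A_V\equiv0$. Writing $\Omega^2=-4\lambda\nu/(1-\mu)$, we integrate in $V$ along constant $U$ from the left boundary value. The integrand satisfies
\[
\frac{|Q|\,|\partial_U r|\,|\partial_V r|}{r^2}\lesssim |q_0|q^{2-10\delta}|\log V'|\,\Lambda^{-p}\Bigl(\frac{V'}{-U}\Bigr)^{p}.
\]
Integrating in $V'$ from $V'=-\Lambda U$ to $V$ gives $|q_0|q^{2-10\delta}(-U)^{-p}V^{1+p}|\log V|$ up to constants. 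The boundary value $|q_0|q^{2-7\delta}V|\log V|$ is dominated by this since $V/(-U)\ge\Lambda$.

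\textbf{Main obstacle.} The step I expect to be most delicate is the integration of $\partial_U\phi$ and $\partial_U r$ toward $U\to0$ with the singular weights $(V/(-U))^{p}$ and $(V/(-U))^{2p}$. This is exactly where the restriction $0<p<\tfrac12$ enters, as it makes $\int(-U')^{-2p}\,dU'$ converge. I would check carefully that constants depending on $1-2p$ and on $\Lambda$ are absorbed by the $q^{-\delta}$ losses.
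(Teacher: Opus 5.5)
Your estimates for $r$, $\partial_V r$, $\partial_U r$, $\phi$, $\mu$ and $A_U$ follow the paper's proof essentially step by step: you integrate from the boundary $\{V/(-U)=\Lambda\}$, and $2p<1$ makes the $U$-integrals converge. The one place you depart from the paper, the bound on $Q$, has a gap as written.

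You integrate the $u$-transport equation \eqref{eq:u-transport equation for Q} in $U$. In the gauge $A_V\equiv 0$ one has $A_U\neq 0$, so this equation involves $D_U\phi=\partial_U\phi+iq_0A_U\phi$, i.e.\ $\partial_U Q=-q_0r^2\Im(\phi\overline{\partial_U\phi})+q_0^2r^2A_U|\phi|^2$. You propose to control the second term by ``the bootstrap bound on $A_U$'', but no such bound exists. The assumptions \eqref{eq:bootstrap 1 in the region 3}--\eqref{eq:bootstrap 2 in the region 3} only involve $\partial_Ur$, $\partial_Vr$, $\partial_U\phi$, $\partial_V\phi$ and $\mu$. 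Moreover, your own bound on $A_U$ is derived afterwards from the bound on $Q$. As written, the argument is therefore circular.

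The simplest repair is what the paper does: integrate the $v$-transport equation \eqref{eq:v-transport equation for Q} in $V$ at fixed $U$, starting from $V'=\Lambda(-U)$. This equation has no gauge term precisely because $A_V\equiv0$. Using $r\approx V$, $|\phi|\le q^{1-5\delta}|\log V|$ and $|\partial_V\phi|\le q^{1-4\delta}/V$, the integrand is $\lesssim |q_0|q^{2-9\delta}V'|\log V'|$. The boundary value satisfies $|Q|\lesssim|q_0|q^{2-7\delta}(\Lambda(-U))^2|\log(\Lambda(-U))|\lesssim|q_0|q^{2-7\delta}V^2|\log V|$, since $x^2|\log x|$ is increasing on $(0,\tfrac12]$. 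Hence $|Q|\lesssim |q_0|q^{2-9\delta}V^2|\log V|$, after which your $A_U$ step goes through unchanged.

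Alternatively, you could keep the $U$-integration and close $Q$ and $A_U$ jointly by a continuity argument. This works because the gauge term carries the extra small factor $q_0^2r^2|\phi|^2\lesssim q_0^2q^{2-10\delta}$, but the argument has to be carried out explicitly.

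A minor slip: $\partial_Ur<0$ gives $r(U,V)\le r(-V/\Lambda,V)$, which is the upper bound, not the lower one. The lower bound $r\gtrsim V$ comes from your estimate $\int|\partial_Ur|\,dU'\lesssim V/\Lambda$.
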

\begin{proof}
The estimates for $\partial_{U}r$, $\partial_{V}r$, $\partial_{U}\phi$, $\partial_{V}\phi$, and $\mu$ follow directly from the bootstrap assumptions. For $r$, we have \begin{align*}
    r(U,V)=r\left(\frac{V}{-\Lambda},V\right)+\int_{\frac{V}{-\Lambda}}^{U}\partial_{U}r(U^{\prime},V)dU^{\prime}\approx V.
\end{align*}
To estimate $\phi$, we have \begin{align*}
    \left\vert\phi(U,V)\right\vert\leq&\left\vert\phi\left(\frac{V}{-\Lambda},V\right)\right\vert+\int_{\frac{V}{-\Lambda}}^{U}\left\vert\partial_{U}\phi\right\vert dU^{\prime}\\\leq&Cq^{1-4\delta}\left\vert\log V\right\vert+Cq^{1-4\delta}\left(-\left(\frac{V}{-U}\right)^{2p-1}+\Lambda^{2p-1}\right)\\\leq& q^{1-5\delta}\left\vert\log V\right\vert.
\end{align*}
For the spacetime charge $Q$, we have \begin{align*}
    \left\vert Q(U,V)\right\vert\leq \left\vert Q(U,\Lambda(-U))\right\vert+\int_{\Lambda(-U)}^{V}\vert q_{0}\vert r^{2}\left\vert\phi\right\vert\left\vert\partial_{V}\phi\right\vert dV^{\prime}\lesssim\vert q_{0}\vert q^{2-\frac{19}{2}\delta}V^{2}\left\vert\log V\right\vert.
\end{align*}
For $A_{U}$, we have \begin{align*}
    \left\vert A_{U}(U,V)\right\vert\leq& \left\vert A_{U}(U,\Lambda(-U))\right\vert+\int_{\Lambda(-U)}^{V}\left\vert\frac{2Q\partial_{U}r\partial_{V}r}{(1-\mu)r^{2}}\right\vert\\\leq& \vert q_{0}\vert q^{2-7\delta}(-U)\left\vert\log(-U)\right\vert+\vert q_{0}\vert q^{2-\frac{19}{2}\delta}V\vert\log V\vert\\\leq&\vert q_{0}\vert q^{2-10\delta}(-U)^{-p}V^{1+p}\vert\log V\vert.
\end{align*}
\end{proof}
By the standard local well-posedness result, we can close the construction in the region $\Rmnum{3}$ if we can prove the following proposition:
\begin{proposition}
\label{prop: estimate in the region 3}
Assume that \eqref{eq:bootstrap 1 in the region 3}-\eqref{eq:bootstrap 2 in the region 3} hold in the region $\mathcal{Q}_{u_{0},v_{0}}^{\Rmnum{3}}$ for some large constant $C$. Then in the same region, we have\begin{align}
&0<-\Lambda^{p}\left(\frac{V}{(-U)}\right)^{-p}\partial_{U}r\leq\frac{3}{2},\quad \left\vert\partial_{V}r(U,V)-\frac{\partial_{v}r(-\frac{V}{\Lambda},0)}{(\frac{V}{\Lambda})^{q^{2}}}\right\vert\leq\frac{1}{2}q^{2-12\delta},\\&\left\vert V\left(\frac{V}{(-U)}\right)^{-2p}\partial_{U}\phi\right\vert\leq\frac{1}{2}q^{1-4\delta},\quad
\left\vert V\partial_{V}\phi\right\vert\leq\frac{1}{2}q^{1-4\delta},\quad
\vert\mu\vert\leq\frac{1}{2}q^{2-10\delta}.
\end{align}
\end{proposition}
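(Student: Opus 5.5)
The plan is to improve each bootstrap bound in \eqref{eq:bootstrap 1 in the region 3}--\eqref{eq:bootstrap 2 in the region 3} by integrating the relevant transport equation from the boundary $\{V/(-U)=\Lambda\}$. On that boundary Proposition~\ref{prop: estimate on the cone of the right boundary of region 2} supplies the data, and it is strictly better than what we assume in the interior: powers $q^{2-11\delta}$ and $q^{1-3\delta}$ against $q^{2-12\delta}$ and $q^{1-4\delta}$. Throughout, the consequences of the bootstrap from the preceding proposition are used: $r\approx V$, $\partial_V r\approx 1$, $|\mu|\le q^{2-10\delta}$, $|Q|\lesssim |q_0|q^{2-10\delta}V^2|\log V|$, and the bound on $A_U$. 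The basic mechanism is that every inhomogeneous term carries an extra factor $q^{\delta}$ or a small power of $V/(-U)$. That gain beats the factor $2$ lost in the bootstrap once $q$ is small relative to $\Lambda$ and $p$.

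\textbf{The geometric quantities $\partial_V r$, $\partial_U r$ and $\mu$.} For $\partial_V r$ I would integrate the $\partial_U\lambda$ equation in $U$ from $U=-V/\Lambda$. The right-hand side is
\[
\frac{\nu\lambda}{r(1-\mu)}\Big(\mu-\frac{Q^2}{r^2}\Big)=O\big(q^{2-10\delta}V^{-1}|\nu|\big),
\]
and $|\nu|\lesssim (V/(-U))^{p}\Lambda^{-p}$. Since $\int|\nu|\,dU$ over the region equals the change of $r$, which is $\lesssim V$, the correction is $O(q^{2-10\delta})$ times a constant. That is not yet $\tfrac12 q^{2-12\delta}$ as stated, but it is, because $q^{2-10\delta}\le q^{2\delta}\,q^{2-12\delta}$; the boundary deviation $q^{2-11\delta}$ is also smaller. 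For $\partial_U r$, integrate $\partial_V\log(-\nu)=\frac{\lambda}{r(1-\mu)}(\mu-Q^2/r^2)$ from the boundary, where $-\nu\approx 1$. This gives $-\nu(U,V)=-\nu(U,\Lambda(-U))\,(V/(\Lambda(-U)))^{O(q^{2-10\delta})}$, so $0<-\nu\le(1+o(1))\Lambda^{-p}(V/(-U))^{p}$ whenever $O(q^{2-10\delta})<p$; note the exponent $p$ is in fact far from sharp here. For $\mu$, use \eqref{eq: v-transport equation for mu} (with its charged analogue) in $V$. The source terms are $\frac{r}{\lambda}|\partial_V\phi|^2\lesssim q^{2-8\delta}/V$ and $\frac{\lambda Q^2}{r^3}\lesssim q^{4-20\delta}V|\log V|^2$, and the damping term $\frac{\lambda}{r}\mu\approx\mu/V$ makes the $1/V$ source yield a bounded solution, namely $\mu\approx q^{2-8\delta}\le\tfrac12 q^{2-10\delta}$.

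\textbf{The scalar field.} For $V\partial_V\phi$, write the wave equation as
\[
\partial_U(r\partial_V\phi)=-\lambda\partial_U\phi+iq_0\Big(\frac{Q\Omega^2}{4r}\phi-A_U\partial_V(r\phi)\Big)
\]
and integrate in $U$ from the boundary. Using $|\partial_U\phi|\le q^{1-4\delta}V^{-1}(V/(-U))^{2p}$, the integral of $\lambda|\partial_U\phi|$ is $\lesssim q^{1-4\delta}V^{-1}\int (V/(-U))^{2p}\,dU$. Since $2p<1$, this is $\lesssim q^{1-4\delta}\frac{1}{1-2p}\Lambda^{2p-1}$. This is the point where $p<\tfrac12$ is used, and taking $\Lambda$ large makes the term small. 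The charge terms carry $|q_0|q^{2-10\delta}|\log V|^2 V$, which is small for $q$ small. For $\partial_U\phi$, integrate $\partial_V(r\partial_U\phi)=-\nu\partial_V\phi+\dots$ in $V$ from the boundary value $q^{1-3\delta}/(-U)=q^{1-3\delta}\Lambda/V$ on $V=\Lambda(-U)$. The main term $\int|\nu||\partial_V\phi|\,dV\lesssim q^{1-4\delta}\Lambda^{-p}(-U)^{-p}\int V^{p-1}\,dV\sim q^{1-4\delta}p^{-1}(V/(-U))^{p}\Lambda^{-p}$. Dividing by $r\approx V$ and comparing against the weight $(V/(-U))^{2p}$ gives a gain of $(V/(-U))^{-p}\Lambda^{-p}/p$. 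The boundary term becomes $q^{1-3\delta}\Lambda/V$ at $V/(-U)\ge\Lambda$, which is at most $q^{\delta}$ times the target.

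\textbf{Main obstacle.} I expect the hardest part to be the coupled $U$/$V$ estimates for $\partial_U\phi$ and $\partial_V\phi$ near $U\to0$, where $V/(-U)\to\infty$ and the weights degenerate. The choice $0<p<\tfrac12$, with $\nu$ decaying like $(V/(-U))^{p}$ relative to $\Lambda^{-p}$, has to be balanced against the allowed growth $(V/(-U))^{2p}$ of $\partial_U\phi$, and the charge and $A_U$ terms with their $|\log V|$ losses have to be absorbed. If the constants fail to close, I would first try enlarging $\Lambda$ and shrinking $q$ relative to $p$, and only after that adjust the exponent $2p$ in the $\partial_U\phi$ weight.
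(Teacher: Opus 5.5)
Your proposal is correct and follows essentially the paper's route. You improve each bootstrap bound by integrating its transport equation from the boundary $\{V/(-U)=\Lambda\}$ using the region II boundary estimates, with $p<\tfrac12$ supplying the $\Lambda^{2p-1}$ gain for $V\partial_V\phi$ and $\Lambda^{-p}(V/(-U))^{-p}$ supplying the gain for $\partial_U\phi$. The differences are cosmetic: you integrate directly rather than multiplying by weighted copies of the unknowns, and you use $\int|\nu|\,dU\le r$ for $\partial_V r$ and the integrating factor $r$ for $\mu$.

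One small bookkeeping correction: the boundary contributions are $q^{\delta}\Lambda^{1-2p}$ times the target (for $\partial_U\phi$) and $q^{\delta}\Lambda$ times the target (for $V\partial_V\phi$), not $q^{\delta}$. This is harmless because $q$ is chosen small relative to $\Lambda$.
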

\begin{proof}
To estimate $\partial_{U}\phi$, conjugating with $\left(\frac{V}{(-U)}\right)^{-4p}r\partial_{U}\overline\phi$ and taking the real part, we have\begin{equation}
\label{eq:conjugate equation for dv phi in the region 3}
\begin{aligned}
&\frac{1}{2}\partial_{V}\left(\left(\frac{V}{(-U)}\right)^{-4p}\left\vert r\partial_{U}\phi\right\vert^{2}\right)+\frac{2p}{V}\left(\frac{V}{(-U)}\right)^{-4p}\left\vert r\partial_{U}\phi\right\vert^{2} \\\leq&\left(\frac{V}{(-U)}\right)^{-4p}\left\vert r\partial_{U}\phi\right\vert\left\vert\partial_{U}r\right\vert\left\vert\partial_{V}\phi\right\vert+\left\vert q_{0}\right\vert\left(\frac{V}{(-U)}\right)^{-4p}\left\vert r\partial_{U}\phi\right\vert\left(\left\vert A_{U}\right\vert\left\vert\partial_{V}(r\phi)\right\vert+\frac{\vert Q\vert}{r}\frac{\left\vert\partial_{U}r\partial_{V}r\right\vert}{(1-\mu)}\left\vert\phi\right\vert\right)
\end{aligned}
\end{equation}
Integrating \eqref{eq:conjugate equation for dv phi in the region 3}, we have\begin{equation}
\begin{aligned}
\left(\frac{V}{(-U)}\right)^{-4p}\left\vert r\partial_{U}\phi\right\vert^{2}\leq& \Lambda^{2-4p}q^{2-6\delta}+\int_{\Lambda(-U)}^{V}\left(\frac{V^{\prime}}{(-U)}\right)^{-4p}\left\vert r\partial_{U}\phi\right\vert\left\vert\partial_{U}r\right\vert\left\vert\partial_{V}\phi\right\vert dV^{\prime}
\\&+\vert q_{0}\vert\int_{\Lambda(-U)}^{V}\left(\frac{V^{\prime}}{(-U)}\right)^{-4p}\left\vert r\partial_{U}\phi\right\vert\left(\left\vert A_{U}\right\vert\left\vert\partial_{V}(r\phi)\right\vert+\frac{\vert Q\vert}{r}\frac{\left\vert\partial_{U}r\partial_{V}r\right\vert}{1-\mu}\vert \phi\vert\right)dV^{\prime}\\\leq&\Lambda^{2-4p}q^{2-6\delta}+C\Lambda^{-p}q^{2-8\delta}\int_{\Lambda(-U)}^{V}\left(\frac{V^{\prime}}{(-U)}\right)^{-p}\frac{1}{V^{\prime}}dV^{\prime}\\&+Cq^{2-8\delta}q^{2-10\delta}\int_{\Lambda(-U)}^{V}V^{\prime}\vert\log V^{\prime}\vert^{2}\left(\frac{V^{\prime}}{(-U)}\right)^{-p}dV^{\prime}\\\leq&\Lambda^{2-4p}q^{2-6\delta}+C\Lambda^{-p}q^{2-8\delta}\left(\Lambda^{-p}-\left(\frac{V}{(-U)}\right)^{-p}\right).
\end{aligned}
\end{equation}
Hence, we can choose a suitable $p$ such that $C\Lambda^{-p}\ll1$. Then we have
\begin{equation}
\left\vert V\left(\frac{V}{(-U)}\right)^{-2p}\partial_{U}\phi\right\vert\leq\frac{1}{2}q^{1-4\delta}.
\end{equation}
To estimate $\partial_{V}\phi$, conjugating with $r\partial_{V}\bar{\phi}$ and taking the real part, we have\begin{equation}
\frac{1}{2}\partial_{U}\left(\left\vert r\partial_{V}\phi\right\vert^{2}\right) \leq \left\vert\partial_{V}r\right\vert\left\vert r\partial_{V}\phi\right\vert\left\vert\partial_{U}\phi\right\vert+\vert q_{0}\vert\left\vert r\partial_{V}\phi\right\vert\left(\left\vert A_{U}\right\vert\left\vert\partial_{V}(r\phi)\right\vert+\frac{\vert Q\vert}{r}\frac{\left\vert\partial_{U}r\partial_{V}r\right\vert}{1-\mu}\vert\phi\vert\right).\label{eq:conjugate equation for dv-phi in the region 3}
\end{equation}
Integrating \eqref{eq:conjugate equation for dv-phi in the region 3}, we have\begin{equation}
\begin{aligned}
\left\vert V\partial_{V}\phi\right\vert^{2}\leq& 2\Lambda^{2}q^{2-8\delta}q^{2\delta}+2\int_{-\frac{V}{\Lambda}}^{U}\left\vert\partial_{V}r\right\vert\left\vert r\partial_{V}\phi\right\vert\left\vert\partial_{U}\phi\right\vert dU^{\prime}\\&+2\vert q_{0}\vert\int_{\frac{V}{-\Lambda}}^{U}\left\vert r\partial_{V}\phi\right\vert\left(\vert A_{U}\vert\left\vert\partial_{V}(r\phi)\right\vert+\frac{\vert Q\vert}{r}\frac{\left\vert\partial_{U}r\partial_{V}r\right\vert}{1-\mu}\vert\phi\vert\right)dU^{\prime}\\\leq&2\Lambda^{2}q^{2-8\delta}q^{2\delta}+Cq^{2-8\delta}\int_{-\frac{V}{\Lambda}}^{U}V^{2p-1}\left(-U^{\prime}\right)^{-2p}dU^{\prime}\\\leq&
2\Lambda^{2}q^{2-8\delta}q^{2\delta}+Cq^{2-8\delta}\Lambda^{2p-1}\\\leq& \frac{1}{4}q^{2-8\delta}.
\end{aligned}
\end{equation}
Hence, we can close the bootstrap estimate for $\partial_{V}\phi$.

To estimate $\partial_{U}r$, conjugating with $\Lambda^{p}\left(\frac{V}{(-U)}\right)^{-p}$, we have\begin{equation}
\partial_{V}\left(\Lambda^{p}\left(\frac{V}{(-U)}\right)^{-p}\partial_{U}r\right)+\frac{p}{V}\Lambda^{p}\left(\frac{V}{(-U)}\right)^{-p}\partial_{U}r= \Lambda^{p}\left(\frac{V}{(-U)}\right)^{-p}\frac{\partial_{U}r\partial_{V}r}{(1-\mu)r}\left(\mu-\frac{Q^{2}}{r^{2}}\right).
\end{equation}
For $q$ sufficiently small, we have \begin{equation*}
    \frac{p}{V}>\frac{\partial_{V}r}{(1-\mu)r}\left(\mu-\frac{Q^{2}}{r^{2}}\right).
\end{equation*}
Therefore, we have\begin{equation*}
    \partial_{V}\left(\log\left(-\Lambda^{p}\left(\frac{V}{(-U)}\right)^{-p}\partial_{U}r\right)\right) = -\left(\frac{p}{V}-\frac{\partial_{V}r}{(1-\mu)r}\left(\mu-\frac{Q^{2}}{r^{2}}\right)\right)<0.
\end{equation*}
Directly integrating the above equation, we have \begin{equation*}
    0<-\Lambda^{p}\left(\frac{V}{(-U)}\right)^{-p}\partial_{U}r<\frac{3}{2}.
\end{equation*}
This closes the bootstrap argument for $\partial_{U}r$.

To estimate $\partial_{V}r$, we have\begin{equation}
\partial_{U}\left(\partial_{V}r(U,V)-\frac{\partial_{v}r(-\frac{V}{\Lambda},0)}{\left(\frac{V}{\Lambda}\right)^{q^{2}}}\right) = \frac{\mu-\frac{Q^{2}}{r^{2}}}{1-\mu}\frac{\partial_{U}r}{r}\left(\partial_{V}r(U,V)-\frac{\partial_{v}r(-\frac{V}{\Lambda},0)}{(\frac{V}{\Lambda})^{q^{2}}}\right)+\frac{\mu-\frac{Q^{2}}{r^{2}}}{1-\mu}\frac{\partial_{U}r}{r}\frac{\partial_{v}r(-\frac{V}{\Lambda},0)}{(\frac{V}{\Lambda})^{q^{2}}}.
\end{equation}
Conjugating the above equation with $\left(\partial_{V}r(U,V)-\frac{\partial_{v}r(-\frac{V}{\Lambda},0)}{(\frac{V}{\Lambda})^{q^{2}}}\right)$ and integrating, we have\begin{equation}
\begin{aligned}
\frac{1}{2}\left(\partial_{V}r(U,V)-\frac{{\partial_{v}r}(-\frac{V}{\Lambda},0)}{(\frac{V}{\Lambda})^{q^{2}}}\right)^{2}&\leq q^{4-24\delta}q^{2\delta}+Cq^{4-22\delta}\Lambda^{-p}\int_{-\frac{V}{\Lambda}}^{U}\frac{\left(\frac{V}{(-U^{\prime})}\right)^{p}}{V}dU^{\prime}\\&\leq
q^{4-24\delta}q^{2\delta}+q^{4-24\delta}q^{2\delta}\frac{1}{p}\Lambda^{p-1},
\end{aligned}
\end{equation}
Hence, we can close the bootstrap estimate for $\partial_{V}r$.

It remains to estimate $\mu$. Recall the transport equation for $\mu$:
\begin{equation*}
\partial_{V}\mu+\left(\frac{\partial_{V}r}{r}+\frac{r}{\partial_{V}r}\left\vert\partial_{V}\phi\right\vert^{2}\right)\mu = \frac{r}{\partial_{V}r}\left\vert\partial_{V}\phi\right\vert^{2}+\frac{\partial_{V}r}{r^{3}}Q^{2}.
\end{equation*}
Conjugating with $V^{2p}\mu$ and integrating, we have\begin{equation}
\label{estimate for mu in region 3}
\begin{aligned}
&\frac{1}{2}V^{2p}\mu^{2}+\int_{\Lambda(-U)}^{V}\left(\frac{\partial_{V}r}{r}-\frac{p}{V^{\prime}}+\frac{r}{\partial_{V}r}\left(\partial_{V}\phi\right)^{2}\right)\left((V^{\prime})^{p}\mu\right)^{2}dV^{\prime}\\=&
\int_{\Lambda(-U)}^{V}\left(\frac{r}{\partial_{V}r}\left(\partial_{V}\phi\right)^{2}+\frac{\partial_{V}r}{r^{3}}Q^{2}\right)(V^{\prime})^{2p}\mu dV^{\prime}\\\leq&q^{4-18\delta}V^{2p}+q^{4-18\delta}\int_{\Lambda(-U)}^{V}(V^{\prime})^{2p-1}dV^{\prime}\\\leq&q^{4-18\delta}V^{2p}+\frac{1}{2p}q^{4-18\delta}V^{2p}.
\end{aligned}
\end{equation}
We can choose a suitable $p$ such that the bulk term on the left-hand side of the above estimate is positive. Hence, we have \begin{equation*}
    \left\vert\mu\right\vert\leq \frac{1}{2}q^{2-10\delta}.
\end{equation*}
This concludes the proof of this proposition.
\end{proof}

\subsection{Construction of region IV}
In this section, we construct the solution in the asymptotically flat region: \begin{equation*}
    R_{IV} =\left\{(U,V),\ -e^{-1}\leq U<0,\ \frac{V}{(-U)}\geq \Lambda,\ V\geq\frac{1}{2}\right\}.
\end{equation*}
We further decompose the region $R_{IV}$ into two regions: \begin{align*}
   & R_{IV}^{(1)} =\left\{(U,V),\ -e^{-1}\leq U<0,\ \frac{V}{(-U)}\geq\Lambda,\ \frac{1}{2}\leq V\leq e^{-1}\Lambda\right\},\\&
   R_{IV}^{(2)} = \left\{(U,V),\ -e^{-1}\leq U<0,\ V\geq e^{-1}\Lambda\right\}.
\end{align*}
\begin{figure}[htbp]
    \centering
    \begin{tikzpicture}[ scale=0.8, >=Latex, every node/.style={font=\small} ]
\coordinate (O) at (0,0);
\coordinate (A) at (4,-4);
\coordinate (B) at (6,6);
\coordinate (C) at (10,2);
\draw (O)--(A);
\draw[dashed] (O)--(B);
\draw (A)--(C);
\draw[dashed] (B) -- (C)
    node[midway, above, sloped]
    {\rotatebox{90}{$\mathscr{I}$}};
\node[left] at (O) { $\mathcal{O}$};
\coordinate (A1) at (6,-2);
\coordinate (A2) at (8,0);
\draw[dashed] (O)--(A1);
\draw[dashed] (O)--(A2);
\coordinate (A3) at (3,0);
\coordinate (A4) at (1.5,1.5);
\draw[dashed] (A3)--(A4);
\coordinate (L1) at (2.5,-1.5);
\coordinate (L2) at (3.5,-0.5);

\coordinate (L3) at (1.5,0.5);

\coordinate (L4) at (5,3);

\fill[gray!30] (A2) -- (A3) -- (A4)--(B)--(C) -- cycle;
\node at (L1){I};
\node at (L2) {II};
\node at (L3) {III};
\coordinate (f) at (4,4);
\draw[dashed] (f)--(A2);
\coordinate (L5) at (4,2);
\coordinate (L6) at (7,3);
\node at (L5) {$R_{IV}^{(1)}$};
\node at (L6){$R_{IV}^{(2)}$};
\end{tikzpicture}
    \caption{$R_{IV}$}
    \label{fig: region 4}
\end{figure}
We will first extend our exterior solution to the region $R_{IV}^{(1)}$ and then to the region $R_{IV}^{(2)}$.

By Proposition~\ref{prop: estimate on the cone of the right boundary of region 2} and Proposition~\ref{prop: estimate in the region 3}, on the boundary of the region $R_{IV}^{(1)}$ $$\partial_{low}R_{IV}^{(1)}: = \{V = \frac{1}{2},\ -\frac{1}{2}\frac{1}{\Lambda}\leq U<0\}\cup \left\{\frac{V}{(-U)} = \Lambda,\ V\geq\frac{1}{2}\right\},$$
we have \begin{align*}
    &r\approx V,\quad 0<-\partial_{U}r\lesssim \left(\frac{V}{(-U)}\right)^{p},\quad \partial_{V}r\approx 1,\quad \vert \mu\vert\lesssim q^{2-11\delta},\\&
    \left\vert V\partial_{U}\phi\right\vert\lesssim q^{1-5\delta}(-U)^{-2p},\qquad \left\vert V^{2}\partial_{V}\phi\right\vert\lesssim q^{1-5\delta},\quad \vert\phi\vert\lesssim q^{1-5\delta}\left\vert\log V\right\vert,\\&
    \vert Q\vert\lesssim \vert q_{0}\vert q^{2-11\delta}V,\quad \vert A_{U}\vert\lesssim \vert q_{0}\vert q^{2-11\delta}V(-U)^{-p}.
\end{align*}
By the choice of our initial data on the outgoing cone, on the boundary of the region $R_{IV}^{(2)}$\begin{equation*}
    \partial_{low}R_{IV}^{(2)}: = \left\{U = -e^{-1},\ V\geq e^{-1}\Lambda\right\},
\end{equation*}
we have \begin{equation*}
    \vert\partial_{V}\phi\vert\leq \epsilon V^{-2},\quad \partial_{V}r\approx 1.
\end{equation*}
Assuming $q\ll\epsilon$, we make the following bootstrap assumption in the region $R_{IV}\cap \mathcal{Q}_{u_{0},v_{0}}^{(ex)}$: \begin{align}
    &0<-\partial_{U}r\leq B_{2}\left(\frac{V}{(-U)}\right)^{p},\quad \frac{1}{B_{2}}\leq \partial_{V}r\leq B_{2},\quad \vert\mu\vert\leq B_{2}\epsilon,\label{eq: bootstrap in region 4.1-1}\\&
    \left\vert V\partial_{U}\phi\right\vert\leq B_{2}\epsilon(-U)^{-2p},\quad \left\vert V^{2}\partial_{V}\phi\right\vert\leq B_{2}\epsilon.\label{eq: bootstrap in region 4.1-2}
\end{align}
Assuming the above bootstrap bound, we have the following estimates on $(r,Q,A_{U},\phi)$ in the region $R_{IV}\cap \mathcal{Q}_{u_{0},v_{0}}^{(ex)}$.
\begin{proposition}
    In the region $R_{IV}^{(1)}\cap \mathcal{Q}_{u_{0},v_{0}}^{(ex)}$, under the bootstrap assumption~\eqref{eq: bootstrap in region 4.1-1}--\eqref{eq: bootstrap in region 4.1-2}, we have \begin{align*}
        r\approx V,\quad \vert\phi\vert\lesssim \epsilon,\quad \vert Q\vert\lesssim\vert q_{0}\vert \epsilon^{2}V,\quad \vert A_{U}\vert\lesssim \vert q_{0}\vert \epsilon^{2}\left(\frac{V}{(-U)}\right)^{p}.
    \end{align*}
    \begin{proof}
        The estimate for $(r,\phi)$ follows directly from our bootstrap assumptions. To estimate $Q$, we have \begin{equation*}
            \left\vert\partial_{V}Q\right\vert \leq\vert q_{0}\vert r^{2}\vert\phi\vert\vert\partial_{V}\phi\vert\lesssim \vert q_{0}\vert\epsilon^{2}.
        \end{equation*}
        Then we can deduce the bound for $Q$. For $A_{U}$, we have \begin{equation*}
            \left\vert\partial_{V}A_{U}\right\vert\lesssim\left\vert\frac{Q\partial_{U}r\partial_{V}r}{(1-\mu)r^{2}}\right\vert.
        \end{equation*}
        Therefore, we have \begin{equation*}
            \vert A_{U}\vert\lesssim\vert q_{0}\vert \epsilon^{2}\left(\frac{V}{(-U)}\right)^{p}.
            \end{equation*}
            This concludes the proof.
    \end{proof}
\end{proposition}
We can prove the following proposition, improving the above bootstrap bounds.
\begin{proposition}
    Assuming the bootstrap bounds~\eqref{eq: bootstrap in region 4.1-1}--\eqref{eq: bootstrap in region 4.1-2} in the region $R_{IV}\cap \mathcal{Q}_{u_{0},v_{0}}^{(ex)}$ for some constant $B_{2}$, we can improve the bounds in the same region to \begin{align*}
        &0<-\partial_{U}r<\frac{1}{2}B_{2}\left(\frac{V}{(-U)}\right)^{p},\quad \frac{2}{B_{2}}\leq \partial_{V}r\leq \frac{1}{2}B_{2},\quad \vert \mu\vert\leq\frac{1}{2}B_{2}\epsilon,\\&
        \left\vert V\partial_{U}\phi\right\vert\leq\frac{1}{2}B_{2}\epsilon(-U)^{-2p},\quad \left\vert V^{2}\partial_{V}\phi\right\vert\leq\frac{1}{2}B_{2}\epsilon.
    \end{align*}
\end{proposition}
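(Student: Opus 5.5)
The plan is to close each bootstrap bound in \eqref{eq: bootstrap in region 4.1-1}--\eqref{eq: bootstrap in region 4.1-2} separately, integrating transport equations from the lower boundaries $\partial_{low}R_{IV}^{(1)}$ and $\partial_{low}R_{IV}^{(2)}$. On these boundaries the data are better than the bootstrap bounds: the estimates there are of size $q^{1-5\delta}\ll\epsilon$, or come from the outgoing data $\epsilon V^{-2}$. The smallness that makes everything close comes from two sources. First, $\epsilon$ (and $q\ll\epsilon$) appears quadratically in every nonlinear term. Second, the $r$-weights decay at large $V$: $r\approx V$, $Q\lesssim |q_0|\epsilon^2 V$, and the preceding proposition gives the bounds on $\phi$, $Q$ and $A_U$. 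Throughout, I take $B_2$ large compared to the implicit constants and then $\epsilon$ small depending on $B_2$, $\Lambda$ and $|q_0|$.

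\textbf{Step 1: geometry.} I first treat $\partial_U r$ through $\partial_V\log(-\partial_U r)=\frac{\partial_V r}{(1-\mu)r}\bigl(\mu-\frac{Q^2}{r^2}\bigr)$. The right-hand side is bounded by $C B_2^2\epsilon/V$. In $R_{IV}^{(1)}$ the $V$-range is bounded, with $\log$ of its length at most $\log\Lambda$, so integration changes $-\partial_U r$ by at most a factor $e^{CB_2^2\epsilon\log\Lambda}\le 2$. Since the boundary value is $\lesssim (V/(-U))^p$, this gives $\frac12 B_2 (V/(-U))^p$. Positivity of $-\partial_U r$ is preserved because the equation is linear in $\partial_U r$.

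For $\partial_V r$, I integrate $\partial_U\log\partial_V r$, which has the same form, in $U$ from the boundary. In $R_{IV}^{(2)}$ this is integrable in $U$ because $|\partial_U r|/r\lesssim V^{p-1}(-U)^{-p}$ and $p<1$. In $R_{IV}^{(1)}$ the $U$-length is $\lesssim 1$. In both cases $\partial_V r$ stays within a factor close to $1$ of its boundary value $\approx 1$.

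For $\mu$, I integrate \eqref{eq: v-transport equation for mu} in $V$. The damping coefficient is positive. The source terms $\frac{r}{\partial_V r}|\partial_V\phi|^2\lesssim B_2^2\epsilon^2V^{-3}$ and $\frac{\partial_V r}{r^3}Q^2\lesssim q_0^2\epsilon^4V^{-1}$ need care. The second is only logarithmically integrable, so I use the integrating factor $r\,e^{\int\cdots}$, exactly as in the proof of Proposition~\ref{prop: consequence of the bootstrap in the existence argument}. This turns it into $r^{-1}\int q_0^2\epsilon^4 dV\lesssim q_0^2\epsilon^4$. The result is $|\mu|\lesssim \epsilon^{2}B_2^2+q^{2-11\delta}\le \frac12 B_2\epsilon$.

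\textbf{Step 2: scalar field.} I write the wave equation as $\partial_U(r\partial_V\phi)=-\partial_V r\,\partial_U\phi+iq_0(\ldots)$ and $\partial_V(r\partial_U\phi)=-\partial_U r\,\partial_V\phi+iq_0(\ldots)$, where the charge terms are $A_U\partial_V(r\phi)$ and $\frac{Q\Omega^2}{4r}\phi$.

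For $V^2\partial_V\phi$, I integrate in $U$ from the outgoing cone or from $\{V/(-U)=\Lambda\}$. The main term contributes $\int B_2V^{-1}\epsilon(-U)^{-2p}dU\lesssim B_2\epsilon V^{-1}$, since $2p<1$. After multiplying by $V$ (recall $r\approx V$), this gives $V^2|\partial_V\phi|\lesssim \epsilon+B_2\epsilon\cdot$(small?). This term is not automatically small.

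To fix this, I will retain the weight structure. For $r\partial_V\phi$ I will track $r^2\partial_V\phi$ rather than $r\partial_V\phi$. Alternatively I will use that $\int_U^0(-U')^{-2p}dU'\le (-U)^{1-2p}/(1-2p)$ is small only near $U=0$; in $R_{IV}^{(1)}$, $V\le e^{-1}\Lambda$, so the $V^{-1}$ gain relative to $V^{-2}$ costs at most a factor $\Lambda$. Where genuine smallness is missing, I rescue it with a Gr\"onwall-type coupling between the two bootstrap quantities. The system for $(V^2\partial_V\phi,\ V(-U)^{2p}\partial_U\phi)$ is linear with coupling coefficients $\partial_V r,\partial_U r$ of size $O(1)$. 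Its source is the boundary data, of size $\epsilon$ on the outgoing cone and $q^{1-5\delta}$ on the other piece, so Gr\"onwall over the finite region $R_{IV}^{(1)}$ gives a constant $C(\Lambda)$ that is independent of $B_2$. Choosing $B_2\gg C(\Lambda)$ closes both bounds.

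For $V\partial_U\phi$, I integrate in $V$. The term $-\partial_U r\,\partial_V\phi/r$ integrates to $\lesssim B_2^2\epsilon (V/(-U))^pV^{-2}$, and the charge terms are $O(q_0^2\epsilon^3)$.

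\textbf{Main obstacle.} The main obstacle is $R_{IV}^{(2)}$ ($V\to\infty$), where Gr\"onwall over a finite region is unavailable. There I would use the peeling structure: $\partial_U(r\partial_V\phi)$ is forced by $\partial_V r\,\partial_U\phi=O(\epsilon V^{-1}(-U)^{-2p})$, which gives $r\partial_V\phi=O(\epsilon V^{-1})$, consistent with $V^2\partial_V\phi=O(\epsilon)$ with constant $C\epsilon$ independent of $B_2$ to leading order. Conversely, $\partial_V(r\partial_U\phi)$ is forced by $\partial_U r\,\partial_V\phi=O(B_2^2\epsilon V^{p-2}(-U)^{-p})$, which is integrable in $V$ since $p<1$. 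This decoupling should give bounds with constants $C+CB_2^2\epsilon$, closing the bootstrap for $\epsilon$ small.
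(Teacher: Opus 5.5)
\paragraph{The gap: the scalar-field bounds in $R_{IV}^{(2)}$.} Your Step 1 and your Gr\"onwall argument in $R_{IV}^{(1)}$ are essentially fine. The closing of the scalar-field bounds in $R_{IV}^{(2)}$, however, is circular.

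The coupling $\partial_U(r\partial_V\phi)=-\partial_V r\,\partial_U\phi+\dots$, $\partial_V(r\partial_U\phi)=-\partial_U r\,\partial_V\phi+\dots$ is \emph{linear} with $O(1)$ coefficients, so there is no extra factor of $\epsilon$ to exploit. Under the bootstrap you only know $|\partial_U\phi|\le B_2\epsilon V^{-1}(-U)^{-2p}$. The forcing of $r\partial_V\phi$ is therefore $O(B_2\epsilon V^{-1}(-U)^{-2p})$, not $O(\epsilon V^{-1}(-U)^{-2p})$. Integrating from $U=-e^{-1}$ then gives only $V^2|\partial_V\phi|\le C\epsilon+CB_2\epsilon$, because $\int_{-e^{-1}}^0(-U)^{-2p}\,dU=O(1)$. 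This is exactly the non-closing estimate you flagged in Step 2. In the other direction, the forcing $B_2^2\epsilon V^{p-2}(-U)^{-p}$ is $B_2$ times (not $B_2\epsilon$ times) the bootstrap size. So the claimed ``constants $C+CB_2^2\epsilon$'' do not follow. Gr\"onwall is also not unavailable there: the $V$-kernel $V^{p-2}$ is integrable to $\infty$, so a Gr\"onwall in $U$ with the integrable kernel $(-U)^{-2p}$ works on all of $R_{IV}$.

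\paragraph{The missing idea: use $\Lambda$ to break the loop.} The loop must be broken by the largeness of $\Lambda$; this is what the paper's ``strict hierarchy'' is about. The gain appears in the $V$-integration for $r\partial_U\phi$. Take $-\partial_U r\le C(V/(-U))^p$ from Step 1, with $C$ independent of $B_2$, and $|\partial_V\phi|\le B_2\epsilon V^{-2}$. Start the integration at $V_0(U):=\max\{\Lambda(-U),\tfrac12\}$ on $\partial_{low}R_{IV}^{(1)}$. Then
\begin{equation*}
(-U)^{2p}\int_{V_0(U)}^{V}|\partial_U r|\,|\partial_V\phi|\,dV'\le CB_2\epsilon\,(-U)^{p}\int_{V_0(U)}^{\infty}(V')^{p-2}\,dV'\lesssim B_2\epsilon\,(-U)^{p}V_0(U)^{p-1}\lesssim \Lambda^{-p}B_2\epsilon ,
\end{equation*}
since $(-U)^pV_0^{p-1}\le 2^{1-2p}\Lambda^{-p}$ in both cases $V_0=\Lambda(-U)\ge\frac12$ and $V_0=\frac12$.

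\paragraph{Closing the bootstrap.}
\begin{itemize}
\item First improve $|V\partial_U\phi|\le C(q^{1-5\delta}+\Lambda^{-p}B_2\epsilon)(-U)^{-2p}$, plus charge terms.
\item Only then feed this into the $U$-integration for $r\partial_V\phi$, which yields $V^2|\partial_V\phi|\le C(\epsilon+\Lambda^{-p}B_2\epsilon)$, using $q\ll\epsilon$.
\item Choosing $B_2\ge 4C$ and $\Lambda^p\ge 4C$ closes both bounds uniformly on $R_{IV}^{(1)}\cup R_{IV}^{(2)}$, with no Gr\"onwall needed.
\end{itemize}
The order matters. The $U$-integration for $r\partial_V\phi$ produces no uniform power of $\Lambda$, so the $\partial_U\phi$ bound must be improved before the $\partial_V\phi$ bound. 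The paper states the hierarchy in the opposite order, but the $\Lambda^{-p}$ is visible only in the step above.

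\paragraph{Two smaller points.}
\begin{itemize}
\item For $\partial_U r$ in $R_{IV}^{(2)}$, your factor $(V/V_0)^{CB_2^2\epsilon}$ is unbounded. It is absorbed by the weight, since $(V_0/(-U))^p(V/V_0)^{CB_2^2\epsilon}\le (V/(-U))^p$ once $CB_2^2\epsilon\le p$.
\item ``The charge terms are $O(q_0^2\epsilon^3)$'' needs a decay input at large $V$. With only $|A_U|\lesssim|q_0|\epsilon^2(V/(-U))^p$ and $|\phi|\lesssim\epsilon$, the term $q_0A_U\partial_V(r\phi)$ in $\partial_V(r\partial_U\phi)$ is not integrable in $V$.
\end{itemize}
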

\begin{proof}
The proof of this proposition relies on the strict hierarchy of the estimates. We first improve the bounds for $V$-derivatives, since we can use the largeness of $\Lambda$. Then we use these improved bounds to control $U$-derivatives.

For $\partial_{V}r$, we have \begin{equation*}
        \left\vert\partial_{U}\partial_{V}r\right\vert = \left\vert\frac{\partial_{U}r\partial_{V}r}{(1-\mu)r}\left(\mu-\frac{Q^{2}}{r^{2}}\right)\right\vert\lesssim B_{2}^{3}\epsilon V^{-1}\left(\frac{V}{(-U)}\right)^{p}.
    \end{equation*}
    Integrating from the boundary $\left\{U = -e^{-1}\right\}$, we have \begin{equation*}
        \partial_{V}r\approx C+CB_{2}^{3}\epsilon\Lambda^{-1+p}\int_{-e^{-1}}^{U}(-U^{\prime})^{-p}dU^{\prime}\approx C,
    \end{equation*}
    where $C$ is a constant only depending on the initial data on the outgoing cone.

    Integrating from the boundary $\left\{\frac{V}{(-U)} = \Lambda\right\}$, we have \begin{equation*}
        \partial_{V}r(U,V)\approx C+CB_{2}^{3}\epsilon V^{-1+p}\int_{-\frac{V}{\Lambda}}^{U}(-U^{\prime})^{-p}dU^{\prime}\approx C+CB_{2}^{3}\epsilon V^{-1+p}\left(\frac{V}{\Lambda}\right)^{1-p}\approx C.
    \end{equation*}
Therefore, we can improve the bootstrap bound for $\partial_{V}r$.

For $\partial_{V}\phi$, we have \begin{align*}
    \left\vert\partial_{U}\left(r\partial_{V}\phi\right) \right\vert=& \left\vert-\partial_{V}r\partial_{U}\phi-iq_{0}\left(A_{U}\partial_{V}(r\phi)+\frac{Q\partial_{U}r\partial_{V}r}{r(1-\mu)}\phi\right)\right\vert\\\lesssim& B_{2}^{2}\epsilon V^{-1}(-U)^{-2p}+\vert q_{0}\vert^{2}\epsilon^{2}B_{2}^{3}V^{-1}\left(\frac{V}{(-U)}\right)^{p}.
\end{align*}
Integrating from $\partial_{low}R_{IV}^{(1)}\cup\partial_{low}R_{IV}^{(2)}$, arguing as in the estimate for $\partial_{V}r$, and taking $\epsilon$ sufficiently small, we can improve the bootstrap bound for $\partial_{V}\phi$: \begin{equation}
    \left\vert V^{2}\partial_{V}\phi\right\vert\leq\frac{B_{2}}{\Lambda^{p}}\epsilon.
\end{equation}

Next, we estimate $\partial_{U}r$. By the Raychaudhuri equation~\eqref{eq:spherical symmetric equaion1}, we have that $\partial_{U}r<0$. Using the equation\begin{equation*}
        \partial_{V}\partial_{U}r = \frac{\partial_{U}r\partial_{V}r}{r(1-\mu)}\left(\mu-\frac{Q^{2}}{r^{2}}\right),
    \end{equation*}
    we have \begin{equation*}
        \partial_{V}\left(\log(-\partial_{U}r)\right) = \frac{\partial_{V}r}{r(1-\mu)}\left(\mu-\frac{Q^{2}}{r^{2}}\right)\lesssim B_{2}\epsilon\frac{1}{V}.
    \end{equation*}
Integrating the above equation from the boundary $\partial_{low}R_{IV}^{(1)}$, we have \begin{equation*}
    \log(-\partial_{U}r)(U,V) \leq \log(-\partial_{U}r)|_{\partial_{low}R_{IV}^{(1)}}+CB_{2}\epsilon\log(\vert V\vert).
\end{equation*}
Therefore, we have \begin{equation*}
        0<-\partial_{U}r<\frac{1}{2}B_{2}\left(\frac{V}{(-U)}\right)^{p}.
    \end{equation*}
    To close the bootstrap argument for $\partial_{U}\phi$, we have \begin{align*}
        \left\vert\partial_{V}(r\partial_{U}\phi)\right\vert = &\left\vert-\partial_{U}r\partial_{V}\phi-iq_{0}\left(A_{U}\partial_{V}(r\phi)+\frac{Q\partial_{U}r\partial_{V}r}{r(1-\mu)}\phi\right)\right\vert\\\lesssim& \frac{B_{2}^{2}\epsilon}{\Lambda^{p}}V^{-2}\left(\frac{V}{(-U)}\right)^{p}.
    \end{align*}
Integrating from the boundary $\partial_{low}R_{IV}^{(1)}$, we can close the bootstrap bound for $\partial_{U}\phi$.

Finally, to close the bootstrap estimate for $\mu$, arguing similarly to~\eqref{estimate for mu in region 3}, we have \begin{align*}
    &\frac{1}{2}V^{2p}\mu^{2}+\int_{\Lambda(-U)}^{V}\left(\frac{\partial_{V}r}{r}-\frac{p}{V^{\prime}}+\frac{r}{\partial_{V}r}\left(\partial_{V}\phi\right)^{2}\right)\left((V^{\prime})^{p}\mu\right)^{2}dV^{\prime}\\=&\left(\frac{1}{2}(\cdot)^{2p}\mu^{2}(\cdot,U)\right)\Bigg|_{\partial_{low}R_{IV}^{(1)}}+
\int_{\partial_{low}R_{IV}^{(1)}}^{V}\left(\frac{r}{\partial_{V}r}\left(\partial_{V}\phi\right)^{2}+\frac{\partial_{V}r}{r^{3}}Q^{2}\right)(V^{\prime})^{2p}\mu dV^{\prime}\\\leq&q^{4-22\delta}+\frac{\epsilon^{2}}{\Lambda^{p}}\int_{\Lambda(-U)}^{V}(V^{\prime})^{2p-1}dV^{\prime}.
\end{align*}
Therefore, we have \begin{equation*}
    \vert\mu\vert\leq\frac{1}{2}B_{2}\epsilon.
\end{equation*}
This concludes the proof.
\end{proof}
The upper bound for the spacetime charge $Q$ provided by the preceding bootstrap estimates does not guarantee that $Q$ remains finite as one approaches the null infinity, and consequently, does not rule out divergence of the Hawking mass. In the next proposition, we improve the estimates for both $Q$ and $m$.
\begin{proposition}
    For each fixed $U\in[-e^{-1},0]$, the spacetime charge $Q$ and the Hawking mass $m$ have a limit when $V\rightarrow\infty$.
\end{proposition}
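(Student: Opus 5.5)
Fix $U\in[-e^{-1},0)$ and work along the outgoing cone in the region $R_{IV}$, using the bounds already closed there: $r\approx V$, $\partial_V r\approx 1$, $|\mu|\lesssim\epsilon$, $|V^{2}\partial_V\phi|\lesssim\epsilon$, $|\phi|\lesssim\epsilon$. The plan is to prove that $\partial_V Q$ and $\partial_V\varpi$ are integrable in $V$ on $[V_0,\infty)$; monotone or absolutely convergent integrals then give the limits. We go through $\varpi=m+\frac{Q^{2}}{2r}$, since its transport equation has no $Q^{2}/r^{2}$ source. Once $Q$ is bounded, $Q^{2}/(2r)\to0$, so $m$ and $\varpi$ have the same limit.

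\textbf{Step 1: $Q$.} From \eqref{eq:v-transport equation for Q},
\[
|\partial_V Q|\le |q_0|\, r^{2}|\phi||\partial_V\phi|\lesssim |q_0|\,\epsilon\, V^{2}\,|\phi|\,V^{-2}=|q_0|\epsilon|\phi|.
\]
This is only bounded, which is exactly why the earlier estimate gave $|Q|\lesssim V$. We therefore first upgrade $\phi$. Since $|\partial_V\phi|\lesssim\epsilon V^{-2}$ is integrable, $\phi(U,\cdot)$ has a limit $\phi_\infty(U)$ and
\[
|\phi-\phi_\infty|\lesssim \epsilon V^{-1}.
\]
Write $\partial_V Q=q_0r^{2}\Im(\phi\overline{\partial_V\phi})$ and split $\phi=\phi_\infty+(\phi-\phi_\infty)$. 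The error piece is $O(r^{2}\epsilon V^{-1}\epsilon V^{-2})=O(\epsilon^{2}V^{-1})$, which is still not integrable. So we need $\phi_\infty(U)=0$, or a sharper decay $|\partial_V\phi|\lesssim V^{-2-\eta}$ together with $\phi_\infty=0$.

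To get $\phi_\infty=0$: on $U=-e^{-1}$ it holds by the assumption $\lim_{v\to\infty}\phi(-e^{-1},v)=0$. For general $U$, integrate $\partial_U\phi$ at large $V$. The improved bound $|V\partial_U\phi|\lesssim\epsilon(-U)^{-2p}$ gives $|\partial_U\phi|\lesssim\epsilon V^{-1}(-U)^{-2p}$, which is integrable in $U$ on $[-e^{-1},U]$ since $2p<1$, and tends to $0$ as $V\to\infty$. Hence $\phi_\infty(U)=\phi_\infty(-e^{-1})=0$, and so $|\phi|\lesssim\epsilon V^{-1}$.

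With this, $|\partial_V Q|\lesssim|q_0|\epsilon^{2}V^{-1}$, which is still borderline (logarithmic). To close it we need one more power of decay on $\partial_V\phi$ or $\phi$. Use the wave equation in the form
\[
\partial_U(r\partial_V\phi)=-\partial_V r\,\partial_U\phi+\cdots.
\]
Its right side is $O(\epsilon V^{-1})$, which only gives $r\partial_V\phi=O(V^{-1})$, i.e. $\partial_V\phi=O(V^{-2})$ again. Instead, exploit the structure of the charge current. Since $r^{2}\Im(\phi\overline{\partial_V\phi})=r^{2}|\phi|^{2}\partial_V\theta$, the conserved-type quantity is controlled by the angular part. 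Alternatively, integrate $\partial_U$ of $\lim_V Q$: the limit on $U=-e^{-1}$ exists because the data $\epsilon v^{-2}+\epsilon f$ give $\phi=O(v^{-1})$ with the \emph{same} profile for $\phi_1,\phi_2$ up to $f$. I expect this step to be the main obstacle. I would try to show $r\phi=:\psi$ has a limit at infinity (the radiation field), with $\partial_V\psi=O(\epsilon V^{-2})$. Then
\[
r^{2}\Im(\phi\overline{\partial_V\phi})=\Im\bigl(\psi\overline{\partial_V\psi}\bigr)-\Im(\psi\bar\psi)\,\partial_V r/r,
\]
and the last term vanishes identically since $\Im|\psi|^{2}=0$. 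Hence $|\partial_V Q|\lesssim|\psi||\partial_V\psi|\lesssim\epsilon^{2}V^{-2}$, which is integrable. The bound $\partial_V\psi=O(V^{-2})$ comes from
\[
\partial_U\partial_V\psi=\frac{\partial_U\partial_V r}{r}\,\psi-iq_0(\dots),
\]
whose right side is $O(\epsilon V^{-2})$ given $|Q|\lesssim V$ and $|A_U|\lesssim(V/(-U))^{p}$. Integrate in $U$ from the data, where $\partial_V\psi=\partial_V(r\phi)=O(\epsilon V^{-2})$ on $U=-e^{-1}$, using $\phi=O(V^{-1})$. If the $A_U\partial_V\psi$ term spoils this, run a short bootstrap on $\sup V^{2}|\partial_V\psi|$.

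\textbf{Step 2: $m$.} Use the $\varpi$ equation,
\[
\partial_V\varpi+\frac{r}{\partial_V r}|\partial_V\phi|^{2}\varpi=q_0Qr\Im(\phi\overline{\partial_V\phi})+\frac{|\partial_V\phi|^{2}Q^{2}}{2\partial_V r}+\frac{r^{2}}{2\partial_V r}|\partial_V\phi|^{2}.
\]
With $Q$ now bounded:
\begin{itemize}
\item the coefficient is $O(\epsilon^{2}V^{-3})$;
\item the term $r^{2}|\partial_V\phi|^{2}$ is $O(\epsilon^{2}V^{-2})$;
\item the first source is $O(V\cdot V^{-1}V^{-2})$, all integrable.
\end{itemize}
An integrating factor then gives that $\varpi$ converges. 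Finally, $m=\varpi-\frac{Q^{2}}{2r}$ converges to the same limit.
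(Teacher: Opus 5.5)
Your route is the paper's. You get $\phi\to0$ by integrating $|\partial_U\phi|\lesssim\epsilon V^{-1}(-U)^{-2p}$ from $U=-e^{-1}$. Your exact identity $r^{2}\Im(\phi\overline{\partial_V\phi})=\Im(\psi\overline{\partial_V\psi})$, with $\psi=r\phi$, is precisely the cancellation $\Im|\Phi|^{2}=0$ that the paper uses after writing $\phi=\Phi(U)/r+O(V^{-2})$ and $\partial_V\phi=-\Phi\,\partial_Vr/r^{2}+O(V^{-3})$. (Going through $\varpi$ is harmless but unnecessary: once $Q$ is bounded, the source $\partial_Vr\,Q^{2}/r^{2}$ in the $m$-equation is already integrable.) The paper simply asserts this expansion. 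You try to derive the needed decay of $\partial_V\psi$, and that derivation fails as written.

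You claim the right-hand side of $\partial_U\partial_V\psi=\frac{\partial_U\partial_Vr}{r}\psi-iq_0A_U\partial_V\psi+iq_0\frac{Q\Omega^{2}}{4r^{2}}\psi$ is $O(\epsilon V^{-2})$ given $|Q|\lesssim V$. But the bounds available in $R_{IV}$ only give $|\partial_Ur|\lesssim(V/(-U))^{p}$. So $\Omega^{2}=-4\partial_Ur\,\partial_Vr/(1-\mu)$ is only known to be $O((V/(-U))^{p})$, and the charge term has size $|Q|\,V^{p-2}(-U)^{-p}|\psi|$, i.e.\ $V^{p-1}$ when $|Q|\lesssim V$. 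Integrating in $U$ then gives only $\partial_V\psi=O(V^{p-1})$, hence $|\partial_VQ|\lesssim V^{p-1}$, which is not integrable. The argument is circular: bounded $Q$ needs decay of $\partial_V\psi$, and that decay needs bounded $Q$. Even the $\partial_U\partial_Vr$ term is only $O(V^{p-2})$, so a bootstrap on $\sup V^{2}|\partial_V\psi|$ cannot close either. The $A_U$ term is not the issue: it is a pure phase and drops out of $\partial_U|\partial_V\psi|^{2}$.

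The repair is to close $Q$ and $\partial_V\psi$ simultaneously at the weaker rate $V^{p-2}$. Set $M(V):=\sup_U|Q(U,V)|$ and $D(V):=|\partial_V\psi|(-e^{-1},V)$. Integration in $U$ gives $\sup_U|\partial_V\psi|(\cdot,V)\lesssim D(V)+\epsilon V^{p-2}(1+M(V))$, hence $M(V)\le M(V_0)+C\epsilon\int_{V_0}^{V}\bigl(D+\epsilon (V')^{p-2}(1+M)\bigr)\,dV'$. Gr\"onwall then bounds $M$, since $V^{p-2}\in L^{1}$; so $\partial_VQ$ is integrable and your Step 2 goes through.

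This still needs $D\in L^{1}$, which does not follow from $\phi=O(V^{-1})$ alone as you suggest. It requires cancellation between $\partial_Vr\,\phi$ and $r\,\partial_V\phi$ on $\{U=-e^{-1}\}$. That cancellation holds for the explicit $\epsilon v^{-2}$ tail, but not for an arbitrary $f$ with only $|f|\lesssim v^{-2}$: for $f=v^{-2}\sin\log v$, $D$ is of order $V^{-1}$ and not integrable. The paper's asserted expansion tacitly assumes the same cancellation.
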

\begin{proof}
    Since $\vert\partial_{U}\phi\vert(U,V)\lesssim V^{-1}(-U)^{-2p}$, integrating from the outgoing initial data on $\{U = -e^{-1}\}$, we have \begin{equation*}
        \lim_{V\rightarrow\infty}\phi(U,V) = 0.
    \end{equation*}
    Therefore, we can expand $\phi$ as \begin{equation*}
        \phi = \frac{\Phi(U)}{r}+O(V^{-2}).
    \end{equation*}
    Considering the transport equation for $Q$: \begin{align*}
        \partial_{V}Q = q_{0}r^{2}\Im\left(\phi\overline{\partial_{V}\phi}\right) \approx& q_{0}V^{2}\Im\left(\left(\frac{\Phi(U)}{r}+O(V^{-2})\right)\overline{\left(-\frac{\Phi(U)\partial_{V}r}{r^{2}}+O(V^{-3})\right)}\right) \\\lesssim&q_{0}V^{2}\Im\left(-\frac{\vert\Phi(U)\vert^{2}\partial_{V}r}{r^{2}}+O(V^{-4})\right)\\\lesssim& q_{0}V^{-2}.
        \end{align*}
    Therefore we have \begin{equation*}
        \lim_{V\rightarrow\infty}Q(U,V)<\infty.
    \end{equation*}
    Recall the transport equation for the Hawking mass $m$: 
    \begin{equation*}
        \partial_{V}m +\frac{r}{\partial_{V}r}\left\vert\partial_{V}\phi\right\vert^{2}m = \frac{1}{2}\left(\frac{r^{2}}{\partial_{V}r}\vert\partial_{V}\phi\vert^{2}+\frac{\partial_{V}r}{r^{2}}Q^{2}\right).
    \end{equation*}
    Since the right-hand side of the above equation is now integrable, we can directly integrate the above equation to conclude \begin{equation*}
        \lim_{V\rightarrow\infty}m(U,V)<\infty.
    \end{equation*}
    This concludes the proof.
\end{proof}
\appendix
\section{Proof of BV local well-posedness in Proposition~\ref{prop: BVlwp}}
\label{app: BVLWP}
In this section, we prove Proposition~\ref{prop: BVlwp}. In the setting of Proposition~\ref{prop: BVlwp}, we assume\begin{align*}
    &\partial_{u}r(u,v_{0}) = -1,\ \partial_{u}\phi(u,v_{0})\in BV_{u},\ A_{u}(u,v_{0}) = 0,\\&
    \partial_{v}r(u_{0},v) = 1,\ \partial_{v}\phi(u_{0},v)\in BV_{v},\\&
    r(u_{0},v_{0}) = r_{0}>(u_{1}-u_{0}),\ \mu(u_{0},v_{0}) = \mu_{0},\ \phi(u_{0},v_{0}) = \phi_{0},\ Q(u_{0},v_{0}) = Q_{0}.
\end{align*}
By the standard local well-posedness result, for $\partial_{u}\phi(u,v_{0})\in C^{1}$ and $\partial_{v}\phi(u,v_{0})\in C^{1}$, there exists a unique $C^{1}$ solution arising from the given initial data in the domain \begin{equation*}
    [u_{0},u_{1}]\times [v_{0},v_{0}+\delta)
\end{equation*}
for $\delta$ sufficiently small. Moreover, we can choose $\delta$ sufficiently small such that in that region \begin{equation*}
    r\approx 1,\ -\partial_{u}r\approx 1,\ \partial_{v}r\approx 1.
\end{equation*}
Let $\theta = r\partial_{v}\phi$ and $\zeta = r\partial_{u}\phi$. By the standard limiting argument, it suffices to establish the bounds for $TV_{v}[\theta](u)$ and $TV_{u}[\zeta](v)$ in terms of $TV_{v}[\theta](u_{0})$ and $TV_{u}[\zeta](v_{0})$, where $TV$ means the total variation.
\subsection{A priori bounds}
\label{appsec: A priori bounds}
In this section, we derive some useful a priori bounds. To simplify the notation, let $I$ be any constant depending on the quantities on the initial characteristic hypersurfaces and $P$ be any polynomial.

\paragraph{A priori bounds for $\phi$} For $\phi$, by the fundamental theorem of calculus, we have \begin{align*}
    \phi(u,v) = &\phi(u,v_{0})+\int_{v_{0}}^{v}\partial_{v}\phi(u,\bar{v})d\bar{v}.
\end{align*}
Therefore, we have \begin{align*}
    &\sup_{v}\Vert\phi\Vert_{L_{u}^{\infty}}\leq\Vert \phi(\cdot,v_{0})\Vert_{L_{u}^{\infty}}+\delta\sup_{v}\Vert \theta\Vert_{L_{u}^{\infty}}\leq\Vert \phi(\cdot,v_{0})\Vert_{L_{u}^{\infty}}+\delta \Vert\theta(\cdot,v_{0})\Vert_{L_{u}^{\infty}}+\delta \sup_{u}TV_{v}[\theta]\lesssim I+\delta \sup_{u}TV_{v}[\theta],\\&
    \sup_{v}TV_{u}[\phi](v)= \sup_{v}\int_{u_{0}}^{u_{1}}\frac{\vert \zeta\vert}{r}du\lesssim \sup_{v}\Vert\zeta(\cdot,v)\Vert_{L_{u}^{\infty}}\lesssim I+\sup_{v}TV_{u}[\zeta],\\&
    \sup_{u}TV_{v}[\phi](u) = \sup_{u}\int_{v_{0}}^{v_{0}+\delta}\frac{\vert\theta\vert}{r}dv\lesssim \delta\sup_{u}\Vert\theta(u,\cdot)\Vert_{L_{v}^{\infty}}\lesssim \delta I+\delta\sup_{u}TV_{v}[\theta].
\end{align*}
\paragraph{A priori bounds for $Q$}
Using the transport equation~\eqref{eq:v-transport equation for Q} for $Q$, we have \begin{align*}
    Q(u,v) = Q(u,v_{0})+q_{0}\int_{v_{0}}^{v}r^{2}\Im(\phi\overline{\partial_{v}\phi})d\bar{v}.
\end{align*}
Therefore, we have 
\begin{equation}
\begin{aligned}
   \sup_{v} \Vert Q\Vert_{L_{u}^{\infty}}(v)\leq& \Vert Q(\cdot,v_{0})\Vert_{L_{u}^{\infty}}+\delta \left(\sup_{v}\Vert\phi\Vert_{L_{u}^{\infty}}\right)\left(\sup_{v}\Vert \theta\Vert_{L_{u}^{\infty}}\right)\\\lesssim& I+I\vert q_{0}\vert\left(\delta \sup_{u}TV_{v}[\theta]+\delta^{2}\left(\sup_{u}TV_{v}[\theta]\right)^{2}\right)\\=&I+\delta P(\sup_{u}TV_{v}[\theta],\delta).
\end{aligned}
\label{appeq: estimate for Linfty Q}
\end{equation}
For the $TV_{u}$-norm of $Q$, we have \begin{align*}
    \sup_{v}TV_{u}[Q]\lesssim& I+\delta \left(\sup_{u}TV_{v}[\theta]+\sup_{v}TV_{u}[\theta]+\sup_{v}TV_{u}[\zeta]+(\sup_{v}TV_{u}[\zeta])(\sup_{u}TV_{v}[\theta])\right)\\&+\delta^{2}\left((\sup_{u}TV_{v}[\theta])^{2}+(\sup_{u}TV_{v}[\theta])(\sup_{v}TV_{u}[\theta])\right).
\end{align*}
For $TV_{v}[Q]$, we have \begin{equation}
\begin{aligned}
    \sup_{u}TV_{v}[Q] \lesssim \int_{v_{0}}^{v_{0}+\delta} \Vert\phi\Vert_{L_{u}^{\infty}}\Vert \theta\Vert_{L_{u}^{\infty}}dv\lesssim& \delta\left(I+\delta\sup_{u}TV_{v}[\theta]\right)\left(I+\sup_{u}TV_{v}[\theta]\right)\\\lesssim&
    \delta\left(I+\sup_{u}TV_{v}[\theta]+\delta\left(\sup_{u}TV_{v}[\theta]\right)^{2}\right)\\\lesssim&\delta I+\delta P\left(\sup_{u}TV_{v}[\theta],\delta\right).
    \end{aligned}
\end{equation}
\paragraph{A priori bounds for $A_{u}$ and $TV_{v}[A_{u}]$}
Using the equation~\eqref{eq:spherical symmetric equation last}, we have \begin{align*}
    A_{u}(u,v) = \int_{v_{0}}^{v}\frac{2Q}{r^{2}}\frac{\lambda\nu}{1-\mu}d\bar{v}.
\end{align*}
Then we have \begin{align}
   \sup_{v} \Vert A_{u}\Vert_{L_{u}^{\infty}}\lesssim\delta \sup_{v}\Vert Q\Vert_{L_{u}^{\infty}} \lesssim \delta I+\delta I\left(\delta \sup_{u}TV_{v}[\theta]+\delta^{2}(\sup_{u}TV_{v}[\theta])^{2}\right)=\delta I+\delta P\left(\sup_{u}TV_{v}[\theta],\delta\right).
   \label{appeq: final estimate for Linfty Au}
\end{align}
For $TV_{v}[A_{u}]$, we have \begin{equation}
    \sup_{u}TV_{v}[A_{u}] = \sup_{u}\int_{v_{0}}^{v_{0}+\delta}\left\vert\frac{2Q}{r^{2}}\frac{\lambda\nu}{1-\mu}\right\vert dv\lesssim \sup_{v}\Vert Q\Vert_{L_{u}^{\infty}}\lesssim I+\delta P\left(\sup_{u}TV_{v}[\theta],\delta\right).
\end{equation}
\paragraph{A priori bounds for $TV_{u}[\theta]$ and $TV_{v}[\zeta]$}
For $TV_{u}[\theta]$, we have \begin{align*}
    TV_{u}[\theta](v) = \int_{u_{0}}^{u_{1}}\vert\partial_{u}\theta\vert(u,v)du\lesssim &\Vert \zeta\Vert_{L_{u}^{\infty}}(v)+\vert q_{0}\vert\left(\Vert A_{u}\Vert_{L_{u}^{\infty}}(v)\left(\Vert \theta\Vert_{L_{u}^{\infty}}+\Vert\phi\Vert_{L_{u}^{\infty}}\right)(v)+\Vert Q\Vert_{L_{u}^{\infty}}\Vert\phi\Vert_{L_{u}^{\infty}}(v)\right)\\\lesssim&I+TV_{u}[\zeta](v)+\delta\sup_{u}TV_{v}[\theta]+\delta^{2}\left(\sup_{u}TV_{v}[\theta]\right)^{2}+\delta^{3}\left(\sup_{u}TV_{v}[\theta]\right)^{3}.
\end{align*} 
Therefore, we have \begin{equation}
\begin{aligned}
    \sup_{v} TV_{u}[\theta]\lesssim& I+\sup_{v}TV_{u}[\zeta]+\delta \sup_{u}TV_{v}[\theta]+\delta^{2}\left(\sup_{u}TV_{v}[\theta]\right)^{2}+\delta^{3}\left(\sup_{u}TV_{v}[\theta]\right)^{3}\\\lesssim& I+\sup_{v}TV_{u}[\zeta]+\delta\sup_{u}TV_{v}[\theta]+\delta^{3}\left(\sup_{u}TV_{v}[\theta]\right)^{3}\\\lesssim& I+\sup_{v}TV_{u}[\zeta]+\delta P\left(\sup_{u}TV_{v}[\theta],\delta\right).
    \end{aligned}\label{eq: final a prior bound for TVutheta}
\end{equation}
Using the above bound~\eqref{eq: final a prior bound for TVutheta}, we have \begin{equation}
    \label{eq: final a priori bound for TvuQ}
    \begin{aligned}
        \sup_{v}TV_{u}[Q]\lesssim &I+\delta\left(\sup_{u}TV_{v}[\theta]+\sup_{v}TV_{u}[\zeta]+(\sup_{v}TV_{u}[\zeta])(\sup_{u}TV_{v}[\theta])\right)\\&+\delta^{2}(\sup_{u}TV_{v}[\theta])^{2}+\delta^{3}\left(\sup_{u}TV_{v}[\theta]\right)^{3}+\delta^{4}\left(\sup_{u}TV_{v}[\theta]\right)^{4}\\\lesssim&I+\delta P\left(\sup_{u}TV_{v}[\theta],\sup_{v}TV_{u}[\zeta],\delta\right).
    \end{aligned}
\end{equation}
Similarly, for $TV_{v}[\zeta]$, we have \begin{equation}
    \sup_{u}TV_{v}[\zeta]\lesssim \delta\left(I+P\left(\sup_{u}TV_{v}[\theta],\delta\right)\right).
\end{equation}
\paragraph{A priori bounds for $m$ and $\mu$}
Using the equation~\eqref{eq: v-transport equation for m}, we have \begin{equation}
    m(u,v) = e^{-\int_{v_{0}}^{v}\frac{\theta^{2}}{r\lambda}d\bar{v}}m(u,v_{0})+\frac{1}{2}\int_{v_{0}}^{v}e^{\int_{v}^{\bar{v}}\frac{\theta^{2}}{r\lambda}d\widetilde{v}}\left(\frac{\theta^{2}}{\lambda}+\frac{\lambda Q^{2}}{r^{2}}\right)d\bar{v}.\label{appeq: expression for m in appendix}
\end{equation}
Then we have 
\begin{equation}
\begin{aligned}
\sup_{v}\Vert\mu\Vert_{L_{u}^{\infty}}\approx\sup_{v}\Vert m\Vert_{L_{u}^{\infty}}\lesssim I+\delta(\sup_{v}\Vert \theta\Vert_{L_{u}^{\infty}}^{2}+\sup_{v} \Vert Q\Vert_{L_{u}^{\infty}}^{2})\lesssim& I+\delta \sup_{u}TV_{v}[\theta]+\delta^{4}(\sup_{u}TV_{v}[\theta])^{4}\\\lesssim& I+\delta P\left(\sup_{u}TV_{v}[\theta],\delta\right).
\end{aligned}
\label{appeq: Linfty bound for mu and m}
\end{equation}
\paragraph{A priori bounds for $TV_{u}[\lambda]$ and $TV_{v}[\nu]$}
For $TV_{u}[\lambda]$, we have 
\begin{equation}
\begin{aligned}
    \sup_{v}TV_{u}[\lambda]\lesssim \sup_{v}TV_{u}[\log\lambda]\lesssim \sup_{v}\int_{u_{0}}^{u_{1}}\vert\partial_{u}\log\lambda\vert&\lesssim\int_{u_{0}}^{u_{1}}\frac{-\nu}{1-\mu}\left(\mu+\frac{Q^{2}}{r^{2}}\right)du\\&\lesssim\sup_{v} \Vert\mu\Vert_{L_{u}^{\infty}}+\sup_{v}\Vert Q\Vert_{L_{u}^{\infty}}^{2}\\&\lesssim I+\delta\sup_{u}TV_{v}[\theta]+\delta^{4}(\sup_{u}TV_{v}[\theta])^{4}\\&\lesssim I+\delta P\left(\sup_{u}TV_{v}[\theta],\delta\right).
\end{aligned}
\label{appeq: TVu bound for lambda}
\end{equation}
Similarly, for $TV_{v}[\nu]$, we have \begin{equation}
    \sup_{u}TV_{v}[\nu]\lesssim \delta \left(I+\delta\sup_{u}TV_{v}[\theta]+\delta^{4}(\sup_{u}TV_{v}[\theta])^{4}\right)\lesssim \delta I+\delta P\left(\sup_{u}TV_{v}[\theta],\delta\right).\label{appeq: TV v norm of nu}
\end{equation}
\paragraph{A priori bounds for $TV_{u}[m]$ and $TV_{u}[\mu]$}
Next, we estimate $TV_{u}[m]$. By~\eqref{appeq: expression for m in appendix}, we have \begin{align}
    \sup_{v}TV_{u}[m]\lesssim I+\delta \left(\sup_{v}TV_{u}[\zeta]\right)P\left(\sup_{u}TV_{v}[\theta],\delta\right)+\delta P\left(\sup_{u}TV_{v}[\theta],\delta\right).\label{appeq: TVu[m]}
\end{align}
For $TV_{u}[\mu]$, we have \begin{align}
    \sup_{v}TV_{u}[\mu]\leq\sup_{v}TV_{u}[m]+\sup_{v}\Vert m\Vert_{L_{u}^{\infty}}\lesssim I+\delta\left(\sup_{v}TV_{u}[\zeta]\right)P(\sup_{u}TV_{v}[\theta],\delta)+\delta P(\sup_{u}TV_{v}[\theta],\delta).
    \label{appeq: TVu[mu]}
\end{align}
\paragraph{A priori bounds for $TV_{v}[m]$ and $TV_{v}[\mu]$}
For $TV_{v}[m]$, we have \begin{equation}
    \begin{aligned}
      \sup_{u}TV_{v}[m]  = \sup_{u}\int_{v_{0}}^{v_{0}+\delta}\vert\partial_{v}m\vert dv\lesssim&\sup_{u}\int_{v_{0}}^{v_{0}+\delta} \left\vert\frac{\theta^{2}}{r\lambda}m\right\vert+\left\vert\frac{\theta^{2}}{\lambda}\right\vert+\left\vert\frac{\lambda Q^{2}}{r^{2}}\right\vert dv\\\lesssim&\delta\left(\sup_{v}\Vert\theta\Vert_{L_{u}^{\infty}}^{2}\sup_{v}\Vert m\Vert_{L_{u}^{\infty}}+\sup_{v}\Vert\theta\Vert_{L_{u}^{\infty}}^{2}+\Vert Q\Vert_{L_{u}^{\infty}}^{2}\right)\\\lesssim&\delta\left(I+P\left(\sup_{u}TV_{v}[\theta],\delta\right)\right).
    \end{aligned}
    \label{appeq: TVv[m]}
\end{equation}
For $TV_{v}[\mu]$, we have \begin{equation}
    \label{appeq: TVv[mu]}
    \sup_{u}TV_{v}[\mu] = \sup_{u}TV_{v}\left[\frac{2m}{r}\right]\lesssim \sup_{u} TV_{v}[m]+\sup_{v}\Vert m\Vert_{L_{u}^{\infty}}TV_{v}r\lesssim \delta\left(I+P\left(\sup_{u}TV_{v}[\theta],\delta\right)\right).
\end{equation}
\paragraph{A priori bounds for $TV_{u}[\nu]$ and $TV_{v}[\lambda]$}
Using the equation~\eqref{eq:wave equation for r}, we have \begin{align*}
    \log(-\nu)(u,v) = \int_{v_{0}}^{v}\frac{\lambda}{(1-\mu)r}\left(\mu-\frac{Q^{2}}{r^{2}}\right)d\bar{v}.
\end{align*}
Therefore, we have \begin{equation*}
    \sup_{v}TV_{u}[\nu]\lesssim \sup_{v}TV_{u}[\log(-\nu)]\lesssim \delta\left(I+\delta \left(\sup_{v}TV_{u}[\zeta]\right)P\left(\sup_{u}TV_{v}[\theta],\delta\right)+\delta P\left(\sup_{u}TV_{v}[\theta],\delta\right)\right).
\end{equation*}
Similarly for $TV_{v}[\lambda]$, we have \begin{equation}
    \sup_{u}TV_{v}[\lambda]\lesssim \sup_{u}TV_{v}[\log\lambda]\lesssim\delta I+\delta P\left(\sup_{u}TV_{v}[\theta],\delta\right).
\end{equation}
\paragraph{A priori bounds for $TV_{u}[A_{u}]$}
For $TV_{u}[A_{u}]$, we have \begin{equation}
    \begin{aligned}
        \sup_{v}TV_{u}[A_{u}]\lesssim&\delta \sup_{v} TV_{u}\left[\frac{Q\lambda\nu}{r^{2}(1-\mu)}\right]\\\lesssim&\delta I+\delta^{2}P\left(\sup_{u}TV_{v}[\theta],\sup_{v}TV_{u}[\zeta],\delta\right).
    \end{aligned}
\end{equation}
\subsection{Estimates on $TV_{v}[\theta]$ and $TV_{u}[\zeta]$}
In this section, we assume the following bootstrap bounds:\begin{equation}
    \label{appeq: bootstrap assumption}
    \sup_{u}TV_{v}[\theta]\leq 2TV_{v}[\theta](u_{0}),\ \sup_{v}TV_{u}[\zeta]\leq 2 TV_{u}[\zeta](v_{0}).
\end{equation}

By the wave equation~\eqref{eq: pre wave equation for phi} for $\phi$, we have \begin{align}
    \partial_{v}\zeta =&-\frac{\nu}{r}\theta-iq_{0}\left(A_{u}\theta+A_{u}\lambda\phi+\frac{Q}{r}\frac{\lambda\nu}{1-\mu}\phi\right),\label{eq: transport equation for zeta}\\
    \partial_{u}\theta = &-\frac{\lambda}{r}\zeta-iq_{0}\left(A_{u}\theta+A_{u}\lambda\phi+\frac{Q}{r}\frac{\lambda\nu}{1-\mu}\phi\right).\label{eq: transport equation for theta}
\end{align}
Integrating the equation~\eqref{eq: transport equation for zeta} and taking the total variation, we have \begin{equation}
    \begin{aligned}
        TV_{u}[\zeta](v)\leq& \delta\left(\sup_{v}TV_{u}\left[\frac{\nu}{r}\theta\right](v)+\vert q_{0}\vert \sup_{v}TV_{u}[A_{u}\theta](v)+\vert q_{0}\vert\sup_{v}TV_{u}[A_{u}\lambda\phi](v)+\vert q_{0}\vert\sup_{v} TV_{u}\left[\frac{Q}{r}\frac{\lambda\nu}{1-\mu}\phi\right](v)\right)\\&+TV_{u}[\zeta](v_{0}).
    \end{aligned}
\end{equation}
For $TV_{u}\left[\frac{\nu}{r}\theta\right]$, we have \begin{align*}
    \sup_{v}TV_{u}\left[\frac{\nu}{r}\theta\right](v)\lesssim&\sup_{v}TV_{u}[\theta](v)+\sup_{v}\Vert\theta\Vert_{L_{u}^{\infty}}(v)TV_{u}[\nu](v)+\sup_{v}\Vert\theta\Vert_{L_{u}^{\infty}}(v)\\\lesssim&I+P\left(\sup_{u}TV_{v}[\theta],\sup_{v}TV_{u}[\zeta],\delta\right).
\end{align*}
We can estimate \begin{equation*}
    \sup_{v}TV_{u}[A_{u}\theta],\ \sup_{v}TV_{u}[A_{u}\lambda\phi],\ \sup_{v}TV_{u}\left[\frac{Q}{r}\frac{\lambda\nu}{1-\mu}\phi\right]
\end{equation*}
similarly. Therefore, we can conclude that \begin{equation*}
    \sup_{v}TV_{u}[\zeta]\lesssim TV_{u}[\zeta](v_{0})+\delta I+\delta P\left(\sup_{v}TV_{u}[\zeta],\sup_{u}TV_{v}[\theta],\delta\right).
\end{equation*}
Taking $\delta$ sufficiently small, we have \begin{equation*}
    \sup_{v}TV_{u}[\zeta]\leq \frac{3}{2}TV_{u}[\zeta](v_{0}).
\end{equation*}
This improves the bootstrap assumption~\eqref{appeq: bootstrap assumption} for $TV_{u}[\zeta]$. 

For $TV_{v}[\theta]$, integrating the equation~\eqref{eq: transport equation for theta} and taking the total variation, we have \begin{align*}
    TV_{v}[\theta](u)\leq \sup_{u}TV_{v}\left[\frac{\lambda}{r}\zeta\right]+\vert q_{0}\vert\sup_{u} TV_{v}[A_{u}\theta]+\vert q_{0}\vert \sup_{u}TV_{v}[A_{u}\lambda\phi]+\sup_{u}TV_{v}\left[\frac{Q}{r}\frac{\lambda\nu}{1-\mu}\phi\right]+TV_{v}[\theta](u_{0}).
\end{align*}
Since the $TV_{v}$ a priori bounds established in Appendix~\ref{appsec: A priori bounds} have $\delta$-smallness, taking $\delta$ sufficiently small will improve the bootstrap bounds for $TV_{v}[\theta]$. This concludes the proof.

\bibliographystyle{plain}
\bibliography{ref.bib}

\end{document}